\DeclareMathOperator{\tr}{tr}
\theoremstyle{plain}
\newtheorem*{thm-restate}{Theorem \ref{thm:qms_exact}}
\title{Quantum minimal surfaces from quantum error correction}
\author[1]{Chris Akers,}
\author[2,3]{Geoff Penington}
\affiliation[1]{Center for Theoretical Physics,\\
Massachusetts Institute of Technology, Cambridge, MA 02139, USA}
\affiliation[2]{Center for Theoretical Physics,\\ University of California, Berkeley, CA 94720 USA}
\affiliation[3]{Institute for Advanced Study, 1 Einstein Dr, Princeton, NJ 08540 USA}
\emailAdd{cakers@mit.edu}
\emailAdd{geoffp@berkeley.edu}
\abstract{We show that complementary state-specific reconstruction of logical (bulk) operators is equivalent to the existence of a quantum minimal surface prescription for physical (boundary) entropies. This significantly generalizes both sides of an equivalence previously shown by Harlow \cite{Harlow:2016vwg}; in particular, we do not require the entanglement wedge to be the same for all states in the code space. In developing this theorem, we construct an emergent bulk geometry for general quantum codes, defining ``areas'' associated to arbitrary logical subsystems, and argue that this definition is ``functionally unique.'' We also formalize a definition of bulk reconstruction that we call ``state-specific product unitary’’ reconstruction. This definition captures the quantum error correction (QEC) properties present in holographic codes and has potential independent interest as a very broad generalization of QEC; it includes most traditional versions of QEC as special cases. 
Our results extend to approximate codes, and even to the ``non-isometric codes'' that seem to describe the interior of a black hole at late times. }
\begin{document}
\maketitle

\section{Introduction}\label{sec:intro}

\subsection{Background and motivation}
In the last decade, the ideas of quantum error correction have revolutionized our understanding of holography \cite{Almheiri:2014lwa, Pastawski:2015qua, Dong:2016eik, Harlow:2016vwg, Cotler:2017erl, Hayden:2018khn, Dong:2018seb, Akers:2018fow, Penington:2019npb, Almheiri:2019psf, Akers:2020pmf}. Central to this progress is the notion of the entanglement wedge of a boundary subregion $B$. Roughly speaking, this is a bulk subregion $\mathbf{b}$ that encodes the same information about the state as the boundary subregion $B$. 
An important signature of the relationship between these two regions is a relation between their entropies known as the QES prescription \cite{Ryu:2006bv, Faulkner:2013ana, Engelhardt:2014gca},
\begin{equation}\label{eq:EW_def}
	S(B)_{V\ket{\psi}} = \frac{A_B(\mathbf{b})}{4 G} + S(\mathbf{b})_{\ket{\psi}}~.
\end{equation}
The entropy $S(\mathbf{b})_{\ket{\psi}} :=  -\tr[\psi_\mathbf{b} \log \psi_\mathbf{b}]$ is the von Neumann entropy of bulk state $\psi_\mathbf{b} = \tr_\mathbf{\bar{b}}[\ket{\psi}\bra{\psi}]$ on subregion $\mathbf{b}$, and likewise $S(B)_{V\ket{\psi}}$ is the entropy of boundary region $B$ in the state dual to $\ket{\psi}$, obtained by acting with the bulk-to-boundary map $V$.
$A_B(\mathbf{b})$ is the area of the bulk surface bounding $\mathbf{b}$ and homologous to $B$, and $G$ is Newton's constant.
The entire right hand side of \eqref{eq:EW_def} is known as the \emph{generalized entropy} of $\mathbf{b}$. 

For appropriate holographic states, \eqref{eq:EW_def} is true for any boundary region $B$, so long as one defines the entanglement wedge $\mathbf{b}$ to be bounded by the minimal quantum extremal surface (QES) homologous to $B$ \cite{Engelhardt:2014gca}. This means that the region $\mathbf{b}$ is (i) an extremum of the generalized entropy under local perturbations of its bounding surface and (ii) among all such extremal regions, $\mathbf{b}$ has minimal generalized entropy.\footnote{Let us emphasize the proviso: this is only true for \emph{appropriate} holographic states. As shown in \cite{Akers:2020pmf} (see also \cite{Wang:2021ptw}), there are some semiclassical holographic states -- i.e. simple states of bulk quantum matter on fixed semiclassical geometric backgrounds -- for which \eqref{eq:EW_def} does not apply for \emph{any} bulk region $\mathbf{b}$, even as a leading order semiclassical approximation.
In such cases, we say that the entanglement wedge of $B$ is not well-defined.}

For static states (or those at moments of time-reflection symmetry), the above prescription simplifies considerably; the entanglement wedge is simply the \emph{smallest generalized entropy} bulk region contained in the static (or time-reflection symmetric) slice.  While this simpler restricted prescription can't teach us about the detailed dynamics of information flow in quantum gravity,\footnote{For example, it doesn't know about the scrambling time delay before information escapes an evaporating black hole after the Page time \cite{Hayden:2007cs, Penington:2019npb}.} it is sufficient to capture many other aspects of the information-theoretic structure of AdS/CFT. It will be the sole focus of the present paper. To emphasize this, from now on we will use the expression \emph{quantum minimal surface} (QMS), rather than minimal QES.

The QMS prescription is a signature of a broader and deeper relationship between bulk and boundary information known as ``entanglement wedge reconstruction.'' Entanglement wedge reconstruction is easiest to understand (and was first understood) when the entanglement wedge $\mathbf{b}$ of the boundary region $B$ is the same for all states of interest. This will be (approximately) true so long as one restricts interest to small ``code spaces'' of bulk states with a fixed geometry in limits where the bulk entropy term can be ignored in \eqref{eq:EW_def} when finding the entanglement wedge. In such situations, \eqref{eq:EW_def} implies an approximate equality between the relative entropy \cite{Jafferis:2015del, Dong:2017xht} between any two boundary states on region $B$ and the corresponding bulk relative entropy between the dual states on region $\mathbf{b}$. In turn, this implies that every bulk operator $O_\mathbf{b}$ acting on $\mathbf{b}$ has a ``boundary reconstruction'' $O_B$ acting only on $B$ that acts faithfully on the code space \cite{Dong:2016eik, Cotler:2017erl, Hayden:2018khn}. 
We say that such a reconstruction is \emph{state independent} because, in contrast to the more general case discussed below, the same boundary representation of a bulk operator can be used for the entire code space of allowed bulk states.

In fact, \eqref{eq:EW_def} implies something even stronger in these settings: \emph{complementary} state-independent recovery. The same equality between relative entropies that ensured $B$ could reconstruct $\mathbf{b}$ also implies that a bulk operator acting on the complementary bulk region $\mathbf{\bar{b}}$ can be state-independently reconstructed on the complementary boundary region $\overline{B}$.
In other words, so long as $\mathbf{b}$ is the same for all states of interest, the single condition \eqref{eq:EW_def} implies reconstruction of both $\mathbf{b}$ in $B$ and $\mathbf{\overline{b}}$ in $\overline{B}$.

While this is already a very powerful result, in \cite{Harlow:2016vwg} Harlow proved something perhaps even more interesting: a converse statement is also true. 
Complementary state-independent recovery is sufficient to show that \eqref{eq:EW_def} holds -- for \emph{some} definition of ``area'' $A$. 
That is, for \emph{any} complementary quantum error correcting (QEC) code -- even one that a priori has nothing to do with holography -- if the physical subsystem $\mathcal{H}_B$ can state-independently reconstruct a logical factor $\mathcal{H}_{\mathbf{b}}$ and likewise $\mathcal{H}_{\overline{B}}$ can reconstruct $\mathcal{H}_{\mathbf{\bar{b}}}$, then 
\begin{align}
S(B)_{V\ket{\psi}} = A + S(\mathbf{b})_{\ket{\psi}}~,
\end{align}
for some constant ``area'' $A$ that depends on the particular QEC code.\footnote{In the analogy with area, we are implicitly using units where $4G = 1$.}
We can therefore \emph{define} an entanglement wedge for $B$ in any complementary QEC code as the reconstructible factor of the code space and obtain an associated holographic entropy formula.
In this picture, rather than the area term in \eqref{eq:EW_def} being an external input from quantum gravity, it simply quantifies a particular source of entanglement that is present in all (complementary) QEC codes.

In this sense, any QEC code has the beginnings of an emergent ``bulk geometry'' within it, albeit by ``geometry'' we so far mean the area of a single surface.
In turn, this geometry gives us a clear justification of, and information-theoretic motivation for, the existence of a formula like \eqref{eq:EW_def}, which is a result that has otherwise only been derived using the somewhat mysterious tool of gravitational path integrals.

That said, the framework of \cite{Harlow:2016vwg} is not yet the full story. It explains only the special case of a code space with a single, state-independent entanglement wedge.\footnote{See \cite{kang2018holographic, Kamal:2019skn, Akers:2018fow, Dong:2018seb} for generalizations of \cite{Harlow:2016vwg} that nonetheless still use the same assumption of a single, state-independent entanglement wedge.} In general, different states within the same large code space may have different entanglement wedges. See Figure \ref{fig:blackhole} for an example. Hence the entanglement wedge cannot generally be defined as the region that is reconstructible in a state-independent way. That region, called the \emph{reconstructible wedge} in \cite{Akers:2019wxj}, depends only on the code space, and not on the state itself, and is in fact equal to the intersection of the entanglement wedges over all states in the code space, pure and mixed \cite{Hayden:2018khn, Akers:2019wxj, Dong:2016eik}. State-independent complementary reconstruction is not possible when different states in the code space have different entanglement wedges; the assumptions of the theorems in \cite{Harlow:2016vwg} do not apply.

Can Harlow's arguments be generalized to these settings? Does bulk reconstruction still define the entanglement wedge? And does geometry still emerge from information? Answering such questions will be the primary goal of this paper. 
It turns out that, while state-independent complementary reconstruction is no longer always possible,  a weaker version of complementary reconstruction \emph{is} equivalent to \eqref{eq:EW_def}. The new ingredient is that the reconstruction may need to be \emph{state-specific}, with different reconstructions used for different states within the code space.
Remarkably, not only does one still find an emergent bulk geometry -- and a holographic entropy prescription -- for any code with state-specific complementary reconstruction, but the emergent geometries can be much richer than those found before. 
Rather than a single surface with a single area, we define areas that bound \emph{any} set of ``bulk'' subsystems.\footnote{Of course, even this is still a long way from having a smooth bulk metric!} And rather than \eqref{eq:EW_def} holding for a fixed subsystem $\mathcal{H}_{\mathbf{b}}$, the subsystem $\mathcal{H}_{\mathbf{b}}$ is determined by minimizing generalized entropy over all sets of subsystems -- a true quantum minimal surface prescription.

\begin{figure}
\centering
\includegraphics[width = 0.5\textwidth]{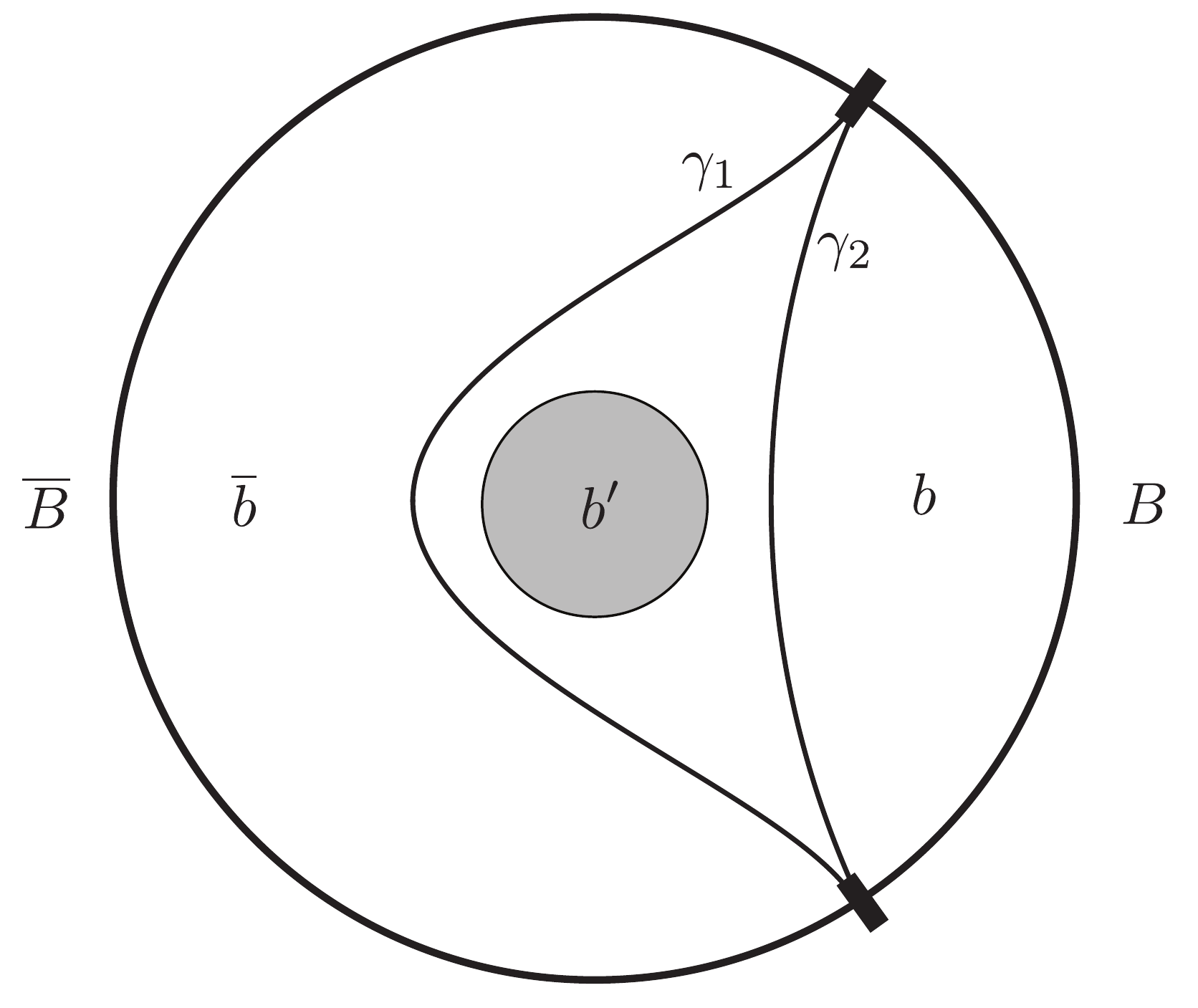}
\caption{An AdS/CFT setup without complementary state-independent recovery. Between the two candidate extremal surfaces lies a black hole with horizon area much greater than the difference in area between the two surfaces, $A_\mathrm{BH} \gg A_{\gamma_2} - A_{\gamma_1}$. Neither $B$ nor $\overline{B}$ can reconstruct operators in $b'$ in a state-independent way.}
\label{fig:blackhole}
\end{figure}

\subsection{This paper}

To explain our generalization of Harlow's results we must first define both ``areas'' and ``entanglement wedges'' for arbitrary quantum codes. 

We define a quantum code as an isometry $V: \otimes_{i=1}^n \mathcal{H}_{b_i} \to \mathcal{H}_B \otimes \mathcal{H}_{\overline{B}}$, as in figure \ref{fig:circuit_V}. In a holographic context, the input subsystems $\mathcal{H}_{b_i}$ are each associated to different local bulk regions, as in figure \ref{fig:bulk_boundary_V}.
\begin{figure}
\centering
\subfloat[$V$ as a circuit \label{fig:circuit_V}]{\includegraphics[width = 0.3\textwidth]{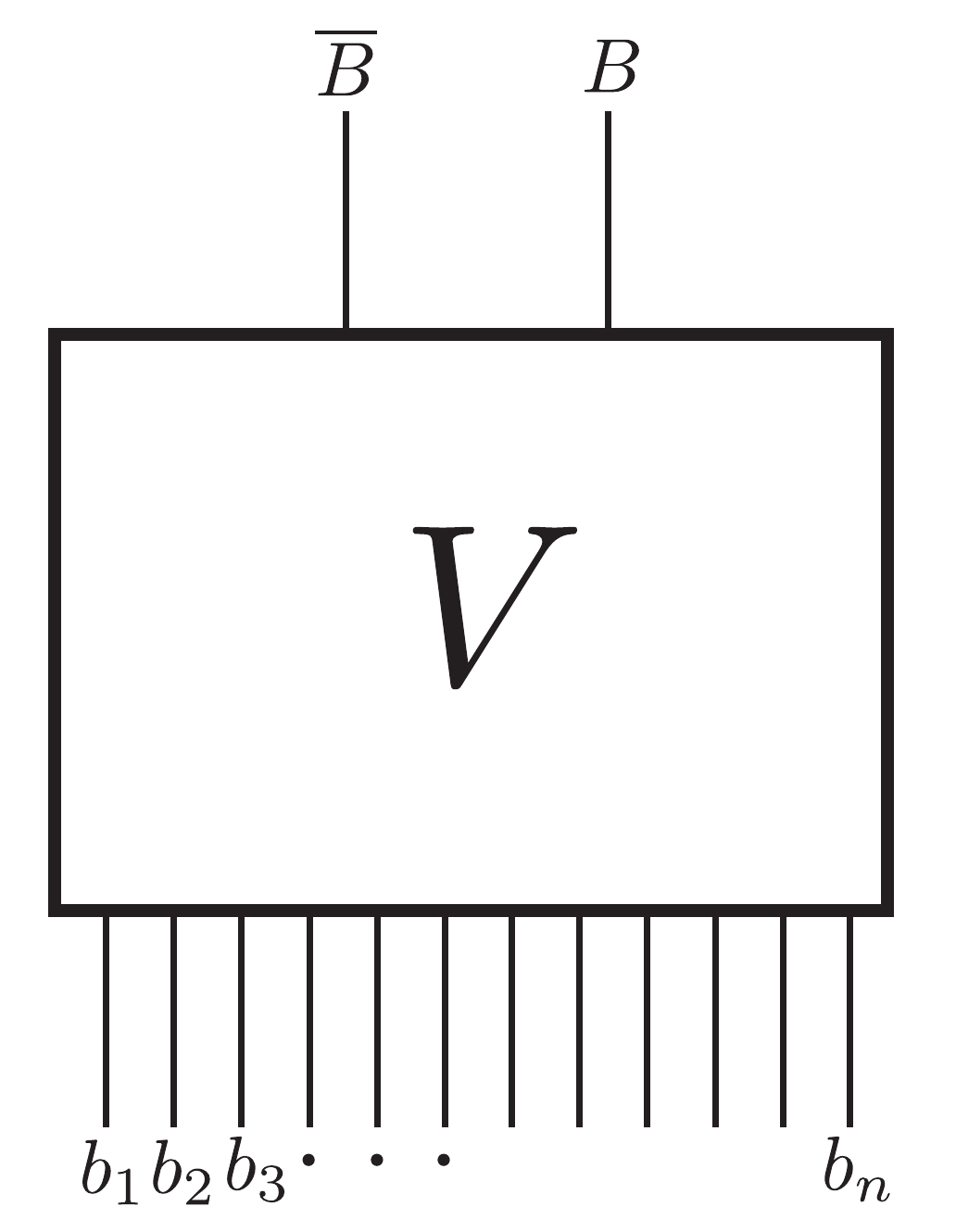}}
~~~~~~~~~~~~~~~
\subfloat[$V$ as a timeslice of AdS/CFT \label{fig:bulk_boundary_V}]{\includegraphics[width = 0.45\textwidth]{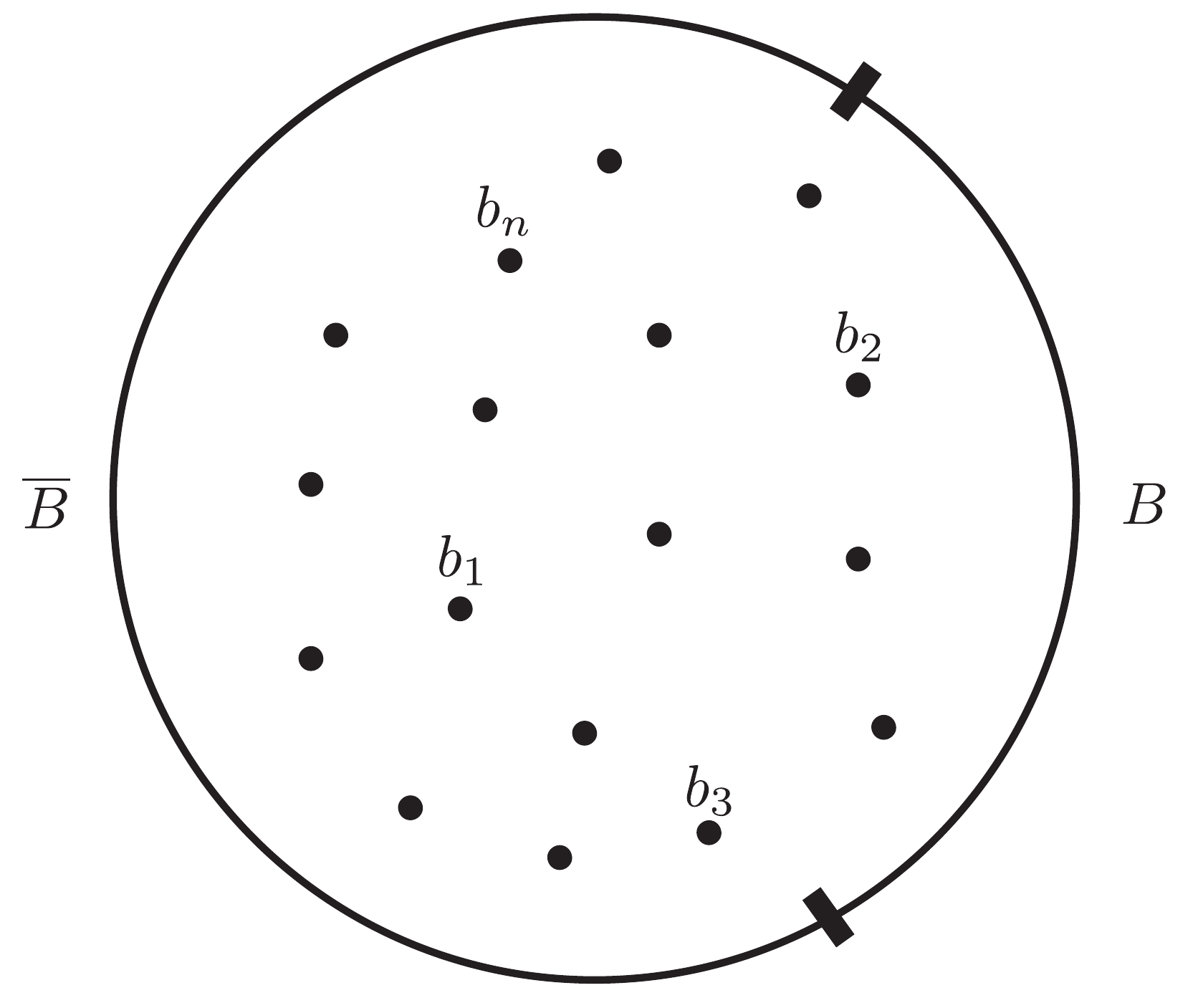}}
\caption{Two illustrations of the quantum codes used throughout this paper. (a) A linear map $V: \otimes_i \mathcal{H}_{b_i} \to \mathcal{H}_B \otimes \mathcal{H}_{\overline{B}}$. (b) This linear map encompasses situations in holography. Each input leg corresponds to a bulk point or subregion, while the output legs correspond to boundary subregions. The map $V$ is implicit.}
\label{fig:V}
\end{figure}

Given any subset ${\bf b} := \{b_{i_1}, b_{i_2} ...\}$ of input legs, we define an ``area'' $A_B({\bf b})$ as follows. (See Section \ref{sec:area} for more details.)
For every isometry $V$ there is an associated special state, called the Choi-Jamiolkowski state, defined by acting with $V$ on half of a maximally-entangled state
\begin{equation}
	\ket{\mathrm{CJ}}_{B\overline{B}r_1 \dots r_n} := V \ket{\mathrm{MAX}}_{b_1...b_n r_1 \dots r_n}~,
\end{equation}
where $\mathcal{H}_{r_i} \cong \mathcal{H}_{b_i}^*$, and $\ket{\mathrm{MAX}}_{b_1...b_n r_1 \dots r_n}$ is the canonical maximally entangled state on $\mathcal{H}_{b_1}\otimes...\mathcal{H}_{b_n} \otimes\mathcal{H}_{r_1}\otimes \dots \mathcal{H}_{r_n}$.\footnote{Given an arbitrary orthonormal basis $\{\ket{i}\}$ for a Hilbert space $\mathcal{H}$, the canonical maximally entangled state $\ket{\mathrm{MAX}} \in \mathcal{H} \otimes \mathcal{H}^*$ is defined to be $\sum_i \ket{i} \ket{i}^* /\sqrt{d}$. It is easy to verify that this is independent of the choice of basis $\{\ket{i}\}$.} See Figure \ref{fig:CJ}. This state consists of a product of maximally entangled states on each pair $\mathcal{H}_{b_i} \otimes \mathcal{H}_{r_i}$. In particular, $\mathcal{H}_{\bf r} = \mathcal{H}_{r_{i_1}} \otimes \mathcal{H}_{r_{i_2}} ...$ purifies $\mathcal{H}_{\bf b} = \mathcal{H}_{b_{i_1}} \otimes \mathcal{H}_{b_{i_2}} ...$.
The area $A_B(\bf b)$ is defined by
\begin{equation} \label{eq:area_def}
	A_B({\bf b}) := S(B \,{\bf r})_{\ket{\mathrm{CJ}}}~.
\end{equation}
This definition works for any collection of input subsystems ${\bf b}$ for any QEC code $V$, and we will argue in Section \ref{sec:area} that it is ``functionally unique'' as a definition of area for codes that obey a QMS prescription.\footnote{By this we mean that any other definition of area -- for example the geometric area in holographic codes, and the log bond dimension in tensor network codes -- will agree with \eqref{eq:area_def} on any surface that is quantum minimal for some state in the code space. As we emphasize in Section \ref{sec:area}, the different definitions may not agree for surfaces that are never quantum minimal; however in that case \eqref{eq:area_def} still gives a lower bound on any other definition of area.}
The explicit dependence of \eqref{eq:area_def} on $\mathcal{H}_B$ should be understood as enforcing the homology constraint on the surface bounding region $\mathbf{b}$.

Next we need to define the entanglement wedge $\mathrm{EW}_B(\ket{\psi})$ of a boundary region $B$ for bulk state $\ket{\psi}$. One approach would simply be to always define $\mathrm{EW}_B(\ket{\psi}) = \mathbf{b}$ as the collection of input subsystems $\mathbf{b}$ that minimizes $A_B(\mathbf{b}) + S(\mathbf{b})_{\ket{\psi}}$. However this would be naive: for most codes $V$ (and even for many states in holographic codes \cite{Akers:2020pmf}) the quantum minimal surface defined in this way has no information-theoretic significance.\footnote{See Appendix \ref{sec:counterexample_naive_holographic_entropy} and also \cite{Akers:2020pmf} for examples.} Instead we want to define the entanglement wedge (when it exists) to be the collection of bulk subsystems that are reconstructible from $\mathcal{H}_B$ in an appropriate sense. We will then show \emph{as a consequence} that the entanglement wedge must always be quantum minimal.
\begin{figure}
\centering
\includegraphics[width = 0.5\textwidth]{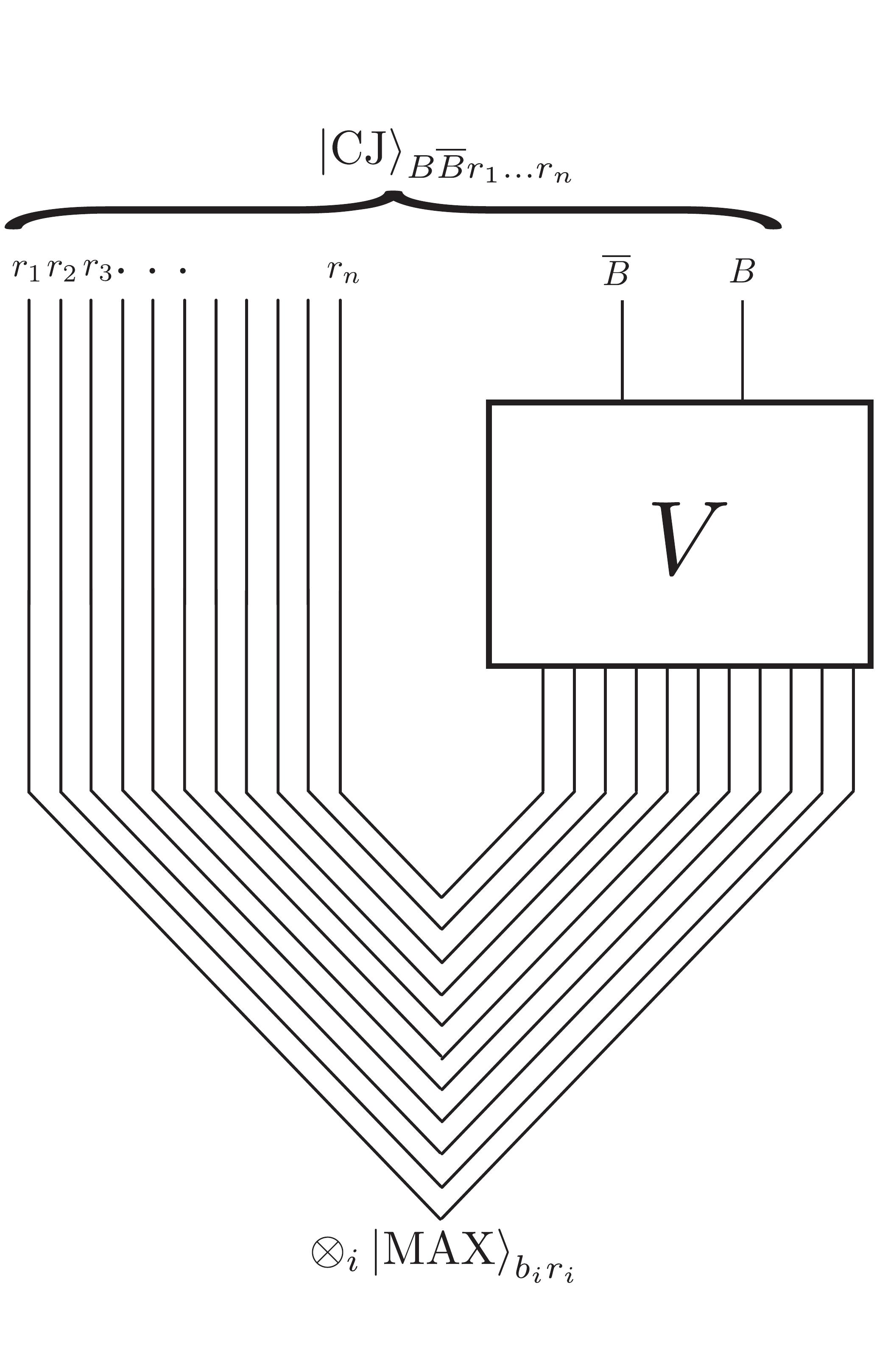}
\caption{The Choi-Jamiolkowski state $\ket{\mathrm{CJ}}$. Angled lines represent maximally entangled states, with one half of the maximally entangled state input into $V$.}
\label{fig:CJ}
\end{figure}

So what is the sense in which the entanglement wedge needs to be reconstructible from $\mathcal{H}_B$?
We describe a precise answer to this question in Section \ref{sec:state_spec_rec} where we also explain how that answer encompasses and generalizes previous definitions of quantum error correction and bulk reconstruction. 
The reconstruction needs to state-specific, or else the entanglement wedge would need to be the same for all states in the code space, as in the theorems in \cite{Harlow:2016vwg}.  We say that a bulk unitary $U_{\bf b}$ can be reconstructed by the boundary unitary $U_B$ for the specific state $\ket{\psi}$ if
\begin{align}
    U_B V \ket{\psi} = V U_{\bf b} \ket{\psi}.
\end{align}
Note that it is important here that the boundary reconstruction $U_B$ is unitary -- otherwise the reconstructed operator would be able to change the reduced state on $\mathcal{H}_{\overline{B}}$ despite nominally only acting on $\mathcal{H}_B$. Indeed, so long as the dimension $d_B$ of $\mathcal{H}_B$ is at least as large as the dimension $d_{\overline{B}}$ of $\mathcal{H}_{\overline{B}}$, then a generic state  $\ket{\psi} \in \mathcal{H}_B \otimes \mathcal{H}_{\overline{B}}$ can be mapped to any other state by a non-unitary operator acting only on $\mathcal{H}_B$.\footnote{The celebrated Reeh-Schlieder Theorem \cite{Reeh-Schlieder} is a similar statement about continuum quantum field theory.}

\begin{figure}
\centering
\includegraphics[width = \textwidth]{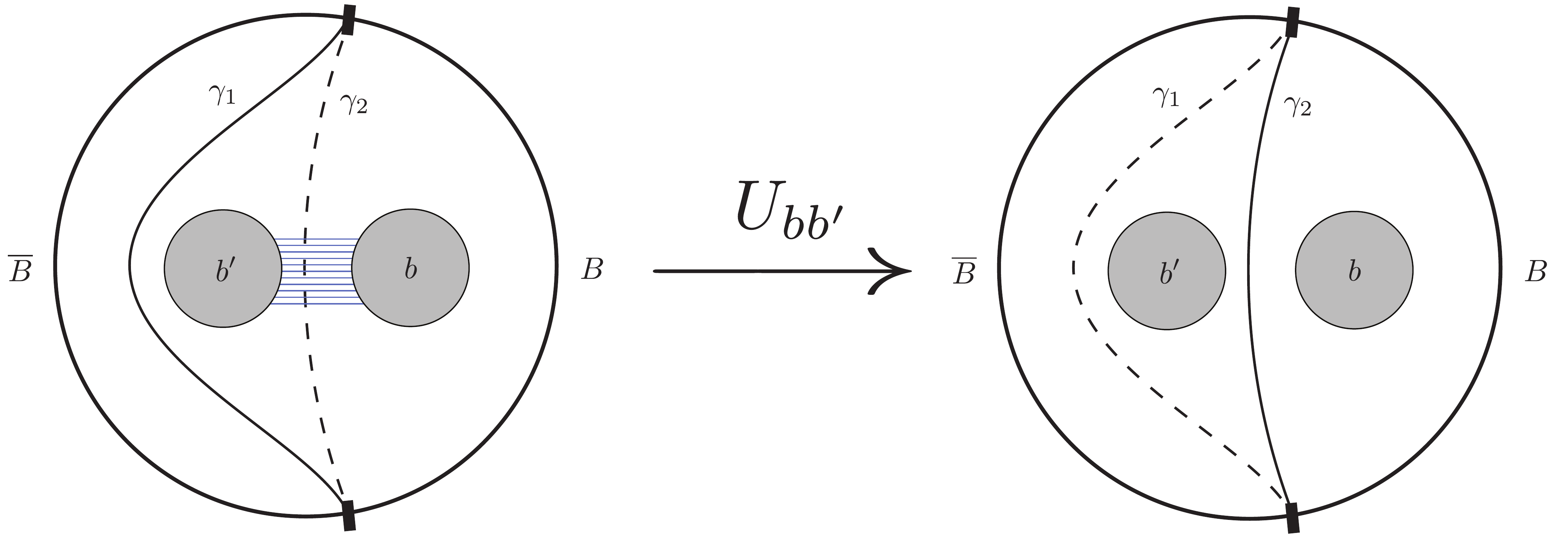}
\caption{An AdS/CFT setup in which a unitary acting within an entanglement wedge changes the location of the quantum minimal surface. On the left, we have two black holes in an entangled pure state. Because of the large amount of entanglement, $\gamma_1$ is quantum minimal and therefore both black holes are inside the entanglement wedge of $B$. On the right, we have the situation after acting on both black holes with a unitary $U_{bb'}$ that maps the original state to a factorized pure state. Now $\gamma_2$ is quantum minimal. This unitary has changed the entropies $S(B)$ and $S(\overline{B})$ and therefore cannot be represented by a unitary on $B$.}
\label{fig:general_U_ex}
\end{figure}

However, it would be too strong to demand that $B$ can do state-specific reconstruction of \emph{all} unitaries $U_{\bf b}$ that act on its entanglement wedge $\mathbf{b}$. By
acting with arbitrary unitaries within ${\bf b}$ one can change the entanglement structure of the state $\ket{\psi}$, potentially changing the location of the quantum minimal surface and therefore the entanglement wedge; see Figure \ref{fig:general_U_ex}. This cannot ever be achieved by acting only with a unitary on $\mathcal{H}_B$, since it will change the entropy $S(\overline{B})_{\ket{\psi}}$. 
To avoid this issue, we should only insist that $B$ be able to reconstruct unitaries $U_{\bf b}$ that don't change the entanglement structure of the bulk state.
Such operators are \emph{product unitaries} -- unitaries that can be written as a product of local unitaries $U_{\bf b} = U_{b_{i_1}} \otimes U_{b_{i_2}} \dots $ 

We therefore define the entanglement wedge of $B$ for a pure state $\ket{\psi}_{b_1 \dots b_n}$ as the bulk region ${\bf b}$ for which state-specific complementary reconstruction of product unitaries is possible. In other words, for any product unitaries $U_{\bf b}$ and $U'_{\mathbf{\bar{b}}}$, there need to exist boundary reconstructions $U_B$ and $U'_{\overline{B}}$ respectively such that
\begin{align}
    U_B U'_{\overline{B}} V \ket{\psi} = V U_{\mathbf{b}} U'_{\mathbf{\bar{b}}} \ket{\psi}.
\end{align}

What about the entanglement wedge of $B$ for mixed bulk states? A mixed state will generally not have complementary entanglement wedges for $B$ and $\overline{B}$. However, we can use the standard trick of purifying the mixed state with an auxiliary reference system $\mathcal{H}_{\overline{R}}$ to construct the pure state $\ket{\Psi}_{b_1 \dots b_n \,\overline{R}}$. We have complementary state-specific reconstruction if, for any product unitaries $U_{\bf b}$ and $U'_{\bf \overline{b}}$, there exist boundary reconstructions $U_B$ and $U'_{\overline{B} \, \overline{R}}$. In other words, the reconstruction of $U_{\mathbf{\bar{b}}}$ is allowed to act not only on the environment $\mathcal{H}_{\overline{B}}$, but also on the reference system $\mathcal{H}_{\overline{R}}$. In the interests of full generality, we can also add a reference system $\mathcal{H}_R$ to the $\mathcal{H}_B$ side:\footnote{Mathematically, this generalization is essentially trivial. However, physically it is very important: studying the entanglement wedge of auxiliary nonholographic quantum systems is at the heart of recent progress on the black hole information problem \cite{Penington:2019npb, Almheiri:2019psf}.} in this case, we say that the entanglement wedge of the region $B R$ for the state $\ket{\Psi}_{b_1 \dots b_n\,R \,\overline{R}}$ is $\mathbf{b}$, if,
 for any product unitaries $U_{\bf b}$ and $U'_{\bf \overline{b}}$, there exist boundary reconstructions $U_{B R}$ and $U'_{\overline{B} \, \overline{R}}$ such that
\begin{align}\label{eq:EW_first_def}
    U_{B \,R} U'_{\overline{B}\, \overline{R}} V \ket{\Psi} = V U_{\mathbf{b}} U_{\mathbf{\bar{b}}} \ket{\Psi}.
\end{align}

The central result of this paper will be to show that this definition of the entanglement wedge, based on its reconstruction properties, is equivalent to the QMS prescription being valid for the state $\ket{\Psi}$, together with all states related to $\ket{\Psi}$ by a product of local unitaries. Specifically, it is equivalent to the prescription
\begin{align} \label{eq:EW_def+R}
        S(B R)_{VU_{\bf b} U'_{\bf \bar{b}} \ket{\Psi}} = A_B({\bf b}) + S({\bf b}R)_{U_{\bf b} U'_{\bf \bar{b}}\ket{\Psi}}~,
\end{align}
holding for all product unitaries $U_{\bf b} U'_{\bf \bar{b}}$.\footnote{It turns out to be important to demand that the QMS prescription hold for all states related to $\ket{\Psi}$ by product unitaries, rather than just for the state $\ket{\Psi}$ itself. In general codes, regions that just satisfy \eqref{eq:EW_def} for a single state $\ket{\Psi}$ need not have any particular reconstruction relationship to $B$, see Appendix \ref{sec:counterexample_naive_holographic_entropy}. To really talk about a holographic entropy prescription as we know it from gravity, we need \eqref{eq:EW_def} to be satisfied not just by a single state, but by all states with a particular entanglement structure, i.e. related by product unitaries.} This prescription is the obvious generalization of \eqref{eq:EW_def} to the entropy $S(BR)$; it was first introduced in gravity in \cite{Hayden:2018khn}.

The results can be formalized in the following theorem, which we prove in Section \ref{sec:exact}.

\begin{thm-restate}
Let $V: \mathcal{H}_\mathrm{code} \cong \otimes_i \mathcal{H}_{b_i} \to \mathcal{H}_B \otimes \mathcal{H}_{\overline{B}}$ be an isometry, with ${\bf b} = \{b_{i_1}, b_{i_2} ...\}$ a subset of input legs, and ${\bf \bar{b}}$ its complement. 
Finally, let $\ket{\Psi} \in \mathcal{H}_\mathrm{code} \otimes \mathcal{H}_R \otimes \mathcal{H}_{\overline{R}}$ be a fixed, arbitrary state with $\mathcal{H}_R, \mathcal{H}_{\overline{R}}$ reference systems of arbitrary dimension.

Then the following two conditions are equivalent:
\begin{enumerate}
    \item \emph{(Complementary Recovery)} For all product unitaries $U_{\bf b}$ and $U'_{\bf \bar{b}}$, there exist unitary operators $U_{B R}$ and $U'_{\overline{B}\,\overline{R}}$ respectively such that
    \begin{align}
        U_{B R} U'_{\overline{B} \,\overline{R}} V \ket{\Psi} = V U_{\bf b} U'_{\bf \bar{b}} \ket{\Psi}~.
    \end{align}
    
    \item \emph{(Holographic Entropy Prescription)} For all product unitaries $U_{\bf b}$ and $U'_{\bf \bar{b}}$, 
    \begin{align} 
        S(B R)_{VU_{\bf b} U'_{\bf \bar{b}} \ket{\Psi}} = A_B({\bf b}) + S({\bf b} R)_{U_{\bf b} U'_{\bf \bar{b}}\ket{\Psi}}~.
    \end{align}
\end{enumerate}
Moreover, both statements imply:
\begin{enumerate}
    \setcounter{enumi}{2}
    \item \emph{(One-shot Minimality)} For all ${\bf \bar{b'}} \subseteq {\bf \bar{b}}$ and ${\bf b'} \subseteq {\bf b}$,
    \begin{align}
    \label{eq:thmcond3a}
    H_\mathrm{min}({\bf \bar{b'}} | {\bf b} R)_{\ket{\Psi}} \ge& A_B({\bf b}) - A_B({\bf b \cup \bar{b'}})~, \\
    H_\mathrm{min}({\bf b'} | {\bf \bar{b}} \overline{R})_{\ket{\Psi}} \ge& A_{\overline{B}}({\bf \bar{b}}) - A_{\overline{B}}({\bf b' \cup\bar{b}})~.\label{eq:thmcond3b}
    \end{align}
\end{enumerate}
This in turn implies:
\begin{enumerate}
    \setcounter{enumi}{3}
    \item \emph{(Minimality)} For all ${\bf b'} \subseteq {\bf b} \cup {\bf \bar{b}}$,
    \begin{align}\label{eq:min_sgen_exact}
    A_B({\bf b'}) + S({\bf b'} R)_{\ket{\Psi}} \ge A_B({\bf b}) + S({\bf b} R)_{\ket{\Psi}}~. 
    \end{align}
\end{enumerate}
\end{thm-restate}
Since Condition 3 (One-shot minimality) here is probably less familiar or intuitive to readers than the other three conditions, let us take a moment to explain its significance. If the left hand sides of \eqref{eq:thmcond3a} and \eqref{eq:thmcond3b} were replaced by the conditional von Neumann entropies $S({\bf \bar{b'}} | {\bf b} R)_{\ket{\Psi}}$ and $S({\bf b'} | {\bf \bar{b}} \overline{R})_{\ket{\Psi}}$, then Conditions 3 and 4 would be equivalent. Instead, however, Condition 3 features an alternative entropy measure -- the conditional min-entropy $H_\mathrm{min}(A|B)$. As a result, it is a strictly stronger condition. While, for any state $\ket{\Psi}$, there is always \emph{some} set of subsystems $\mathbf{b}$ satisfying Condition 4, there is not always any $\mathbf{b}$ satisfying Condition 3.

In \cite{Akers:2020pmf}, we pointed out an important relationship between one-shot quantum information theory and holography. The upshot is that the generalized entropy of the minimal QES is \emph{not} equal to the boundary entropy for all quantum states. In other words, Condition 4 does not imply Conditions 1 and 2 even in holographic codes. So when is the holographic entropy prescription valid for holographic states? In \cite{Akers:2020pmf}, we provided the answer: complementary state-specific reconstruction is possible in AdS/CFT, and the QES prescription is valid, whenever Condition 3 holds. The same is true for toy models of AdS/CFT such as random tensor networks.

For general quantum codes, Condition 3 of course cannot be sufficient to derive Condition 1 and 2, since it does not depend on the isometry $V$ at all. However, it is always true for any quantum code $V$ and state $\ket{\Psi}$ satisfying the first two conditions, and as a consequence so is Condition 4.

In Section \ref{sec:nonisometries} we generalize Theorem \ref{thm:qms_exact} in two important ways. (Since the proof of this more general theorem is somewhat technical, we postpone it to Appendix \ref{sec:approx}.) Firstly, we allow small errors in reconstruction and in the QMS prescription. This is crucial because the known nontrivial examples of state-specific product unitary reconstruction (including AdS/CFT itself) are inherently approximate. Indeed, in particularly simple versions of state-specific codes known as zero-bit codes, one can prove that exact state-specific reconstruction implies state-independent reconstruction, even though approximate state-specific reconstruction does not.

Secondly, we generalize Theorem \ref{thm:qms_exact} to allow linear maps $V$ that are not isometries. This allows us to understand code spaces in the interior of black holes with entropy larger than the Bekenstein-Hawking entropy. Such code spaces are completely irreconcilable with the usual framework of quantum error correction, but seem to arise naturally in semiclassical gravity. In particular, they play a vital role in recent progress on the black hole information problem \cite{Penington:2019npb, Almheiri:2019psf}. We therefore view state-specific codes as crucial for a QEC-based understanding of the interior of black holes.

In Section \ref{sec:discussion} we summarize our results and discuss important consequences and potential future directions.

Finally, in Appendix \ref{app:QEC_proofs}, we provide self-contained proofs of various technical results previously stated in the main body of the paper. These include a number of results that are well known in the quantum information community, but are included for the convenience of the reader because they may be less familiar to people with a background in high-energy physics. In Appendix \ref{sec:nontheorems}, we consider various seemingly plausible variations on Theorem \ref{thm:qms_exact}, and construct counterexamples to each of them. These include replacing product unitaries by local unitaries, and extending the minimality statement in Condition 4 to more general sets of subalgebras.

\subsection{Notation}
We use lower-case letters to label bulk Hilbert spaces ($\mathcal{H}_{b_i}, \mathcal{H}_{r_i}$ etc.) and capital letters to label boundary Hilbert spaces. Bold letters denote sets, for example $\mathbf{b} = \{b_{i_1}, b_{i_2},\dots \}$, and $\mathbf{U}(d)$ is the group of $d \times d$ unitary matrices, while $U \in \mathbf{U}(d)$ is a single unitary. For any set $\mathbf{b}$ of Hilbert space labels, we define $\mathcal{H}_\mathbf{b} = \otimes_{b_i \in \mathbf{b}} \mathcal{H}_{b_i}$. Finally, we use bars to denote the complement of a subsystem or set of subsystems, e.g. $\mathbf{\bar{b}} = \{ b_i : 1 \leq i \leq n,\,\, b_i \not\in \mathbf{b}\}$.
 
\section{Areas in general quantum codes}\label{sec:area}

Before we can prove a QMS prescription for arbitrary quantum codes, we need to define what we mean by ``area'' in such codes. After all, unlike gravity, quantum codes do not in general come pre-equipped with a notion of geometry! The task of this section will be to motivate such a definition. Our definition will be valid for any quantum code, by which we mean a quantum channel in the Stinespring picture. Said precisely, a quantum code is an isometry
\begin{align}
V: \mathcal{H}_\mathrm{code} \cong \otimes_i \mathcal{H}_{b_i} \to \mathcal{H}_B \otimes \mathcal{H}_{\overline{B}}
\end{align}
mapping a code space $\mathcal{H}_\mathrm{code}$, made up of a collection of subsystems $\mathcal{H}_{b_i}$, to an output Hilbert space $\mathcal{H}_B$ together with an environment $\mathcal{H}_{\overline{B}}$.\footnote{In conventional applications of quantum error correction, it is helpful to distinguish the encoding map, which is chosen by the experimenter for its nice properties, from the noisy channel, in which the ``errors'' occur. Mathematically the properties of the code depend only on the composition of these two maps, and in holography there is no distinction between them. (Holographic codes are often interpreted as erasure codes where $V$ is the encoding map and the noisy channel consists only of the partial trace, but this separation is essentially arbitrary.) We shall therefore refer only to a single isometry $V$ mapping the logical state to the final physical state of system and environment.} The area $A_B(\mathbf{b})$ will depend only on the code $V$, the output Hilbert space $\mathcal{H}_B$, and the choice of input subsystems $\mathbf{b}$. It does not depend on any choice of input state $\ket{\Psi}$. Unlike previous definitions of area in QEC codes, the code will not need to have any particular error correcting properties. 

However, the definition of ``area'' we introduce will not be completely unique. Other distinct definitions can also lead to a QMS prescription, such as the natural `areas' in tensor networks and gravity. And, for some surfaces, our definition may not necessarily agree with those other definitions. 

That said, the differences are not important; they correspond to essentially trivial redefinitions of the same QMS prescription.
Consider, for example, a surface with very large area, such that it can never be quantum minimal for \emph{any} input state. We can clearly redefine the area of this surface, increasing its size, without affecting the validity of the QMS prescription. Of course, by doing so, we have not \emph{meaningfully} changed the QMS prescription that we are using, since the surface was never relevant to that prescription anyway!

Indeed, in Section \ref{sec:other_areas}, we argue that our definition of area is functionally unique in the sense hinted at above. We prove that any alternative definition of area consistent with a QMS prescription must agree with our definition on all surfaces that are ever quantum minimal according to that alternative prescription. Moreover, on \emph{any} surface -- including surfaces that are never quantum minimal -- our definition of area is a lower bound on any alternative definition.

\subsection{An area definition for arbitrary codes}

We begin by stating our definition of the area of a set of subsystems, and then provide some comments on and motivation for this definition.

\begin{defn} \label{defn:area_ours}
Let $V: \mathcal{H}_\mathrm{code} \cong \otimes_i \mathcal{H}_{b_i} \to \mathcal{H}_B \otimes \mathcal{H}_{\overline{B}}$ be an isometry and let $$\ket{\mathrm{CJ}}_{B\overline{B} r_1 \dots r_n} = V \ket{\mathrm{MAX}}_{b_1 \dots b_n r_1 \dots r_n}$$ be the associated Choi-Jamiolkowski state, with $\ket{\mathrm{MAX}}_{b_1 \dots b_n r_1 \dots r_n}$ the canonical maximally entangled state for $\mathcal{H}_{r_i} \cong \mathcal{H}^*_{b_i}$. 
Let $\mathbf{b}  = \{b_{i_1},b_{i_2} \dots \}$ be a subset of input legs with $\mathcal{H}_{\mathbf{b}} \cong \mathcal{H}_{b_{i_1}} \otimes \mathcal{H}_{b_{i_2}} \dots$ maximally entangled with $\mathcal{H}_{\mathbf{r}} \cong \mathcal{H}_{r_{i_1}} \otimes \mathcal{H}_{r_{i_2}} \dots$ in $\ket{\mathrm{MAX}}_{b_1 \dots b_n r_1 \dots r_n}$. We then define the area $A_B(\mathbf{b}) \in \mathbb{R}$ as
\begin{align}
    A_B(\mathbf{b}) = S(B {\bf r})_{\ket{\mathrm{CJ}}}
\end{align}
where $S(B \,\mathbf{r})_{\ket{\mathrm{CJ}}}$ is the entanglement entropy of $\mathcal{H}_B \otimes \mathcal{H}_{{\bf r}}$ for the state $\ket{\mathrm{CJ}}_{B\overline{B}r_1 \dots r_n}$.
\end{defn}

In other words the function $A_B: 2^{\{b_1 \dots b_n\}} \to \mathbb{R}$ maps a subset $\mathbf{b}$ of input legs (i.e. an element of the power set $2^{\{b_1 \dots b_n\}}$) to a real-valued ``area'' of the ``surface'' bounding the subset $\mathbf{b}$. Note that the function $A_B$ depends explicitly not only on the isometry $V$ but also on the output subsystem $\mathcal{H}_B$. The dependence on $\mathcal{H}_B$ represents the homology constraint present in the QMS prescription: two surfaces bounding the same bulk region $\mathbf{b}$ will have different areas if they are homologous to two different boundary regions $B$.\footnote{One way to see this is to consider the area of the surface bounding the empty set. $A_B(\varnothing)$ is not zero, and so the surface is not trivial; it is better interpreted as the surface homologous to $B$ but excluding the entire bulk.}

\begin{rem} \label{rem:complementaryareas}
    Given any subset $\mathbf{b} \subseteq \{b_1 \dots b_n\}$, the area $A_B(\mathbf{b}) = S(B \,\mathbf{r})_{\ket{\mathrm{CJ}}}$ and the complementary area $A_{\overline{B}}(\mathbf{\bar{b}}) = S(\overline{B} \,\mathbf{\bar{r}})_{\ket{\mathrm{CJ}}}$ are equal.
\end{rem}

\begin{rem}
Definition \ref{defn:area_ours} is a generalization of the area defined for codes in \cite{Harlow:2016vwg}.
Suppose that we have an isometry $V: \mathcal{H}_{b_1} \otimes \mathcal{H}_{b_2} \to \mathcal{H}_B \otimes \mathcal{H}_{\overline{B}}$ such that
\begin{align} \label{eq:Harlowstruct}
    V \ket{\psi}_{b_1 b_2} = V_1^{b_1 c_1 \to B} V_2^{b_2 c_2 \to \overline{B}} \ket{\psi}_{b_1 b_2} \ket{\chi}_{c_1 c_2},
\end{align}
for any state $\ket{\psi}$ where $\ket{\chi} \in \mathcal{H}_{c_1} \otimes \mathcal{H}_{c_2}$ is a fixed state and $V_1^{b_1 c_1 \to B}: \mathcal{H}_{b_1} \otimes \mathcal{H}_{c_1} \to \mathcal{H}_B$ and $V_2^{b_2 c_2 \to \overline{B}}: \mathcal{H}_{b_2} \otimes \mathcal{H}_{c_2} \to \mathcal{H}_{\overline{B}}$ are isometries. In other words, suppose $V$ obeys the structure theorem of \cite{Harlow:2016vwg} for exact QEC codes with complementary recovery. Then it can be seen immediately that
\begin{align} \label{eq:Harlowarea}
    A_B(b_1) = S(B r_1)_{\ket{\mathrm{CJ}}} = S(c_1)_{\ket{\chi}}.
\end{align}
The right hand side of \eqref{eq:Harlowarea} is exactly the definition of area used in \cite{Harlow:2016vwg}. Definition \ref{defn:area_ours} therefore agrees with the definition of area from \cite{Harlow:2016vwg} whenever both are defined. As emphasized above, however, Definition \ref{defn:area_ours} is much more general, applicable to arbitrary input subsystems for arbitrary quantum codes. In contrast, the definition of area used in \cite{Harlow:2016vwg} relies crucially on the structure theorem \eqref{eq:Harlowstruct} for complementary QEC codes, and so cannot be applied outside that context.
\end{rem}

\subsection{Comparisons with other definitions of area}\label{sec:other_areas}
While we do not know of any natural alternative definition of area for arbitrary subsystems in arbitrary quantum channels, there do exist standard notions of area for particular, special channels, such as tensor networks and of course gravitational systems themselves. If we are to relate our QMS prescription to the already known prescriptions for these special classes of channels, it is important to be able to compare the different definitions of area.

To give a sharp example, tensor networks codes are a particular class of quantum code where the isometry $V$ can be replaced by a network of smaller tensors contracted together. See Figure \ref{fig:TN}. For tensor network codes, the ``area'' of a set of input subsystems is commonly defined as the logarithm of the dimension of cut in-plane legs for a surface surrounding those input legs. 
Generically, random tensor networks (RTNs) \cite{Hayden:2016cfa} satisfy a QMS prescription with respect to this area. So what is the relationship between that area and the one from Definition \ref{defn:area_ours}? Can the two QMS prescriptions ever lead to different entanglement wedges for the same state?
\begin{figure}
\centering
\includegraphics[width = 0.5\textwidth]{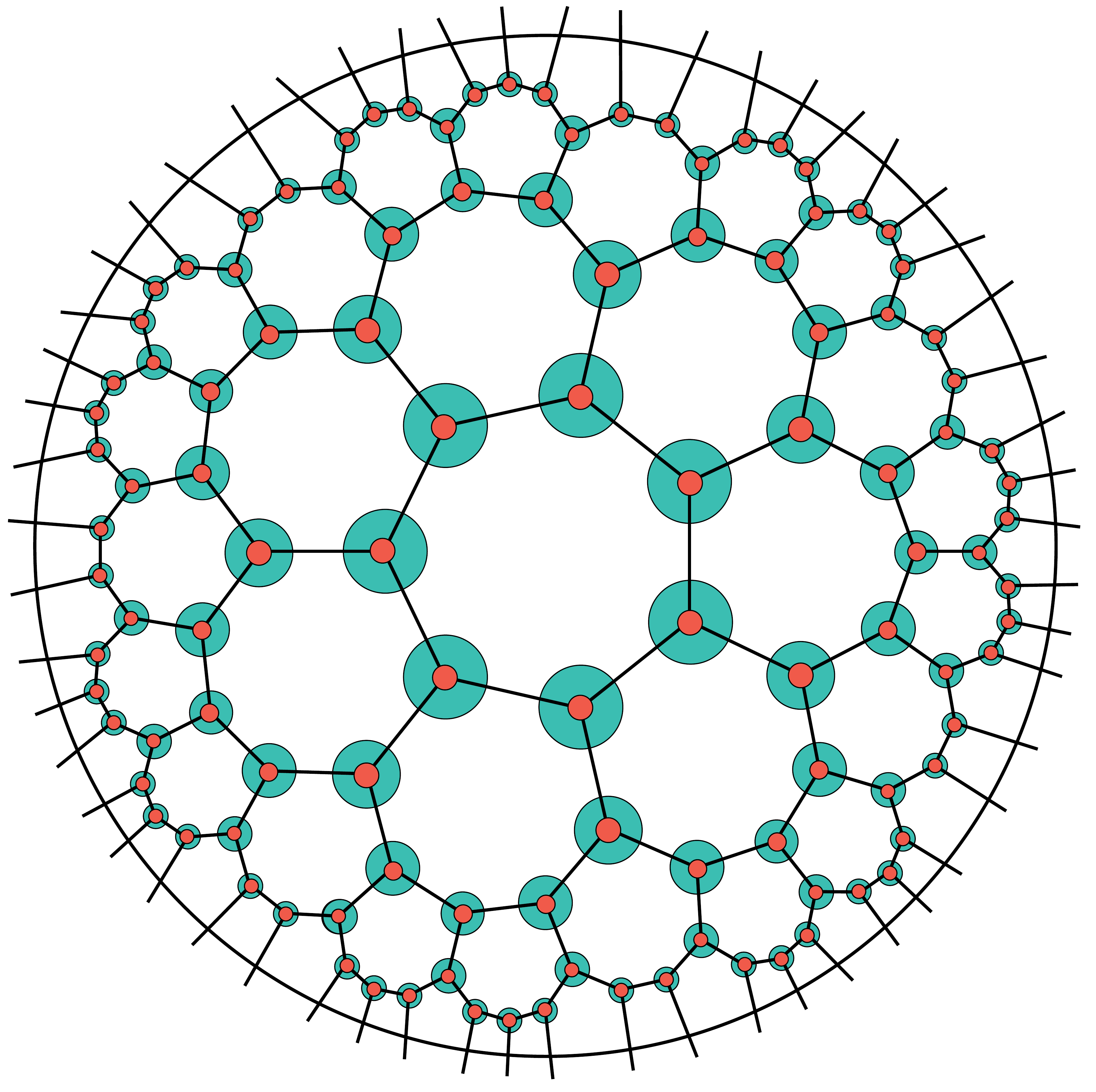}
\caption{An example tensor network. The blue circles each represent tensors, four-partite states on the union of the small red circle and three black legs touching that blue circle. Each tensor can be viewed as a map from a state on the red circle to a state on the black legs. A solid black line between tensors represents postselection of the state on one leg from each tensor onto the state $\ket{\mathrm{MAX}}$. The entire tensor network forms a map from states on the union of the small red circles to the union of the boundary legs, which are those legs crossing the outer circle.}
\label{fig:TN}
\end{figure}

We will give precise answers to these questions below in the form of two theorems and a corollary. The basic summary is that any alternative prescription will agree with Definition \ref{defn:area_ours} on the location of the entanglement wedge and on the value of the area for any surfaces that are potentially quantum minimal according to the alternative prescription. For other surfaces, our definition of area lower bounds all possible alternative definitions. The area in Definition \ref{defn:area_ours} is therefore ``functionally unique.''

In order to make more precise statements, we will need to first introduce some preliminary definitions.

\begin{defn}\label{defn:qms_other_specific}
Consider an isometry $V: \mathcal{H}_\mathrm{code} \cong \otimes_i \mathcal{H}_{b_i} \to \mathcal{H}_B \otimes \mathcal{H}_{\overline{B}}$.
An ``area'' function $A_B': 2^{\{b_1, \dots b_n\}} \to \mathbb{R}$  is said to \emph{lead to a quantum minimal surface (QMS) prescription for a state $\ket{\Psi} \in \mathcal{H}_\mathrm{code} \otimes \mathcal{H}_R \otimes \mathcal{H}_{\overline{R}}$ and region $B R$} if there exists a set of input subsystems $\mathbf{b} \subseteq \{b_1 \dots b_n \}$ such that for all product unitaries $U = U_{b_1} \otimes U_{b_2} \dots$, 
\begin{equation}\label{eq:other_QMS}
    S(B R)_{V U \ket{\Psi}} = A_B'(\mathbf{b}) + S(\mathbf{b} R)_{U \ket{\Psi}}~,
\end{equation}
and moreover that 
\begin{equation}
      A_B'({\bf b'}) + S({\bf b'} R)_{\ket{\Psi}} \ge A_B'({\bf b}) + S({\bf b} R)_{\ket{\Psi}}
\end{equation}
for all other sets of subsystems $\mathbf{b'} \subseteq \{b_1 \dots b_n\}$.
We then say that $\mathbf{b}$ is an entanglement wedge $\mathrm{EW}'_{BR} (\ket{\Psi})$ of $B R$ for the state $\ket{\Psi}$. 
\end{defn}

\begin{rem}
    As a notational convention, we reserve $A_B$ to denote the use of area as defined in Definition \ref{defn:area_ours}. Alternative area functions are therefore denoted by $A'_B$. Similarly, we reserve $\mathrm{EW}$ for the entanglement wedge defined by \eqref{eq:EW_first_def}, which Theorem \ref{thm:qms_exact} shows is equivalent to the entanglement wedge given in Definition \ref{defn:qms_other_specific} for $A_B$. 
\end{rem}

In essence, Definition \ref{defn:qms_other_specific} says that an area function $A'_B$ leads to a QMS prescription for the state $\ket{\Psi}$ if $\ket{\Psi}$ satisfies Conditions 2 and 4 of Theorem \ref{thm:qms_exact}, with the area $A_B$ from Definition \ref{defn:area_ours} replaced by the alternative area function $A_B'$.

\begin{rem} \label{rem:maxent}
    For a given area function $A_B'$, quantum code $V$, and state $\ket{\Psi}$, there can exist more than one entanglement wedge $EW'_{BR}(\ket{\Psi})$ if there exists more than one subset $\mathbf{b} \in \{ b_1 \dots b_n \}$ with the same generalized entropy $A'_B(\mathbf{b}) + S(\mathbf{b} R)_{\ket{\Psi}}$.
    
    For example, consider the area function $A_B$ from Definition \ref{defn:area_ours}, with a quantum code with $V$ the identity map after identifying $\mathcal{H}_{b_1} \cong \mathcal{H}_B$ and $\mathcal{H}_{b_2} \cong \mathcal{H}_{\overline{B}}$, with $d_{b_1} > d_{b_2}$, and with a state $\ket{\psi} \in \mathcal{H}_{b_1} \otimes \mathcal{H}_{b_2}$ that is maximally entangled between the two subsystems. For any product unitary $U = U_{b_1} \otimes U_{b_2}$, we have
    \begin{align}
        S(B)_{V \ket{\psi}} = \log d_{b_2} =  A_B(b_1 b_2) = A_B(b_1 b_2) + S(b_1 b_2)_{\ket{\psi}}~.
    \end{align}
    However we also have
    \begin{align}
        S(B)_{V\ket{\psi}} = S(b_1)_{\ket{\psi}} = A_B(b_1) + S(b_1)_{\ket{\psi}}~.
    \end{align}
    It is easy to verify that $b_2$ and $\varnothing$ have larger generalized entropy. Hence both $b_1$ and $b_1 b_2$ are entanglement wedges $EW_B (\ket{\psi})$ of the region $B$ for the state $\ket{\psi}$.
\end{rem}

We also want to define a notion of an area function $A'_B$ leading to a QMS prescription for an isometry $V$ \emph{in general}, rather than just doing so for one specific state $\ket{\Psi}$. A naive approach to doing so would be to demand that all states $\ket{\Psi}$ obey a QMS prescription. However, this would be too strong. As shown in \cite{Akers:2020pmf}, even the actual, geometric area of surfaces in AdS/CFT does not lead to a QMS prescription that works for all states.\footnote{In fact, it fails for highly incompressible states at leading order in $G$.} Instead, we use the area function $A_B$ from Definition \ref{defn:area_ours} to define a reasonable set of states for which the QMS prescription ought to apply.

\begin{defn}\label{defn:qms_other_general}
    We say that an area function $A'_B$ \emph{leads to a quantum minimal surface (QMS) prescription for the isometry $V$} if $A'_B$ leads to a QMS prescription for all states $\ket{\Psi}$ and regions $BR$ for which the area function $A_B$ from Definition \ref{defn:area_ours} leads to a QMS prescription.
\end{defn}

With all the needed definitions in hand, we are finally ready to state the main technical results of this section. We first show that the area $A_B$ introduced in Definition \ref{defn:area_ours} is a lower bound on any other definition of area that satisfies a QMS prescription.

\begin{thm}\label{thm:area_inequality}
If an area function $A_B'$ satisfies a QMS prescription for the isometry $V$, then
\begin{equation}
    A'_B(\mathbf{b}) \ge A_B(\mathbf{b})
\end{equation}
for all collections of inputs $\mathbf{b} \subseteq \{b_1 \dots b_n\}$. 
\end{thm}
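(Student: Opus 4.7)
The strategy is to exhibit a specific state on which the QMS prescription for $A_B$ can be verified directly, and then use Definition \ref{defn:qms_other_general} to bootstrap a constraint on $A'_B$. Concretely, I would take $\mathcal{H}_R \cong \mathcal{H}^*_{\mathbf{b}}$ and $\mathcal{H}_{\overline{R}} \cong \mathcal{H}^*_{\mathbf{\bar{b}}}$ and set
\begin{equation}
\ket{\Psi} := \ket{\mathrm{MAX}}_{\mathbf{b}R} \otimes \ket{\mathrm{MAX}}_{\mathbf{\bar{b}}\,\overline{R}}.
\end{equation}
Three features of this choice are essential: $V\ket{\Psi}$ equals $\ket{\mathrm{CJ}}$, so $S(BR)_{V\ket{\Psi}} = A_B(\mathbf{b})$; the subsystem $\mathbf{b}R$ is pure, so $S(\mathbf{b}R)_{\ket{\Psi}} = 0$; and for any subset $\mathbf{b}' \subseteq \{b_1,\dots,b_n\}$, writing $\mathbf{b}'_1 := \mathbf{b}' \cap \mathbf{b}$ and $\mathbf{b}'_2 := \mathbf{b}' \cap \mathbf{\bar{b}}$, the product structure of $\ket{\Psi}$ gives $S(\mathbf{b}' R)_{\ket{\Psi}} = \log d_{\mathbf{b}\setminus\mathbf{b}'_1} + \log d_{\mathbf{b}'_2}$.

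Next, I would verify that $A_B$ itself leads to a QMS prescription for $\ket{\Psi}$ with entanglement wedge $\mathbf{b}$. The generalized-entropy equality at $\mathbf{b}$ is immediate for $U = I$, and extends to every product unitary via the standard identity $(U_{b_i}\otimes I_{r_i})\ket{\mathrm{MAX}}_{b_i r_i} = (I_{b_i}\otimes U_{b_i}^T)\ket{\mathrm{MAX}}_{b_i r_i}$, which transfers each $U_{b_i}$ to a unitary on the corresponding reference and therefore leaves $S(BR)$ invariant while keeping $\mathbf{b}R$ pure. The minimality requirement reduces to proving
\begin{equation}
A_B(\mathbf{b}) - A_B(\mathbf{b}') \leq \log d_{\mathbf{b}\setminus\mathbf{b}'_1} + \log d_{\mathbf{b}'_2},
\end{equation}
which I would derive by applying the elementary bound $|S(X) - S(XY)| \leq \log d_Y$ to $\ket{\mathrm{CJ}}$ twice, once to remove the references $\mathbf{r}\setminus\mathbf{r}'_1$ from $B\mathbf{r}$ and once to append $\mathbf{r}'_2$. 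I expect this Araki-Lieb-style step to be the main technical obstacle, although it is ultimately only a few lines.

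With the $A_B$-QMS prescription established for $\ket{\Psi}$, Definition \ref{defn:qms_other_general} forces $A'_B$ to obey a QMS prescription on the same state. Its minimality condition, evaluated at $\mathbf{b}' = \mathbf{b}$, then yields
\begin{equation}
A'_B(\mathbf{b}) = A'_B(\mathbf{b}) + S(\mathbf{b}R)_{\ket{\Psi}} \geq S(BR)_{V\ket{\Psi}} = A_B(\mathbf{b}),
\end{equation}
which is precisely the claimed inequality. Note that $\mathbf{b}$ need not be the $A'_B$-entanglement wedge of $\ket{\Psi}$; the argument only uses that the $A'_B$-generalized entropy at $\mathbf{b}$ is an upper bound on the minimum, which by the $A'_B$-QMS prescription equals $S(BR)_{V\ket{\Psi}}$.
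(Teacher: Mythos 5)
Your proposal is correct and is essentially the paper's own argument: the paper also evaluates both prescriptions on the maximally entangled state across all inputs (your $\ket{\Psi}$ with $R\equiv\mathbf{r}$, $\overline{R}\equiv\mathbf{\bar r}$, so that $V\ket{\Psi}=\ket{\mathrm{CJ}}$), verifies that $A_B$ yields a QMS prescription there via the mirror trick plus an Araki--Lieb/subadditivity dimension bound (its Lemma \ref{lem:geoffs_entropy_ineq}), and then invokes Definition \ref{defn:qms_other_general} and the minimality of the $A'_B$-prescription at $\mathbf{b}$ to conclude $A'_B(\mathbf{b})\ge S(B\mathbf{r})_{\ket{\mathrm{CJ}}}=A_B(\mathbf{b})$. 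Your Araki--Lieb-style step is exactly the inequality the paper needs, so there is no gap.
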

\begin{proof}
     See Appendix \ref{app:sec2proofs}.
\end{proof}

Finally, we show that any alternative area function $A'_B$ that leads to a QMS prescription for the isometry $V$ must agree with $A_B$ for all subsets $\mathbf{b}$ that can ever be an entanglement wedge.

\begin{thm}\label{thm:area_equality}
Let $A_B'$ lead to a QMS prescription for the isometry $V$.
Then, for any subset $\mathbf{b} \subseteq \{b_1 \dots b_n\}$, if there exists a state $\ket{\Psi} \in \mathcal{H}_\mathrm{code} \otimes \mathcal{H}_R \otimes \mathcal{H}_{\overline{R}}$ such that $\mathbf{b} = \mathrm{EW}'_{BR}(\ket{\Psi})$, then
\begin{equation}
    A'_B(\mathbf{b}) = A_B(\mathbf{b})~.    
\end{equation}
\end{thm}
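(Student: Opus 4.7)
The plan is to use Theorem \ref{thm:qms_exact} together with Theorem \ref{thm:area_inequality} to force $A_B'(\mathbf{b}) = A_B(\mathbf{b})$. Since Theorem \ref{thm:area_inequality} already gives $A_B'(\mathbf{b}) \geq A_B(\mathbf{b})$, the task reduces to establishing equality. Suppose $\mathbf{b} = \mathrm{EW}_{BR}'(\ket{\Psi})$ for some state $\ket{\Psi} \in \mathcal{H}_\mathrm{code} \otimes \mathcal{H}_R \otimes \mathcal{H}_{\overline{R}}$. By Definition \ref{defn:qms_other_specific}, for every product unitary $U$,
\begin{equation}
S(BR)_{VU\ket{\Psi}} = A_B'(\mathbf{b}) + S(\mathbf{b}R)_{U\ket{\Psi}}.
\end{equation}
Setting $U = I$ gives $A_B'(\mathbf{b}) = S(BR)_{V\ket{\Psi}} - S(\mathbf{b}R)_{\ket{\Psi}}$, so the displayed equation is equivalent to $S(BR)_{VU\ket{\Psi}} - S(\mathbf{b}R)_{U\ket{\Psi}}$ being independent of the product unitary $U$.

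The core idea is now to invoke the Condition 2 $\Rightarrow$ Condition 1 direction of Theorem \ref{thm:qms_exact}. The key observation is that this implication should only use the fact that $S(BR) - S(\mathbf{b}R)$ is constant across product unitaries, and not the specific value of the constant. Granting this, our hypothesis implies that complementary state-specific recovery (Condition 1) holds for $\ket{\Psi}$ and $\mathbf{b}$. Then applying the Condition 1 $\Rightarrow$ Condition 2 direction of Theorem \ref{thm:qms_exact}, this time with the area $A_B$ from Definition \ref{defn:area_ours}, we conclude that
\begin{equation}
S(BR)_{V\ket{\Psi}} = A_B(\mathbf{b}) + S(\mathbf{b}R)_{\ket{\Psi}},
\end{equation}
so $A_B(\mathbf{b}) = S(BR)_{V\ket{\Psi}} - S(\mathbf{b}R)_{\ket{\Psi}} = A_B'(\mathbf{b})$, as desired.

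The main obstacle is verifying that the Condition 2 $\Rightarrow$ Condition 1 proof in Theorem \ref{thm:qms_exact} really depends only on the constancy of the entropy difference across product unitaries, rather than on the specific numerical value $A_B(\mathbf{b})$. This is natural because complementary recovery is a purely structural statement about existence of boundary unitaries, with no reference to any area, but it should be confirmed by inspecting the proof. As a backup, one could consider the state $\ket{\Psi'} = \otimes_i \ket{\mathrm{MAX}}_{b_i r_i}$ with $R = \mathbf{r}$; a direct computation using Remark \ref{rem:complementaryareas} shows that $A_B$ leads to a QMS prescription for $\ket{\Psi'}$ with $\mathbf{b}$ as a wedge, so by hypothesis $A_B'$ does too. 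Chasing the minimality of its $A_B'$-wedge $\mathbf{b}''$ through Theorem \ref{thm:area_inequality} and the Araki--Lieb Lipschitz-type bound $|A_B(\mathbf{b}) - A_B(\mathbf{b}'')| \leq \log d_{\mathbf{b} \triangle \mathbf{b}''}$ yields $A_B'(\mathbf{b}'') = A_B(\mathbf{b}'')$; however, it is not immediate that this forces the equality at the originally given $\mathbf{b}$, so the direct route via Theorem \ref{thm:qms_exact} is the cleaner path.
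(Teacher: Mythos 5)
Your central step does not hold up. You assert that the $(2)\Rightarrow(1)$ direction of Theorem \ref{thm:qms_exact} ``should only use'' the constancy of $S(BR)_{VU\ket{\Psi}} - S(\mathbf{b}R)_{U\ket{\Psi}}$ over product unitaries, not the value of the constant. Inspecting that proof shows the opposite: it hinges on \emph{saturating} the chain
\begin{equation}
A_B(\mathbf{b}) + S(\mathbf{b}R)_{\ket{\Psi}} \;=\; S(\overline{B}\,\overline{R}\,\mathbf{\bar{a}\bar{r}})_{VW_{\mathbf{b}}W_{\mathbf{\bar{b}}}\ket{\Psi}} \;\ge\; \int dU_{\mathbf{b}}\, dU'_{\mathbf{\bar{b}}}\; S(BR)_{VU_{\mathbf{b}}U'_{\mathbf{\bar{b}}}\ket{\Psi}}~,
\end{equation}
and saturation (needed so that strict concavity forces the reduced states to be independent of $U_{\mathbf{b}}$ and $U'_{\mathbf{\bar{b}}}$) is available only because Condition 2 fixes the right-hand side to exactly the same number $A_B(\mathbf{b}) + S(\mathbf{b}R)_{\ket{\Psi}}$, with $A_B$ from Definition \ref{defn:area_ours}. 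If the constant is instead $A'_B(\mathbf{b}) + S(\mathbf{b}R)_{\ket{\Psi}}$ with $A'_B(\mathbf{b}) \neq A_B(\mathbf{b})$, nothing saturates and Condition 1 does not follow --- and indeed it is generally false. Concretely, take $V = \mathbb{1}$ with $\mathcal{H}_{b_1} \cong \mathcal{H}_B$, $\mathcal{H}_{b_2} \cong \mathcal{H}_{\overline{B}}$, $\mathbf{b} = \{b_1, b_2\}$, no reference systems, and $\ket{\Psi}$ a product state: then $S(B)_{VU\ket{\Psi}} - S(\mathbf{b})_{U\ket{\Psi}} = 0$ for every product unitary, yet $A_B(\mathbf{b}) = \log d_{b_2} > 0$ and no unitary on $B$ alone can reproduce a generic $U_{b_2}$, so Condition 1 fails. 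So ``constancy $\Rightarrow$ Condition 1'' is not a theorem, and your main route collapses; your fallback via $\ket{\mathrm{MAX}}$ is essentially the ingredients of the paper's argument, but you leave it incomplete, as you acknowledge.

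The machinery you point to can nevertheless be repurposed into a correct proof, just not through Condition 1. The two concavity inequalities in the chain above are derived without any assumption on $V$ or $\ket{\Psi}$; your hypothesis $\mathbf{b} = \mathrm{EW}'_{BR}(\ket{\Psi})$ evaluates the integrand as $A'_B(\mathbf{b}) + S(\mathbf{b}R)_{\ket{\Psi}}$ for every product unitary (since product unitaries do not change $S(\mathbf{b}R)$), so the chain directly gives $A_B(\mathbf{b}) \ge A'_B(\mathbf{b})$, and Theorem \ref{thm:area_inequality} supplies the reverse inequality --- no saturation, and hence no claim about reconstruction, is needed. This is a genuinely different route from the paper, which instead argues by contradiction: it applies the $A'_B$ prescription to the state $\ket{\mathrm{MAX}}$ (legitimate because $A_B$ leads to a QMS prescription there, as shown in proving Theorem \ref{thm:area_inequality}), and then uses Lemma \ref{lem:geoffs_entropy_ineq} together with Theorem \ref{thm:area_inequality} to show that $A'_B(\mathbf{b}) > A_B(\mathbf{b})$ would contradict the minimality of $\mathbf{b}$ as $\mathrm{EW}'_{BR}(\ket{\Psi})$. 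As written, however, your proposal has a genuine gap at its central step.
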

\begin{proof}
 See Appendix \ref{app:sec2proofs}.
\end{proof}

\begin{cor}
For any area function $A_B'$ that satisfies a QMS prescription for an isometry $V$,
\begin{equation}
    \mathbf{b} = \mathrm{EW}'_{B R}(\ket{\Psi}) \implies \mathbf{b} = \mathrm{EW}_{B R}(\ket{\Psi})~.
\end{equation}
\end{cor}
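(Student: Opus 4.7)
The plan is to chain together the two preceding theorems with Theorem \ref{thm:qms_exact} to transfer the QMS prescription for $A'_B$ over to $A_B$. The argument is short because all of the heavy lifting was already done.

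First I would unpack what $\mathbf{b} = \mathrm{EW}'_{BR}(\ket{\Psi})$ means via Definition \ref{defn:qms_other_specific}: for every product unitary $U = U_{b_1} \otimes \cdots \otimes U_{b_n}$,
\begin{equation}
S(BR)_{VU\ket{\Psi}} = A'_B(\mathbf{b}) + S(\mathbf{b} R)_{U\ket{\Psi}},
\end{equation}
and $\mathbf{b}$ globally minimizes $A'_B(\mathbf{b}') + S(\mathbf{b}' R)_{\ket{\Psi}}$. The existence of this $\mathbf{b}$ establishes, in particular, that $A'_B$ satisfies a QMS prescription for the state $\ket{\Psi}$, so Definition \ref{defn:qms_other_general} together with the hypothesis on $A'_B$ tells us that $A_B$ itself also leads to a QMS prescription for $\ket{\Psi}$, i.e.\ there exists \emph{some} $\widetilde{\mathbf{b}} = \mathrm{EW}_{BR}(\ket{\Psi})$. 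This is where Definition \ref{defn:qms_other_general} plays its part.

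Next I would apply Theorem \ref{thm:area_equality} to the pair $(A'_B, \mathbf{b}, \ket{\Psi})$ to conclude $A'_B(\mathbf{b}) = A_B(\mathbf{b})$. Substituting this into the entropy equality above yields
\begin{equation}
S(BR)_{VU\ket{\Psi}} = A_B(\mathbf{b}) + S(\mathbf{b} R)_{U\ket{\Psi}}
\end{equation}
for every product unitary $U$. But this is exactly Condition~2 of Theorem \ref{thm:qms_exact} for the choice of subset $\mathbf{b}$ and state $\ket{\Psi}$. Theorem \ref{thm:qms_exact} then gives us Condition~4 (Minimality) for free, namely $A_B(\mathbf{b}') + S(\mathbf{b}' R)_{\ket{\Psi}} \ge A_B(\mathbf{b}) + S(\mathbf{b} R)_{\ket{\Psi}}$ for all $\mathbf{b}' \subseteq \{b_1,\dots,b_n\}$. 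Together with the entropy equality, these are precisely the two requirements in Definition \ref{defn:qms_other_specific} applied to $A_B$, so $\mathbf{b}$ is indeed an entanglement wedge $\mathrm{EW}_{BR}(\ket{\Psi})$.

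There is no serious obstacle; the only thing to be careful about is the subtlety pointed out in Remark \ref{rem:maxent} that entanglement wedges are not always unique. The corollary, however, only asserts that $\mathbf{b}$ is \emph{an} entanglement wedge in the $A_B$ sense, not the unique one, so this causes no trouble. The argument is essentially: Theorem \ref{thm:area_equality} pins $A'_B$ and $A_B$ together at $\mathbf{b}$, then Theorem \ref{thm:qms_exact} upgrades the entropy equality into the required minimality statement, completing the proof.
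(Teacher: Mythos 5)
Your argument is essentially the paper's proof: use Theorem \ref{thm:area_equality} to conclude $A'_B(\mathbf{b}) = A_B(\mathbf{b})$, substitute into the entropy equality from Definition \ref{defn:qms_other_specific} so that Condition 2 of Theorem \ref{thm:qms_exact} holds for $A_B$, and conclude $\mathbf{b} = \mathrm{EW}_{BR}(\ket{\Psi})$ (the paper's one-line proof leans on the equivalence of Conditions 1 and 2, while you additionally spell out minimality via Condition 4 to match Definition \ref{defn:qms_other_specific}; both are fine in view of the remark that these characterizations of $\mathrm{EW}$ coincide).

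One caveat: your second paragraph invokes Definition \ref{defn:qms_other_general} in the wrong direction. That definition says $A'_B$ must yield a QMS prescription whenever $A_B$ does; it does \emph{not} say that a QMS prescription for $A'_B$ on $\ket{\Psi}$ forces one for $A_B$, so you cannot deduce the existence of $\widetilde{\mathbf{b}} = \mathrm{EW}_{BR}(\ket{\Psi})$ from it at that stage. Fortunately this step is never used afterwards (its conclusion only emerges, correctly, at the end of your argument), so simply delete it and the proof stands.
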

\begin{proof}
    By assumption, for all product unitaries $U$, $S(BR)_{VU\ket{\Psi}} = A'_B(\mathbf{b}) + S(\mathbf{b} R )_{U\ket{\Psi}}$. 
    By Theorem \ref{thm:area_equality} this implies that also $S(B R)_{VU \ket{\Psi}} = A_B(\mathbf{b}) + S(\mathbf{b} R)_{U\ket{\Psi}}$, and therefore $\mathbf{b} = \mathrm{EW}_{B R}(\ket{\Psi})$. 
\end{proof}

\section{State-specific reconstruction}\label{sec:state_spec_rec}

The same basic idea -- that certain logical quantum information is encoded in a particular physical system -- underlies the many versions of quantum error correction: exact, approximate, subsystem, algebraic, etc.
In this section, we introduce a new kind of quantum error correction, which we call ``state-specific product unitary reconstruction.''
It is important to holography because it naturally defines the entanglement wedge, a fact that we prove in Theorem \ref{thm:qms_exact}.
It is also of independent interest as a type of generalized quantum error correction, which includes most more conventional definitions as special cases.
In order to motivate it, we first review the usual, ``state-independent'' version of quantum error correction.
We then gradually generalize the definition until we reach state-specific reconstruction.

\subsection{State-independent quantum error correction} \label{sec:qec_review} 
State-independent quantum error correction can be formalized in a number of equivalent ways.
The following theorem reviews some of these ways; see e.g. \cite{Harlow:2016vwg} for others.\footnote{To see the equivalence of the conditions in Theorem \ref{thm:SI_rec_versions} below and those in Theorem 3.1 of \cite{Harlow:2016vwg}, note that Condition 3 below is manifestly equivalent to the Hermitian and anti-Hermitian parts of Condition 2 of Theorem 3.1 of \cite{Harlow:2016vwg}.}
\begin{thm}[Formulations of exact state-independent QEC]\label{thm:SI_rec_versions}
    Let $V: \mathcal{H}_\mathrm{code} \to \mathcal{H}_{B} \otimes \mathcal{H}_{\overline{B}}$ be an isometry between finite-dimensional Hilbert spaces. Then the following four statements are equivalent:
    \begin{enumerate}
        \item \emph{(Schr\"{o}dinger picture)} There exists a ``recovery isometry'' $W_B: \mathcal{H}_B \to \mathcal{H}_\mathrm{code} \otimes \mathcal{H}_E$ such that for any state $\rho_\mathrm{code}$
        \begin{equation}\label{eq:shrod_condition_SIR}
            \tr_{\overline{B}E}[ W_B V \rho_\mathrm{code} V^\dagger W^\dagger_B ] = \rho_\mathrm{code}~.
        \end{equation}
        \item \emph{(Heisenberg picture)} There exists a ``recovery isometry'' $W_B: \mathcal{H}_B \to \mathcal{H}_\mathrm{code} \otimes \mathcal{H}_E$ such that for any Hermitian operator $O_\mathrm{code}$, the  Hermitian operator $O_B  := W_B^\dagger O_\mathrm{code} W_B$ satisfies
        \begin{equation}\label{eq:heis_condition_SIR}
            \braket{\psi|V^\dagger O_B V |\psi} = \braket{\psi| O_\mathrm{code} |\psi}~,
        \end{equation}
        for all $\ket{\psi} \in \mathcal{H}_\mathrm{code}$.
        \item \emph{(Reconstruction of Hermitian operators)} For any Hermitian operator $O_\mathrm{code}$, there exists a Hermitian operator $O_B$ such that for all $\ket{\psi} \in \mathcal{H}_\mathrm{code}$
        \begin{align}
            O_B V \ket{\psi} &= V O_\mathrm{code} \ket{\psi} \label{eq:SI_rec_cond3_1}~.
        \end{align}
        \item \emph{(Reconstruction of unitary operators)} For any unitary operator $U_\mathrm{code}$, there exists a unitary operator $U_B$ such that for all $\ket{\psi} \in \mathcal{H}_\mathrm{code}$
        \begin{equation}\label{eq:statement_4}
            U_B V \ket{\psi} = V U_\mathrm{code} \ket{\psi}~.
        \end{equation}
    \end{enumerate}
\end{thm}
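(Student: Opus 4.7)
My plan is to prove the cyclic chain $1 \Rightarrow 2 \Rightarrow 3 \Rightarrow 4 \Rightarrow 1$, which is the most economical organization. The arrows $1 \Rightarrow 2$ and $3 \Rightarrow 4$ are essentially algebraic, while $2 \Rightarrow 3$ requires a small structural identity and $4 \Rightarrow 1$ is the main work. For $1 \Rightarrow 2$, I would set $O_B := W_B^\dagger O_\mathrm{code} W_B$ (Hermitian by construction) and expand $\braket{\psi | V^\dagger O_B V | \psi} = \tr\bigl[O_\mathrm{code}\, \tr_{\overline{B}E}\bigl(W_B V \ket{\psi}\bra{\psi} V^\dagger W_B^\dagger\bigr)\bigr] = \braket{\psi | O_\mathrm{code} | \psi}$, where the second equality is \eqref{eq:shrod_condition_SIR}. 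For $3 \Rightarrow 4$, given a unitary $U_\mathrm{code} = e^{i H_\mathrm{code}}$, I would reconstruct the Hermitian $H_\mathrm{code}$ as a Hermitian $H_B$ via condition 3, note by induction that $H_B^n V = V H_\mathrm{code}^n$ for all $n$, and exponentiate to obtain $U_B := e^{iH_B}$ satisfying \eqref{eq:statement_4}.

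For $2 \Rightarrow 3$, my plan is to unpack condition 2 into a structural identity for $W_B V$. Expanding $W_B V = \sum_b N_b \otimes \ket{b}_{E\overline{B}}$ in any orthonormal basis of $\mathcal{H}_E \otimes \mathcal{H}_{\overline{B}}$, with $N_b : \mathcal{H}_\mathrm{code} \to \mathcal{H}_\mathrm{code}$, condition 2 (after polarization) reads $\sum_b N_b^\dagger O_\mathrm{code} N_b = O_\mathrm{code}$ for every Hermitian $O_\mathrm{code}$. This is the statement that the Heisenberg dual of the CPTP map with Kraus operators $\{N_b\}$ on $\mathcal{H}_\mathrm{code}$ is the identity, so the map itself is the identity channel, and a rank-$1$ Choi-matrix argument forces $N_b = c_b I_\mathrm{code}$ with $\sum_b |c_b|^2 = 1$. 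Hence $W_B V = I_\mathrm{code} \otimes \ket{\chi}_{E\overline{B}}$ with $\ket{\chi} := \sum_b c_b \ket{b}$, from which condition 3 follows immediately by taking $O_B := W_B^\dagger O_\mathrm{code} W_B$ and using $W_B^\dagger W_B = I_B$ to compute $O_B V \ket{\psi} = W_B^\dagger O_\mathrm{code} \ket{\psi}\ket{\chi} = W_B^\dagger W_B V O_\mathrm{code}\ket{\psi} = V O_\mathrm{code}\ket{\psi}$.

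The main work is $4 \Rightarrow 1$, which I would split into two steps. In Step A, I establish that $\sigma_{\overline{B}} := \tr_B V \rho V^\dagger$ is independent of the code state $\rho$: given any two pure code states $\ket{\psi}, \ket{\phi}$, pick $U_\mathrm{code}$ with $U_\mathrm{code} \ket{\psi} = \ket{\phi}$, use condition 4 to produce a unitary $U_B$ with $U_B V \ket{\psi} = V \ket{\phi}$, and take $\tr_B$ of both sides (which kills $U_B$); convexity extends the conclusion to mixed $\rho$. In Step B, I construct the recovery. Spectrally decompose $\sigma_{\overline{B}} = \sum_k p_k \ket{k'}\bra{k'}$ and expand $V = \sum_k M_k \otimes \ket{k'}_{\overline{B}}$ with $M_k : \mathcal{H}_\mathrm{code} \to \mathcal{H}_B$; the state-independence from Step A gives $\tr[M_l^\dagger M_k \rho] = p_k \delta_{kl}$ for every $\rho$, hence $M_l^\dagger M_k = p_k \delta_{kl} I_\mathrm{code}$, so the rescaled maps $\tilde M_k := M_k/\sqrt{p_k}$ (for $p_k > 0$) are isometries with pairwise orthogonal images in $\mathcal{H}_B$. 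Introducing $\mathcal{H}_E$ with orthonormal basis $\{\ket{k}_E\}$ indexed by $\mathrm{supp}(\sigma_{\overline{B}})$ (enlarged so that $d_\mathrm{code} d_E \ge d_B$), I would define $W_B : \mathcal{H}_B \to \mathcal{H}_\mathrm{code} \otimes \mathcal{H}_E$ by $W_B \tilde M_k \ket{\psi} := \ket{\psi} \otimes \ket{k}_E$ on $\bigoplus_k \tilde M_k(\mathcal{H}_\mathrm{code})$ and extend to an isometry on the orthogonal complement. Then $W_B V = I_\mathrm{code} \otimes \ket{\chi}$ with $\ket{\chi} := \sum_k \sqrt{p_k} \ket{k}_E \ket{k'}_{\overline{B}}$, and \eqref{eq:shrod_condition_SIR} drops out by tracing over $\mathcal{H}_E \otimes \mathcal{H}_{\overline{B}}$. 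I expect Step A to be the conceptual crux: promoting the state-by-state unitary intertwining of condition 4 into the rigid environment condition $\tr_B V\rho V^\dagger = \sigma_{\overline{B}}$ is precisely what makes the code correctable, after which Step B is standard bookkeeping with a Kraus/Schmidt-like decomposition.
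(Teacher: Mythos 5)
Your proof is correct, and two of its four arrows genuinely differ from the paper's. The steps $1 \Rightarrow 2$ and $3 \Rightarrow 4$ coincide with the paper's (direct expectation-value computation, and exponentiation of the reconstructed Hermitian generator). For $2 \Rightarrow 3$, the paper instead applies Condition 2 to both $O_\mathrm{code}$ and $O_\mathrm{code}^2$ and uses the projector inequalities $VV^\dagger \leq \mathbb{1}$, $W_B W_B^\dagger \leq \mathbb{1}$ to force saturation, which yields $O_B V = V O_\mathrm{code}$ without ever exhibiting the structure of $W_B V$; your route extracts the stronger structural statement $W_B V = \mathbb{1}_\mathrm{code} \otimes \ket{\chi}_{E\overline{B}}$ via polarization plus the rank-one Choi argument for the identity channel, which is a bit more work but makes the subsequent verification of Condition 3 trivial and is a reusable fact. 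For $4 \Rightarrow 1$, the paper uses its Haar-twirl machinery (Lemma \ref{lem:CJ_otimes_psi}): it builds the isometry $W_B = \int dU_\mathrm{code}\, \ket{U_\mathrm{code}} \otimes U_B$ on $\mathcal{H}_B \otimes L^2(\mathbf{U}_\mathrm{code})$ and reads off recovery from the fundamental-representation block, at the price of an infinite-dimensional auxiliary space and a (mild) measurability requirement on $U_\mathrm{code} \mapsto U_B$. You instead first derive decoupling of the environment, $\tr_B V\rho V^\dagger = \sigma_{\overline{B}}$ independent of $\rho$ (unitary transitivity on pure states plus linearity), and then do the standard Kraus/Schmidt bookkeeping $M_l^\dagger M_k = p_k \delta_{kl}\mathbb{1}$ to build $W_B$ explicitly; this is finite-dimensional, elementary, and passes through the information-disturbance condition \eqref{eq:statement_5} that the paper states separately as equivalent. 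What the paper's twirl buys in exchange is uniformity: the identical construction is what powers the subsystem and state-specific generalizations (Theorems \ref{thm:SS_rec_versions} and \ref{thm:qms_exact}), whereas your decoupling argument leans on unitary transitivity over the whole code space and would need modification in those settings.
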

\begin{proof}
    See Appendix \ref{app:sec3proofs}.
\end{proof}

All four conditions in Theorem \ref{thm:SI_rec_versions} have natural physical interpretations. Condition 1 says that there exists a unitary evolution $W_B$ that recovers the code space state $\ket{\psi}$ from the reduced state of $V\ket{\psi}$ on $\mathcal{H}_B$. 
Condition 2 is the standard dual Heisenberg picture: for any measurement $O_\mathrm{code}$ on the code space, we can find a measurement $O_B := W_B^\dagger O_\mathrm{code} W_B$ with the same expectation values. Condition 3 has a similar physical interpretation to Condition 2 in terms of simulating measurements, but is naively slightly stronger (although equivalent in reality) since it requires the measurement operator $O_B$ to have the correct action on all states $V\ket{\psi}$ rather than merely the correct expectation values. 

Finally, Condition 4 says that any unitary evolution of the code space can be simulated by a unitary evolution of the state on $\mathcal{H}_B$. In other words, we can perfectly manipulate the state $\ket{\psi}$ while only having control over $\mathcal{H}_B$.

It turns out that Condition 4 of Theorem \ref{thm:SI_rec_versions} will most naturally generalize to the definition of state-specific reconstruction that we introduce in this paper. Although less common than some other definitions of quantum error correction -- in particular the Schr\"{o}dinger picture (Condition 1) -- in the quantum computing literature,
it is also a very natural perspective from the point of view of holography, where we are often interested in finding boundary protocols for preparing particular bulk states.\footnote{See e.g. the discussion of reconstruction \emph{complexity} in \cite{Brown:2019rox, Engelhardt:2021mue, Engelhardt:2021qjs}. The circuit complexity of reconstruction is only well defined when reconstructing unitary bulk operators using unitary boundary operators.}
We will return to it when we talk about state-specific reconstruction.

There is a natural generalization of QEC, called subsystem QEC, where only a subsystem $\mathcal{H}_{b}$ of $\mathcal{H}_\mathrm{code}$ needs to be reconstructible. The four conditions for QEC from Theorem \ref{thm:SI_rec_versions} all generalize to subsystem codes, as we now review; again, see e.g. \cite{Harlow:2016vwg} for other equivalent conditions.\footnote{Theorem \ref{thm:SS_rec_versions} should be compared to Theorem 4.1 of \cite{Harlow:2016vwg}. Again, the correspondence between Condition 3 of Theorem \ref{thm:SS_rec_versions} and Condition 2 of Theorem 4.1 of \cite{Harlow:2016vwg} is immediate.}

\begin{thm}[Subsystem QEC]\label{thm:SS_rec_versions}
    Let $V: \mathcal{H}_\mathrm{code} \cong \mathcal{H}_{b} \otimes \mathcal{H}_{\bar{b}} \to \mathcal{H}_{B} \otimes \mathcal{H}_{\overline{B}}$ be an isometry between finite-dimensional Hilbert spaces. Then the following four statements are equivalent:
    \begin{enumerate}
        \item \emph{(Schr\"{o}dinger picture)} There exists a ``recovery isometry'' $W_B: \mathcal{H}_B \to \mathcal{H}_{b} \otimes \mathcal{H}_E$ such that for any state $\rho_\mathrm{code}$
        \begin{equation}\label{eq:shrod_condition_SSR}
            \tr_{\overline{B}E}[ W_B V \rho_\mathrm{code} V^\dagger W^\dagger_B ] = \rho_{b}~.
        \end{equation}
        \item \emph{(Heisenberg picture)} There exists a ``recovery isometry'' $W_B: \mathcal{H}_B \to \mathcal{H}_{b} \otimes \mathcal{H}_E$ such that for any Hermitian operator $O_{b}$, the Hermitian operator $O_B  := W_B^\dagger O_\mathrm{code} W_B$ satisfies
        \begin{equation}\label{eq:heis_condition_SSR}
            \braket{\psi|V^\dagger O_B V |\psi} = \braket{\psi| O_{b} |\psi}~,
        \end{equation}
        for all $\ket{\psi} \in \mathcal{H}_\mathrm{code}$.
        \item \emph{(Reconstruction of Hermitian operators)} For any Hermitian operator $O_{b}$, there exists a Hermitian operator $O_B$ such that for all $\ket{\psi} \in \mathcal{H}_\mathrm{code}$
        \begin{align}
            O_B V \ket{\psi} &= V O_{b} \ket{\psi} \label{eq:SS_rec_cond3_1}~.
        \end{align}
        \item \emph{(Reconstruction of unitary operators)} For any unitary operator $U_{b}$, there exists a unitary operator $U_B$ such that for all $\ket{\psi} \in \mathcal{H}_\mathrm{code}$
        \begin{equation}\label{eq:SS_statement_4}
            U_B V \ket{\psi} = V U_{b} \ket{\psi}~.
        \end{equation}
    \end{enumerate}
\end{thm}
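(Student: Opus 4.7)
The plan is to establish the equivalences by the cycle $1 \Rightarrow 2 \Rightarrow 3 \Rightarrow 4 \Rightarrow 1$, directly generalizing the proof of Theorem \ref{thm:SI_rec_versions} (which is the special case where $\mathcal{H}_{\bar b}$ is trivial). Three of the implications are routine applications of Schr\"odinger--Heisenberg duality, polarization, and functional calculus; the real difficulty is concentrated in the final step, $4 \Rightarrow 1$, where one must extract an explicit recovery isometry from the hypothesis that every logical unitary has a physical representative.

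For $1 \Rightarrow 2$, I take the same recovery isometry $W_B$ and set $O_B := W_B^\dagger (O_b \otimes I_E) W_B$, extended as the identity on $\mathcal{H}_{\overline{B}}$; then cyclicity of the trace together with Condition 1 gives $\braket{\psi|V^\dagger O_B V|\psi} = \tr[O_b\, \rho_b] = \braket{\psi|O_b|\psi}$. For $2 \Rightarrow 3$, polarization upgrades matching expectation values to $V^\dagger O_B V = O_b$ as operators on $\mathcal{H}_{\mathrm{code}}$; then, writing $P := V V^\dagger$ for the projector onto the image of $V$, the symmetrized Hermitian operator $\tilde O_B := P O_B P + (I-P) O_B (I-P)$ obeys $\tilde O_B V = V V^\dagger O_B V = V O_b$, as required. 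For $3 \Rightarrow 4$, I write $U_b = e^{i H_b}$ with $H_b$ Hermitian, reconstruct $H_b$ by a Hermitian $H_B$ using Condition 3, and set $U_B := e^{i H_B}$. Since $H_b$ preserves $\mathcal{H}_{\mathrm{code}}$, iteration yields $H_B^n V \ket{\psi} = V H_b^n \ket{\psi}$ on the code space, and exponentiating gives $U_B V\ket{\psi} = V U_b\ket{\psi}$.

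The main obstacle is $4 \Rightarrow 1$. The cleanest approach is to pass to the Choi--Jamiolkowski state $\ket{\mathrm{CJ}} = V(\ket{\mathrm{MAX}}_{br}\otimes \ket{\mathrm{MAX}}_{\bar b \bar r})$. Using the identity $U_b \ket{\mathrm{MAX}}_{br} = U_r^T \ket{\mathrm{MAX}}_{br}$, Condition 4 becomes the intertwining statement that for every $U_b$ there is a $U_B$ with $U_B \ket{\mathrm{CJ}} = U_r^T \ket{\mathrm{CJ}}$. Tracing out $\overline{B}\bar r$ forces the reduced state $\rho_{Br}$ to be left invariant under conjugation by $U_B \otimes U_r^{T\dagger}$ for every $U_b$; a Schur-type argument on this representation of $\mathbf{U}(d_b)$ then pins down the structure of $\rho_{Br}$ as a maximally entangled state between $\mathcal{H}_r$ and a ``$b$-shaped'' factor inside $\mathcal{H}_B$, tensored with a spectator state on the complement. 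This decomposition exhibits the desired isometry $W_B : \mathcal{H}_B \to \mathcal{H}_b \otimes \mathcal{H}_E$. The subtle step is verifying that the $W_B$ extracted from the CJ state recovers \emph{every} encoded state, not just the CJ state itself; this follows because the CJ state is the channel's canonical purification, so an isometric $b$-recovery for $\ket{\mathrm{CJ}}$ automatically yields $b$-recovery for any input by linearity and by the Kraus representation of the induced channel.
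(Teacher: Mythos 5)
Your step $2 \Rightarrow 3$ has a genuine gap. After polarization you correctly obtain $V^\dagger O_B V = O_b$ on $\mathcal{H}_\mathrm{code}$, but your fix $\tilde O_B := P O_B P + (I-P) O_B (I-P)$ with $P := VV^\dagger$ is not an operator acting only on $\mathcal{H}_B$: the projector $P$ onto the image of $V$ lives on $\mathcal{H}_B \otimes \mathcal{H}_{\overline{B}}$ and generically does not factorize, so $\tilde O_B$ is a nonlocal operator on the full physical Hilbert space. Condition 3 demands a Hermitian operator supported on $\mathcal{H}_B$ alone (tensored with $\mathbb{1}_{\overline{B}}$) — that locality is the entire nontrivial content, since with nonlocal operators the identity $\tilde O_B V = V O_b$ holds for \emph{any} isometry $V$ with no error-correcting structure at all. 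The paper closes this gap differently: it keeps the specific operator $O_B = W_B^\dagger (O_b \otimes \mathbb{1}_E) W_B$ supplied by the recovery isometry of Condition 2 (which is manifestly localized on $\mathcal{H}_B$), applies Condition 2 to both $O_b$ and $O_b^2$, and uses the operator-inequality chain
\begin{equation}
V^\dagger O_B V V^\dagger O_B V \le V^\dagger W_B^\dagger O_b W_B W_B^\dagger O_b W_B V \le V^\dagger W_B^\dagger O_b^2 W_B V = V^\dagger \widetilde{O}_B V
\end{equation}
whose forced saturation gives $O_B V = V O_b$ directly, with the same localized $O_B$. Your argument as written does not establish this, and it cannot be repaired by any symmetrization involving $VV^\dagger$; you either need the saturation argument or a structure theorem for the code.

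Your $1 \Rightarrow 2$ and $3 \Rightarrow 4$ match the paper. Your $4 \Rightarrow 1$ is a genuinely different route from the paper's (the paper builds the recovery isometry by group-averaging, $W_B = \int dU_b \ket{U_b}\otimes U_B$, and isolates the fundamental-representation block via Peter--Weyl, which hands you the decoder explicitly), and your CJ-state/decoupling strategy can be made to work; the purification-uniqueness step and the "recover all inputs from recovering $\ket{\mathrm{CJ}}$" step are standard. One caveat: the family $\{U_B \otimes U_r^{T\dagger}\}$ is \emph{not} a representation of $\mathbf{U}(d_b)$, since $U_B$ depends on $U_b$ in an uncontrolled, generally non-homomorphic way, so "Schur on $\rho_{Br}$" is not quite right as stated. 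The clean version is to trace out $B$ instead: $U_B\ket{\mathrm{CJ}} = U_r^T\ket{\mathrm{CJ}}$ forces $\rho_{\overline{B}\bar r r}$ to commute with $U_r^T \otimes \mathbb{1}$ for all $U$, hence $\rho_{\overline{B}\bar r r} = \rho_{\overline{B}\bar r}\otimes \mathbb{1}_r/d_b$, and then Uhlmann on the purifying system $\mathcal{H}_B$ produces the isometry $W_B$. You would also need to note that a measurable choice of $U_B$ is unnecessary in your route (an advantage over the paper's integral construction, which does require it).
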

\begin{proof}
    See Appendix \ref{app:sec3proofs}.
\end{proof}

\begin{rem}
     Subsystem QEC can be further generalized to operator algebra QEC, where the set of reconstructible operators can be a von Neumann subalgebra acting on $\mathcal{H}_\mathrm{code}$, but that generalization will only play a small part in this paper.
\end{rem}
Note that subsystem QEC is a weaker condition than ordinary QEC -- only certain unitaries need to be reconstructible -- but it is also a generalization of ordinary QEC (ordinary QEC is the special case of a subsystem code where the reconstructible subsystem is the entire input Hilbert space).

\begin{defn}[Complementary Recovery]
    We say that state-independent \emph{complementary recovery} is possible for an isometry $V: \mathcal{H}_b \otimes \mathcal{H}_{\bar b} \to \mathcal{H}_B \otimes \mathcal{H}_{\overline{B}}$ if system $\mathcal{H}_B$ forms a subsystem QEC code for $\mathcal{H}_b$ and simultaneously system $\mathcal{H}_{\overline{B}}$ forms a subsystem QEC code for the complementary subsystem $\mathcal{H}_{\bar b}$.
\end{defn}

\subsection{Approximate codes and state specificity} \label{sec:zerobitsreview}

    Theorem \ref{thm:SI_rec_versions} and Theorem \ref{thm:SS_rec_versions} are about `exact' QEC codes, where information can always be perfectly recovered without any error. This assumption greatly simplifies the analysis, but in practice it is an assumption that is almost never truly valid, whether in real world experiments with quantum computers or in theoretical applications such as quantum gravity.
    
    To deal with errors, we need to use a more general framework called \emph{approximate QEC} \cite{Crpeau2005ApproximateQE, Leung:1997mm, Bny2009ConditionsFT, Bny2010GeneralCF}. Often, doing so merely adds considerable effort, while leading to the same qualitative conclusions: so, for example, there exist analogous versions of Theorem \ref{thm:SI_rec_versions} and Theorem \ref{thm:SS_rec_versions} for approximate QEC codes, with all the various conditions replaced by approximate versions of the same condition, and with different errors all bounded in terms of one another.
    For example, \eqref{eq:shrod_condition_SIR} can be replaced by the statement that there exists a $W_B$ such that for any state $\rho_\mathrm{code}$, 
    \begin{equation}\label{eq:shrod_condition_SIR_approx}
        \lVert \tr_{\overline{B}\, E}[ W_B V \rho_\mathrm{code} V^\dagger W^\dagger_B ] - \rho_\mathrm{code} \rVert_1 \le \varepsilon~,
    \end{equation}
    for some small $\varepsilon > 0$, where $\lVert X \rVert_1 = \tr\sqrt{X^\dagger X}$ is the Schatten 1-norm.\footnote{This is only one possible definition of the reconstruction error $\varepsilon$. Other natural possibilities include replacing the Schatten 1-norm in \eqref{eq:shrod_condition_SIR_approx} by the quantum fidelity, or replacing the states $\rho_\mathrm{code}$ by states in $\mathcal{H}_\mathrm{code} \otimes \mathcal{H}_R$ where the reference system $\mathcal{H}_R$ has dimension $d_R \geq d_\mathrm{code}$. However all these definitions are equivalent in the sense that the different errors uniformly bound one another in the limit $\varepsilon \to 0$. See e.g. Proposition 4.3 of \cite{kretschmann2004tema}. This is the same sense in which Conditions 1-4 are equivalent in Theorem \ref{thm:zerobits} below.} 

    However, the relationship between exact and approximate QEC is not always so simple. It turns out that for large quantum systems approximate QEC can be possible with very small $\varepsilon$, even when exact QEC is completely impossible. In essence, two limits do not commute: a) the number of qubits $n$ goes to infinity and b) the error $\varepsilon$ goes to zero. A classic example of this effect is that exact QEC is always impossible in the presence of arbitrary errors of greater than a quarter of the qubits, whereas approximate QEC is still possible so long as the errors are on fewer than half of the qubits \cite{crepeau2005approximate}.

 To see these qualitative differences in practice, consider yet another definition of quantum error correction, known as the \emph{information-disturbance trade-off}.
    In this definition, rather than studying the information accessible from $\mathcal{H}_B$, we focus on the information that is \emph{inaccessible} from the thrown away degrees of freedom in $\mathcal{H}_{\overline{B}}$.
    These are equivalent because in quantum mechanics information can neither be copied (the no-cloning theorem) nor destroyed (the no-erasure theorem). Hence quantum information can be recovered from system $B$ if and only if the complementary subsystem $\overline{B}$ is completely ignorant of it.
    
   In other words, equivalent to the four statements in Theorem \ref{thm:SI_rec_versions} is the statement that for all states $\ket{\psi} \in \mathcal{H}_\mathrm{code}$,
    \begin{equation}\label{eq:statement_5}
        \psi_{\overline{B}} = \tr_B[V \ket{\psi}\bra{\psi} V^\dagger] = \omega_{\overline{B}}~,
    \end{equation}
    where $\omega_{\overline{B}}$ is a fixed state that is independent of $\ket{\psi}$.
    That said, this statement is only equivalent in the context of \emph{exact} QEC. In approximate QEC we need to be more careful. Suppose for all $\ket{\psi} \in \mathcal{H}_\mathrm{code}$ we have
    \begin{align} \label{eq:forget}
        \lVert \psi_{\overline{B}} - \omega_{\overline{B}} \rVert_1 \leq \varepsilon~.
    \end{align}
    It seems like an observer with access only to $\mathcal{H}_{\overline{B}}$ learns nothing about $\ket{\psi}$, and so by the information-disturbance tradeoff, $\mathcal{H}_B$ should learn everything about $\ket{\psi}$. But that conclusion would be too quick. 
    
   Let us consider a new state $\ket{\Psi} \in \mathcal{H}_\mathrm{code} \otimes \mathcal{H}_R$ where $\mathcal{H}_R$ is an auxiliary reference system isomorphic to $\mathcal{H}_\mathrm{code}$. Crucially, suppose that the observer of $\mathcal{H}_{\overline{B}}$ also has access to $\mathcal{H}_R$. For the observer to truly learn nothing about $\ket{\Psi}$ from $\mathcal{H}_{\overline{B}}$ (beyond the reduced state $\Psi_R$ that they already know), then we need
    \begin{align} \label{eq:completelyforget}
        \lVert \Psi_{\overline{B} R} - \omega_{\overline{B}} \otimes \Psi_R  \rVert_1 \leq \varepsilon'~,
    \end{align}
    for some small $\varepsilon'$. But \eqref{eq:forget} being true for all $\ket{\psi} \in \mathcal{H}_\mathrm{code}$ and small $\varepsilon$ does not mean that \eqref{eq:completelyforget} is true for all $\ket{\Psi} \in \mathcal{H}_\mathrm{code} \otimes \mathcal{H}_R$ and small $\varepsilon'$. Instead, the tightest possible bound on \eqref{eq:completelyforget} from \eqref{eq:forget} is that $\varepsilon' \leq O(d_\mathrm{code} \varepsilon)$ \cite{hayden2012weak}. Since the code space dimension $d_\mathrm{code}$ is exponential in the number of qubits, this means that we have essentially no control over $\varepsilon'$ at all, given reasonable values of $\varepsilon$. The tiny corrections to $\psi_{\overline{B}}$ allowed by \eqref{eq:forget} can build up in superposition and give large corrections to $\Psi_{\overline{B}R}$.

    As the intuition from the information disturbance tradeoff suggests, it is \eqref{eq:completelyforget} that is equivalent to state-independent approximate QEC as in \eqref{eq:shrod_condition_SIR_approx} \cite{kretschmann2008information}. However, while \eqref{eq:forget} is too weak to imply \emph{those} conditions, it does nonetheless imply something meaningful about the information accessible from $\mathcal{H}_B$. 
    Specifically, in the language of \cite{hayden2017alphabits, Hayden:2018khn}, it says that $\mathcal{H}_B$ \emph{encodes the zero-bits} of $\mathcal{H}_\mathrm{code}$. Other names for this are that $B$ can do universal subspace QEC, or that $B$ can do quantum identification \cite{hayden2012weak, hayden2017alphabits}.
    
    As with state-independent quantum error correction, there exist various equivalent operational definitions of a zero-bit code. While we review a number of these below in Theorem \ref{thm:zerobits}, in many ways the nicest definition is Condition 4, which is analogous to Condition 4 (Reconstruction of unitary operators) from Theorem \ref{thm:SI_rec_versions}. As in that condition, for any unitary operator $U_\mathrm{code}$ and state $\ket{\psi} \in \mathcal{H}_\mathrm{code}$ there needs to exist an (approximate) unitary reconstruction $U_B$ such that
    \begin{align}
        U_B V \ket{\psi} \approx V U_\mathrm{code} \ket{\psi}.
    \end{align}
    However, the reconstruction $U_B$ is now allowed to depend not only on the unitary $U_\mathrm{code}$ that it is reconstructing, but also on the state $\ket{\psi}$ itself. We can implement an arbitrary evolution of the state $\ket{\psi}$ while acting only on $\mathcal{H}_B$, but we have to know the initial state $\ket{\psi}$ in order to do so. The reconstruction is therefore ``state specific.''

\begin{thm}[Zero-bit codes]\label{thm:zerobits}
    Let $V: \mathcal{H}_\mathrm{code} \to \mathcal{H}_{B} \otimes \mathcal{H}_{\overline{B}}$ be an isometry between finite-dimensional Hilbert spaces. Then the following four statements are equivalent:
    \begin{enumerate}
        \item \emph{(Forgetfulness of the environment)} For all states $\ket{\psi}$, the reduced density matrix $\psi_{\overline{B}} = \tr_B[V \ket{\psi} \bra{\psi} V^\dagger]$ satisfies
        \begin{align} \label{eq:zerobits_cond1}
            \lVert \psi_{\overline{B}} - \omega_{\overline{B}} \rVert_1 \leq \varepsilon_1~,
        \end{align}
        for some fixed density matrix $\omega_{\overline{B}}$.
        \item \emph{(Universal subspace quantum error correction)} For any two-dimensional subspace $\widetilde{\mathcal{H}}_\mathrm{code} \subseteq \mathcal{H}_\mathrm{code}$, there exists an isometry $\widetilde{W}_B : \mathcal{H}_B \to \widetilde{\mathcal{H}}_\mathrm{code} \otimes \mathcal{H}_E$ such that for all density matrices $\tilde{\rho}_\mathrm{code}$ with support only on $\widetilde{\mathcal{H}}_\mathrm{code}$
        \begin{align} \label{eq:zerobits_cond2}
            \left\lVert \tr_{\overline{B}E}[ \widetilde{W}_B V \tilde{\rho}_\mathrm{code} V^\dagger \widetilde{W}^\dagger_B ] - \tilde{\rho}_\mathrm{code}\right\rVert_1 \leq \varepsilon_2~.
        \end{align}
        \item \emph{(Distinguishing quantum states)} For any pair of orthogonal states $\ket{\psi}, \ket{\phi} \in \mathcal{H}_\mathrm{code}$, there exists a projector $\Pi_B$ such that
        \begin{align}\label{eq:zerobits_cond3}
            \braket{\psi|V^\dagger\Pi_B V|\psi} \geq 1 - \varepsilon_3~,~\text{ and  }~~~ \braket{\phi|V^\dagger\Pi_B V|\phi} \leq \varepsilon_3~.
        \end{align}
        \item \emph{(State-specific reconstruction of unitary operators)} For some fixed state $\ket{\psi_0} \in \mathcal{H}_\mathrm{code}$ and for any unitary operator $U_\mathrm{code}$, there exists a unitary $U_B$ such that the inner product
        \begin{equation}\label{eq:zerobits_cond4}
            \big\lVert V U_\mathrm{code}  \ket{\psi_0} -  U_B V  \ket{\psi_0} \big\rVert \leq  \varepsilon_4~.
        \end{equation}

    \end{enumerate}
    Specifically, Condition 1 implies Condition 2 with $\varepsilon_2 \leq 4 \sqrt{3\varepsilon_1}$; Condition 2 implies Condition 3 with $\varepsilon_3 \leq 2 \varepsilon_2$; Condition 3 implies Condition 4 with $\varepsilon_4 \leq 2\sqrt{ \varepsilon_3}$; and Condition 4 implies Condition 1 with $\varepsilon_1 \leq 2 \varepsilon_4$.
\end{thm}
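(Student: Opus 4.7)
The plan is to prove the cyclic chain $1 \Rightarrow 2 \Rightarrow 3 \Rightarrow 4 \Rightarrow 1$, tracking the quantitative bounds stated in the theorem. The easy ends of the cycle are $4 \Rightarrow 1$ and $2 \Rightarrow 3$. For $4 \Rightarrow 1$: since $U_B$ acts trivially on $\mathcal{H}_{\overline{B}}$, the reduced state of $U_B V\ket{\psi_0}$ on $\overline{B}$ is identically $\psi_{0,\overline{B}}$, and the standard inequality $\|\rho^{(\Psi)}_{\overline{B}} - \rho^{(\Phi)}_{\overline{B}}\|_1 \le 2 \|\ket{\Psi} - \ket{\Phi}\|$ yields $\|(U_\mathrm{code}\psi_0 U_\mathrm{code}^\dagger)_{\overline{B}} - \psi_{0,\overline{B}}\|_1 \le 2\varepsilon_4$. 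Since every code state equals $U_\mathrm{code}\ket{\psi_0}$ for some unitary, setting $\omega_{\overline{B}} := \psi_{0,\overline{B}}$ gives $\varepsilon_1 \le 2\varepsilon_4$. For $2 \Rightarrow 3$: given the recovery isometry $\widetilde{W}_B$ on the 2D subspace $\widetilde{\mathcal{H}} = \mathrm{span}(\ket{\psi},\ket{\phi})$, I would take the POVM element $\widetilde{W}_B^\dagger(\ket{\psi}\bra{\psi} \otimes I_E)\widetilde{W}_B$; the closeness of the recovered state to the original yields both distinguishing inequalities directly, and a standard rounding procedure converts the POVM element into a bona fide projector $\Pi_B$ with at most a factor-of-$2$ loss.

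For $1 \Rightarrow 2$, I would apply Condition 1 to any 2-dimensional subspace $\widetilde{\mathcal{H}} \subset \mathcal{H}_\mathrm{code}$, forming the canonical maximally entangled state between $\widetilde{\mathcal{H}}$ and a 2D reference $R$ and pushing it through $V$. The diagonal blocks of the resulting reduced state on $\overline{B}R$ are directly controlled by Condition 1, while off-diagonal blocks of the form $\tr_B(V\ket{0}\bra{1}V^\dagger)$ are bounded by applying Condition 1 to the four BB84-like superpositions $(\ket{0}\pm\ket{1})/\sqrt{2}$ and $(\ket{0}\pm i\ket{1})/\sqrt{2}$. The crucial feature is that since $\dim R = 2$ is fixed, this yields $\|\rho_{\overline{B}R} - \omega_{\overline{B}}\otimes I_R/2\|_1 = O(\varepsilon_1)$ with no dimension penalty (in contrast to the generic $d_\mathrm{code}$ blow-up discussed in Section \ref{sec:zerobitsreview} and \cite{hayden2012weak}). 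Approximate decoupling of $R$ from $\overline{B}$ then produces the recovery isometry via standard decoupling-recovery duality, with the square root in $\varepsilon_2 \le 4\sqrt{3\varepsilon_1}$ arising from the usual conversion between decoupling error on the reference and the trace-distance reconstruction error.

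The main obstacle is $3 \Rightarrow 4$. The natural tool is Uhlmann's theorem, which identifies the optimal achievable vector distance as $\|V U_\mathrm{code}\ket{\psi_0} - U_B V\ket{\psi_0}\| = \sqrt{2(1 - F(\phi_{\overline{B}},\psi_{0,\overline{B}}))}$ with $\ket{\phi} = U_\mathrm{code}\ket{\psi_0}$, so it suffices to show $F(\phi_{\overline{B}},\psi_{0,\overline{B}}) \ge 1 - 2\varepsilon_3$. My plan is to restrict to the 2D subspace $\widetilde{\mathcal{H}} = \mathrm{span}(\ket{\psi_0},\ket{\phi})$ and apply Condition 3 to orthogonal pairs drawn from all three mutually unbiased qubit bases of this subspace. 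In the exact case, distinguishability in three MUBs forces a product structure $V|_{\widetilde{\mathcal{H}}} = V_B \otimes \ket{\chi}_{\overline{B}}$, making every $\overline{B}$-reduced state identically $\ket{\chi}\bra{\chi}$; checking this proceeds by showing that the off-diagonal operator $Y := \tr_{\overline{B}}(V\ket{0}\bra{1}V^\dagger)$ must have $\|Y + Y^\dagger\|_1 = \|Y - Y^\dagger\|_1 = 2$, which pins down its polar form and forces $V\ket{0}, V\ket{1}$ to share the same purification on $\overline{B}$. The technical heart of the step is proving an approximate version: Condition 3 with error $\varepsilon_3$ forces $\|V\ket{\alpha} - \ket{v_\alpha}_B \otimes \ket{\chi}_{\overline{B}}\| = O(\sqrt{\varepsilon_3})$ uniformly for $\ket{\alpha} \in \widetilde{\mathcal{H}}$. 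Defining $U_B$ to extend $\ket{v_{\psi_0}} \mapsto \ket{v_\phi}$ then gives $\|U_B V\ket{\psi_0} - V\ket{\phi}\| \le \|\eta_{\psi_0}\| + \|\eta_\phi\| \le 2\sqrt{\varepsilon_3}$ as claimed. The main difficulty is threading the estimates through the three MUB constraints so that the final power of $\varepsilon_3$ is $1/2$ --- a naive application of Fuchs--van de Graaf at an intermediate stage (first bounding $\|\phi_{\overline{B}}-\psi_{0,\overline{B}}\|_1$ and only then invoking Uhlmann) would degrade the result to $O(\varepsilon_3^{1/4})$.
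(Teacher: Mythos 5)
Your implications $1\Rightarrow 2$, $2\Rightarrow 3$ and $4\Rightarrow 1$ track the paper's proof essentially step for step: the BB84-type superpositions plus Uhlmann and Fuchs--van de Graaf for $1\Rightarrow 2$, the candidate POVM element $\widetilde{W}_B^\dagger(\ket{\psi}\bra{\psi}\otimes\mathbb{1}_E)\widetilde{W}_B$ followed by passing to the optimal (Helstrom) projector for $2\Rightarrow 3$, and the norm-to-trace-distance bound with $\omega_{\overline{B}}=\psi_{0,\overline{B}}$ for $4\Rightarrow 1$. The constants you quote are attainable along those lines.

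The gap is in $3\Rightarrow 4$, which is exactly the step you yourself flag as the ``technical heart'' and then leave unproved. Your Uhlmann reduction is fine: it suffices to show $F\big((U_\mathrm{code}\psi_0 U_\mathrm{code}^\dagger)_{\overline{B}},\,\psi_{0,\overline{B}}\big)\geq 1-2\varepsilon_3$. But the route you propose to that bound --- an approximate rigidity statement that Condition 3, applied to the three mutually unbiased bases of $\mathrm{span}(\ket{\psi_0},U_\mathrm{code}\ket{\psi_0})$, forces $\lVert V\ket{\alpha}-\ket{v_\alpha}_B\otimes\ket{\chi}_{\overline{B}}\rVert\leq\sqrt{\varepsilon_3}$ uniformly on the subspace --- is precisely what you admit you cannot yet prove without losing a power of $\varepsilon_3$, and it is also stronger than needed; even granting such a lemma, landing on the exact constant in $\lVert\eta_{\psi_0}\rVert+\lVert\eta_\phi\rVert\leq 2\sqrt{\varepsilon_3}$ after three separate applications of Condition 3 is delicate. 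So as written the chain is broken at its only nontrivial link. The paper avoids rigidity entirely with a one-shot construction: writing $\braket{\psi_0|U_\mathrm{code}|\psi_0}=e^{i\phi}\cos\theta$, it forms the orthogonal (unnormalized) states $\ket{\psi_\pm}=\ket{\psi_0}\pm e^{-i\phi}U_\mathrm{code}\ket{\psi_0}$, applies Condition 3 once to this single pair to obtain a projector $\Pi_B$, and sets $U_B=e^{i\phi}(2\Pi_B-\mathbb{1}_B)$. A short computation gives $\mathrm{Re}\braket{\psi_0|U_\mathrm{code}^\dagger V^\dagger U_B V|\psi_0}\geq 1-2\varepsilon_3$, hence $\lVert VU_\mathrm{code}\ket{\psi_0}-U_BV\ket{\psi_0}\rVert\leq 2\sqrt{\varepsilon_3}$, with the square root incurred only once, at the final overlap-to-norm conversion. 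If you prefer to keep your Uhlmann framing, note that this same construction already yields the fidelity bound $F\geq 1-2\varepsilon_3$ you were after, so the MUB rigidity lemma can be dropped altogether.
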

\begin{proof}
    See Appendix \ref{app:sec3proofs}.
\end{proof}

It is worth emphasizing that Condition 4 of Theorem \ref{thm:zerobits} does not depend on the choice of fixed state $\ket{\psi_0}$. Given any alternative choice $\ket{\psi_0'}$, there exists a unitary $U_\mathrm{code}'$ such that $\ket{\psi_0'} = U_\mathrm{code}' \ket{\psi_0}$. Now let $\ket{\psi} = U_\text{code} \ket{\psi_0'} = U_\text{code}'' \ket{\psi_0}$. We have
\begin{equation}
    U_B'' U_B'^\dagger V\ket{\psi_0'} \approx U_B'' V \ket{\psi_0} \approx V \ket{\psi}~.
\end{equation}
So $U_B'' U_B'^\dagger$ is a state-specific reconstruction of $U_\mathrm{code}$ for the state $\ket{\psi_0'}$.

It is illuminating to note how this story changes if we instead consider states $\ket{\Psi_0} \in \mathcal{H}_\text{code} \otimes \mathcal{H}_R$ for some large reference system $\mathcal{H}_R \cong \mathcal{H}_\mathrm{code}$. Suppose we know that, analogously to Condition 4, for any $U_\mathrm{code}$, there exists $U_B$ such that
\begin{align} \label{eq:statespecificentangled}
    U_B V \ket{\Psi_0} \approx V U_\mathrm{code} \ket{\Psi_0}~.
\end{align}
If $\ket{\Psi_0}$ factorizes between $\mathcal{H}_\mathrm{code}$ and $\mathcal{H}_R$, then this is exactly Condition 4 and we only have a zero-bit code. However, if $\ket{\Psi_0}$ is highly entangled, then it becomes a much stronger condition: roughly speaking, rather than only needing to act correctly on a single state, $U_B$ needs to act (approximately) correctly on all states in the Schmidt decomposition of $\ket{\Psi_0}$. When $\ket{\Psi_0}$ is maximally entangled, this is (average case) state-independent approximate quantum error correction.\footnote{Technically, \eqref{eq:statespecificentangled} is slightly weaker than e.g. \eqref{eq:shrod_condition_SIR_approx}, because there can exist a few states $\ket{\psi} \in \mathcal{H}_\text{code}$ for which the reconstruction $U_B$ does not work, so long as such states only make up a small fraction of the full Hilbert space $\mathcal{H}_\mathrm{code}$. In contrast, we demanded that \eqref{eq:shrod_condition_SIR_approx} was true for all states without exception. However, in practice, the two conditions are comparably ``difficult'' to achieve; for example the quantum capacity of a channel is the same by either definition.}

Here we can see the first hints at how phase transitions in the size of an entanglement wedge work; as the entanglement entropy of $\ket{\Psi_0}$ grows, it becomes harder to reconstruct unitaries $U_\mathrm{code}$ using $\mathcal{H}_B$.

\subsection{State-specific product unitary reconstruction}
Entanglement wedge reconstruction in holographic codes involves all of the generalizations of quantum error correction introduced in Sections \ref{sec:qec_review} and \ref{sec:zerobitsreview}:
\begin{itemize}
    \item A boundary region $B$ can only reconstruct operators acting on certain subsystems of the bulk Hilbert space, namely those within its entanglement wedge.
    \item The reconstructions are state-specific, in a way that generalizes zero-bit codes \cite{hayden2017alphabits, Hayden:2018khn}.
    \item The reconstructions are approximate, with errors of at least $e^{-\mathcal{O}(1/G)}$ \cite{Dong:2016eik, Dong:2017xht}.
\end{itemize}
To include all these features within a single formalism, we need to introduce a new type of reconstruction that we call ``state-specific product unitary reconstruction,'' or just ``state-specific reconstruction'' for short. This is of course the type of reconstruction that appears as Condition 1 in Theorem \ref{thm:qms_exact} (except for the non-zero error $\varepsilon$, which is incorporated in Condition 1 in Theorem \ref{thm:qms_approx}); we will now introduce it carefully.
While the aim here is to formalize the specific features of entanglement wedge reconstruction in holographic codes, state-specific product unitary reconstruction is very general and includes as special cases all the types of quantum error correction discussed above.

Let $\mathcal{H}_\mathrm{code}$ be an input Hilbert space factorizing  as a product of $n$
``local'' Hilbert spaces $\mathcal{H}_{b_i}$, with the isometry 
$$V: \mathcal{H}_\mathrm{code} \cong \mathcal{H}_{b_1} \otimes \mathcal{H}_{b_2} \dots \mathcal{H}_{b_n} \to \mathcal{H}_B \otimes \mathcal{H}_{\overline{B}}~,$$
as in Figure \ref{fig:V}.
Consider some subset of the input subsystems,
\begin{align}
\mathbf{b} = \{b_{i_1}, b_{i_2}, \dots\} \subseteq \{b_1, \dots b_n \}~.
\end{align}
Finally let $\ket{\Psi} \in \mathcal{H}_\mathrm{code} \otimes \mathcal{H}_R \otimes \mathcal{H}_{\overline{R}}$ be an arbitrary state. We want to introduce an appropriate definition of state-specific reconstruction of $\mathcal{H}_{\mathbf{b}}$ using $\mathcal{H}_B \otimes \mathcal{H}_R$

It turns out to be most natural to define state-specific reconstruction in terms of the reconstruction of unitary operators as in Condition 4 of Theorems \ref{thm:SI_rec_versions}, \ref{thm:SS_rec_versions}, and \ref{thm:zerobits}. This is, after all, the only condition that appeared in closely related form in all three theorems! A more precise reason is that Conditions 2 and 3 of Theorem \ref{thm:zerobits} really involve a choice of two states (either two orthogonal states or two states whose span defines $\widetilde{H}_\text{code}$) rather than just one,\footnote{Physically, this is because state evolution is nontrivial to implement even when the initial state is already known, whereas measurements are only nontrivial if the system can be in at least two possible states.} and this is hard to generalize to entangled states in subsystem codes.

A naive first guess then would be to define state-specific reconstruction for the state $\ket{\Psi}$ by the requirement that for any unitary $U_{\mathbf{b}}$ acting on $\mathcal{H}_{\mathbf{b}} \cong \mathcal{H}_{b_{i_1}} \otimes \mathcal{H}_{b_{i_2}} \dots$, there exists a reconstruction $U_{BR}$ such that
\begin{align}
    U_{BR} V \ket{\Psi} \approx V U_{\mathbf{b}} \ket{\Psi}.
\end{align}
However, it turns out that this is not always possible in holography, even if the reconstruction $U_{BR}$ can be specific to the state $\ket{\Psi}$ with $\mathbf{b}$ the entanglement wedge of $BR$ for the state $\ket{\Psi}$. The reason is that the entanglement wedge $\mathbf{b}$ depends on the entanglement structure of the bulk state $\ket{\Psi}$. If the operator $U_{\mathbf{b}}$ can change the entanglement structure of $\ket{\Psi}$, it can change the entanglement wedge, and thereby, for example, change the entropy $S(BR)$. Clearly this change cannot be achieved by any unitary $U_{BR}$.

It is therefore natural to restrict $U_{\mathbf{b}}$ to operators that don't change the entanglement structure of $\ket{\Psi}$ -- namely product unitaries. 
\begin{defn}[Product unitary]
    A unitary $U_\mathbf{b}$ acting on $\mathcal{H}_{\mathbf{b}} = \mathcal{H}_{b_{i_1}} \otimes \mathcal{H}_{b_{i_2}} \otimes ...$ is said to be a product unitary if it can be written as a product of local unitaries $U_{b_{i_1}}\otimes U_{b_{i_2}} \otimes...$.
\end{defn}
We saw a similar effect in the discussion at the end of Section \ref{sec:zerobitsreview}, where the entanglement structure of $\ket{\Psi_0} \in \mathcal{H}_\mathrm{code} \otimes \mathcal{H}_R$ -- which, crucially, couldn't be changed by unitaries $U_\mathrm{code}$ -- controlled how easy reconstruction was. We therefore define state-specific product unitary reconstruction by the following set of two equivalent conditions:

\begin{thm}[State-specific product unitary reconstruction]\label{thm:SSPUconditions}
    Let $V: \mathcal{H}_\mathrm{code} \cong \mathcal{H}_{b_1} \otimes \mathcal{H}_{b_2} \dots \mathcal{H}_{b_n} \to \mathcal{H}_{B} \otimes \mathcal{H}_{\overline{B}}$ be an isometry, with ${\bf b} = \{b_{i_1}, b_{i_2} ...\}$ a subset of input legs. Finally, let $\ket{\Psi} \in \mathcal{H}_\mathrm{code} \otimes \mathcal{H}_R \otimes \mathcal{H}_{\overline{R}}$ be a fixed, arbitrary state with $\mathcal{H}_R, \mathcal{H}_{\overline{R}}$ reference systems of arbitrary dimension. Then the following two statements are equivalent:
    \begin{enumerate}
        \item \emph{(State-specific reconstruction of product unitaries)} For any product unitary $U_\mathbf{b}$, there exists a unitary reconstruction $U_{BR}$ such that
        \begin{equation}
        \big\lVert U_{BR} V \ket{\Psi} -  V U_\mathbf{b}\ket{\Psi}\big\rVert \leq \varepsilon_1~.
    \end{equation}
        \item \emph{(Forgetfulness of product unitaries by the environment)} For any product unitary $U_\mathbf{b}$, we have
        \begin{align} 
        \left\lVert\tr_{B R}[V U_\mathbf{b} \ket{\Psi} \bra{\Psi} U_\mathbf{b}^\dagger V^\dagger] - \tr_{B R}[V \ket{\Psi} \bra{\Psi} V^\dagger]\right\rVert_1 \leq \varepsilon_2~.
        \end{align}

    \end{enumerate}
    Specifically, Condition 1 implies Condition 2 with $\varepsilon_2 \leq 2 \varepsilon_1$, while Condition 2 implies Condition 1 with $\varepsilon_1 \leq \sqrt{\varepsilon_2}$.
\end{thm}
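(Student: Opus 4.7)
}
The plan is to recognize this as an instance of the information-disturbance tradeoff for a single pair of states, and execute each direction by combining the standard inequality between purification distance and trace distance of reductions with Uhlmann's theorem. Throughout, I will work with the two pure states $\ket{\Psi_0} := V\ket{\Psi}$ and $\ket{\Psi_U} := V U_\mathbf{b} \ket{\Psi}$ on $\mathcal{H}_B \otimes \mathcal{H}_{\overline{B}} \otimes \mathcal{H}_R \otimes \mathcal{H}_{\overline{R}}$, together with their reductions $\rho_{\overline{B}\overline{R}}$ and $\sigma_{\overline{B}\overline{R}}$ on the complement of $BR$.

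For the direction Condition 1 $\Rightarrow$ Condition 2, I would first note that if $U_{BR}$ satisfies $\lVert U_{BR} \ket{\Psi_0} - \ket{\Psi_U}\rVert \le \varepsilon_1$, then applying the standard bound that the trace-norm distance of reduced density matrices is at most twice the Hilbert-space distance of any purifications gives
\begin{equation}
\bigl\lVert \tr_{BR}\!\bigl[U_{BR}\ket{\Psi_0}\bra{\Psi_0}U_{BR}^\dagger\bigr] - \tr_{BR}\!\bigl[\ket{\Psi_U}\bra{\Psi_U}\bigr] \bigr\rVert_1 \le 2 \varepsilon_1 .
\end{equation}
Since $U_{BR}$ acts trivially on $\overline{B}\overline{R}$, the first term equals $\rho_{\overline{B}\overline{R}}$, while the second equals $\sigma_{\overline{B}\overline{R}}$. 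This is precisely Condition 2 with $\varepsilon_2 \le 2\varepsilon_1$.

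For the converse, I would invoke Uhlmann's theorem in the form
\begin{equation}
\max_{U_{BR}} \bigl|\braket{\Psi_0| U_{BR}^\dagger |\Psi_U}\bigr| \;=\; F\bigl(\rho_{\overline{B}\overline{R}}, \sigma_{\overline{B}\overline{R}}\bigr),
\end{equation}
where the maximum is over unitaries on $\mathcal{H}_B \otimes \mathcal{H}_R$ (because $\ket{\Psi_0}$ and $\ket{\Psi_U}$ are purifications of the two reduced states on the same complementary Hilbert space). By the Fuchs--van de Graaf inequality, Condition 2 gives $F(\rho_{\overline{B}\overline{R}}, \sigma_{\overline{B}\overline{R}}) \ge 1 - \varepsilon_2/2$. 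Choosing an overall phase so that the optimal inner product is real and nonnegative, the achieving $U_{BR}$ satisfies
\begin{equation}
\lVert U_{BR}\ket{\Psi_0} - \ket{\Psi_U}\rVert^2 = 2 - 2\,\mathrm{Re}\braket{\Psi_U|U_{BR}|\Psi_0} \le 2(1-F) \le \varepsilon_2,
\end{equation}
which is Condition 1 with $\varepsilon_1 \le \sqrt{\varepsilon_2}$.

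I do not expect any serious obstacle: both directions are essentially textbook manipulations. The only minor care required is to make sure the Uhlmann optimization is over unitaries on exactly the system $BR$ (and not $\mathcal{H}_B \otimes \mathcal{H}_R$ augmented by some extension), which follows because $\ket{\Psi_0}$ and $\ket{\Psi_U}$ already live in the same Hilbert space and have reductions on the same complement, and to get the constants as sharp as those claimed by using the version $1-F \le \tfrac{1}{2}\lVert\rho-\sigma\rVert_1$ of Fuchs--van de Graaf rather than its weaker relative.
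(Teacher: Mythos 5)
Your proposal is correct and follows essentially the same route as the paper, which proves the theorem by noting that Condition 1 is, via Uhlmann's theorem, equivalent to a fidelity bound between the two reduced states on $\overline{B}\,\overline{R}$ and then applying the Fuchs--van de Graaf inequalities; your constants $\varepsilon_2 \le 2\varepsilon_1$ and $\varepsilon_1 \le \sqrt{\varepsilon_2}$ match. The only cosmetic difference is that in the $1\Rightarrow 2$ direction you bypass the fidelity by using the direct bound $\lVert \ket{\psi}\bra{\psi}-\ket{\phi}\bra{\phi}\rVert_1 \le 2\lVert \ket{\psi}-\ket{\phi}\rVert$ together with monotonicity under partial trace, which is an equally valid shortcut.
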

\begin{proof}
    See Appendix \ref{app:sec3proofs}.
\end{proof}

\begin{rem}
    In the special case where $n=1$, $\mathbf{b} = \{b_1\}$ and $d_R = d_{\overline{R}} = 1$, Theorem \ref{thm:SSPUconditions} reduces to Conditions 1 and 4 of Theorem \ref{thm:zerobits}.
\end{rem}

\begin{rem}
    The ``no cloning theorem'' does not prevent state-specific product unitary reconstruction of the same subsystem from both $\mathcal{H}_B$ and $\mathcal{H}_{\overline{B}}$.\footnote{Instead, the no cloning theorem only rules out reconstruction of a subsystem by both $\mathcal{H}_B$ and $\mathcal{H}_{\overline{B}}$ if in the state $\ket{\Psi}$ that subsystem is maximally entangled with a reference system.} Consider the example from Remark \ref{rem:maxent} where the state $\ket{\psi}$ has $\mathcal{H}_{b_1} \cong \mathcal{H}_B$ maximally entangled with $\mathcal{H}_{b_2} \cong \mathcal{H}_{\overline{B}}$ (with $V = \mathbb{1}$, $d_{b_1} > d_{b_2}$ and $d_R = d_{\overline{R}} = 1$). Since all maximally entangled states are related by a unitary operator acting on the larger Hilbert space, $\{b_1, b_2\}$ can be reconstructed from $\mathcal{H}_B$. However $b_2$ can also clearly be reconstructed from $\mathcal{H}_{\overline{B}}$. 
\end{rem}

This type of reconstruction is almost sufficient to define the entanglement wedge. However it still misses one important aspect: complementary recovery.
Given any holographic bulk state $\ket{\Psi}$ with a well defined entanglement wedge, the entanglement wedge of $\overline{B}\,\overline{R}$ is always the complement of the entanglement wedge of $B R$. Hence, any bulk subsystem $\mathcal{H}_{b_i}$ that cannot be reconstructed from $\mathcal{H}_B \otimes \mathcal{H}_R$ can always be reconstructed from $\mathcal{H}_{\overline{B}} \otimes \mathcal{H}_{\overline{R}}$. 

Note it is crucial here that we are considering the purified state $\ket{\Psi}$. In general, the entanglement wedges of $B$ and $\overline{B}$ alone will not be complementary for a mixed state $V \rho_\mathrm{code} V^\dagger$, unless we are working within a code space where the entanglement wedges of $B$ and $\overline{B}$ are fixed and hence state-independent complementary reconstruction is possible, as in \cite{Harlow:2016vwg}.

With this, we have finally reached the full version of reconstruction that defines the entanglement wedge -- complementary state-specific product unitary reconstruction.
\begin{defn}[Complementary state-specific product unitary reconstruction]\label{thm:SSPUCconditions}
    Let $V: \mathcal{H}_\mathrm{code} \cong \otimes_i \mathcal{H}_{b_i} \to \mathcal{H}_{B} \otimes \mathcal{H}_{\overline{B}}$ be an isometry, with ${\bf b} = \{b_{i_1}, b_{i_2} ...\}$ a subset of input legs. 
Finally, let $\ket{\Psi} \in \mathcal{H}_\mathrm{code} \otimes \mathcal{H}_R \otimes \mathcal{H}_{\overline{R}}$ be a fixed, arbitrary state with $\mathcal{H}_R, \mathcal{H}_{\overline{R}}$ reference systems of arbitrary dimension. 
We say that complementary state-specific product unitary reconstruction is possible if for any product unitary $U_\mathbf{b}$  there exists a unitary reconstruction $U_{BR}$, and for any product unitary $U'_{\bar{\mathbf{b}}}$  there exists a unitary reconstruction $U'_{\overline{B}\,\overline{R}}$, such that for all pairs of product unitaries $U_\mathbf{b}$, $U'_{\bar{\mathbf{b}}}$
        \begin{equation} \label{eq:compssr1}
        U_{BR} {U'}_{\overline{B}\,\overline{R}} V \ket{\Psi} \approx  V U_\mathbf{b}U'_{\bar{\mathbf{b}}}\ket{\Psi}~.
        \end{equation}
\end{defn}

Since complementary state-specific reconstruction seems a priori to be fairly distinct from traditional definitions of quantum error correction, one might wonder what the relationship is between Theorems \ref{thm:qms_exact} and \ref{thm:qms_approx} and Theorem 4.1 of \cite{Harlow:2016vwg}. The answer is that complementary state-independent reconstruction is possible whenever complementary \emph{state-specific} reconstruction is possible for all states $\ket{\Psi}$, with the entanglement wedge $\mathbf{b}$ independent of the state $\ket{\Psi}$.

\begin{thm}[From state-specific to state-independent reconstruction]
\label{thm:SS_to_SI}
Let $V: \mathcal{H}_\mathrm{code} \cong \otimes_i \mathcal{H}_{b_i} \to \mathcal{H}_{B} \otimes \mathcal{H}_{\overline{B}}$ be an isometry, with ${\bf b} = \{b_{i_1}, b_{i_2} ...\}$ a subset of input legs.
Finally, let $\ket{\Psi} \in \mathcal{H}_\mathrm{code} \otimes \mathcal{H}_{\overline{R}}$ be an arbitrary state with the reference system $\mathcal{H}_{\overline{R}}$ having dimension $d_{\overline{R}} = d_\text{code}$. Then the following three statements are equivalent:
    \begin{enumerate}
        \item \emph{(State-specific reconstruction of product unitaries for all states)} For every state $\ket{\Psi}$ and product unitary $U_\mathbf{b}$, there exists a unitary reconstruction $U_B$ such that
        \begin{align}
        \left\lVert U_{B} V \ket{\Psi} -  V U_\mathbf{b}\ket{\Psi}\right\rVert \leq  \varepsilon_1~.
        \end{align}
        
        \item \emph{(State-independent reconstruction of product unitaries)} For every product unitary $U_\mathbf{b}$, there exists a unitary reconstruction $U_B$ such that for all states $\ket{\Psi}$
        \begin{align}
        \left\lVert U_{B} V \ket{\Psi} -  V U_\mathbf{b}\ket{\Psi}\right\rVert \leq  \varepsilon_2~.
        \end{align}
        
        \item \emph{(Schr\"{o}dinger-picture subsystem code)} There exists an isometry $W_B: \mathcal{H}_B \to \mathcal{H}_\mathbf{b} \otimes \mathcal{H}_E$, such that for all states $\rho_\mathrm{code}$
        \begin{align}
            \left\lVert \tr_{\overline{B} E}[W_B V \rho_\mathrm{code} V^\dagger W_B^\dagger] - \tr_{\mathbf{\bar{b}}}[\rho_\mathrm{code}]\right\rVert_1 \leq \varepsilon_3~.
        \end{align}
        \end{enumerate}
        Specifically, Condition 1 implies Condition 2 with $\varepsilon_2 \leq \sqrt{3} \varepsilon_1$, Condition 2 implies Condition 3 with $\varepsilon_3 \leq 2 \varepsilon_2$ and Condition 3 implies Condition 1 with $\varepsilon_1 \leq 2 \sqrt{\varepsilon_3}$.
\end{thm}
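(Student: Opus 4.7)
My plan is to prove the cyclic chain of implications $1 \Rightarrow 2 \Rightarrow 3 \Rightarrow 1$. For $1 \Rightarrow 2$, the key challenge is to exchange the quantifiers ``for all $\ket{\Psi}$, there exists $U_B$'' into ``there exists $U_B$ for all $\ket{\Psi}$.'' I first apply Theorem \ref{thm:SSPUconditions} to each $\ket{\Psi} \in \mathcal{H}_\mathrm{code}\otimes\mathcal{H}_{\overline{R}}$ to reformulate Condition 1 as uniform forgetfulness of the environment: $\lVert \tr_B[VU_\mathbf{b}\ket{\Psi}\bra{\Psi}U_\mathbf{b}^\dagger V^\dagger] - \tr_B[V\ket{\Psi}\bra{\Psi}V^\dagger]\rVert_1 \leq 2\varepsilon_1$, holding uniformly over every $\ket{\Psi}$ and every product unitary $U_\mathbf{b}$. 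Because the reference dimension $d_{\overline{R}} = d_\mathrm{code}$ is precisely the stabilizing dimension for the channel $\mathcal{N}(\rho) = \tr_B[V\rho V^\dagger]$, this uniform bound is equivalent to the diamond-norm bound $\lVert \mathcal{N}\circ \mathrm{Ad}_{U_\mathbf{b}} - \mathcal{N}\rVert_\diamond \leq 2\varepsilon_1$. Stinespring continuity (Kretschmann--Schlingemann--Werner) applied to the two dilations $V$ and $VU_\mathbf{b}$, which share the same environment $\mathcal{H}_B$, then produces a single unitary $U_B$ on $\mathcal{H}_B$ relating them, giving Condition 2.

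For $2 \Rightarrow 3$, I use a decoupling argument. Theorem \ref{thm:SSPUconditions} converts Condition 2 into uniform forgetfulness; specializing to $\ket{\Psi} = \ket{\mathrm{MAX}}_{\mathrm{code},\overline{R}}$ and using the transpose trick $U_\mathbf{b}\ket{\mathrm{MAX}} = U_\mathbf{b}^T\ket{\mathrm{MAX}}$, the product unitary shifts onto the purifying reference $\overline{R}_\mathbf{b}$, so $\rho_{\overline{B}\overline{R}} := \tr_B[V\ket{\mathrm{MAX}}\bra{\mathrm{MAX}}V^\dagger]$ is approximately invariant under every product unitary acting on $\overline{R}_\mathbf{b}$. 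Averaging each tensor factor of $\overline{R}_\mathbf{b}$ independently over local Haar measure, using $\int dU\,UXU^\dagger = \tr(X)\mathbb{1}/d$, then projects $\rho_{\overline{B}\overline{R}}$ to the approximately decoupled form $\rho_{\overline{B}\overline{R}_{\bar{\mathbf{b}}}}\otimes \mathbb{1}_{\overline{R}_\mathbf{b}}/d_{\overline{R}_\mathbf{b}}$. Decoupling of $\overline{R}_\mathbf{b}$ in the purification $V\ket{\mathrm{MAX}}$ is precisely the condition for recoverability of $\mathbf{b}$ from $B$, and Uhlmann's theorem produces the recovery isometry $W_B$ of Condition 3. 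For $3 \Rightarrow 1$, this recovery implies an approximate factorization $V \approx (V_1^{\mathbf{b} c_1 \to B}\otimes V_2^{\bar{\mathbf{b}} c_2 \to \overline{B}})\ket{\chi}_{c_1 c_2}$ as in the structure theorem of \cite{Harlow:2016vwg} invoked earlier in this excerpt; completing $V_1(U_\mathbf{b}\otimes \mathbb{1}_{c_1})V_1^\dagger$ to a unitary $U_B$ on all of $\mathcal{H}_B$ (by acting arbitrarily on the orthogonal complement of the image of $V_1$) then yields $U_B V \approx VU_\mathbf{b}$ state-independently, giving Condition 1.

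The main obstacle will be the $1 \Rightarrow 2$ direction. The critical input enabling the quantifier exchange is the stipulation $d_{\overline{R}} = d_\mathrm{code}$, which is the smallest reference dimension at which pointwise forgetfulness over pure $\ket{\Psi}$ captures the full stabilized (diamond-norm) behavior of $\mathcal{N}$. With a strictly smaller $d_{\overline{R}}$, Condition 1 would be strictly weaker than Condition 2 and no continuity argument could bridge them. The remaining work is bookkeeping: tracking constants through the diamond-norm equivalence and through Stinespring continuity to obtain the stated conversion factors $\sqrt{3}\,\varepsilon_1$, $2\varepsilon_2$, and $2\sqrt{\varepsilon_3}$.
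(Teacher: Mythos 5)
Your plan has two genuine problems, both tied to quantifiers and dimension factors rather than to ``bookkeeping.'' First, your $2\Rightarrow 3$ step cannot deliver Condition 3 as stated. By specializing the forgetfulness statement to $\ket{\mathrm{MAX}}$, using the transpose trick, and twirling, you only establish decoupling of the \emph{Choi state}, i.e.\ an average-case (entanglement-fidelity) recovery guarantee for the decoder $W_B$ produced by Uhlmann. Condition 3 is a worst-case statement over \emph{all} $\rho_\mathrm{code}$ with the dimension-independent error $\varepsilon_3 \leq 2\varepsilon_2$, and passing from entanglement fidelity to minimum fidelity costs a factor of order $d_\mathrm{code}$ in general (a weak-measurement channel with Kraus operators $\sqrt{1-p}\,\mathbb{1},\ \sqrt{p}\,P_0,\ \sqrt{p}\,(\mathbb{1}-P_0)$ has $1-F_e = O(p/d)$ but worst-case error $O(p)$). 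This is exactly the $\varepsilon \to d_\mathrm{code}\,\varepsilon$ loss the paper highlights when separating zero-bit codes from state-independent QEC, so it is not a removable technicality of your route: you must use the validity of Condition 2 on \emph{every} state, as the paper does by forming $W_B = \int dU_{\mathbf{b}}\ket{U_{\mathbf{b}}}\otimes U_B$ and bounding $\mathrm{Re}\braket{\Psi|W_{\mathbf{b}}^\dagger V^\dagger W_B V|\Psi} \geq 1 - \tfrac{1}{2}\varepsilon_2^2$ state by state, which yields the worst-case bound with no dimension dependence.

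Second, your $1\Rightarrow 2$ route cannot produce the claimed linear bound $\varepsilon_2 \leq \sqrt{3}\,\varepsilon_1$. Converting Condition 1 into forgetfulness (Theorem \ref{thm:SSPUconditions}) and then into a diamond-norm bound of size $O(\varepsilon_1)$, and invoking Kretschmann--Schlingemann--Werner continuity of Stinespring dilations, returns an operator-norm closeness of dilations only of order $\sqrt{\varepsilon_1}$; the square-root loss is intrinsic to going through trace-distance/fidelity quantities and back, not a constant to be tracked. The paper instead keeps everything at the amplitude level: it writes $\braket{\Psi|V^\dagger U_B^\dagger V U_{\mathbf{b}}|\Psi} = \tr[\rho_\mathrm{code} V^\dagger U_B^\dagger V U_{\mathbf{b}}]$ (here $d_{\overline{R}} = d_\mathrm{code}$ enters by letting arbitrary $\rho_\mathrm{code}$ arise as reductions, the same role you assign it), relaxes unitaries to contractions $\lVert O_B\rVert \leq 1$ so both optimizations are over convex sets, applies von Neumann's minimax theorem to exchange the quantifiers directly, and then shows via Cauchy--Schwarz that the optimal contraction can be replaced by a unitary at $O(\varepsilon_1^2)$ cost in the overlap --- giving the linear bound. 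Relatedly, your $3\Rightarrow 1$ step leans on an unproven ``approximate structure theorem'' for $V$; note that Condition 3 carries no reference system, so upgrading it to a statement about the isometry itself is precisely the kind of step that threatens dimension factors. The paper avoids this by setting $O_B = W_B^\dagger (U_{\mathbf{b}}\otimes \mathbb{1}) W_B$, using H\"{o}lder to get the operator-norm bound $\lVert V^\dagger O_B V - U_{\mathbf{b}}\rVert \leq 2\varepsilon_3$ (which extends for free to states with reference), and then reusing the contraction-to-unitary argument per state. I recommend reworking all three legs along these lines.
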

\begin{proof}
    See Appendix \ref{app:sec3proofs}.
\end{proof}

\section{Exact codes}\label{sec:exact}

\subsection{Theorem Statement}

In this section we prove that the existence of an exact QMS prescription is equivalent to the existence of exact complementary state-specific product unitary reconstruction. Before stating the theorem, we first formally define the min-entropy $H_\mathrm{min}(A|B)_{\ket{\psi}}$, which will play an important role.
\begin{defn}[Min-entropy]
    The conditional min-entropy $H_\mathrm{min}(A|B)_{\ket{\psi}}$ of a state $\ket{\psi} \in \mathcal{H}_A \otimes \mathcal{H}_B \otimes \mathcal{H}_C$ is defined as follows. Let $\psi_{AB} = \tr_C [\ket{\psi} \bra{\psi}]$. Then
    \begin{align}
        H_\mathrm{min}(A|B)_{\ket{\psi}} = - \min_{\sigma_B} D_\mathrm{max} (\rho_{AB} | \mathbb{1}_A \otimes \sigma_B) = - \min_{\sigma_B} \inf\{\lambda : \psi_{AB} \leq e^\lambda \mathbb{1}_A \otimes \sigma_B \}~, 
    \end{align}
    where the minimization is over all density matrices $\sigma_B$.
\end{defn}

\begin{thm} \label{thm:qms_exact}
Let $V: \mathcal{H}_\mathrm{code} \cong \otimes_i \mathcal{H}_{b_i} \to \mathcal{H}_B \otimes \mathcal{H}_{\overline{B}}$ be an isometry, with ${\bf b} = \{b_{i_1}, b_{i_2} ...\}$ a subset of input legs, and ${\bf \bar{b}}$ its complement. 
Finally, let $\ket{\Psi} \in \mathcal{H}_\mathrm{code} \otimes \mathcal{H}_R \otimes \mathcal{H}_{\overline{R}}$ be a fixed, arbitrary state with $\mathcal{H}_R, \mathcal{H}_{\overline{R}}$ reference systems of arbitrary dimension.

Then the following two conditions are equivalent:
\begin{enumerate}
    \item \emph{(Complementary Recovery)} For all product unitaries $U_{\bf b}$ and $U'_{\bf \bar{b}}$, there exist unitary operators $U_{B R}$ and $U'_{\overline{B}\,\overline{R}}$ respectively such that
    \begin{align}
        U_{B R} U'_{\overline{B} \,\overline{R}} V \ket{\Psi} = V U_{\bf b} U'_{\bf \bar{b}} \ket{\Psi}~.
    \end{align}
    
    \item \emph{(Holographic Entropy Prescription)} For all product unitaries $U_{\bf b}$ and $U'_{\bf \bar{b}}$, 
    \begin{align} \label{eq:cond2}
        S(B R)_{VU_{\bf b} U'_{\bf \bar{b}} \ket{\Psi}} = A_B({\bf b}) + S({\bf b} R)_{U_{\bf b} U'_{\bf \bar{b}}\ket{\Psi}}~.
    \end{align}
\end{enumerate}
Moreover, both statements imply:
\begin{enumerate}
    \setcounter{enumi}{2}
    \item \emph{(One-shot Minimality)} For all ${\bf \bar{b'}} \subseteq {\bf \bar{b}}$ and ${\bf b'} \subseteq {\bf b}$,
    \begin{align}
    \label{eq:min-EW-condition}
    H_\mathrm{min}({\bf \bar{b'}} | {\bf b} R)_{\ket{\Psi}} \ge& A_B({\bf b}) - A_B({\bf b \cup \bar{b'}})~, \\
    H_\mathrm{min}({\bf b'} | {\bf \bar{b}} \overline{R})_{\ket{\Psi}} \ge& A_{\overline{B}}({\bf \bar{b}}) - A_{\overline{B}}({\bf b' \cup\bar{b}})~.\label{eq:max-EW-condition}
    \end{align}
\end{enumerate}
This in turn implies:
\begin{enumerate}
    \setcounter{enumi}{3}
    \item \emph{(Minimality)} For all ${\bf b'} \subseteq {\bf b} \cup {\bf \bar{b}}$,
    \begin{align}\label{eq:min_sgen_exact}
    A_B({\bf b'}) + S({\bf b'} R)_{\ket{\Psi}} \ge A_B({\bf b}) + S({\bf b} R)_{\ket{\Psi}}~. 
    \end{align}
\end{enumerate}
\end{thm}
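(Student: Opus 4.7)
I would prove this by establishing the cycle $(1)\Rightarrow(2)\Rightarrow(1)$, then $(1)\&(2)\Rightarrow(3)$, then $(3)\Rightarrow(4)$.

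For $(1)\Rightarrow(2)$: since $U_{BR}$ and $U'_{\overline{B}\,\overline{R}}$ in condition 1 are unitary, and act on complementary subsystems, partial-tracing the recovery equation over $BR$ shows that $\rho^{VU_\mathbf{b} U'_{\bar{\mathbf{b}}}|\Psi\rangle}_{\overline{B}\,\overline{R}} = U'_{\overline{B}\,\overline{R}}\,\rho^{V|\Psi\rangle}_{\overline{B}\,\overline{R}}\,{U'}_{\overline{B}\,\overline{R}}^{\dagger}$, so $S(BR)$ is invariant under all product unitaries. Meanwhile $S(\mathbf{b}R)_{U_\mathbf{b} U'_{\bar{\mathbf{b}}}|\Psi\rangle}=S(\mathbf{b}R)_{|\Psi\rangle}$ trivially, since $U_\mathbf{b}$ acts within $\mathbf{b}R$ and $U'_{\bar{\mathbf{b}}}$ is supported on the traced subsystem. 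So the only nontrivial task is to fix the constant $S(BR)_{V|\Psi\rangle} - S(\mathbf{b}R)_{|\Psi\rangle} = A_B(\mathbf{b})$. For this I would use the forgetfulness form of state-specific reconstruction (Theorem \ref{thm:SSPUconditions}), Haar-averaged over product unitaries $U_\mathbf{b}$: the average $\int dU_\mathbf{b}\,U_\mathbf{b}|\Psi\rangle\langle\Psi|U_\mathbf{b}^\dagger = \frac{\mathbb{1}_\mathbf{b}}{d_\mathbf{b}}\otimes\rho_{\bar{\mathbf{b}}R\overline{R}}$, combined with forgetfulness, yields $\rho^{V|\Psi\rangle}_{\overline{B}\,\overline{R}} = \tr_{BR}\!\bigl[V\bigl(\tfrac{\mathbb{1}_\mathbf{b}}{d_\mathbf{b}}\otimes\rho_{\bar{\mathbf{b}}R\overline{R}}\bigr)V^\dagger\bigr]$. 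Purifying the right-hand side via $|MAX\rangle_{\mathbf{b}\mathbf{r}}\otimes|\Phi\rangle_{\bar{\mathbf{b}}R\overline{R}P}$ (with $|\Phi\rangle$ a purification of $\rho_{\bar{\mathbf{b}}R\overline{R}}$) rewrites $S(BR)_{V|\Psi\rangle}$ as an entropy of a state built from the Choi vector, and a short entropy bookkeeping then identifies this with $S(B\mathbf{r})_{|\mathrm{CJ}\rangle}+S(\mathbf{b}R)_{|\Psi\rangle} = A_B(\mathbf{b})+S(\mathbf{b}R)_{|\Psi\rangle}$.

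For $(2)\Rightarrow(1)$: the equality of entropies from condition 2 (which holds for all product unitaries, so in particular with $U_\mathbf{b}=I$) means that $V U_\mathbf{b} U'_{\bar{\mathbf{b}}}|\Psi\rangle$ and $V|\Psi\rangle$ share the same reduced-state entropy on $BR$ and on $\overline{B}\,\overline{R}$. The nontrivial content I would extract from (2) is not just equality of entropies, but that the reduced state on $\overline{B}\,\overline{R}$ of $VU_\mathbf{b}|\Psi\rangle$ equals the reduced state of $V|\Psi\rangle$; this follows by using (2) applied to $U_\mathbf{b}$ together with its Haar averages, so that both sides (which are in general different) are forced to coincide by an averaging-uniqueness argument. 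Once this is shown, Uhlmann's theorem gives a unitary $U_{BR}$ with $U_{BR}V|\Psi\rangle = VU_\mathbf{b}|\Psi\rangle$, and symmetrically a unitary $U'_{\overline{B}\,\overline{R}}$ reconstructing $U'_{\bar{\mathbf{b}}}$; composing the two (using that they act on disjoint factors) yields the joint recovery equation in (1).

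For $(1)\&(2)\Rightarrow(3)$: one-shot minimality concerns $H_\mathrm{min}$, so I would invoke the operational characterization of the conditional min-entropy via decoupling / maximum singlet fraction (Koenig-Renner-Schaffner). The claim $H_\mathrm{min}(\bar{\mathbf{b}'}|\mathbf{b}R)_{|\Psi\rangle}\ge A_B(\mathbf{b})-A_B(\mathbf{b}\cup\bar{\mathbf{b}'})$ is morally: if the min-entropy were smaller, one could extract too much entanglement between $\bar{\mathbf{b}'}$ and an auxiliary, contradicting the reconstruction of product unitaries on $\mathbf{b}\cup\bar{\mathbf{b}'}$ guaranteed by (1). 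Concretely I would consider the enlarged state that embeds $|\Psi\rangle$ together with a maximally entangled pair on $\bar{\mathbf{b}'}\bar{\mathbf{r}}'$, apply the recovery of (1) for the wedge $\mathbf{b}\cup\bar{\mathbf{b}'}$ to that enlarged state, and translate the resulting Schmidt-spectrum bound on $\overline{B}\,\overline{R}$ into the claimed operator inequality $\rho_{\bar{\mathbf{b}'}\mathbf{b}R}\le e^{A_B(\mathbf{b}\cup\bar{\mathbf{b}'})-A_B(\mathbf{b})}\mathbb{1}_{\bar{\mathbf{b}'}}\otimes\sigma_{\mathbf{b}R}$, which is exactly the min-entropy bound. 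The complementary inequality follows by the same argument with $B\leftrightarrow\overline{B}$, $R\leftrightarrow\overline{R}$.

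For $(3)\Rightarrow(4)$: use $H_\mathrm{min}(A|B)\le S(A|B)$ to convert (3) into $S(\bar{\mathbf{b}'}|\mathbf{b}R)_{|\Psi\rangle}\ge A_B(\mathbf{b})-A_B(\mathbf{b}\cup\bar{\mathbf{b}'})$, which rearranges to (4) for the case $\mathbf{b}'=\mathbf{b}\cup\bar{\mathbf{b}'}$; the complementary inequality handles $\mathbf{b}'\subseteq\mathbf{b}$; and for general $\mathbf{b}'=(\mathbf{b}\setminus\mathbf{b}'')\cup\bar{\mathbf{b}'}$ I would chain the two by passing through the intermediate wedge $\mathbf{b}\cup\bar{\mathbf{b}'}$. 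The hardest step is Step 1's constant-matching and Step 3's min-entropy conversion; pure entropy invariance is easy, but pinning the \emph{absolute} value of $S(BR)$ to the specific Choi-state area $S(B\mathbf{r})_{|\mathrm{CJ}\rangle}$ — and similarly turning (1) into a \emph{one-shot} rather than von Neumann bound — is where the argument has to work hardest.
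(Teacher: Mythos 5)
Your overall architecture (cycle $1\Rightarrow2\Rightarrow1$, then $1\Rightarrow3\Rightarrow4$) matches the paper, and your steps $2\Rightarrow1$ (averaging plus strict-concavity/uniqueness, then Uhlmann) and $3\Rightarrow4$ ($H_\mathrm{min}\le S$, duality, chaining) are in the same spirit as the paper's proof, though for $3\Rightarrow4$ you will also need strong subadditivity of the area itself, i.e.\ SSA evaluated in the Choi state, to handle general $\mathbf{b}'$. However, there are two genuine gaps. First, in $1\Rightarrow2$ your constant-matching uses only the $\mathbf{b}$-side of Condition 1: Haar-averaging $U_\mathbf{b}$ and forgetfulness show that $\rho^{V\ket{\Psi}}_{\overline{B}\,\overline{R}}$ equals the reduced state of $V\bigl(\ket{\mathrm{MAX}}_{\mathbf{b r}}\otimes\ket{\Phi}_{\mathbf{\bar{b}}R\overline{R}P}\bigr)$, so $S(BR)_{V\ket{\Psi}}$ equals $S(\overline{B}\,\overline{R})$ of that purification -- but this state is \emph{not} built from the Choi vector (only the $\mathbf{b}$ legs are maximally entangled), and its $\overline{B}\,\overline{R}$ entropy is not $A_B(\mathbf{b})+S(\mathbf{b}R)_{\ket{\Psi}}$ in general. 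A counterexample: take $d_{\overline{B}}=d_R=d_{\overline{R}}=1$ and $\mathbf{\bar{b}}$ nontrivial; then every product unitary $U_\mathbf{b}$ is trivially reconstructible on $B$ (any two unit vectors in $\mathcal{H}_B$ are unitarily related), yet $S(B)_{V\ket{\Psi}}=0$ while $A_B(\mathbf{b})=\log d_{\mathbf{\bar{b}}}>0$. To pin the constant you must also use the complementary half of Condition 1 to maximally entangle $\mathbf{\bar{b}}$ with $\mathbf{\bar{r}}$, reaching $\ket{\mathrm{CJ}}\otimes\ket{\Psi}_{\mathbf{a\bar{a}}R\overline{R}}$; and to apply it \emph{after} the $\mathbf{b}$-side extraction you need the reconstructions assembled coherently into isometries $W_{BR}=\int dU_\mathbf{b}\,\ket{U_\mathbf{b}}\otimes U_{BR}$ and $W_{\overline{B}\,\overline{R}}=\int dU'_{\mathbf{\bar{b}}}\,\ket{U'_{\mathbf{\bar{b}}}}\otimes U'_{\overline{B}\,\overline{R}}$ acting on complementary factors, exploiting that $U_{BR}$ depends only on $U_\mathbf{b}$ and $U'_{\overline{B}\,\overline{R}}$ only on $U'_{\mathbf{\bar{b}}}$. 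This is exactly the paper's $W$-isometry mechanism, which your incoherent-average-plus-purification sketch only reproduces on one side.

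Second, your route to Condition 3 does not go through as stated. You propose to ``apply the recovery of (1) for the wedge $\mathbf{b}\cup\mathbf{\bar{b}'}$'' to an ``enlarged state'' in which $\mathbf{\bar{b}'}$ is maximally entangled with an external $\mathbf{\bar{r}'}$. Condition 1 grants neither ingredient: it is specific to the bipartition $\mathbf{b}\mid\mathbf{\bar{b}}$ (if recovery held for $\mathbf{b}\cup\mathbf{\bar{b}'}$, that region would itself satisfy the entropy formula, which is false in general), and it is state-specific to $\ket{\Psi}$, so it cannot be transferred to a state whose entanglement on $\mathbf{\bar{b}'}$ has been altered. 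Product unitaries on $\mathbf{\bar{b}'}$ alone cannot create the needed entanglement with $\mathbf{\bar{r}'}$; the only legitimate way to reach the maximally entangled configuration is again the coherent isometry $W_{\overline{B}\,\overline{R}}$, which acts only on $\overline{B}\,\overline{R}$ and maps $V\ket{\Psi}$ to $\ket{\mathrm{CJ}}\otimes\ket{\Psi}_{\mathbf{a\bar{a}}R\overline{R}}$. From there the paper obtains the operator inequality $\phi_{BR\,\mathbf{a r \bar{a}'\bar{r}'}}\le \phi_{BR\,\mathbf{a r}}\otimes\mathbb{1}_{\mathbf{\bar{a}'\bar{r}'}}$ by killing the Haar off-diagonal terms, then uses additivity of $H_\mathrm{min}$ on the product $\ket{\mathrm{CJ}}\otimes\ket{\Psi}$ and $H_\mathrm{min}\le S$ evaluated in $\ket{\mathrm{CJ}}$ to convert the result into the area difference $A_B(\mathbf{b})-A_B(\mathbf{b}\cup\mathbf{\bar{b}'})$. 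Your final operator inequality is the right target, but without this mechanism the step fails; the decoupling/KRS framing by itself does not supply it.
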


\subsection{Proof}\label{sec:proof_exact}
Before proceeding to the main parts of the proof of Theorem \ref{thm:qms_exact}, we first need to introduce some preliminary lemmas. These introduce one of the most important technical tools used in the main proofs, which involves using an isometry $W$ to extract information out of the holographic code and into an auxiliary Hilbert space isomorphic to the space of square-integrable functions $L^2(\mathbf{U})$ on the unitary group $\mathbf{U}$.

\begin{lem}\label{lem:isometry}
    Consider the group $\mathbf{U}(d)$ of unitary operators in $d$-dimensions. 
    Let $\mathcal{H}_{\mathbf{U}(d)}$ be the Hilbert space $L^2(\mathbf{U}(d))$, with ``position basis'' 
    $\ket{U}_{\mathbf{U}(d)}$, i.e. $\braket{U|U'} = \delta(U - U')$, with $\int dU \braket{U|U'} = 1$ where $dU$ is the Haar measure on $\mathbf{U}(d)$ normalized so that $\int dU = 1$. 
    Finally, let $\pi: \mathbf{U}(d) \to \mathbf{U}_A$ be a (measurable) map from $U \in \mathbf{U}(d)$ to unitary operators $\pi(U)_A$ on the Hilbert space $\mathcal{H}_A$. 
    Then the following defines an isometry $W_{\pi}: \mathcal{H}_A \to \mathcal{H}_A \otimes \mathcal{H}_{\mathbf{U}(d)}$ for any map $\pi$: 
    \begin{equation}
        W_\pi := \int dU \ket{U}_{\mathbf{U}(d)} \otimes \pi(U)_A~. 
    \end{equation}
\end{lem}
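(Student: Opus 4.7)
The statement is essentially a direct computation: to show that $W_\pi$ is an isometry I just need to verify $W_\pi^\dagger W_\pi = \mathbb{1}_A$. My plan is to expand $W_\pi^\dagger W_\pi$ as a double integral over $\mathbf{U}(d) \times \mathbf{U}(d)$, use the delta-function normalization $\braket{U'|U}_{\mathbf{U}(d)} = \delta(U-U')$ of the position basis to collapse one of the integrals, and then exploit the fact that $\pi(U)$ is unitary pointwise and that the Haar measure is normalized to $\int dU = 1$.

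Concretely, the computation is
\begin{align}
    W_\pi^\dagger W_\pi
    &= \left(\int dU'\, \bra{U'}_{\mathbf{U}(d)} \otimes \pi(U')^\dagger_A\right)\left(\int dU\, \ket{U}_{\mathbf{U}(d)} \otimes \pi(U)_A \right) \nonumber \\
    &= \int dU\, dU'\, \braket{U'|U}_{\mathbf{U}(d)}\, \pi(U')^\dagger_A \pi(U)_A \nonumber \\
    &= \int dU\, \pi(U)^\dagger_A \pi(U)_A = \int dU\, \mathbb{1}_A = \mathbb{1}_A~.
\end{align}
The penultimate equality uses unitarity of $\pi(U)$ at each point $U$, and the final equality uses the normalization of the Haar measure. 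Since measurability of $\pi$ ensures that the integrand $\ket{U}\otimes \pi(U)_A$ defines a well-defined vector-valued integral against Haar measure, the operator $W_\pi$ is a bounded linear map.

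There is no substantive obstacle here: the only thing to be slightly careful about is the formal manipulation of the distributional ``position basis'' $\ket{U}_{\mathbf{U}(d)}$, which strictly speaking is not an element of $L^2(\mathbf{U}(d))$. One can make this rigorous either by smearing against test functions (so that $\ket{U}$ is understood as the Dirac measure at $U$ and the integral $\int dU\, \ket{U}\otimes\pi(U)_A$ defines a genuine element of $L^2(\mathbf{U}(d))\otimes \mathcal{H}_A$ via its action on bra vectors), or by introducing a complete orthonormal basis of $L^2(\mathbf{U}(d))$ (e.g.\ matrix elements of irreducible representations via the Peter--Weyl theorem) and expanding $W_\pi$ in that basis. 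Either way the final identity $W_\pi^\dagger W_\pi = \mathbb{1}_A$ is unaffected, so this is really just a notational bookkeeping step rather than a genuine difficulty.
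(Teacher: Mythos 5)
Your proof is correct and is essentially the same computation as the paper's: the paper verifies $\bra{\psi}W_\pi^\dagger W_\pi\ket{\psi} = \int dU\,\bra{\psi}\pi(U)^\dagger\pi(U)\ket{\psi} = \braket{\psi|\psi}$, which is just your operator identity $W_\pi^\dagger W_\pi = \mathbb{1}_A$ evaluated on a state, with the delta-function collapse and Haar normalization used in exactly the same way. Your extra remark on making the distributional position basis rigorous is fine but not needed beyond what the paper already notes.
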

\begin{proof}
    It suffices to show that $W_\pi$ preserves the norm for all $\ket{\psi}_A$, i.e. $\braket{W_\pi \psi| W_\pi \psi} = \braket{\psi | \psi}$.\footnote{This also ensures that $\braket{W_\pi \psi| W_\pi \psi}$ is finite for any $\ket{\psi}$, and hence that $W$ is a bounded operator.} 
    This follows directly from
    \begin{equation}
       \bra{\psi} W^\dagger_\pi W_\pi\ket{\psi} =  \int  dU \bra{\psi} \pi(U)^\dagger \pi(U) \ket{\psi}= \braket{\psi | \psi}.
    \end{equation}
\end{proof}

It will be helpful to introduce the following powerful change of basis for $L^2(\mathbf{U}(d))$. Let $\mu$ label an irreducible representation of $\mathbf{U}(d)$ and let $D^\mu_{ij}(U)$ label the matrix elements of the unitary $U$ for that representation. A famous theorem in representation theory guarantees these matrix elements are orthogonal when interpreted as wavefunctions on $L^2(\mathbf{U}(d))$. Specifically, we have
    \begin{equation}
       d_\mu \int dU D^\mu_{ij}(U)^*D^{\mu'}_{i'j'}(U) = \delta_{\mu \mu'} \delta_{ii'} \delta_{jj'}~,
    \end{equation}
    see e.g. Appendix A of \cite{Harlow:2018tng}. 
    Another famous theorem, the Peter-Weyl theorem, guarantees these wavefunctions form a complete basis (see Appendix A of \cite{Harlow:2018tng}).
    All wavefunctions $\psi(U)$ can therefore be written
    \begin{equation}
        \psi(U) = \sum_{\mu} \sum_{ij} \sqrt{d_\mu} D^\mu_{ij}(U) \psi_{ij}^\mu~,
    \end{equation}
    where the sum over $\mu$ is over all irreducible representations of $\mathbf{U}(d)$ and
    \begin{align}
       \psi_{ij}^\mu = \int dU D^\mu_{ij}(U)^* \psi(U)~.
    \end{align}
    In Dirac notation, we have
    \begin{align}
        \ket{U} = \sum_\mu \sqrt{d_\mu} \sum_{ij} D_{ij}^\mu(U)^* \ket{\mu; ij}~.
    \end{align}
    This means that the Hilbert space $L^2(\mathbf{U}(d))$ naturally decomposes into a direct sum over finite-dimensional blocks $\mathcal{H}_{\mu} \otimes \mathcal{H}_{\mu}^*$:
    \begin{equation} \label{eq:blockdecom}
        \mathcal{H}_{\mathbf{U}(d)} = \bigoplus_{\mu} \mathcal{H}_{\mu} \otimes \mathcal{H}_{\mu}^*~, 
    \end{equation}
    With this rewriting in hand, we can now prove the following lemma:

\begin{lem}\label{lem:CJ_otimes_psi}
Consider the setup from Lemma \ref{lem:isometry}, specialized to the case $d = d_A$, and
let $\pi = \mu_0$ be the fundamental representation of $\mathbf{U}_A$. Then 
\begin{equation}
    W \ket{\psi}_A = \int dU \ket{U}_{\mathbf{U}_A} \otimes U \ket{\psi}_A = \ket{\psi}_{\mu_0} \otimes \ket{\mathrm{MAX}}_{A \,\mu_0^*}~,
\end{equation}
where $\mathcal{H}_{\mu_0} \otimes \mathcal{H}_{\mu_0}^*$ is the block associated to the fundamental representation $\mu_0$ in the decomposition \eqref{eq:blockdecom} and $\ket{\mathrm{MAX}}_{A \mu_0^*} = \frac{1}{\sqrt{d}}\sum_{i=1}^d\ket{i}_A\ket{i}^*_{\mu_0^*}$ is the canonical maximally entangled state on $  \mathcal{H}_{A} \otimes \mathcal{H}_{\mu_0}^* \cong \mathcal{H}_A \otimes \mathcal{H}_A^*$.
\end{lem}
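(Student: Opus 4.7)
The plan is to compute $W\ket{\psi}_A$ directly by substituting the Peter--Weyl expansion of $\ket{U}$ given above Lemma \ref{lem:CJ_otimes_psi} and then applying Schur orthogonality to reduce the integral over $\mathbf{U}(d)$ to a finite sum. Writing $\ket{\psi}_A = \sum_j \psi_j \ket{j}_A$ and using $\pi(U) = U$ in the fundamental representation, we have $U\ket{\psi}_A = \sum_{ij} D^{\mu_0}_{ij}(U) \psi_j \ket{i}_A$, so that
\begin{equation}
    W\ket{\psi}_A = \sum_{\mu,kl,ij} \sqrt{d_\mu}\, \psi_j \left[\int dU\, D^\mu_{kl}(U)^* D^{\mu_0}_{ij}(U)\right] \ket{\mu;kl} \otimes \ket{i}_A.
\end{equation}

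The second step applies the orthogonality relation $d_\mu \int dU\, D^\mu_{kl}(U)^* D^{\mu'}_{ij}(U) = \delta_{\mu\mu'}\delta_{ki}\delta_{lj}$. This immediately kills every block with $\mu \ne \mu_0$, leaving only the fundamental block, and collapses the four remaining indices to a single sum:
\begin{equation}
    W\ket{\psi}_A = \frac{1}{\sqrt{d_{\mu_0}}}\sum_{ij} \psi_j \ket{\mu_0; ij} \otimes \ket{i}_A.
\end{equation}

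Finally, under the identification of the fundamental block $\mathcal{H}_{\mu_0} \otimes \mathcal{H}_{\mu_0}^*$ inside $L^2(\mathbf{U}(d))$ (with the convention implicit in the stated form of $\ket{U}$), the basis vector $\ket{\mu_0; ij}$ corresponds to $\ket{j}_{\mu_0} \otimes \ket{i}^*_{\mu_0^*}$. Substituting this and regrouping factors gives
\begin{equation}
    W\ket{\psi}_A = \left(\sum_j \psi_j \ket{j}_{\mu_0}\right) \otimes \left(\frac{1}{\sqrt{d_{\mu_0}}}\sum_i \ket{i}_A \otimes \ket{i}^*_{\mu_0^*}\right) = \ket{\psi}_{\mu_0} \otimes \ket{\mathrm{MAX}}_{A\mu_0^*},
\end{equation}
which is the claim. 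The only real subtlety in the argument is the bookkeeping for the Peter--Weyl identification of $\ket{\mu_0; ij}$ with the appropriate vector in $\mathcal{H}_{\mu_0} \otimes \mathcal{H}_{\mu_0}^*$: choosing the correct assignment of the two indices $(i,j)$ to the factors $\mathcal{H}_{\mu_0}$ and $\mathcal{H}_{\mu_0}^*$ is what determines whether one lands on $\ket{\psi}_{\mu_0} \otimes \ket{\mathrm{MAX}}_{A\mu_0^*}$ or on the equivalent expression $\ket{\mathrm{MAX}}_{\mu_0 A} \otimes \ket{\psi^*}_{\mu_0^*}$. Once the convention is fixed consistently with the decomposition of $\ket{U}$ written above the lemma, the remainder of the computation is a routine application of Schur orthogonality.
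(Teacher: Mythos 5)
Your proof is correct and follows essentially the same route as the paper's: expand $\ket{U}$ in the Peter--Weyl basis, use Schur orthogonality to project onto the fundamental block, and identify $\ket{\mu_0;ij}$ with $\ket{j}_{\mu_0}\otimes\ket{i}^*_{\mu_0^*}$ to factorize the result, exactly as in the paper's computation. The convention point you flag at the end is handled the same way (implicitly) in the paper, so there is nothing to add.
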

\begin{proof}
    
    Rewriting $W$ using the irrep basis for $L^2(\mathbf{U}(d))$ introduced above, we find
    \begin{equation}
    \begin{split}
        W \ket{\psi}_A =& \int dU \ket{U}_{\mathbf{U}_A} \otimes U \ket{\psi}_A \\
        =& \int dU \sum_\mu \sqrt{d_\mu} \sum_{ij} D_{ij}^\mu(U)^* \ket{\mu; ij}_{\mathbf{U}_A} \otimes \sum_{i'} \ket{i'}_A \bra{i'} U \sum_{j'} \ket{j'}_A\braket{j'|\psi} \\
        =&  \sum_\mu \sqrt{d_\mu} \sum_{\substack{ij\\ i'j'}} \left( \int dU D_{ij}^\mu(U)^* D^{\mu_0}_{i'j'}(U) \right) \ket{\mu; ij}_{\mathbf{U}_A} \otimes  \ket{i'}_A\braket{j'|\psi} \\
        =& \frac{1}{\sqrt{d}}\sum_{ij} \ket{\mu_0; ij}_{\mathbf{U}_A} \otimes  \ket{i}_A\braket{j|\psi} \\
        =& \left( \sum_j \braket{j|\psi} \ket{j}_{\mu_0}  \right) \otimes \left(\frac{1}{\sqrt{d}} \sum_i \ket{ i}_{\mu_0^*} \ket{i}_A \right)\\
        =& \ket{\psi}_{\mu_0} \otimes \ket{\mathrm{MAX}}_{\mu_0^* A}~.
    \end{split}
    \end{equation}
\end{proof}

Using Lemma \ref{lem:CJ_otimes_psi}, we can construct an isometry $W_{b_i} : \mathcal{H}_{b_i} \to \mathcal{H}_{b_i} \otimes \mathcal{H}_{\mathbf{U}_{b_i}}$ that extracts all the quantum information out of $\mathcal{H}_{b_i}$ and into an isomorphic Hilbert space $\mathcal{H}_{a_i} \cong \mathcal{H}_{b_i}$, with the original subsystem $\mathcal{H}_{b_i}$ left maximally entangled with a reference system $\mathcal{H}_{r_i} \cong \mathcal{H}^*_{b_i}$. Specifically, we have
\begin{align}
    W_{b_i} \ket{\Psi}_{b_1 \dots b_i \dots b_n R \overline{R}} &= \int dU_{b_i} \ket{U_{b_i}}_{\mathbf{U}_{b_i}} \otimes U_{b_i} \ket{\Psi}_{b_1 \dots b_i \dots b_n R \overline{R}} \\&= \ket{\Psi}_{b_1 \dots a_i \dots b_n R \overline{R}} \, \ket{\mathrm{MAX}}_{b_i r_i}~,
\end{align}
where we have identified $\mathcal{H}_{a_i} \cong \mathcal{H}_{\mu_0}$ and $\mathcal{H}_{r_i} \cong \mathcal{H}_{\mu_0}^*$ with the fundamental representation block $\mathcal{H}_{\mu_0} \otimes \mathcal{H}^*_{\mu_0} \subset \mathcal{H}_{\mathbf{U}_{b_i}}$ from Lemma \ref{lem:CJ_otimes_psi}. 

It is then natural to define $W_{\mathbf{b}}: \mathcal{H}_{\mathbf{b}} \to \mathcal{H}_{\mathbf{b}} \otimes \mathcal{H}_{\mathbf{U}_{\mathbf{b}}}$, where $\mathcal{H}_{\mathbf{U}_{\mathbf{b}}} \cong \mathcal{H}_{\mathbf{U}_{b_{i_1}}} \otimes \mathcal{H}_{\mathbf{U}_{b_{i_2}}} \dots$ is the Hilbert space of square-integrable functions $L^2 ( \mathbf{U_b})$ on the group of \emph{product unitaries} $\mathbf{U}_\mathbf{b} \cong \mathbf{U}_{b_{i_1}} \times \mathbf{U}_{b_{i_2}} \dots$, by
\begin{align}
    W_{\mathbf{b}} \ket{\Psi}_{\mathbf{b \bar{b}} R \overline{R}} &= [W_{b_{i_1}} W_{b_{i_2}} \dots] \ket{\Psi}_{\mathbf{b \bar{b}} R \overline{R}}
    \\&= \int dU_{\mathbf{b}} \ket{U_{\mathbf{b}}}_{\mathbf{U}_\mathbf{b}} U_{\mathbf{b}} \ket{\Psi}_{\mathbf{b \bar{b}} R \overline{R}}
    \\&= \ket{\Psi}_{\mathbf{a \bar{b}} R \overline{R}} \ket{\mathrm{MAX}}_{\mathbf{b r}}~.\label{eq:Wbaction}
\end{align}
In other words, $W_{\mathbf{b}}$ is an isometry, built out of product unitaries acting on $\mathcal{H}_{\mathbf{b}}$, that extracts the quantum state out of $\mathcal{H}_{\mathbf{b}}$ and into a copy $\mathcal{H}_{\mathbf{a}} \cong \mathcal{H}_{a_{i_1}} \otimes \mathcal{H}_{a_{i_2}} \dots$, leaving the original $\mathcal{H}_{\mathbf{b}}$ maximally entangled with $\mathcal{H}_{\mathbf{r}} \cong \mathcal{H}_{r_{i_1}} \otimes \mathcal{H}_{r_{i_2}} \dots$\footnote{Note that this is not quite the same as directly applying the construction from Lemma \ref{lem:CJ_otimes_psi} to the Hilbert space $\mathcal{H}_{\mathbf{b}}$; that would involve integrating over all unitaries acting on $\mathcal{H}_{\mathbf{b}} \cong \mathcal{H}_{b_{i_1}} \otimes \mathcal{H}_{b_{i_1}} \dots$ rather than just product unitaries.}

Now suppose that $B R$ can do state-specific product operator reconstruction of $\mathbf{b}$ for the state $\ket{\Psi}$, i.e. for every product unitary $U_{\mathbf{b}}$ acting on $\mathcal{H}_{\mathbf{b}}$ there exists a unitary $U_{BR}$ acting on $\mathcal{H}_B \otimes \mathcal{H}_R$ such that
\begin{align}
    U_{B R} V \ket{\Psi} = V U_{\mathbf{b}} \ket{\Psi}~.
\end{align}
Then we can define an isometry $W_{B R} : \mathcal{H}_B \otimes \mathcal{H}_R \to \mathcal{H}_B \otimes \mathcal{H}_R \otimes \mathcal{H}_{\mathbf{U}_{\mathbf{b}}}$ by
\begin{align}
W_{B R} = \int dU_{\mathbf{b}} \ket{U_\mathbf{b}} \otimes U_{ B R}    
\end{align}
such that 
\begin{align} \label{eq:WBRaction}
W_{B R} V \ket{\Psi}_{\mathbf{b \bar{b}} R \overline{R}} = V W_{\mathbf{b}} \ket{\Psi}_{\mathbf{b \bar{b}} R \overline{R}} = V \ket{\Psi}_{\mathbf{a \bar{b}} R \overline{R}} \ket{\mathrm{MAX}}_{\mathbf{b r}}~.
\end{align}
Importantly, on the right hand side of \eqref{eq:WBRaction}, $V: \mathcal{H}_\mathbf{b} \otimes \mathcal{H}_{\mathbf{\bar{b}}} \to \mathcal{H}_B \otimes \mathcal{H}_{\overline{B}}$ and does \emph{not} act on $\mathcal{H}_\mathbf{a}$. 

\begin{rem}[Measurability]
Since the map $U_{\mathbf{b}} \to V U_{\mathbf{b}} \ket{\Psi}$ is continuous, we can always choose $U_{BR}$ such that $U_{\mathbf{b}} \to U_{BR}$ is piecewise continuous and hence measurable, as required for $W_{BR}$ to be well defined.
\end{rem}

By acting only on $\mathcal{H}_B \otimes \mathcal{H}_R$, we have successfully extracted the quantum state out of $\mathcal{H}_\mathbf{b}$ and into the copy $\mathcal{H}_\mathbf{a}$, leaving the original $\mathcal{H}_\mathbf{b}$ maximally entangled with $\mathcal{H}_\mathbf{r}$. Schematically, this can be represented by the following figure
\begin{equation}
    \includegraphics[width = \textwidth]{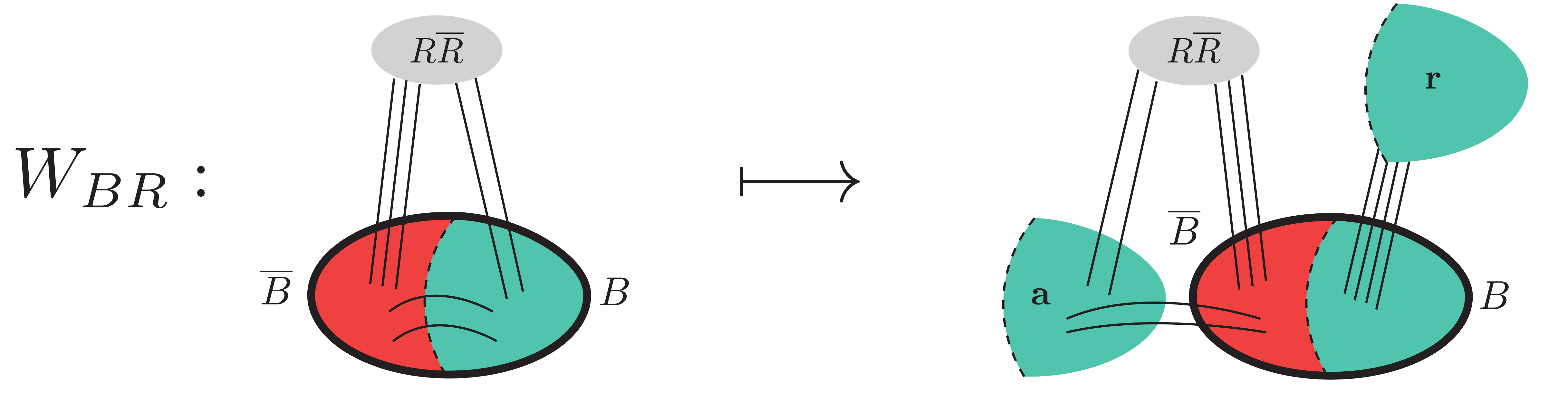}
\end{equation}

Similarly, we can define the isometry $W_{\mathbf{\bar{b}}}: \mathcal{H}_{\mathbf{\bar{b}}} \to \mathcal{H}_{\mathbf{\bar{b}}} \otimes \mathcal{H}_{\mathbf{U}_{\mathbf{\bar{b}}}}$ by
\begin{align}
    W_{\mathbf{\bar{b}}} = \int dU_{\mathbf{\bar{b}}} \ket{U_{\mathbf{\bar{b}}}} \otimes U_{\mathbf{\bar{b}}}
\end{align}
that extracts the quantum state out of $\mathcal{H}_{\mathbf{\bar{b}}}$ and into $\mathcal{H}_\mathbf{\bar{a}}$. If $\overline{B} \, \overline{R}$ can do state-specific product operator reconstruction of $\mathbf{\bar{b}}$, there exists an isometry 
\begin{align}
W_{\overline{B} \,\overline{R}} = \int dU_{\mathbf{\bar{b}}} \ket{U_{\mathbf{\bar{b}}}} \otimes U_{\overline{B} \, \overline{R}}~, 
\end{align}
such that $W_{\overline{B}\, \overline{R}} V \ket{\Psi} = V W_{\mathbf{\bar{b}}} \ket{\Psi}$. Finally, if state-specific product operator reconstruction of \emph{both} $\mathbf{b}$ from $BR$ and $\mathbf{\bar{b}}$ from $\overline{B}\, \overline{R}$ are possible for the same state $\ket{\Psi}$, as in Condition 1, then we have
\begin{align} \label{eq:extractentirebulk}
   W_{B R} W_{\overline{B}\,\overline{R}} V \ket{\Psi}_{\mathbf{b \bar{b}} R \overline{R}} = \ket{\Psi}_{\mathbf{a \bar{a}} R \overline{R}}\, V \ket{\mathrm{MAX}}_{\mathbf{b r}} \ket{\mathrm{MAX}}_{\mathbf{\bar{b} \bar{r}}} = \ket{\Psi}_{\mathbf{a \bar{a}} R \overline{R}} \,\ket{\mathrm{CJ}}_{B \overline{B} \mathbf{r \bar{r}}}. 
\end{align}
Graphically, we have
\begin{equation}
    \includegraphics[width = \textwidth]{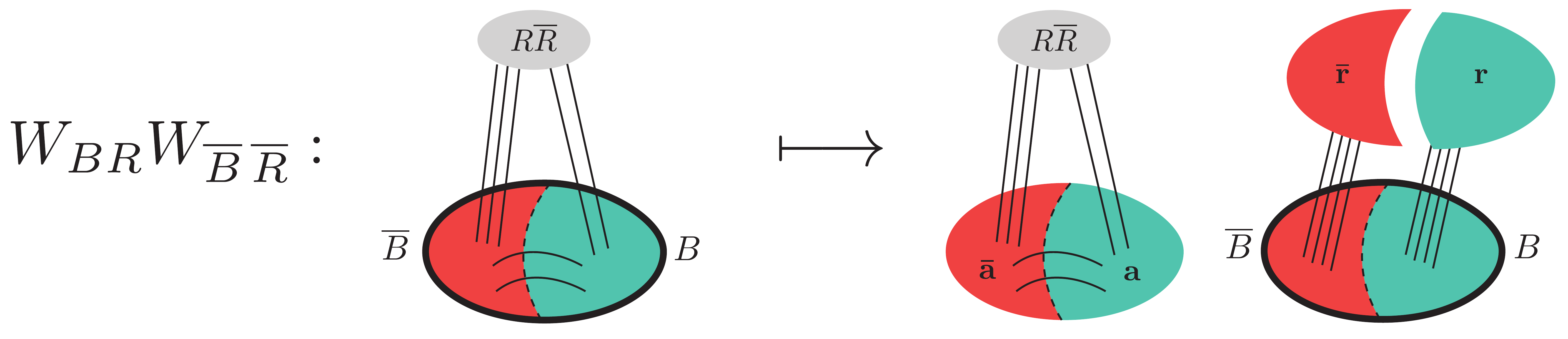}
\end{equation}
To reiterate, if we assume Condition 1 is true, then by acting with the product of an isometry $W_{BR}$ acting only on $\mathcal{H}_{B} \otimes \mathcal{H}_R$ with an isometry $W_{\overline{B} \, \overline{R}}$ acting only on $\mathcal{H}_{\overline{B}} \otimes \mathcal{H}_{\overline{R}}$, we can map the boundary state $V \ket{\Psi}$ to the product of the Choi-Jamiolkwoski state $\ket{\mathrm{CJ}}$ and the bulk state $\ket{\Psi}$ stored in the auxilary copy $\mathcal{H}_\mathbf{a} \otimes \mathcal{H}_{\mathbf{\bar{a}}}$ of the bulk Hilbert space. This trick will be at the heart of the proof of Theorem \ref{thm:qms_exact}, to which we now turn. It will allow us to relate the boundary entropy $S(B R)_{V\ket{\Psi}}$ to the sum of the bulk entropy $S(\mathbf{b} R)_{\ket{\Psi}}$ and the area term $A_B(\mathbf{b}) = S(B \mathbf{r})_{\ket{\mathrm{CJ}}}$.

\subsection*{Proof $(1) \implies (2)$:}\label{sec:1_implies_2}
With \eqref{eq:extractentirebulk} in hand, the proof that Condition 1 implies Condition 2 is almost immediate. Since $W_{B R}$ and $W_{\overline{B}\,\overline{R}}$ are isometries, we have
\begin{align}
    S(B R)_{V \ket{\Psi}} &= S(B R \mathbf{U_b})_{W_{BR} W_{\overline{B}\,\overline{R}} V \ket{\Psi}} 
    \\&=  S(B \mathbf{r})_{W_{BR} W_{\overline{B}\,\overline{R}} V \ket{\Psi}} + S(\mathbf{a} R)_{W_{BR} W_{\overline{B}\,\overline{R}} V \ket{\Psi}}
    \\&=  S(B \mathbf{r})_{\ket{\mathrm{CJ}}} + S(\mathbf{b} R)_{\ket{\Psi}}
    \\&= A_B(\mathbf{b}) + S(\mathbf{b} R)_{\ket{\Psi}}~.
\end{align}
So the entropy $S(B R)_{V \ket{\Psi}}$ is given by a holographic entropy prescription. All that remains to complete the proof is to show that, for any $U_{\mathbf{b}}$, $U'_{\mathbf{\bar{b}}}$, the entropy $S(B R)_{V U_{\mathbf{b}} U'_{\mathbf{\bar{b}}}\ket{\Psi}}$ is also given by a holographic entropy prescription. To see this, first note that
\begin{align}
    S(\mathbf{b} R)_{U_{\mathbf{b}} U'_{\mathbf{\bar{b}}}\ket{\Psi}} = S(\mathbf{b} R)_{\ket{\Psi}}~,
\end{align}
so all we need to do is show that
\begin{align} \label{eq:bdyentropyequal}
  S(B R)_{V U_{\mathbf{b}} U'_{\mathbf{\bar{b}}}\ket{\Psi}}  =   S(B R)_{V \ket{\Psi}}~.
\end{align}
But Condition 1 tells us that there exist $U_{B R}$ and $U'_{\overline{B}\,\overline{R}}$ such that 
\begin{align}
    U_{B R} U'_{\overline{B} \,\overline{R}} V \ket{\Psi} = V U_{\bf b} U'_{\bf \bar{b}} \ket{\Psi}~.
\end{align}
Hence \eqref{eq:bdyentropyequal}, and thus Condition 2, follows immediately.

\subsection*{Proof $(2) \implies (1)$:}
 The idea here is to show Condition 2 can only hold if a product unitary $U_\mathbf{b}$ (resp. $U'_\mathbf{\bar{b}}$) always leaves the reduced state on $\overline{B}\,\overline{R}$ (resp. $B R$) unchanged. Suppose for example we know that for any $U_\mathbf{b}$
 \begin{align} \label{eq:decoupling}
     \tr_{B R} \left[ V U_
    \mathbf{b} \ket{\Psi} \bra{\Psi} U_\mathbf{b}^\dagger V^\dagger \right] = \tr_{B R}\left[ V\ket{\Psi} \bra{\Psi} V^\dagger\right]~.
 \end{align}
 Then $V U_\mathbf{b} \ket{\Psi}$ and $V \ket{\Psi}$ would both be purifications of the same reduced density matrix on $\mathcal{H}_{\overline{B}} \otimes \mathcal{H}_{\overline{R}}$. Since all purifications of a given state are related by a unitary operator acting on the purifying system, we would know that there exists $U_{B R}$ such that
 \begin{align}
     U_{B R} V  \ket{\Psi} = V U_\mathbf{b} \ket{\Psi},
 \end{align}
 as desired for Condition 1 to hold.
 
To show this, we again use the isometries $W_\mathbf{b}$ and $W_{\mathbf{\bar{b}}}$. Without any assumptions about $V$, we know
\begin{equation}\label{eq:VWWpsi}
    V W_{\mathbf{b}} W_{\mathbf{\bar{b}}} \ket{\Psi}_{\mathbf{b \bar{b}} R \overline{R}} = ~ \ket{\Psi}_{\mathbf{a \bar{a}} R \overline{R}} \,\ket{\mathrm{CJ}}_{B \overline{B} \mathbf{r \bar{r}}}~, 
\end{equation}
and hence that
\begin{equation}
    S(B R \mathbf{a r})_{V W_{\mathbf{b}} W_{\mathbf{\bar{b}}} \ket{\Psi}} = S(\overline{B}\, \overline{R} \mathbf{\bar{a} \bar{r}})_{V W_{\mathbf{b}} W_{\mathbf{\bar{b}}} \ket{\Psi}} = A_B(\mathbf{b}) + S(\mathbf{b} R)_{\ket{\Psi}}~.
\end{equation}
Tracing out $\mathcal{H}_{B} \otimes \mathcal{H}_{R} \otimes \mathcal{H}_\mathbf{U_b}$ from \eqref{eq:VWWpsi} leaves
\begin{equation}
\begin{split}
    \Psi_{\mathbf{\bar{a}} \overline{R}} \otimes \ket{\mathrm{CJ}}\bra{\mathrm{CJ}}_{\overline{B} \mathbf{\bar{r}}} &= \tr_{B R\, \mathbf{U_b}}\left[ V W_{\mathbf{b}} W_{\mathbf{\bar{b}}} \ket{\Psi} \bra{\Psi} W^\dagger_{\mathbf{b}} W^\dagger_{\mathbf{\bar{b}}} V^\dagger \right]\\
     &= \tr_{B \, R}\left[\int dU_\mathbf{b} dU_\mathbf{b}' \braket{U_{\mathbf{b}}'|U_\mathbf{b}} \otimes V  U_\mathbf{b} W_{\mathbf{\bar{b}}} \ket{\Psi} \bra{\Psi}  W^\dagger_{\mathbf{\bar{b}}} {U'}^\dagger_{\mathbf{b}} V^\dagger \right]\\
    &= \int dU_\mathbf{b} \tr_{B \, R} \left[V  U_\mathbf{b} W_{\mathbf{\bar{b}}} \ket{\Psi} \bra{\Psi}  W^\dagger_{\mathbf{\bar{b}}} {U}^\dagger_{\mathbf{b}} V^\dagger \right].
\end{split}
\end{equation}
Therefore, 
\begin{equation}
S(\overline{B} \, \overline{R}\, \mathbf{\bar{a} \bar{r}})_{V W_{\mathbf{b}} W_{\mathbf{\bar{b}}} \ket{\Psi}} \ge \int dU_\mathbf{b} \,S(\overline{B} \, \overline{R}\, \mathbf{\bar{a} \bar{r}})_{V U_\mathbf{b} W_{\mathbf{\bar{b}}}  \ket{\Psi}} = \int dU_\mathbf{b} \,S(B R)_{V U_\mathbf{b} W_{\mathbf{\bar{b}}} \ket{\Psi}}~,
\end{equation}
where the inequality follows from $S\left(\int dU \rho(U) \right) \ge \int dU S( \rho(U))$ and the last equality follows from the purity of $V U_\mathbf{b} W_{\mathbf{\bar{b}}} \ket{\Psi}$. Because of the strict concavity of von Neumann entropy, this inequality is saturated if and only if the state
\begin{align} \label{eq:rhoUb}
  \rho_{\overline{B} \,\overline{R}\, \mathbf{\bar{a} \bar{r}}}(U_{\mathbf{b}}) :=  \tr_{B R} \left[V  U_{\mathbf{b}} W_\mathbf{\bar{b}} \ket{\Psi} \bra{\Psi} {W}^\dagger_{\mathbf{\bar{b}}} U^\dagger_{\mathbf{b}} V^\dagger \right]
\end{align}
is independent of $U_{\mathbf{b}}$.
Now note that
\begin{equation}
\begin{split}
    \rho_{B R}(U_\mathbf{b}) &= \tr_{\overline{B}\, \overline{R} \,\mathbf{U_{\bar{b}}}}\left[V  U_{\mathbf{b}} W_\mathbf{\bar{b}} \ket{\Psi} \bra{\Psi} {W}^\dagger_{\mathbf{\bar{b}}} U^\dagger_{\mathbf{b}} V^\dagger \right] \\
    &= \int dU'_\mathbf{\bar{b}} d{U''}_\mathbf{\bar{b}} \tr_{\overline{B}\, \overline{R}}\left[V U_\mathbf{b} U'_\mathbf{\bar{b}} \ket{\Psi}\bra{\Psi}  {U''}^\dagger_\mathbf{\bar{b}} U^\dagger_\mathbf{b} V^\dagger \right] \braket{U''_\mathbf{\bar{b}}|U'_\mathbf{\bar{b}}}  \\ 
    &= \int dU'_\mathbf{\bar{b}} \tr_{\overline{B}\,\overline{R}}\left[V U_\mathbf{b} U'_\mathbf{\bar{b}} \ket{\Psi}\bra{\Psi}  {U'}^\dagger_\mathbf{\bar{b}} U^\dagger_\mathbf{b} V^\dagger \right] ~.\\ 
\end{split}
\end{equation}
Therefore,
\begin{equation}
    S(B R)_{V U_\mathbf{b} W_{\mathbf{\bar{b}}} \ket{\Psi}} \ge \int dU'_\mathbf{\bar{b}} \,S(BR)_{V U_\mathbf{b} U'_{\mathbf{\bar{b}}} \ket{\Psi}}~.
\end{equation}
Finally, we know from our assumption of Condition (2) that 
\begin{align}
    S(BR)_{V U_\mathbf{b} U'_{\mathbf{\bar{b}}} \ket{\Psi}} = A_B(\mathbf{b}) + S(\mathbf{b} R)_{\ket{\Psi}}~,
\end{align}
for all $U_\mathbf{b}, U'_{\mathbf{\bar{b}}}$. Combining everything together, we find
\begin{equation}
\begin{split}
    A_B(\mathbf{b}) + S(\mathbf{b} R)_{\ket{\Psi}} = S(\overline{B} \, \overline{R}\, \mathbf{\bar{a} \bar{r}})_{V W_{b} W_{\bar{b}} \ket{\Psi}} &\ge \int dU_\mathbf{b} S(B R)_{V U_\mathbf{b} W_{\mathbf{\bar{b}}} \ket{\Psi}} \\ &\ge \int dU_\mathbf{b} dU'_\mathbf{\bar{b}} S(B R)_{V U_\mathbf{b} U'_{\mathbf{\bar{b}}} \ket{\Psi}} \\&= A_B(\mathbf{b}) + S(\mathbf{b} R)_{\ket{\Psi}}~.
\end{split}
\end{equation}
Both inequalities must be saturated and so we see  that a) $\rho_{\overline{B}\,\overline{R}\,\mathbf{\bar{a} \bar{r}}}(U_\mathbf{b})$ from \eqref{eq:rhoUb} is independent of $U_\mathbf{b}$ and b) for any fixed $U_\mathbf{b}$, 
\begin{align}
    \sigma_{B R}(U_\mathbf{b}, U'_{\mathbf{\bar{b}}}) := \tr_{\overline{B}\,\overline{R}}\left[V U_\mathbf{b} U'_\mathbf{\bar{b}} \ket{\Psi}\bra{\Psi}  {U'}^\dagger_\mathbf{\bar{b}} U^\dagger_\mathbf{b} V^\dagger \right]
\end{align}
is independent of $U'_b$.

Since all purifications of a fixed density matrix are related by a unitary on the purifying system, we know from (a) that for all $U_\mathbf{b}$, there exists a unitary $U_{B R}$ such that
\begin{equation}
     U_{B R} V W_{\mathbf{\bar{b}}}\ket{\Psi} = V U_\mathbf{b} W_\mathbf{\bar{b}}\ket{\Psi}~.
\end{equation}
and thus for any $U'_{\mathbf{\bar{b}}}$
\begin{align} \label{eq:UB}
    \bra{U'_\mathbf{\bar{b}}}_{\mathbf{U_{\bar{b}}}} U_{B R} V W_\mathbf{\bar{b}}\ket{\Psi}= U_{B R} V U'_\mathbf{\bar{b}} \ket{\Psi} = V U_\mathbf{b} U'_\mathbf{\bar{b}} \ket{\Psi}~.
\end{align}
Similarly, condition (b) implies that for all $U'_\mathbf{\bar{b}}$, there exists a $U'_{\overline{B}\, \overline{R}}$ such that
\begin{equation}\label{eq:UBarBR}
    V U'_\mathbf{\bar{b}} \ket{\Psi} = U'_{\overline{B}\, \overline{R}}  V \ket{\Psi}~.
\end{equation}
Together, \eqref{eq:UB} and \eqref{eq:UBarBR} tell us that for all $U_b$, $U'_{\bar{b}}$,
\begin{align}
    U_{B R} U'_{\overline{B}\,\overline{R}} V \ket{\Psi} = V U_\mathbf{b} U'_\mathbf{\bar{b}} \ket{\Psi}~,
\end{align}
which is exactly Condition 1.

\subsection*{Proof $(1) \implies (3)$:}
We will need the following lemmas:
\begin{lem} \label{eq:minentropyadditivity}
(Corollary 5.9 of \cite{Tomamichel:2015gtd})
For any product state $\rho_{AB} \otimes \sigma_{A'B'}$,
\begin{equation}
    H_\mathrm{min}(AA'|BB')_{\rho \otimes \sigma} = H_\mathrm{min}(A|B)_{\rho} + H_\mathrm{min}(A'|B')_{\sigma}~. 
\end{equation}
\end{lem}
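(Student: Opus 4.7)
The plan is to reformulate $H_\mathrm{min}(A|B)_\rho$ as a semidefinite program (SDP) and exploit the tensor-product structure of both the primal and the dual. Rescaling $\sigma_B \to e^\lambda \sigma_B$ in the definition converts it into the primal form
\begin{equation}
    e^{-H_\mathrm{min}(A|B)_\rho} = \min\{\tr[\sigma_B] : \rho_{AB} \le \mathbb{1}_A \otimes \sigma_B,\; \sigma_B \ge 0\}~,
\end{equation}
and a standard Lagrangian calculation will yield the dual
\begin{equation}
    e^{-H_\mathrm{min}(A|B)_\rho} = \max\{\tr[\rho_{AB} X_{AB}] : \tr_A[X_{AB}] \le \mathbb{1}_B,\; X_{AB} \ge 0\}~,
\end{equation}
with strong duality guaranteed by Slater's condition (both programs have strictly feasible interior points, e.g.\ $\sigma_B = c \mathbb{1}_B$ for large $c$ in the primal).

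For the easy direction, $H_\mathrm{min}(AA'|BB')_{\rho \otimes \sigma} \ge H_\mathrm{min}(A|B)_\rho + H_\mathrm{min}(A'|B')_\sigma$, I would work in the primal: if $\sigma_B^*, \sigma_{B'}^*$ attain the individual minima, then tensoring the two defining operator inequalities gives $\rho_{AB} \otimes \sigma_{A'B'} \le \mathbb{1}_{AA'} \otimes (\sigma_B^* \otimes \sigma_{B'}^*)$, so $\sigma_B^* \otimes \sigma_{B'}^*$ is feasible for the joint primal with trace equal to the product of the individual traces.

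The reverse inequality is handled symmetrically via the dual. If $X_{AB}^*, X_{A'B'}^*$ attain the individual dual maxima, then $X_{AB}^* \otimes X_{A'B'}^* \ge 0$ and $\tr_{AA'}[X_{AB}^* \otimes X_{A'B'}^*] = \tr_A[X_{AB}^*] \otimes \tr_{A'}[X_{A'B'}^*] \le \mathbb{1}_{BB'}$, so this operator is feasible for the joint dual. Its objective value is the product of the individual objectives, yielding $e^{-H_\mathrm{min}(AA'|BB')_{\rho \otimes \sigma}} \ge e^{-H_\mathrm{min}(A|B)_\rho - H_\mathrm{min}(A'|B')_\sigma}$ and closing the argument.

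The main obstacle is establishing SDP strong duality cleanly; once that is in hand, both additivity inequalities reduce to routine tensor-product constructions in the respective programs. An alternative, perhaps more elegant, route would be to invoke the $H_\mathrm{min}$--$H_\mathrm{max}$ duality for purifications $\ket{\psi}_{ABC}$, namely $H_\mathrm{min}(A|B)_\rho = -H_\mathrm{max}(A|C)_\psi$, and then reduce to additivity of $H_\mathrm{max}$, which factorizes via multiplicativity of the fidelity under tensor products; but this simply shifts the technical burden onto establishing that deeper duality statement.
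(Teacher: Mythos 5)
Your proposal is correct, but it takes a genuinely different route from the paper. The paper's proof never touches semidefinite programming: it first proves the superadditivity direction for $H_\mathrm{min}$ and $H_\mathrm{max}$ simultaneously, simply by noting that restricting the minimization over $\omega_{BB'}$ to product states $\omega_B \otimes \omega'_{B'}$ can only decrease $-D_\mathrm{max/min}$; it then gets the reverse inequality for $H_\mathrm{min}$ by applying that same superadditivity to $H_\mathrm{max}$ on purifications $\rho_{ABC}\otimes\sigma_{A'B'C'}$ and invoking the tripartite pure-state duality $H_\mathrm{min}(X|Y)+H_\mathrm{max}(X|Z)=0$. Your "easy'' primal direction is the same observation as the paper's first step (a product $\sigma_B^*\otimes\sigma_{B'}^*$ is feasible for the joint problem), but your reverse direction is different: you tensor optimal solutions of the dual SDP $\max\{\tr[\rho_{AB}X_{AB}] : X_{AB}\ge 0,\ \tr_A X_{AB}\le \mathbb{1}_B\}$, which is feasible for the joint dual since $0\le P\le\mathbb{1}$ and $0\le Q\le\mathbb{1}$ imply $P\otimes Q\le\mathbb{1}\otimes\mathbb{1}$, and strong duality (Slater, which holds trivially in finite dimensions) closes the gap. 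What each buys: your SDP argument is self-contained modulo standard duality theory and avoids ever mentioning $H_\mathrm{max}$ or purifications; the paper's argument avoids SDP machinery but imports the min--max duality relation, which is itself a nontrivial (though standard) theorem. Amusingly, the "alternative route'' you sketch at the end is essentially the paper's proof, except that the paper reduces to superadditivity of $H_\mathrm{max}$ rather than to multiplicativity of the fidelity, so the technical burden there is lighter than you suggest.
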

\begin{proof}
    For any state $\rho_{AB}$, the min-entropy and max-entropy are defined respectively by
    \begin{align}
        H_\mathrm{min}(A|B) =& - \min_{\sigma_B} D_\mathrm{max} (\rho_{AB} | \mathbb{1}_A \otimes \sigma_B)~,\\
        H_\mathrm{max}(A|B) =& - \min_{\sigma_B} D_\mathrm{min} (\rho_{AB} | \mathbb{1}_A \otimes \sigma_B)~.
    \end{align}
    where
    \begin{align}
        D_\mathrm{max}(\rho | \sigma) :=& \inf \{ \lambda: \rho \le e^\lambda \sigma \}~,\\
        D_\mathrm{min}(\rho | \sigma) :=& - \log\left( F(\rho,\sigma)^2 \right)~,
    \end{align}
    where $F(\rho,\sigma) = \lVert \sqrt{\rho} \sqrt{\sigma}\rVert_1$ is the fidelity.
    It follows that
\begin{equation}
\begin{split}
    H_\mathrm{min/max}(AA'|BB')_{\rho \otimes \sigma} &= - \min_{\omega_{B B'}} D_\mathrm{max/min}(\rho_{AB} \otimes \sigma_{A'B'} || \mathbb{1}_{A A'} \otimes \omega_{B B'})
    \\&\ge - \min_{\omega_B, \, \omega'_{B'}} D_\mathrm{max/min}(\rho_{AB} \otimes \sigma_{A'B'} || \mathbb{1}_{A} \otimes \omega_{B} \otimes \mathbb{1}_{A'} \otimes \omega'_{B'} )
    \\&= H_\mathrm{min/max}(A|B) + H_\mathrm{min/max}(A'|B')~.
\end{split}
\end{equation}
For the other inequality, introduce $\rho_{ABC}$ (resp. $\sigma_{A'B'C'}$) as the purification of $\rho_{AB}$ (resp. $\sigma_{A'B'}$). From the above inequality, we have
\begin{equation}
    H_{\mathrm{max}}(AA'|CC')_{\rho \otimes \sigma} \ge H_{\mathrm{max}}(A|C)_\rho + H_{\mathrm{max}}(A'|C')_\sigma~.
\end{equation}
For any tripartite pure state on $XYZ$, it holds that $H_\mathrm{min}(X|Y) + H_\mathrm{max}(X|Z) = 0$. It follows that
\begin{equation}
    H_\mathrm{min}(AA'|BB')_{\rho \otimes \sigma} \le H_\mathrm{min}(A|B) + H_\mathrm{min}(A'|B')~.
\end{equation}
Combining the two inequalities concludes the proof.
\end{proof}

\begin{lem}\label{lem:exact_Hmin_ge_deltaA}
Consider the state $\ket{\mathrm{CJ}}_{B \overline{B} \mathbf{r \bar{r}}} \otimes \ket{\Psi}_{\mathbf{a \bar{a}} R \overline{R}}$. Let $\mathbf{\bar{b}'} \subseteq \mathbf{\bar{b}}$ (and let $\mathbf{\bar{a}'} = \{ a_i : b_i \in \mathbf{\bar{b}'}\}$ and $\mathbf{\bar{r}'} = \{ r_i : b_i \in \mathbf{\bar{b}'}\}$).
Then
\begin{equation}
    H_\mathrm{min}(\mathbf{\bar{a}' \bar{r}'} | B \, R\, \mathbf{a r})_{\ket{\mathrm{CJ}}\ket{\Psi}} \ge 0 \implies  H_\mathrm{min}(\mathbf{\bar{b}'} |  \mathbf{b} R )_{\ket{\Psi}} \ge  A_B(\mathbf{b}) - A_B(\mathbf{\bar{b}' \cup b}) ~.
\end{equation}
\end{lem}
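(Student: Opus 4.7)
The plan is to split the joint min-entropy via additivity on the product state $\ket{\mathrm{CJ}} \otimes \ket{\Psi}$, identify the $\ket{\Psi}$-piece with $H_\mathrm{min}(\mathbf{\bar{b}'}|\mathbf{b}R)_{\Psi}$ under the canonical relabeling $\mathbf{a} \leftrightarrow \mathbf{b}$, and bound the $\ket{\mathrm{CJ}}$-piece above by the corresponding von Neumann conditional entropy, which is exactly the area difference appearing on the right-hand side.

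First, I would exploit the fact that $\ket{\mathrm{CJ}}$ is supported on $B\overline{B}\mathbf{r\bar{r}}$ while $\ket{\Psi}$ is supported on $\mathbf{a\bar{a}} R \overline{R}$, so the conditioning system $BR\mathbf{ar}$ splits cleanly into a CJ-side $B\mathbf{r}$ and a $\Psi$-side $\mathbf{a}R$, and similarly $\mathbf{\bar{a}'\bar{r}'}$ splits into $\mathbf{\bar{r}'}$ (CJ-side) and $\mathbf{\bar{a}'}$ ($\Psi$-side). Additivity of the min-entropy on product states (Lemma \ref{eq:minentropyadditivity}) then gives
\begin{equation}
H_\mathrm{min}(\mathbf{\bar{a}' \bar{r}'} \mid B R \mathbf{a r})_{\ket{\mathrm{CJ}} \ket{\Psi}} = H_\mathrm{min}(\mathbf{\bar{r}'} \mid B \mathbf{r})_{\mathrm{CJ}} + H_\mathrm{min}(\mathbf{\bar{a}'} \mid \mathbf{a} R)_{\Psi}~.
\end{equation}
Since $\mathcal{H}_{\mathbf{a}} \cong \mathcal{H}_{\mathbf{b}}$ and $\mathcal{H}_{\mathbf{\bar{a}}} \cong \mathcal{H}_{\mathbf{\bar{b}}}$ by construction, and $\ket{\Psi}$ on $\mathbf{a \bar{a}} R \overline{R}$ is the original $\ket{\Psi}$ on $\mathbf{b \bar{b}} R \overline{R}$ under this relabeling (as in \eqref{eq:Wbaction}), the second term equals $H_\mathrm{min}(\mathbf{\bar{b}'} \mid \mathbf{b} R)_{\Psi}$.

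Next, I would bound the CJ-piece above using the standard inequality $H_\mathrm{min}(X\mid Y)_\rho \le S(X\mid Y)_\rho$, which is a direct consequence of the fact that $D_\mathrm{max} \ge D$ for relative entropies. By Definition \ref{defn:area_ours}, with $\mathbf{r \bar{r}'}$ playing the role of the reference purifying $\mathbf{b \cup \bar{b}'}$ in $\ket{\mathrm{MAX}}$,
\begin{equation}
H_\mathrm{min}(\mathbf{\bar{r}'} \mid B \mathbf{r})_{\mathrm{CJ}} \le S(\mathbf{\bar{r}'} \mid B \mathbf{r})_{\mathrm{CJ}} = S(B \mathbf{r \bar{r}'})_{\mathrm{CJ}} - S(B \mathbf{r})_{\mathrm{CJ}} = A_B(\mathbf{b \cup \bar{b}'}) - A_B(\mathbf{b})~.
\end{equation}
Combining this with the nonnegativity hypothesis on the left-hand side and rearranging yields
\begin{equation}
H_\mathrm{min}(\mathbf{\bar{b}'} \mid \mathbf{b} R)_{\Psi} \ge -H_\mathrm{min}(\mathbf{\bar{r}'} \mid B \mathbf{r})_{\mathrm{CJ}} \ge A_B(\mathbf{b}) - A_B(\mathbf{b \cup \bar{b}'})~,
\end{equation}
which is the claimed bound.

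The main obstacle is really just bookkeeping — keeping straight which legs live on the CJ side versus the $\Psi$ side of the product state, and matching $\mathbf{\bar{r}'}$ to the area $A_B(\mathbf{b\cup\bar{b}'})$ via the Choi--Jamiolkowski purification. No genuinely new technical input is required beyond the additivity of $H_\mathrm{min}$ on product states and the textbook $H_\mathrm{min} \le S$ inequality.
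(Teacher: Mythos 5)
Your proposal is correct and matches the paper's own proof essentially step for step: split via the min-entropy additivity lemma on the product state, identify the $\Psi$-piece with $H_\mathrm{min}(\mathbf{\bar{b}'}|\mathbf{b}R)_{\ket{\Psi}}$, and bound the CJ-piece by $H_\mathrm{min} \le S$ to obtain the area difference. The only cosmetic difference is that you spell out the $\mathbf{a}\leftrightarrow\mathbf{b}$ relabeling explicitly, which the paper leaves implicit.
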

\begin{proof}
    From Lemma \ref{eq:minentropyadditivity}, we know that 
    \begin{align}
    H_\mathrm{min}(\mathbf{\bar{a}' \bar{r}'} | B \, R\, \mathbf{a r})_{\ket{\mathrm{CJ}}\ket{\Psi}} = H_\mathrm{min}(\mathbf{\bar{r}'} | B\, \mathbf{r})_{\ket{\mathrm{CJ}}} + H_\mathrm{min}(\mathbf{\bar{b}'} | \mathbf{b} R)_{\ket{\Psi}}~.
    \end{align}
    Moreover,
    \begin{align}\label{eq:Hmin_as_area_dif}
    H_\mathrm{min}(\mathbf{\bar{r}'} | B\, \mathbf{r})_{\ket{\mathrm{CJ}}} \le S(\mathbf{\bar{r}'} | B\, \mathbf{r})_{\ket{\mathrm{CJ}}} = A_B(\mathbf{\bar{b}' \cup b}) - A_B(\mathbf{b})~.
    \end{align}
    Combining these completes the proof.
\end{proof}

\begin{lem}\label{lem:kill_off_diagonals}
    The following inequality holds for any state $\ket{\psi} \in \mathcal{H}_A \otimes \mathcal{H}_B$ and Haar random unitaries $U_A$ acting on $\mathcal{H}_A$
    \begin{equation}
    \begin{split}
        \int dU_A dU'_A~ \ket{U_A}\bra{U'_A}_{\mathbf{U}_A} \otimes U_A \ket{\psi}\bra{\psi}_{AB} {U'_A}^\dagger  \le \int dU_A \ket{U_A}\bra{U_A}_{\mathbf{U}_A} \otimes U_A \ket{\psi}\bra{\psi}_{AB} U^\dagger_A ~.
    \end{split}
    \end{equation}
\end{lem}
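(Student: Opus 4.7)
The plan is to show the inequality by taking matrix elements of both sides against an arbitrary test vector $\ket{\eta}$ and reducing the resulting scalar inequality to Cauchy--Schwarz with respect to the normalized Haar measure. Denote the right-hand side by $\sigma$ and the left-hand side by $\rho$; I want to show $\bra{\eta}\sigma\ket{\eta}\ge \bra{\eta}\rho\ket{\eta}$ for every $\ket{\eta}\in\mathcal{H}_{\mathbf{U}_A}\otimes\mathcal{H}_{AB}$.

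First I would expand a general test vector in the ``position basis'' of $L^{2}(\mathbf{U}_{A})$ as
\begin{equation}
\ket{\eta} \;=\; \int dU\,\ket{U}_{\mathbf{U}_A}\otimes \ket{\eta(U)}_{AB},
\end{equation}
where $U\mapsto \ket{\eta(U)}_{AB}$ is a square-integrable $\mathcal{H}_{AB}$-valued function. Using $\braket{U|U'}=\delta(U-U')$ and $\int dU=1$, a direct calculation then gives
\begin{equation}
\bra{\eta}\sigma\ket{\eta} \;=\; \int dU\,\bigl|\braket{\eta(U)|U|\psi}\bigr|^{2},\qquad
\bra{\eta}\rho\ket{\eta} \;=\; \Bigl|\int dU\,\braket{\eta(U)|U|\psi}\Bigr|^{2},
\end{equation}
where $U$ is understood to act as $U\otimes\mathbb{1}_B$ on $\ket{\psi}_{AB}$. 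The first equation follows by carrying out the two $\braket{U|U'}$-contractions, and the second by observing that $\rho=\ket{\Phi}\bra{\Phi}$ with $\ket{\Phi}=\int dU\,\ket{U}\otimes U\ket{\psi}$, so that $\bra{\eta}\rho\ket{\eta}=|\braket{\eta|\Phi}|^{2}$.

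Setting $f(U):=\braket{\eta(U)|U|\psi}$, the desired operator inequality reduces to the elementary scalar statement
\begin{equation}
\Bigl|\int dU\, f(U)\Bigr|^{2} \;\le\; \int dU\,|f(U)|^{2},
\end{equation}
which is just Cauchy--Schwarz (equivalently, Jensen's inequality for $x\mapsto |x|^{2}$) against the probability measure $dU$ on $\mathbf{U}(d)$, using $\int dU = 1$. Since $\ket{\eta}$ was arbitrary, this gives $\sigma\ge\rho$ as claimed.

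The only mildly subtle step is justifying the manipulations with the delta-normalized basis $\{\ket{U}\}$; I would simply note that one may instead work in the Peter--Weyl decomposition \eqref{eq:blockdecom}, where $\ket{\eta(U)}$ becomes a genuine $L^{2}$-section and all integrals are absolutely convergent, so no distributional subtleties arise. The rest is a one-line application of Cauchy--Schwarz, and I don't anticipate any real obstacle.
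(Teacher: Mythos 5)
Your proof is correct, and it takes a genuinely different route from the paper's. The paper observes that the left-hand side is exactly the rank-one projector onto $W_A\ket{\psi}=\int dU_A\,\ket{U_A}\otimes U_A\ket{\psi}$, then shows the right-hand side is itself a projector (it squares to itself, using $\braket{U|U'}=\delta(U-U')$ and $\braket{\psi|\psi}=1$) whose support contains $W_A\ket{\psi}$; the operator inequality then follows because a projector dominates the rank-one projector onto any unit vector it fixes. You instead evaluate the quadratic forms of both sides against an arbitrary test vector $\ket{\eta}=\int dU\,\ket{U}\otimes\ket{\eta(U)}$, obtaining $\bigl|\int dU\,f(U)\bigr|^{2}$ versus $\int dU\,|f(U)|^{2}$ with $f(U)=\braket{\eta(U)|U|\psi}$, and close with Cauchy--Schwarz (Jensen) for the normalized Haar measure. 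Both arguments hinge on the same two normalizations ($\lVert\psi\rVert=1$ and $\int dU=1$), and your delta-function manipulations are no worse than the paper's; as you note, they can be made rigorous via the Peter--Weyl decomposition. What your route buys is elementarity and robustness: you never need the right-hand side to be idempotent, so the argument survives, say, a subnormalized $\ket{\psi}$ (relevant in the approximate/non-isometric setting), at the cost of a slightly longer computation. What the paper's route buys is structural information — the fact that the right-hand side is a projector fixing $W_A\ket{\psi}$ is itself illuminating for how the lemma is deployed in the proof that Condition 1 implies Condition 3. Either proof is acceptable.
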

\begin{proof}
    The left-hand side is a rank one projector onto the state 
    \begin{align}
    W_A \ket{\psi} = \int dU_A \ket{U_A} \otimes U_A \ket{\psi}_{A B}~.
    \end{align}
    Hence it suffices to show that the right-hand side is also a projector whose support includes $W_A \ket{\psi}$.
    To see that it is a projector, note that
    \begin{equation}
        \left( \int dU_A \ket{U_A}\bra{U_A} \otimes U_{A} \ket{\psi}\bra{\psi} U^\dagger_{A}\right)^2 =  \int dU_A \ket{U_A}\bra{U_A} \otimes U_{A} \ket{\psi}\bra{\psi} U^\dagger_{A}~.
    \end{equation}
    Its support includes $W_A \ket{\psi}$ because
    \begin{equation}
        \left(  \int dU'_A \ket{U'_A}\bra{U'_A} \otimes U'_{A} \ket{\psi}\bra{\psi} U'^\dagger_{A} \right) W_A \ket{\psi} = W_A \ket{\psi}~.
    \end{equation}
    This concludes the proof.
\end{proof}

We are now ready to prove that Condition 1 implies Condition 3. We want to show that, assuming Condition 1, then for all $\mathbf{\bar{b}'} \subseteq \mathbf{\bar{b}}$,
\begin{equation} \label{eq:Hminnonnegative}
    H_\mathrm{min}(\mathbf{\bar{a}' \bar{r}'} | B\,R\, \mathbf{a r})_{V W_\mathbf{b} W_\mathbf{\bar{b}} \ket{\Psi}} \ge 0~.
\end{equation}
From this \eqref{eq:min-EW-condition} follows immediately via Lemma \ref{lem:exact_Hmin_ge_deltaA} and \eqref{eq:VWWpsi}. Similarly, if we can show that, for all $\mathbf{b'} \subseteq \mathbf{b}$
\begin{equation} \label{eq:otherHminnonnegative}
    H_\mathrm{min}(\mathbf{a' r'} | \overline{B}\,\overline{R}\, \mathbf{\bar{a}\bar{r}})_{V W_\mathbf{b} W_\mathbf{\bar{b}}\ket{\Psi}} \ge 0~.
\end{equation}
then \eqref{eq:max-EW-condition} follows immediately via Lemma \ref{lem:exact_Hmin_ge_deltaA} with all complementary Hilbert spaces exchanged (so for example $\mathcal{H}_\mathbf{b} \leftrightarrow \mathcal{H}_\mathbf{\bar{b}}$, $\mathcal{H}_B \leftrightarrow \mathcal{H}_{\overline{B}}$ and so on). We shall focus on \eqref{eq:Hminnonnegative}, since the proof of \eqref{eq:otherHminnonnegative} is completely identical up to the relabelling of Hilbert spaces.

    By definition, for any state $\rho_{AB}$
    \begin{equation}
    \begin{split}
        H_\mathrm{min}(A|B) =& - \min_{\sigma_B} D_\mathrm{max} (\rho_{AB} | \mathbb{1}_A \otimes \sigma_B)\\
        \ge& - D_\mathrm{max}  (\rho_{AB} | \mathbb{1}_A \otimes \rho_B)~,
    \end{split}
    \end{equation}
    where,
    \begin{equation}
        D_\mathrm{max}(\rho | \sigma) = \inf \{ \lambda: \rho \le e^\lambda \sigma \}~.
    \end{equation}
    Therefore, in order to show that $H_\mathrm{min}(A|B)_\rho \ge 0$, it is sufficient to show $\rho_{AB} \le \mathbb{1}_A \otimes \rho_B$. 
    
    Using Condition 1, we can write
    \begin{equation}
    \begin{split}
        \ket{\phi} :=& \ket{\mathrm{CJ}}_{B\overline{B}\mathbf{r \bar{r}}} \otimes \ket{\Psi}_{\mathbf{a \bar{a}} R \overline{R}} \\
        =& V W_\mathbf{b} W_\mathbf{\bar{b}} \ket{\Psi}_{\mathbf{b \bar{b}} R\overline{R}}\\ 
        =& W_{B R} W_{\overline{B}R} V \ket{\Psi}_{\mathbf{b \bar{b}} R\overline{R}}~.
    \end{split}
    \end{equation}
    Let $\mathbf{\bar{b}''} = \mathbf{\bar{b} \setminus \bar{b}'}$ be the complement of $\mathbf{\bar{b}'}$ in $\mathbf{\bar{b}}$, and similarly $\mathbf{\bar{a}''} = \mathbf{\bar{a} \setminus \bar{a}'}$ and $\mathbf{\bar{r}''} = \mathbf{\bar{r} \setminus \bar{r}'}$. Taking a partial trace over $\mathcal{H}_{\overline{B}} \otimes \mathcal{H}_{\overline{R}} \otimes \mathcal{H}_{\mathbf{\bar{a}''}} \otimes \mathcal{H}_{\mathbf{\bar{r}''}}$, we then find
    \begin{equation}
    \begin{split} \label{eq:cond1to3main}
        \phi_{B R\,\mathbf{a r \bar{a}' \bar{r}'}} :=& \tr_{\overline{B}\, \overline{R} \,\mathbf{\bar{a}'' \bar{r}''}} \ket{\phi}\bra{\phi} \\
        =& \tr_{\overline{B}\,\overline{R} \,\mathbf{U_{\bar{b}''}}} \left[W_{B R} W_{\overline{B}\,\overline{R}} V \ket{\Psi} \bra{\Psi} V^\dagger  W_{\overline{B}\,\overline{R}}^\dagger  W_{B R}^\dagger\right]\\
        =& \int dU_\mathbf{\bar{b}} dU'_\mathbf{\bar{b}}~\braket{U'_\mathbf{\bar{b}''}|U_\mathbf{\bar{b}''}} W_{B R} \tr_{\overline{B}\,\overline{R}}\left[U_{\overline{B}\,\overline{R}} V \ket{\Psi} \bra{\Psi} V^\dagger {U'}^\dagger_{\overline{B}\,\overline{R}} \right] W_{B R}^\dagger \otimes \Pi_{\mu_0}\ket{U_\mathbf{\bar{b}'}}\bra{U'_\mathbf{\bar{b}'}}\Pi_{\mu_0} \\
        \le&\,W_{B R} \tr_{\overline{B}\,\overline{R}}\left[ V \ket{\Psi}\bra{\Psi} V^\dagger  \right] W_{B R}^\dagger \otimes \int dU_\mathbf{\bar{b}'} \Pi_{\mu_0} \ket{U_\mathbf{\bar{b}'}}\bra{U_\mathbf{\bar{b}'}} \Pi_{\mu_0} \\
        \leq& \phi_{B R \mathbf{a r}} \otimes \mathbb{1}_{\mathbf{\bar{a}'\bar{r}'}} ~.
    \end{split}
    \end{equation}
    In the third line, we expanded $W_{\overline{B}\,\overline{R}}$, including explicitly expanding the product unitaries $U_\mathbf{\bar{b}} = U_\mathbf{\bar{b}'} \otimes U_\mathbf{\bar{b}''}$. We also used Lemma \ref{lem:CJ_otimes_psi} to introduce projectors $\Pi_{\mu_0}$ onto the fundamental representation block $\mathcal{H}_\mathbf{\bar{a}'} \otimes \mathcal{H}_\mathbf{\bar{r}'}$ of $\mathcal{H}_\mathbf{U_{\bar{b}'}}$ without affecting the state. The inequality in the fourth line then follows from Lemma \ref{lem:kill_off_diagonals}. This concludes the proof, since by the discussion above, \eqref{eq:cond1to3main} implies \eqref{eq:Hminnonnegative} and hence \eqref{eq:min-EW-condition}.
    
    \begin{rem} 
    This proof easily extends to a slightly stronger statement than \eqref{eq:min-EW-condition}. In \eqref{eq:Hmin_as_area_dif}, we replaced a min-entropy in the $\ket{\mathrm{CJ}}$ state with a conditional entropy, so that we could write it as a difference in two areas. Had we left it in terms of the min-entropy, we would have ended up with a stronger a statement, a bound on the min-entropy of the bulk state by the min-entropy in the $\ket{\mathrm{CJ}}$ state. One possible physical interpretation of this $\ket{\mathrm{CJ}}$ min-entropy is a bound on the \emph{fluctuations} in the area difference. The stronger version of \eqref{eq:min-EW-condition} would then lower bound $H_\mathrm{min}(\mathbf{\bar{b}'} |  \mathbf{b} R )_{\ket{\Psi}}$ by an upper bound on the area-difference, rather than the average area difference.
    \end{rem}

\subsection*{Proof $(3) \implies (4)$:}
This was previously proven in \cite{Akers:2020pmf}.
Recall that the min-entropy and max-entropy satisfy strong sub-additivity \cite{renner2005security},
    \begin{equation}
        H_{\mathrm{min/max}}(A|B) \ge H_{\mathrm{min/max}}(A|BC)~.
    \end{equation}
Consider an arbitrary subset $\mathbf{b'} \subseteq \{b_1 \dots b_n\}$ . 
We want to show that
\begin{align}\label{eq:min_entropy_from_min_equals_max}
    A_B(\mathbf{b'}) + S(\mathbf{b'} R)_{\ket{\Psi}} \ge A_B(\mathbf{b}) + S(\mathbf{b} R)_{\ket{\Psi}}~.
\end{align}
By the assumption of Condition 3,
\begin{equation}\label{eq:this_max_EW_condition}
     A_B(\mathbf{b\cap b'}) - A_B(\mathbf{b}) \ge H_\mathrm{max}(\mathbf{b\setminus b'}|[\mathbf{b\cap b'}] \, R)_{\ket{\Psi}}~.
\end{equation}
Here we combined \eqref{eq:max-EW-condition} with the duality identity $H_\mathrm{max}(A|B) = -H_\mathrm{min}(A|C)$ for any tripartite pure state, and the equality between complementary areas described in Remark \ref{rem:complementaryareas}. By strong subadditivity,
\begin{align} \label{eq:areaSSA}
A_B(\mathbf{b'}) - A_B(\mathbf{b \cup b'}) &= - S(\mathbf{r \setminus r'}| B\, \mathbf{r'})_{\ket{\mathrm{CJ}}} \nonumber\\&\geq - S(\mathbf{r \setminus r'}| B \,\mathbf{r \cap r'})_{\ket{\mathrm{CJ}}} \\\nonumber&= A_B(\mathbf{b\cap b'}) - A_B(\mathbf{b})
\end{align}
and
\begin{align}
    H_\mathrm{max}(\mathbf{b\setminus b'}|[\mathbf{b\cap b'}]\, R)_{\ket{\Psi}} \geq H_\mathrm{max}(\mathbf{b \setminus b'}|\mathbf{b'}\, R)_{\ket{\Psi}}  \geq S([\mathbf{b \cup b'}]\,R)_{\ket{\Psi}}  - S(\mathbf{b'} R)_{\ket{\Psi}}~.
\end{align}
where we also used $H_\text{max}(A|B) \geq S(A|B)$. Hence
\begin{align} \label{eq:adjusted_max_EW_cond}
    A_B(\mathbf{b'}) - A_B(\mathbf{b \cup b'}) \ge S([\mathbf{b \cup b'}]\, R)_{\ket{\Psi}}  - S(\mathbf{b'} R)_{\ket{\Psi}}~.
\end{align}
Meanwhile, \eqref{eq:min-EW-condition} tells us
\begin{align}\label{eq:this_min_EW_condition}
    A_B(\mathbf{b}) - A_B(\mathbf{b}\cup \mathbf{b'}) &\le H_\mathrm{min}(\mathbf{b' \setminus b} | \mathbf{b}) 
    \le S([\mathbf{b' \cup b}] \, R) - S(\mathbf{b} R)~,
\end{align}
where in the second inequality we used $H_\mathrm{min}(A|B) \le S(A|B)$.
Combining \eqref{eq:adjusted_max_EW_cond} and \eqref{eq:this_min_EW_condition} gives \eqref{eq:min_entropy_from_min_equals_max}, which is what we set out to show.

\begin{rem}
One can also more directly prove Condition 4 from Condition 1, without any reference to min- and max-entropies. By an analogue of Lemma \ref{lem:exact_Hmin_ge_deltaA}, together with the strong sub-additivity arguments above, it is sufficient to show that
\begin{align}
S(\mathbf{\bar{a}' \bar{r}'} | \overline{B} \, \overline{R}\, \mathbf{\bar{a}'' \bar{r}''})_{\ket{\mathrm{CJ}}\ket{\Psi}} = -S(\mathbf{\bar{a}' \bar{r}'} | B \, R\, \mathbf{a r})_{\ket{\mathrm{CJ}}\ket{\Psi}} \le 0~.
\end{align}
But, if $\ket{\phi} = \ket{\mathrm{CJ}}\ket{\Psi}$ as before,
\begin{align}
    \phi_{\overline{B}\,\overline{R} \, \mathbf{\bar{a}'' \bar r}''} &= \int dU_\mathbf{\bar b} dU'_\mathbf{\bar b}\,\, \braket{U'_\mathbf{\bar{b}'}|U_\mathbf{\bar{b}'}}\,U_{\overline{B}\,\overline{R}} \tr_{B\,R}\left[V \ket{\Psi} \bra{\Psi} V^\dagger \right] {U'}^\dagger_{\overline{B}\,\overline{R}}  \otimes \ket{U_\mathbf{\bar{b}''}}\bra{U'_\mathbf{\bar{b}''}}~.\nonumber \\
    &= \int dU_\mathbf{\bar{b}'}\,\, W^\mathbf{\bar{b}''}_{\overline{B}\,\overline{R}} (U_\mathbf{\bar{b}'}) \tr_{B\,R}[\left[V \ket{\Psi} \bra{\Psi} V^\dagger \right] W^\mathbf{\bar{b}''}_{\overline{B}\,\overline{R}} (U_\mathbf{\bar{b}'})^\dagger~.
\end{align}
Here we have defined the isometry
\begin{align}
    W^\mathbf{\bar{b}''}_{\overline{B}\,\overline{R}} (U_\mathbf{\bar{b}'}) = \int d U_\mathbf{\bar{b}''}\, U_{\overline{B}\,\overline{R}} \otimes \ket{U_\mathbf{\bar{b}''}}~,
\end{align}
which depends implicitly on $U_\mathbf{\bar{b}'}$ through $U_{\overline{B}\,\overline{R}}$.
By the concavity of the von Neumann entropy, and its invariance under isometries, we therefore have 
\begin{align}
    S(\overline{B}\,\overline{R} \, \mathbf{\bar{a}'' \bar r}'')_{\ket{\mathrm{CJ}}\ket{\Psi}} \geq S(\overline{B}\,\overline{R})_{V\ket{\psi}} = A_B(\mathbf{b}) + S(\mathbf{b})_{\ket{\Psi}} = S(\overline{B}\,\overline{R} \, \mathbf{\bar{a} \bar r})_{\ket{\mathrm{CJ}}\ket{\Psi}}~,
\end{align}
as desired.
\end{rem}

\section{Approximate and non-isometric codes} \label{sec:nonisometries}

\subsection{Error correction for non-isometric codes}

In traditional applications of quantum error correcting codes, the map $V$ is always an isometry $V^\dagger V = \mathbb{1}$. This is for obvious physical reasons: time evolution in quantum mechanics is unitary, and so any physical process that encodes a quantum state into a quantum code must be isometric.\footnote{The most natural physical interpretation of a non-isometric map $V$ in this setting is a QEC code that relies on postselection onto a particular measurement outcome to succeed. Since the outcome of a measurement cannot be predicted in advance, such codes would not be useful in practice. }

This does not seem to be the case, however, in holographic codes, where the map $V$ has a very different physical interpretation as the ``bulk-to-boundary map,'' relating the semiclassical description of the bulk state to the corresponding holographic boundary state. Because the bulk has (at least) one more dimension than the boundary, a naive counting suggests that there are in fact many more bulk states than there are boundary states!

Of course, if the bulk geometry is perturbatively close to vacuum AdS, we cannot excite too many bulk degrees of freedom without causing gravitational backreaction that creates a black hole.\footnote{It is also worth noting that the negative curvature of AdS-space mean that the volume and surface area of a sphere grow at the same rate in the large radius limit.} However, there is no such limit within the interior of a black hole. Instead, the number of semiclassical degrees of freedom in a long wormhole behind a black hole horizon can be arbitrarily large compared to the Bekenstein-Hawking entropy -- the classic example of this is an evaporating black hole after the Page time.

Traditionally, this fact has been regarded as unacceptable, and possibly as evidence for new physics that avoids the excess semiclassical degrees of freedom \cite{Mathur:2008nj, Almheiri:2012rt, Marolf:2013dba}. However, recent progress in understanding the black hole information paradox \cite{Penington:2019npb, Almheiri:2019psf} has made it clear that the semiclassical description of an black hole needs to be taken seriously, even after the Page time. Instead, the apparent paradox is resolved by nonperturbative corrections to the semiclassical inner product: exactly orthogonal simple states in the bulk quantum field theory have exponentially small, but nonzero, overlap in quantum gravity \cite{Penington:2019kki}. Another way of saying this is that while the bulk-to-boundary map $V$ is still a linear map -- the boundary dual of a superposition of semiclassical bulk states is just a superposition of the dual boundary states -- it does not preserve the inner product and so is not an isometry. As a result, the arbitrarily large Hilbert space of semiclassical bulk states can be mapped by $V$ into a much smaller Hilbert space of holographic boundary states, the dimension of which is controlled by the Bekenstein-Hawking entropy.

It may seem counterintuitive, or even impossible, for $V$ to approximately preserve the inner products of all pairs of simple states -- for concreteness we can take these to be product states on a large number of qubits -- without $V$ being (approximately) an isometry. However, as the following theorem shows, not only in this possible, it is in fact generically true for random maps $V$. The probability that any inner product is not preserved is actually doubly exponentially suppressed.

\begin{thm} \label{thm:normpreservation}
Let $V: [\mathbb{C}^2]^{\otimes n}\to [\mathbb{C}^2]^{\otimes m}$ with $n > m > 1$ be defined by $V = 2^{(n-m)/2} \bra{0}^{\otimes (n-m)} U$ where $U$ is a Haar random unitary, and let $\varepsilon < 1$ be an arbitrary positive number. Then, with probability
\begin{align}
     p \geq 1 -  2^{6n(n-m)+ 30n+4} n^{6n} \exp[-\frac{2^{m-5} \varepsilon^2}{81\pi^3}]
 \end{align}
for all product states $\ket{\psi} = \ket{\psi_1} \ket{\psi_2} \dots \ket{\psi_n}$ and $\ket{\phi} = \ket{\phi_1} \ket{\phi_2} \dots \ket{\phi_n}$, we have
\begin{align}
    \left|\braket{\phi|V^\dagger V |\psi} - \braket{\phi|\psi}\right| \leq \varepsilon~.
\end{align}

\end{thm}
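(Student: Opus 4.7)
My plan is to combine pointwise concentration of $\braket{\phi|V^\dagger V|\psi}$ for each fixed product pair with a covering (net) argument over the space of product-state pairs.

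First, the expected value: write $V^\dagger V = 2^{n-m}\, U^\dagger \Pi U$ where $\Pi = \ket{0}\!\bra{0}^{\otimes(n-m)} \otimes \mathbb{1}_{2^m}$ has rank $2^m$. By Schur's lemma $\mathbb{E}_U[U^\dagger \Pi U] = (\tr \Pi/2^n)\,\mathbb{1} = 2^{m-n}\,\mathbb{1}$, so $\mathbb{E}[V^\dagger V] = \mathbb{1}$ and hence $\mathbb{E}[\braket{\phi|V^\dagger V|\psi}] = \braket{\phi|\psi}$ exactly for every fixed pair.

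Second, I would establish the pointwise tail bound, for fixed product $\ket{\phi}, \ket{\psi}$,
\begin{equation}
\Pr\!\left(\,\bigl|\braket{\phi|V^\dagger V|\psi} - \braket{\phi|\psi}\bigr| \geq \delta\,\right) \leq 2\exp(-c\,2^m\,\delta^2),
\end{equation}
for some universal $c > 0$. The delicate point is obtaining the correct $2^m$ scaling rather than the much weaker $2^{2m-n}$ that a naive Levy's lemma on $\mathbf{U}(2^n)$ produces. Levy sees only the operator-norm Lipschitz constant $\|V^\dagger V - \mathbb{1}\|_\infty \sim 2^{n-m}$, which hugely overestimates typical fluctuations. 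The remedy is to exploit low effective rank. Writing $V^\dagger V = 2^{n-m}\sum_{a=1}^{2^m}\ket{v_a}\!\bra{v_a}$, where $\{v_a\}$ is the Haar-random Stiefel frame formed by the first $2^m$ columns of $U^\dagger$, we have
\begin{equation}
\braket{\phi|V^\dagger V|\psi} = 2^{n-m}\sum_{a=1}^{2^m} \overline{\braket{v_a|\phi}}\,\braket{v_a|\psi},
\end{equation}
a sum of $2^m$ nearly-independent terms of magnitude $\sim 2^{-n}$. A Hanson--Wright-type inequality for bilinear forms in Haar-distributed vectors (or, equivalently, explicit higher-moment estimates via Weingarten calculus) gives sub-Gaussian behavior with variance proxy $2^{-m}$, which is exactly what we need.

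Third, I would build a net and union-bound. Each single-qubit pure state lies in $S^3 \subset \mathbb{C}^2$, giving three real parameters, so an $\eta$-net on $n$-qubit product pairs has size $|\mathcal{N}_\eta| \leq (C/\eta)^{6n}$. The deterministic bound $\|V^\dagger V\|_\infty \leq 2^{n-m}$ combined with the telescoping inequality $\|\ket{\phi} - \ket{\phi'}\| \leq \sum_i \|\ket{\phi_i} - \ket{\phi'_i}\|$ yields a Lipschitz constant $\lesssim n\,2^{n-m}$ for $(\ket{\phi},\ket{\psi}) \mapsto \braket{\phi|V^\dagger V|\psi}$ measured in per-qubit perturbations. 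Choosing $\eta \sim \varepsilon/(n\,2^{n-m})$ keeps the net discretization error below $\varepsilon/2$, and union bounding the pointwise estimate over $\mathcal{N}_\eta$ produces an overall failure probability bounded by
\begin{equation}
(C' n\,2^{n-m}/\varepsilon)^{6n} \cdot 2\exp(-c'\,2^m\,\varepsilon^2),
\end{equation}
which, after absorbing numerical constants and using $\varepsilon < 1$, matches the stated bound with its $2^{6n(n-m)}$, $n^{6n}$, and $\exp(-2^{m-5}\varepsilon^2/(81\pi^3))$ structure.

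The principal obstacle is the pointwise concentration step: without the sharp $2^m$ scaling (rather than $2^{2m-n}$), the net-based union bound cannot close when $n \gg m$ — precisely the regime of strongly non-isometric codes that the paper targets. Sharpening Levy to Hanson--Wright requires exploiting that $V^\dagger V - \mathbb{1}$ has only $2^m$ large eigenvalues, and verifying that a standard concentration statement for Gaussian or spherical vectors transfers cleanly to the correlated bilinear form arising from a Haar-random Stiefel frame. Everything else in the argument is routine once this step is in hand.
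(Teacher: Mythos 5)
Your global architecture --- a pointwise tail bound for each fixed pair of product states, followed by an $\eta$-net over product pairs with mesh $\eta \sim \varepsilon\, 2^{m-n}/n$ and a union bound --- is the same as the paper's, and your covering step (net of size $(C/\eta)^{6n}$, Lipschitz constant $\lVert V^\dagger V\rVert_\infty \leq 2^{n-m}$, per-qubit triangle inequality) is essentially Lemma \ref{lem:prodstatenet} plus the final assembly in Appendix \ref{app:sec5proofs}. The genuine gap is exactly the step you flag yourself: the pointwise concentration with the $2^m$ (rather than $2^{2m-n}$) scaling is asserted by appeal to a ``Hanson--Wright-type inequality for bilinear forms in a Haar-random Stiefel frame,'' which is not an off-the-shelf statement. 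The frame vectors $U^\dagger\ket{0,a}$ are correlated, and transferring Gaussian Hanson--Wright to this setting (or controlling all Weingarten moments well enough to extract a sub-Gaussian tail with variance proxy $2^{-m}$) is real work that the proposal does not carry out. Since this is the central quantitative input of the theorem --- without it the union bound cannot close in the regime $n \gg m$ --- the proof as written is incomplete.

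The idea you are missing is that Levy's lemma \emph{does} suffice; you only need to apply it to a better-chosen function. The paper (Lemma \ref{lem:avnormbound}) applies Levy not to the bilinear form $\braket{\phi|V^\dagger V|\psi}$ (whose Lipschitz constant $\sim 2^{n-m}$ indeed yields only the weak $2^{2m-n}$ exponent you worry about) but to the \emph{norm} $\lVert V\ket{\chi}\rVert$, viewed as a function of the uniformly random sphere point $U\ket{\chi}$: its Lipschitz constant is $\lVert V\rVert_\infty = 2^{(n-m)/2}$, so the sphere dimension $\sim 2^{n+1}$ divided by $K^2 = 2^{n-m}$ gives precisely the $2^{m}$ scaling. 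First- and second-moment computations pin the mean norm between $1 - 2^{-m-1}$ and $1$, so the norm and hence the norm squared (Lemma \ref{lem:normsquarebound}) concentrate at $1$ with tail $\exp[-O(2^m\delta^2)]$. The inner product is then recovered from four such norms via the polarization identity applied to $\ket{\psi}\pm\ket{\phi}$ and $\ket{\psi}\pm i\ket{\phi}$, at the cost only of constant factors (the $12\delta$, the substitution $\varepsilon = 24\delta$, and the resulting $81\pi^3$ and $2^{m-5}$ in the stated probability). If you want to salvage your route, you would need to actually prove the Stiefel-frame concentration lemma; alternatively, replacing your step two with ``Levy on the norm plus polarization'' closes the gap with elementary tools and reproduces the stated constants.
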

\begin{proof}
    See Appendix \ref{app:sec5proofs}.
\end{proof}

However, this story is impossible to understand in terms of  state-independent quantum error correction.
You cannot use $m$ qubits to encode $n > m$ qubits in a state-independent way; there simply aren't enough degrees of freedom. Fortunately, state-specific reconstruction offers a resolution.
The bulk Hilbert space can be larger than the boundary Hilbert space and nonetheless be encoded into it as a state-specific code, with all of the desirable features of holography.

A simple example of such a state-specific code is given by Theorem \ref{thm:normpreservation}: given a product state $\ket{\psi}$ and a map $V: \mathcal{H}_\mathrm{code} \to \mathcal{H}_B$ defined as in Theorem \ref{thm:normpreservation} -- $\mathcal{H}_{\overline{B}}$ is trivial in this case -- we can implement any product unitary acting on $\ket{\psi}$, in a state-specific way, by acting with a unitary on $\mathcal{H}_B$, because all product states have approximately the same norm.

\begin{rem}
While in Theorem \ref{thm:normpreservation} the state-specific reconstruction is only approximate, it is easy to construct other examples where exact state-specific reconstruction is possible for highly non-isometric codes. A simple example is a code space consisting of two qubits, with $\ket{\psi}$ maximally entangled, and the map $V$ projecting the first qubit into the state $\ket{0}$ (and rescaling by $\sqrt{2}$).
\end{rem}

An alternative approach to this problem is to only consider a small subspace of bulk states, so that the restriction of the bulk-to-boundary map $V$ to that subspace is indeed (approximately) an isometry and traditional definitions of quantum error correction are applicable. This is of course completely valid as far as it goes, but it means that one can only treat a small number of bulk degrees of freedom as ``real'' in any given calculation, even though all the other bulk degrees of freedom are obviously necessary, for example in order to correctly apply the QES prescription. Using the framework of state-specific reconstruction, one can correctly understand how the entire bulk Hilbert space is encoded, and also how state-independent reconstruction emerges upon restriction to sufficiently small subspaces of states.

The obvious question is whether complementary state-specific codes with non-isometric maps $V$ obey a version of Theorem \ref{thm:qms_exact}. The answer to that question will be the subject of Section \ref{sec:approx_theorem}, but the simple answer is that they do; in fact the basic proof strategy from Section \ref{sec:proof_exact} goes through essentially unchanged.

\subsection{Theorem Statement} \label{sec:approx_theorem}
In this section, we extend Theorem \ref{thm:qms_exact} in two important and complementary ways. Firstly, we allow arbitrary linear maps $V$ that are not necessarily isometries, so long as $V$ does not increase the norm of any state $U\ket{\psi}$ related to $\ket{\Psi}$ by a product unitary $U$. (In practice $\lVert V U \ket{\psi}\rVert$ will need to be approximately independent of the product unitary $U$.) Secondly, we allow the existence of small corrections to each of the conditions in the theorem. As emphasized in Section \ref{sec:state_spec_rec}, this is both necessary to correctly capture the physics of holography -- nonperturbative corrections in quantum gravity will always prevent perfect reconstruction -- and also appears to be important if we want to actually find interesting and nontrivial examples of state-specific codes.

The essence of the proof of Theorem \ref{thm:qms_approx} is the same as the proof of Theorem \ref{thm:qms_exact}. However, the necessity of keeping track of the various epsilons makes it noticeably more technical. As such, we postpone the proof to Appendix \ref{sec:approx}, and content ourselves for the moment with stating the theorem in full and then making a few brief remarks.

Before stating the theorem, however, we need to define the smooth min-entropy $H^\varepsilon_\mathrm{min}(A|B)_{\ket{\psi}}$. To do so, we first need to generalize the fidelity to subnormalized states:

\begin{defn}[Generalized Fidelity] \label{defn:genfid}
    For positive semidefinite operators $\rho,\sigma$ on Hilbert space $\mathcal{H}$ satisfying $\tr \rho, \tr \sigma \le 1$,
    the \emph{generalized fidelity} $\bar{F}(\rho,\sigma)$ is given by
    \begin{equation}
        \bar{F}(\rho,\sigma) := \sup_{\mathcal{H}'} \sup_{\bar{\rho},\bar{\sigma}} F(\bar \rho, \bar \sigma)~,
    \end{equation}
    where the supremum is taken over all isometries $V': \mathcal{H} \to \mathcal{H}'$ of $\mathcal{H}$ into a larger Hilbert space $\mathcal{H}'$ and over (normalized) density matrices $\bar{\rho},\bar{\sigma}$ such that $\rho = {V'}^\dagger \bar\rho V'$ and $\sigma = {V'}^\dagger \bar \sigma V'$.   
\end{defn}
\begin{defn}[Smooth min-entropy] \label{defn:smoothmin}
    The smooth min-entropy $H^\varepsilon_\mathrm{min}(A|B)_{\ket{\psi}}$ is defined as
    \begin{align}
        H^\varepsilon_\mathrm{min}(A|B)_{\ket{\psi}} = \sup_{\tilde \rho_{AB}} H^\varepsilon_\mathrm{min}(A|B)_{\tilde \rho_{AB}}~,
    \end{align}
    where the supremum is over subnormalized density matrices $\tilde \rho_{AB}$ such that the generalized fidelity $\bar F(\tilde \rho_{AB}, \psi_{AB}) \geq \sqrt{1 - \varepsilon^2}$.
\end{defn}

\begin{thm} \label{thm:qms_approx}
Let $V: \mathcal{H}_\mathrm{code} \cong \otimes_i \mathcal{H}_{b_i} \to \mathcal{H}_B \otimes \mathcal{H}_{\overline{B}}$ be a linear map, with ${\bf b} = \{b_{i_1}, b_{i_2} ...\}$ a subset of input legs, and ${\bf \bar{b}}$ its complement. Let $U_{\bf b}$, $\hat U_{\bf b}$  (respectively $U'_{\bf \bar{b}}$, $\hat U'_{\bf \bar{b}}$) be a product of local unitaries on ${\bf b}$ (respectively on ${\bf \bar{b}}$) chosen at random according to the Haar measure.
Finally, let $\ket{\Psi} \in \mathcal{H}_\mathrm{code} \otimes \mathcal{H}_R \otimes \mathcal{H}_{\overline{R}}$ be a fixed, arbitrary state with $\mathcal{H}_R, \mathcal{H}_{\overline{R}}$ reference systems of arbitrary dimension such that for all product unitaries $U_{\bf b}$, $U'_{\bf \bar{b}}$, we have $\lVert V U_{\bf b} U'_{\bf \bar{b}} \ket{\Psi} \rVert \leq 1$.

Then the following two conditions are equivalent:
\begin{enumerate}
    \item \emph{(Complementary Recovery)} With probability $p \geq 1 - \kappa_1$, there exist unitary operators $U_{B R}$, depending only on $U_{\bf b}$, $\hat U_{\bf b}$, and $U'_{\overline{B}\,\overline{R}}$, depending only on $U'_{\bf \bar{b}}$, $\hat U'_{\bf \bar{b}}$, such that
    \begin{align}\label{eq:approx_cond_1}
        \big\lVert U_{BR} U'_{\overline{B}\,\overline{R}} V \hat{U}_\mathbf{b} \hat{U}'_\mathbf{\bar{b}} \ket{\Psi} - V U_\mathbf{b} {U'}_\mathbf{\bar{b}}\ket{\Psi}\big\rVert \leq  \varepsilon_1~,
    \end{align}
    
    \item \emph{(Holographic Entropy Prescription)} With probability $p \geq 1 - \kappa_2$, 
    \begin{equation}
        \left| S\left( B R\right)_{V \hat U_\mathbf{b} \hat U'_\mathbf{\bar{b}}\ket{\Psi}} - \left[A_B(\mathbf{b}) + S(\mathbf{b}R)_{\ket{\Psi}}\right] \right| \le \varepsilon_2 ~,
    \end{equation}
\end{enumerate}
Moreover, both statements imply:
\begin{enumerate}
    \setcounter{enumi}{2}
    \item \emph{(One-shot Minimality)} For all ${\bf \bar{b'}} \subseteq {\bf \bar{b}}$ and ${\bf b'} \subseteq {\bf b}$,
    \begin{align}
    H^{\varepsilon_3}_\mathrm{min}(\mathbf{\bar{b'}} | \mathbf{b} R)_{\ket{\Psi}} \ge& A_B(\mathbf{b}) - A_B(\mathbf{\bar{b'}b})~, \\
    H^{\varepsilon_3}_\mathrm{min}(\mathbf{b'} | \mathbf{\bar{b}} \overline{R})_{\ket{\Psi}} \ge& A_{\overline{B}}(\mathbf{\bar{b}}) - A_{\overline{B}}(\mathbf{b'\bar{b}})~.
    \end{align}
\end{enumerate}
This in turn implies:
\begin{enumerate}
    \setcounter{enumi}{3}
    \item \emph{(Minimality)} For all ${\bf b'} \subseteq {\bf b} \cup {\bf \bar{b}}$,
    \begin{align}
    A_B({\bf b'}) + S({\bf b'} R)_{\ket{\Psi}} \ge A_B({\bf b}) + S({\bf b} R)_{\ket{\Psi}} - \varepsilon_4~. 
    \end{align}
\end{enumerate}
Specifically, for sufficiently small $\varepsilon_1, \kappa_1$, Condition 1 implies Condition 2 for any $\kappa_2 \gg \kappa_1$ with $\varepsilon_2 \leq  \sqrt{\varepsilon_1^2 + \kappa_1/\kappa_2} \log [d_B^2 d_R^2/4(\varepsilon_1^2 +  \kappa_1/\kappa_2)]$. 
Condition 2 implies Condition 1 for any $\kappa_1 > 0$ with $\varepsilon_1 \leq (16/\kappa_1)[8 \kappa_2 \log (d_B d_R) + 8\varepsilon_2]^{1/4}$. 
Condition 1 implies Condition 3 with $\varepsilon_3 \leq \sqrt{2\varepsilon_1 + 2\kappa_1}$. 
Finally, for any small $\varepsilon_3 \geq 0$, Condition 1 implies Condition 4 with $\varepsilon_4 \leq 4 \varepsilon_3 \log [d_\mathrm{code}^2 d_R d_{\overline{R}}/(4 \varepsilon_3^2)]$.
\end{thm}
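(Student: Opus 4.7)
The plan is to lift the proof of Theorem \ref{thm:qms_exact} from Section \ref{sec:proof_exact} to the approximate, non-isometric setting, keeping the same four-way structure of implications. The extraction isometries $W_{\mathbf{b}}, W_{\mathbf{\bar{b}}}$ built from Haar-averaged product unitaries continue to be well-defined, and the central identity $V W_{\mathbf{b}} W_{\mathbf{\bar{b}}}\ket{\Psi} = \ket{\Psi}_{\mathbf{a\bar{a}}R\overline{R}} \otimes \ket{\mathrm{CJ}}_{B\overline{B}\mathbf{r\bar{r}}}$ is a purely algebraic statement that does not require $V$ to be an isometry. The assumption $\lVert V U_{\mathbf{b}} U'_{\mathbf{\bar{b}}}\ket{\Psi}\rVert \leq 1$ then ensures that the Choi--Jamiolkowski state, though now subnormalized, still gives a well-defined area $A_B(\mathbf{b}) = S(B\mathbf{r})_{\ket{\mathrm{CJ}}}$ and provides the uniform norm bound needed for the entropy-continuity estimates below.

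For $(1) \Rightarrow (2)$, the idea is that Condition 1 makes $W_{BR} W_{\overline{B}\,\overline{R}} V \hat{U}_{\mathbf{b}} \hat{U}'_{\mathbf{\bar{b}}}\ket{\Psi}$ close in state distance to $V W_{\mathbf{b}} W_{\mathbf{\bar{b}}} \hat{U}_{\mathbf{b}} \hat{U}'_{\mathbf{\bar{b}}}\ket{\Psi}$, so the boundary entropy approaches $A_B(\mathbf{b}) + S(\mathbf{b}R)_{\ket{\Psi}}$; the Alicki--Fannes--Winter continuity bound then produces the entropy error with its characteristic $\log d$ factor. The subtlety is converting the high-probability bound of Condition 1 into a bound that holds with probability $\geq 1-\kappa_2$: a Markov-type estimate using $\lVert \cdot \rVert \leq 2$ on the $\kappa_1$-probability bad event yields a squared-state-distance bound of order $\varepsilon_1^2 + \kappa_1/\kappa_2$, producing the stated combination $\sqrt{\varepsilon_1^2 + \kappa_1/\kappa_2}$. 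For the converse $(2) \Rightarrow (1)$, I would reproduce the concavity chain from Section \ref{sec:proof_exact}, but now the inequality $S(\int \rho) \geq \int S(\rho)$ is only approximately saturated. A quantitative stability form of concavity (via a Pinsker-type bound on fluctuations from the average) converts approximate saturation into an $O((\kappa_2 \log d + \varepsilon_2)^{1/2})$ trace-distance bound on the decoupling statement $\rho_{\overline{B}\overline{R}\mathbf{\bar{a}\bar{r}}}(U_{\mathbf{b}}) \approx \rho_{\overline{B}\overline{R}\mathbf{\bar{a}\bar{r}}}$, and an approximate Uhlmann theorem then yields the recovery unitary with the additional square root and $1/\kappa_1$ prefactor from Markov, consistent with the stated fourth-root dependence.

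For $(1) \Rightarrow (3)$, the operator-inequality proof underlying Lemma \ref{lem:exact_Hmin_ge_deltaA} and its application to $\phi_{BR\mathbf{ar\bar{a}'\bar{r}'}}$ goes through with an $O(\sqrt{\varepsilon_1 + \kappa_1})$ perturbation in the inequality $\rho \leq \mathbb{1} \otimes \sigma$. This perturbation is absorbed into the smoothing parameter $\varepsilon_3$ via the definition of $H^{\varepsilon}_\mathrm{min}$ in terms of nearby states: any subnormalized density matrix within generalized fidelity $\sqrt{1-\varepsilon_3^2}$ of the true state suffices to establish the min-entropy bound. For $(3) \Rightarrow (4)$, the exact proof uses only strong subadditivity and the ordering $H_\mathrm{min} \leq S \leq H_\mathrm{max}$; both properties transfer to the smooth versions, with the additional slack controlled by the quantum asymptotic equipartition estimate $|H^{\varepsilon}_\mathrm{min/max}(A|B) - S(A|B)| = O(\varepsilon \log d_A)$, yielding the logarithmic factor in $\varepsilon_4$.

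The main obstacle is $(2) \Rightarrow (1)$. In the exact case, strict concavity of the von Neumann entropy characterized the equality case of $S(\int \rho) \geq \int S(\rho)$ and produced exact decoupling ``for free''; in the approximate case, one must quantify how near-saturation of concavity implies near-decoupling, and then compose this with an approximate Uhlmann theorem to extract a recovery unitary. Each of these two steps naturally loses a square root, producing the fourth-root dependence $\varepsilon_1 \lesssim [\kappa_2 \log d + \varepsilon_2]^{1/4}$; the $1/\kappa_1$ prefactor reflects the need to union-bound the good events for both the $\mathbf{b}$-side and $\mathbf{\bar{b}}$-side recoveries, which is what forces $\kappa_2 \gg \kappa_1$. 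Carrying these interacting error budgets through the full argument while preserving the structure of the exact proof is the principal bookkeeping challenge.
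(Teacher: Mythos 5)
Your plan follows essentially the same route as the paper's own proof: the same extraction-isometry identity $V W_{\mathbf{b}} W_{\mathbf{\bar b}}\ket{\Psi} = \ket{\Psi}_{\mathbf{a\bar a}R\overline{R}}\otimes\ket{\mathrm{CJ}}$, a Markov-plus-Fannes argument for $(1)\Rightarrow(2)$, a Pinsker-based stability-of-concavity lemma combined with Uhlmann and Markov for $(2)\Rightarrow(1)$ (giving the fourth-root scaling), smoothing of the min-entropy to absorb the reconstruction error for $(1)\Rightarrow(3)$, and strong subadditivity together with an $H^{\varepsilon}_{\mathrm{min}}(A|B)\le S(A|B)+O(\varepsilon\log d)$ bound for $(3)\Rightarrow(4)$. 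One bookkeeping quibble: the condition $\kappa_2\gg\kappa_1$ is needed in the $(1)\Rightarrow(2)$ direction (Markov over the conditioning on $\hat U_{\mathbf{b}},\hat U'_{\mathbf{\bar b}}$), not from the union bound in $(2)\Rightarrow(1)$, which works for any $\kappa_1>0$.
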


\begin{rem}
Some of the implications in Theorem \ref{thm:qms_approx} involve unavoidable factors of $\log d$ for various Hilbert space dimensions $d$. This is in contrast to the proofs in Section \ref{sec:state_spec_rec}, where all bounds were universal, with no such factors. At first glance, this may appear somewhat problematic, since in Section \ref{sec:state_spec_rec} we emphasized that zero-bit codes and state-independent QEC codes are inequivalent because the corresponding errors can differ by a factor of $d_\mathrm{code}$. The difference, however, is that $d \gg \log d$. In holographic codes, we have $\log d = O(1/G)$, whereas the reconstruction errors $\varepsilon$ can be made nonpeturbatively small in $G$. Hence, polylogarithmic factors of Hilbert space dimensions will still leave the errors nonperturbatively small. Polynomial factors on the other hand can lead to large errors.
\end{rem}

\begin{rem}
Unlike in Theorem \ref{thm:qms_exact}, in Theorem \ref{thm:qms_approx} we cannot guarantee that bulk reconstruction is possible for any specific single state $\ket{\Psi}$. Instead, we can only show that, with high probability, bulk reconstruction will be possible for the state $\hat U_\mathbf{b} \hat U'_\mathbf{\bar{b}} \ket{\Psi}$, where $\hat{U}_\mathbf{b}, \hat{U}'_\mathbf{\bar{b}}$ are chosen at random according to the Haar measure. This is because the area $A_B(\mathbf{b})$ is almost unaffected by the action of $V$ on a single state $\ket{\Psi}$. As a result, once $\varepsilon_2$ is nonzero, we can make bulk reconstruction impossible for the specific state $\ket{\Psi}$, without affecting Condition 2.
\end{rem}

\section{Discussion}\label{sec:discussion}

\subsection{State-specific definition of the entanglement wedge}
In this paper, we have offered a new definition of the entanglement wedge as the region reconstructible in a particular \emph{state-specific} way. The entanglement wedge of boundary region $B$ is precisely the bulk region whose state can be changed to any other state with the same spatial entanglement structure by acting unitarily on $B$ -- if the state on the complementary bulk region can changed in the same way by acting on the complementary boundary region $\overline{B}$ (or $\overline{B}\,\overline{R}$ for the purification of a mixed state).

In Theorem \ref{thm:qms_exact}, we showed this definition was equivalent to the more traditional definition -- namely the bulk region $\mathbf{b}$ satisfying $S(B)_{V\ket{\Psi}} = A_B(\mathbf{b}) + S(\mathbf{b})_{\ket{\Psi}}$, where $V$ is the bulk-to-boundary map.
That theorem also shows that, as a consequence, the entanglement wedge minimizes the generalized entropy and is equal to both the so-called min- and max-EWs that were defined in \cite{Akers:2020pmf}.

Theorem \ref{thm:qms_exact} is robust to small corrections, as we showed in Theorem \ref{thm:qms_approx}.
This is important for two reasons.
Firstly, in holography bulk reconstruction is always only approximate, with inevitable errors of at least $e^{-\mathcal{O}(1/G)}$ \cite{Jafferis:2015del, Dong:2016eik, Dong:2017xht}.
Secondly, the known nontrivial examples of state-specific codes, such as random tensor networks, are generally approximate in nature. Indeed, for zero-bit codes, which are the special case of state-specific product unitary codes where there is only a single bulk subsystem, exact state-specific error correction always implies exact state-independent error correction. The difference between the two only appears when errors are introduced.

Finally, in Theorem \ref{thm:qms_approx}, we also allowed the bulk-to-boundary map $V$ to be non-isometric. While fairly heretical from a traditional quantum information perspective, this generalization appears to be crucial to understand the encoding of the black hole interior into the boundary Hilbert space; we discuss this more in Section \ref{sec:pagecurve}.

\subsection{Geometry from entanglement}
We have introduced a definition of ``area'' for surfaces bounding arbitrary bulk regions of arbitrary quantum codes. Moreover, we showed that it is the functionally unique definition of area that appears in the QMS prescription whenever such a prescription exists.
We see this as a significant step towards a general understanding of how the geometry of the bulk emerges from the entanglement structure of the dual CFT.

There is still some ambiguity here: one can redefine the area of surfaces that are never quantum minimal without affecting the QMS prescription, so long as we never make them smaller that the definition of area in Definition \ref{defn:area_ours}. Moreover, there is no reason to expect that Definition \ref{defn:area_ours} will give areas that are consistent with a smooth metric at scales that are large compared to the cut-off scale.

Indeed, it is easy to check explicitly that our definition of area does not agree with geometric area for all surfaces in quantum gravity. It seems possible that there exist preferred properties that naturally pick out a ``better'' definition of area than Definition \ref{defn:area_ours}, thereby resolving this ambiguity. We leave that question to future work.

Regardless, it is clear that the emergent geometry depends not only on the entanglement structure of an individual boundary state, but also on the \emph{quantum code} $V$ relating the bulk to the boundary.
From this perspective, geometry is still related to entanglement.
But it is about more than that -- it is about the specific relationship between the entanglement in the boundary and the encoding of the bulk into the boundary.
This basic idea is not new; it's the same perspective given by Harlow in \cite{Harlow:2016vwg}. In this work, we have generalized that lesson to a much larger set of surfaces and areas.

This lesson is important, so we emphasize it: if we are to understand how the bulk geometry emerges from the dual CFT state, we need to understand the code that embeds bulk operators into the boundary.

While this might seem to make the emergence of geometry even more daunting -- we have to understand not just boundary entanglement but also the bulk-to-boundary code -- it actually presents a somewhat surprising simplification.
 Naively, one might expect that understanding the emergence of the bulk geometry requires understanding Planck scale physics. Instead, many features of the geometry depend only on the encoding of low-energy, effective-field-theory operators.
This fact is one important reason to pursue an understanding of holographic codes in the quest to understand the emergence of spacetime.
More speculatively, one might wonder how this relationship between bulk fields and geometry could connect to Einstein's equations, which also constrain the geometry based on data about the bulk fields; for previous work along somewhat similar lines, see \cite{Jacobson:1995ab, Lashkari:2013koa, Faulkner:2013ica, Swingle:2014uza, Jacobson:2015hqa, Faulkner:2017tkh}.

\subsection{Algebras with centers} \label{sec:emergenonfact}
Curiously, as discussed in Appendix \ref{sec:counterexample_centers}, it seems hard to make a theorem like \ref{thm:qms_exact} if the local bulk algebras contain nontrivial centers.
No such theorem can be simultaneously consistent with both a) the definition of area from \cite{Harlow:2016vwg} for algebras with state-independent complementary reconstruction and b) Definition \ref{defn:area_ours} for the area of algebras with trivial center.

Specifically, in Appendix \ref{sec:counterexample_centers} we construct an example of a code with state-independent complementary reconstruction of algebras with nontrivial center, as in \cite{Harlow:2016vwg}. 
Then we show that the generalized entropy, defined using Definition \ref{defn:area_ours}, of the trivial algebra generated by only the identity is smaller than the generalized entropy of the entanglement wedge, defined as in \cite{Harlow:2016vwg}.
That is, the entanglement wedge is not quantum minimal in this valid algebraic code.

This situation is fairly unsatisfactory. Local algebras with nontrivial centers play an important role in bulk gauge theories, which often appear in AdS/CFT. Perhaps more importantly, in code spaces where the bulk geometry is not fixed -- and hence the area of a surface is a quantum operator rather than simply a number -- the area operator itself should lie in the center of the reconstructible algebra.

There are a couple of potential resolutions to this issue. The first is related to the argument from \cite{Harlow:2015lma} that bulk gauge fields in quantum gravity are always emergent at low energies, and hence that the microscopic algebra does not contain a nontrivial center. This suggests that maybe we should really always be working with an ``extended code space'' that does factorize, and hence for which Theorem \ref{thm:qms_exact} applies, even when bulk gauge fields exist.

The downside of this proposal is that all states must have the same area for any bulk surface; any difference in geometry between different states has to be reinterpreted as a difference in bulk entanglement entropy. There is nothing wrong with such a reinterpretation -- the equivalence of entanglement and area is the basic idea of ER=EPR \cite{Maldacena:2013xja} -- but it doesn't achieve the goal of describing an emergent bulk geometry with nontrivial area operators.

An alternative possibility is that Definition \ref{defn:area_ours} should be corrected for certain surfaces that are never quantum minimal, so that the counterexample found in Appendix \ref{sec:counterexample_centers} no longer exists. In other words, the existence of algebras with centers in the quantum code changes Definition \ref{defn:area_ours}, even when applied to algebras that themselves have trivial center.

One alluring option is to combine these two possibilities, using extended Hilbert space considerations to learn how to redefine the area of non-minimal surfaces. For instance, in the example in Appendix \ref{sec:counterexample_centers}, the problem is that the trivial region has an area much smaller than the boundary entropy, so we could not simultaneously satisfy a holographic entropy prescription and minimality. Now note, the reason $A_B(\varnothing)$ is so small is that Definition \ref{defn:area_ours} depends on the entropy of bulk states; increase the possible entropy of bulk states, and areas will in general increase as well.  If we calculated $A_B(\varnothing)$ using a larger, factorizing ``extended Hilbert space'' of bulk states, we would find a much larger area and thereby avoid the issue!

\subsection{Extremality}
A long term goal of this program is to see the emergence of not just spatial geometry, but also to understand dynamics in holography. 
A first step would be to understand the quantum \emph{extremal} surface formula.
Can our theorems be generalized to go beyond quantum minimal surfaces and incorporate extremality in the time direction?

An immediate observation is that, as in the case of algebras with centers, the definition of area given in Definition \ref{defn:area_ours} would need alteration. To see this, consider a Cauchy slice of a time-dependent spacetime in AdS/CFT that contains the minimal quantum extremal surface for some boundary region $B$, but which is not maximin (i.e. the Cauchy slice also contains a more minimal surface). Because the minimal QES divides the Cauchy slice into a part in the entanglement wedge of $B$ and a part in the entanglement wedge of $\overline{B}$, the fields on this Cauchy slice satisfy Condition 1 of Theorem \ref{thm:qms_exact}, and hence obey a QMS prescription using the area from Definition \ref{defn:area_ours}. However, by assumption, the minimal QES is \emph{not} quantum minimal on this Cauchy slice when we use the actual geometric area of surfaces.

To understand how this is consistent, we need to recall that Theorem \ref{thm:area_inequality}, showing that the area from Definition \ref{defn:area_ours} lower bounds any alternative definition of area $A'_B$, assumed explicitly that $A'_B$ itself obeyed a quantum minimal surface prescription. In other words, Theorem \ref{thm:area_inequality} only applies to surfaces that lie in a maximin slice (including e.g. any static slice). For surfaces in other slices, there is no guaranteed relationship between geometric area and Definition \ref{defn:area_ours}.

To obtain a full quantum maximin or QES prescription, one would presumably need to start with a set of bulk algebras associated to every spacetime region, with appropriate nesting properties, rather than simply a spatial tensor product structure. One issue here is that it is hard or impossible to construct finite-dimensional algebraic structures with exact relativistic lightcones.

\subsection{State-specific reconstruction and the Page curve} \label{sec:pagecurve}

One of the most exciting features of state-specific reconstruction is that, unlike state-independent QEC, it is compatible with codes where the linear map $V$ is not an isometry. As such it can provide a ``Hilbert space'' justification of the use of the QES prescription to derive the Page curve of an evaporating black hole \cite{Penington:2019npb,Almheiri:2019psf} using the semiclassical Hawking state. In particular, it counters the objection of \cite{Bousso:2020kmy} that the ``wrong'' state is somehow being used to calculate the real state's entropy. While the physical state of the radiation is indeed not the Hawking state, it is a \emph{state-specific encoding} $V \ket{\Psi}$ of the Hawking state $\ket{\Psi}$, via a non-isometric map $V$. Hence we can compute entropies from the Hawking state using the QES prescription. A detailed discussion of these issues will appear in upcoming work \cite{AEHPVInProgress}.

\subsection{From Condition 3 to Condition 1?}
An unfortunate feature of Theorem \ref{thm:qms_exact} is that it provides generally no \emph{purely bulk} method to know that Conditions (1) and (2) are satisfied for some region $\mathbf{b}$, even if you have been told the bulk geometry in advance.
One can find the minimal generalized entropy bulk region, which will always be the region that satisfies Conditions 1 and 2 if any does, but you cannot know whether the code $V$ actually satisfies those conditions for the state $\ket{\Psi}$. 

In contrast, in holography and in special classes of codes such as random tensor networks, Condition 3 appears to be both a necessary and sufficient condition for Conditions 1 and 2 to hold for the state $\ket{\Psi}$ \cite{Akers:2020pmf}. Just by looking at bulk conditional min-entropies and areas, one can determine whether there is a well-defined entanglement wedge.

We shouldn't be surprised that the same is not true for general quantum codes; in general Condition 3 simply doesn't know enough about $V$ to be able to prove Conditions 1 and 2. However one might hope to find a simple additional constraint on the code $V$ that makes Condition 3 equivalent to Conditions 1 and 2. We leave the task of finding such a constraint to future work.

\subsection*{Acknowledgements}
It's a pleasure to thank Elba Alonso-Monsalve, Raphael Bousso, Netta Engelhardt, Patrick Hayden, Daniel Harlow, Tom Faulkner, \r{A}smund Folkstad, Adam Levine, Pratik Rath, and Arvin Shahbazi-Moghaddam for discussions.
CA is supported by the Simons foundation as a member of the It from Qubit collaboration and the Air Force Office of Scientific Research under the
award number FA9550-19-1-0360.
GP is supported by the UC Berkeley physics department, the Simons Foundation through the "It from Qubit" program, the Department of Energy via the GeoFlow consortium (QuantISED Award DE-SC0019380) and also acknowledges support from a J. Robert Oppenheimer Visiting Professorship at the Institute for Advanced Study.

\appendix 

\section{Proofs of auxiliary theorems}\label{app:QEC_proofs}
Here we collect the proofs of the minor theorems that were not included in the main text. 
For the reader's convenience, we begin with proofs of some well-known theorems that will be used in both the proofs later in this appendix and in the proof of Theorem \ref{thm:qms_approx}.

\subsection{Preliminaries}

\begin{lem}[H\"{o}lder's inequality for the trace and operator norms]
For any operators $A, B$ we have
\begin{align}
    \tr[AB] \leq \lVert A \rVert_1 \lVert B \rVert_\infty
\end{align}
\end{lem}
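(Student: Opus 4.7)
The plan is to reduce the inequality to a pointwise bound on matrix elements by diagonalizing $A$ via its singular value decomposition. Writing $A = \sum_i \sigma_i \ket{u_i}\bra{v_i}$, where the $\sigma_i \geq 0$ are the singular values of $A$ and $\{\ket{u_i}\}$, $\{\ket{v_i}\}$ are orthonormal, we have $\lVert A \rVert_1 = \sum_i \sigma_i$ by definition. (Since we are in finite dimensions the sum is finite and all traces are well defined.)

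Next, I would substitute this decomposition into the trace, yielding
\begin{equation}
    \tr[AB] = \sum_i \sigma_i \bra{v_i} B \ket{u_i}.
\end{equation}
Each matrix element obeys $|\bra{v_i} B \ket{u_i}| \leq \lVert B \ket{u_i}\rVert \leq \lVert B \rVert_\infty$ by the Cauchy--Schwarz inequality combined with the variational definition $\lVert B \rVert_\infty = \sup_{\lVert \psi\rVert = 1}\lVert B \ket{\psi}\rVert$. Applying the triangle inequality and pulling out the (state-independent) constant $\lVert B\rVert_\infty$ then gives
\begin{equation}
    |\tr[AB]| \leq \sum_i \sigma_i \, \lVert B\rVert_\infty = \lVert A \rVert_1 \lVert B \rVert_\infty,
\end{equation}
which is actually slightly stronger than the stated inequality and implies it immediately.

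There is no real obstacle here; the only thing to be careful about is the direction of the inequality between $\bra{v_i}B\ket{u_i}$ and $\lVert B\rVert_\infty$, which requires using Cauchy--Schwarz rather than a naive bound on the entries of $B$ in some fixed basis. Everything else is a mechanical consequence of the SVD and the definitions of the two norms.
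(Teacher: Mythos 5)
Your proof is correct and follows essentially the same route as the paper's: insert the singular value decomposition of $A$ into the trace, bound each matrix element $\bra{v_i}B\ket{u_i}$ by $\lVert B\rVert_\infty$, and sum the singular values to get $\lVert A\rVert_1$. The only difference is that you spell out the Cauchy--Schwarz step and note the slightly stronger bound with $|\tr[AB]|$, which the paper leaves implicit.
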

\begin{proof}
Let $A = \sum_i \lambda_i \ket{\psi_i} \bra{\phi_i}$ be a singular value decomposition. Then
\begin{align}
    \tr[A B] = \sum_i \lambda_i \braket{\phi_i| B |\psi_i} \leq \sum_i \lambda_i \lVert B \rVert_\infty = \lVert A \rVert_1 \lVert B \rVert_\infty~.
\end{align}
\end{proof}

\begin{lem}[Triangle inequality]
For any operators $A, B$ we have 
\begin{align}
    \lVert A + B \rVert_1 \leq \lVert A \rVert_1 + \lVert B \rVert_1~.
\end{align}
\end{lem}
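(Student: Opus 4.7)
The plan is to derive the triangle inequality from a variational (dual) characterization of the trace norm that follows almost immediately from the H\"{o}lder inequality just established. Specifically, I would first prove the identity
\[
\lVert X \rVert_1 = \sup_{\lVert Y \rVert_\infty \leq 1} |\tr(XY)|.
\]
The direction $\lVert X \rVert_1 \geq \sup_Y |\tr(XY)|$ is exactly H\"{o}lder's inequality. For the reverse direction, I would use the polar decomposition $X = V|X|$, where $V$ is a partial isometry extendable to a unitary on the larger of the two Hilbert spaces. Taking $Y = V^\dagger$ gives $\lVert Y \rVert_\infty \leq 1$ and
\[
\tr(XY) = \tr(V^\dagger V |X|) = \tr(|X|) = \lVert X \rVert_1,
\]
where in the last step the projector $V^\dagger V$ acts as the identity on the support of $|X|$.

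Given the dual characterization, the triangle inequality is immediate from linearity of the trace. Fix any $Y$ with $\lVert Y \rVert_\infty \leq 1$; by absorbing a phase $e^{i\theta}$ into $Y$ (which does not change $\lVert Y \rVert_\infty$) we may assume $\tr((A+B)Y)$ is real and nonnegative. Then
\[
|\tr((A+B)Y)| = \tr(AY) + \tr(BY) \leq |\tr(AY)| + |\tr(BY)| \leq \lVert A \rVert_1 + \lVert B \rVert_1,
\]
using H\"{o}lder's inequality in the final step. Taking the supremum over $Y$ gives $\lVert A+B\rVert_1 \leq \lVert A \rVert_1 + \lVert B \rVert_1$.

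There is no real obstacle here; the only technical point is handling the absolute value inside the supremum, which is dispatched by the phase-rotation trick above. An alternative route, should one wish to avoid the variational formulation, is to work directly from the polar decomposition $A+B = W|A+B|$ and write $\lVert A+B\rVert_1 = \tr(W^\dagger A) + \tr(W^\dagger B)$, then bound each term by H\"{o}lder; but this is essentially the same proof, and the dual form is more reusable elsewhere in the appendix.
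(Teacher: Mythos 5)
Your proof is correct and is essentially the same argument as the paper's: the paper takes the singular value decomposition $A+B = UDV^\dagger$, writes $\lVert A+B\rVert_1 = \tr[(A+B)VU^\dagger]$, and bounds the two resulting traces by H\"{o}lder, which is exactly the ``alternative route'' you mention at the end. Your primary presentation merely repackages these same ingredients (saturating partial isometry from the polar/singular value decomposition plus H\"{o}lder) as the dual characterization $\lVert X\rVert_1 = \sup_{\lVert Y\rVert_\infty \le 1}|\tr(XY)|$, so there is no substantive difference.
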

\begin{proof}
Let $A + B = U D V^\dagger$ be a singular value decomposition. Then 
\begin{align}
    \lVert A + B \rVert_1 = \tr[(A+B) V U^\dagger] \leq \lVert A \rVert_1 \lVert V U^\dagger \rVert_\infty +\lVert B \rVert_1 \lVert V U^\dagger \rVert_\infty \leq \lVert A \rVert_1 + \lVert B \rVert_1~,
\end{align}
where the first inequality uses H\"{o}lder's inequality.
\end{proof}

\begin{lem}[Monotonicity]
Let $X: \mathcal{H}_A \otimes \mathcal{H}_B \to \mathcal{H}_A \otimes \mathcal{H}_B$ be an arbitrary operator. Then
\begin{align}
    \lVert \tr_B X \rVert_1 \leq \lVert X \rVert_1~.
\end{align}
\end{lem}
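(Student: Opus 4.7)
The plan is to reduce monotonicity to the H\"{o}lder inequality already proven above, by exploiting the variational characterization of the trace norm together with the simple fact that padding an operator on $\mathcal{H}_A$ with the identity on $\mathcal{H}_B$ leaves its operator norm unchanged.

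Concretely, let $Y := \tr_B X$, and take a singular value decomposition $Y = U D V^\dagger$ on $\mathcal{H}_A$, so that
\begin{equation}
    \lVert Y \rVert_1 = \tr_A\!\left[ Y \, V U^\dagger \right],
\end{equation}
with $V U^\dagger$ a unitary on $\mathcal{H}_A$ satisfying $\lVert V U^\dagger \rVert_\infty = 1$. The next step is the key manipulation: use cyclicity of the trace and the definition of the partial trace to lift this back to a trace on $\mathcal{H}_A \otimes \mathcal{H}_B$,
\begin{equation}
    \tr_A\!\left[ (\tr_B X)\, V U^\dagger \right] = \tr_{AB}\!\left[ X\, \left(V U^\dagger \otimes \mathbb{1}_B\right)\right].
\end{equation}
Since $\lVert V U^\dagger \otimes \mathbb{1}_B \rVert_\infty = \lVert V U^\dagger \rVert_\infty = 1$, applying H\"{o}lder's inequality (with operator norm on the right and trace norm on the left) then yields the desired bound $\lVert \tr_B X \rVert_1 \le \lVert X \rVert_1$.

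There is essentially no obstacle here; the only subtlety worth flagging is that one should take the absolute value (or more carefully, use that $\tr_A[Y V U^\dagger]$ is manifestly real and non-negative by construction of the SVD) so that the bound from H\"{o}lder's inequality applies directly without worrying about a phase. Everything else is bookkeeping.
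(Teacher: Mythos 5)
Your proof is correct and is essentially the paper's argument made explicit: the paper also writes $\lVert \tr_B X\rVert_1$ as $\tr[X\,(U_A\otimes \mathbb{1}_B)]$ for the optimal unitary $U_A$ obtained from the singular value decomposition of $\tr_B X$, and then bounds it by $\lVert X\rVert_1$ via H\"{o}lder. Your lifting step $\tr_A[(\tr_B X)\,VU^\dagger]=\tr_{AB}[X\,(VU^\dagger\otimes\mathbb{1}_B)]$ is exactly the implicit step in the paper's one-line proof, so there is nothing to add.
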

\begin{proof}
By H\"{o}lder's inequality,
\begin{align}
    \lVert \tr_B X \rVert_1 = \max_{U_A} \tr [X U_A] \leq \max_{U_{AB}} \tr [X U_{AB}] = \lVert X \rVert_1~.
\end{align}
Here we used the singular value decomposition to see that a unitary saturating H\"{o}lder's inequality exists.
\end{proof}

\begin{lem}[Lemma 3.21 of \cite{watrous_2018}]
    \label{lem:CP_fidelity}
    Let $X,Y: \mathcal{H}_A \to \mathcal{H}_B$ be arbitrary operators.
    It holds that
    \begin{equation}
        F(XX^\dagger, YY^\dagger) = \lVert X^\dagger Y \rVert_1. 
    \end{equation}
\end{lem}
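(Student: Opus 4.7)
My plan is to combine Uhlmann's theorem for the fidelity with the variational characterization of the trace norm, $\lVert A \rVert_1 = \max_{U \text{ unitary}} |\tr(A U)|$ (an immediate corollary of the singular value decomposition). These two facts dovetail perfectly: Uhlmann expresses $F(XX^\dagger, YY^\dagger)$ as a maximum inner product between purifications, and with a good choice of purifications that inner product takes exactly the form $\tr(X^\dagger Y U^T)$, at which point the trace-norm duality finishes the job.

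To set this up, I would introduce canonical purifications via the vectorization map. For any $Z : \mathcal{H}_A \to \mathcal{H}_B$, define $\ket{\widetilde{Z}} := (Z \otimes \mathbb{1}_A) \sum_i \ket{i}_A \ket{i}_A \in \mathcal{H}_B \otimes \mathcal{H}_A$. A quick partial-trace computation confirms that $\ket{\widetilde{X}}$ purifies $XX^\dagger$ and $\ket{\widetilde{Y}}$ purifies $YY^\dagger$. Because $\mathrm{rank}(XX^\dagger),\,\mathrm{rank}(YY^\dagger) \leq d_A$, the purifying system $\mathcal{H}_A$ is already large enough that Uhlmann's theorem applies to the pair $(\ket{\widetilde{X}},\ket{\widetilde{Y}})$ without any enlargement, giving
\begin{equation}
F(XX^\dagger, YY^\dagger) = \max_{U \in \mathbf{U}(d_A)} \left| \bra{\widetilde{X}} (\mathbb{1}_B \otimes U) \ket{\widetilde{Y}} \right|.
\end{equation}

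Two brief identities then finish the argument: $(\mathbb{1}_B \otimes U)\ket{\widetilde{Y}} = \ket{\widetilde{Y U^T}}$ and $\braket{\widetilde{X}|\widetilde{Z}} = \tr(X^\dagger Z)$, both of which follow from a direct expansion in the standard basis. Substituting these and using that $U \mapsto U^T$ is a bijection on the unitary group reduces the maximum to $\max_{V \text{ unitary}} |\tr(X^\dagger Y V)|$, which equals $\lVert X^\dagger Y \rVert_1$ by the trace-norm duality. I do not anticipate any serious obstacle here --- the only mild subtlety is checking that the purifying system has sufficient dimension for Uhlmann to apply, and this is automatic for the vectorization purifications. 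An alternative but fiddlier route would work directly with polar decompositions $X = V_X |X|$ and $Y = V_Y |Y|$, using $\sqrt{XX^\dagger} = V_X |X| V_X^\dagger$ and carefully tracking how the partial isometries interact; the Uhlmann argument is cleaner and avoids the resulting case analysis about ranges.
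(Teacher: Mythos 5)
Your argument is mathematically sound as a standalone proof: the vectorization purifications, the identity $(\mathbb{1}_B \otimes U)\ket{\widetilde{Y}} = \ket{\widetilde{Y U^T}}$, the overlap formula $\braket{\widetilde{X}|\widetilde{Z}} = \tr(X^\dagger Z)$, and the trace-norm duality all check out, and the rank condition for Uhlmann is indeed automatic. However, it takes a genuinely different route from the paper, and in the context of this paper it would be circular: the lemma appears in the appendix \emph{before} Uhlmann's theorem, and the paper's proof of Uhlmann's theorem explicitly invokes Lemma \ref{lem:CP_fidelity} (via exactly the vectorization trick you use, writing $\ket{\psi} = \mathrm{vec}(X)$ and concluding $F(\rho,\sigma) = \lVert X^\dagger Y\rVert_1 = \max_{U}\tr[X^\dagger Y U]$). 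So you cannot assume Uhlmann here without importing an independent proof of it.

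The paper instead proves the lemma directly from its working definition $F(\rho,\sigma) = \lVert \sqrt{\rho}\sqrt{\sigma}\rVert_1$: writing singular value decompositions $X = U_1 D_1 V_1^\dagger$, $Y = U_2 D_2 V_2^\dagger$, one has $\sqrt{XX^\dagger} = U_1 D_1 U_1^\dagger$ and $\sqrt{YY^\dagger} = U_2 D_2 U_2^\dagger$, and unitary invariance of the trace norm gives $\lVert U_1 D_1 U_1^\dagger U_2 D_2 U_2^\dagger\rVert_1 = \lVert V_1 D_1 U_1^\dagger U_2 D_2 V_2^\dagger\rVert_1 = \lVert X^\dagger Y\rVert_1$. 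This is the "fiddlier" polar/SVD route you set aside, but it is actually a two-line computation with no case analysis, it is more elementary (no purifications or variational principles needed), and it is what lets the paper derive Uhlmann's theorem as a corollary rather than a prerequisite. If you want to keep your Uhlmann-based argument, you would need to supply a proof of Uhlmann's theorem that does not pass through this lemma; otherwise the SVD computation is the right proof to give here.
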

\begin{proof}
Let $X = U_1 D_1 V_1^\dagger$ and $Y = U_2 D_2 V_2^\dagger$ be singular value decompositions. We have
\begin{align}
    F(X X^\dagger, Y Y^\dagger) = \left\lVert \sqrt{X X^\dagger} \sqrt{Y Y^\dagger} \right\rVert_1 = \left\lVert U_1 D_1 U_1^\dagger U_2 D_2 U_2^\dagger \right\rVert_1 = \left\lVert V_1 D_1 U_1^\dagger U_2 D_2 V_2^\dagger \right\rVert_1 = \lVert X^\dagger Y \rVert_1~. 
\end{align}
\end{proof}

\begin{lem}[Uhlmann's theorem]
    Consider Hilbert spaces $\mathcal{H}_A$ and $\mathcal{H}_B$. 
    Let $\rho,\sigma$ be positive semidefinite operators on $\mathcal{H}_A$ with rank at most $\dim \mathcal{H}_B$, and let $\ket{\psi} \in \mathcal{H}_A \otimes \mathcal{H}_B$ satisfy $\tr_B[\ket{\psi}\bra{\psi}] = \rho$.
    It holds that
    \begin{equation}
        F(\rho,\sigma) = \max\{ \left|\braket{\psi | \phi} \right| : \ket{\phi} \in \mathcal{H}_A \otimes \mathcal{H}_B, \, \tr_B[\ket{\phi}\bra{\phi}] = \sigma \}~.
    \end{equation}
\end{lem}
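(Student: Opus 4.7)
My plan is to rephrase the problem in operator language, apply H\"older's inequality to get the upper bound, and then construct an explicit optimizer via a singular value decomposition. The technical input beyond bare linear algebra will be Lemma \ref{lem:CP_fidelity}.

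To set things up, I would fix orthonormal bases $\{\ket{i}_A\}$, $\{\ket{j}_B\}$ and identify any vector $\ket{\psi}=\sum_{ij}M_{ij}\ket{i}_A\ket{j}_B$ with an operator $M:\mathcal{H}_B\to\mathcal{H}_A$. A direct calculation gives $\tr_B\ket{\psi}\bra{\psi}=MM^\dagger$, so purifications of $\rho$ correspond to operators $M$ satisfying $MM^\dagger=\rho$, and analogously purifications $\ket{\phi}\in\mathcal{H}_A\otimes\mathcal{H}_B$ of $\sigma$ correspond to operators $N$ with $NN^\dagger=\sigma$ (existence of such $N$ being ensured by the rank hypothesis $\operatorname{rank}(\sigma)\le\dim\mathcal{H}_B$). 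The overlap then becomes $\braket{\psi|\phi}=\tr(M^\dagger N)$.

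For the upper bound, I would use the polar decomposition to write any such $N$ as $N=\sqrt{\sigma}\,U$ with $U:\mathcal{H}_B\to\mathcal{H}_A$ a partial isometry, and apply H\"older's inequality for the trace and operator norms:
\[
|\braket{\psi|\phi}|=|\tr(M^\dagger\sqrt{\sigma}\,U)|\le \lVert M^\dagger\sqrt{\sigma}\rVert_1\,\lVert U\rVert_\infty\le \lVert M^\dagger\sqrt{\sigma}\rVert_1.
\]
Applying Lemma \ref{lem:CP_fidelity} with $X=M$ and $Y=\sqrt{\sigma}$ (so $XX^\dagger=\rho$ and $YY^\dagger=\sigma$) then identifies the right-hand side with $F(\rho,\sigma)$, proving $|\braket{\psi|\phi}|\le F(\rho,\sigma)$ for every purification $\ket{\phi}$ of $\sigma$.

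To show the bound is attained, let $M^\dagger\sqrt{\sigma}=\mathcal{V}\Lambda\mathcal{W}^\dagger$ be a singular value decomposition, with $\mathcal{V}:\mathbb{C}^r\to\mathcal{H}_B$ and $\mathcal{W}:\mathbb{C}^r\to\mathcal{H}_A$ isometries and $\Lambda>0$ diagonal on $\mathbb{C}^r$. The natural candidate $U=\mathcal{W}\mathcal{V}^\dagger$ gives $\tr(M^\dagger\sqrt{\sigma}\,U)=\tr\Lambda=\lVert M^\dagger\sqrt{\sigma}\rVert_1$. The main obstacle — the subtle step I expect to require the most care — is checking that the resulting $N=\sqrt{\sigma}\,U$ genuinely purifies $\sigma$, i.e.\ that $NN^\dagger=\sqrt{\sigma}\,UU^\dagger\sqrt{\sigma}=\sigma$, which demands $\operatorname{range}(U)\supseteq\operatorname{supp}(\sigma)$. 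In general $r=\operatorname{rank}(M^\dagger\sqrt{\sigma})$ can be strictly less than $\operatorname{rank}(\sigma)$, so $U$ must be extended on $\mathcal{H}_B\ominus\operatorname{range}(\mathcal{V})$ to a partial isometry whose range is enlarged to contain all of $\operatorname{supp}(\sigma)$. Such an extension exists precisely because $\operatorname{rank}(\sigma)-r\le d_B-r$ (using the hypothesis $\operatorname{rank}(\sigma)\le d_B$), and it does not alter the value of $\tr(M^\dagger\sqrt{\sigma}\,U)$ because the added components of $U$ act on vectors annihilated by $\mathcal{V}^\dagger$. This exhibits a purification $\ket{\phi}$ of $\sigma$ with $|\braket{\psi|\phi}|=F(\rho,\sigma)$, completing the proof.
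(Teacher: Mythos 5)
Your proposal is correct and follows essentially the same route as the paper: identify states with operators via the vectorization correspondence and reduce everything to Lemma \ref{lem:CP_fidelity}, i.e.\ $F(\rho,\sigma)=\lVert M^\dagger\sqrt{\sigma}\rVert_1$, with H\"older giving the upper bound. The only difference is cosmetic: where the paper attains the bound by maximizing $\tr[X^\dagger Y U]$ over unitaries on $\mathcal{H}_B$ and invoking unitary equivalence of purifications, you construct the optimal purification explicitly by extending the partial isometry from the SVD, which is a more careful (and correct) treatment of the same saturation step and is where the hypothesis $\operatorname{rank}(\sigma)\le\dim\mathcal{H}_B$ enters.
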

\begin{proof}
   See e.g. Theorem 3.22 of \cite{watrous_2018}.
   Let $X: \mathcal{H}_B \to \mathcal{H}_A$ be a linear operator satisfying $\ket{\psi} = \mathrm{vec}(X)$, where 
   $$
   \mathrm{vec}\left(\sum_{ij} c_{ij} \ket{i}_A \bra{j}_B\right) := \sum_{ij} c_{ij} \ket{i}_A \ket{j}_B~.
   $$
   Similarly, let $Y: \mathcal{H}_B \to \mathcal{H}_A$ be a linear operator satisfying $\ket{\phi} = \mathrm{vec}(Y)$ for some $\ket{\phi}$ satisfying $\tr_B[\ket{\phi}\bra{\phi}] = \sigma$.
   It follows that by Lemma \ref{lem:CP_fidelity} that
   \begin{align}
       F(\rho,\sigma) = F(X X^\dagger, Y Y^\dagger) = \lVert X^\dagger Y \rVert_1 = \max_{U_B} \tr[X^\dagger Y U] = \max_{U_B} \braket{\psi|U_B|\phi}~.
   \end{align}
   The result then follows from the equivalence of purification up to a unitary on the purifying system.
\end{proof}

\begin{lem}[Lemma 3.34 of \cite{watrous_2018}]
    \label{lem:2norm_bound}
    Let $P_0$ and $P_1$ be positive semidefinite operators on Hilbert space $\mathcal{H}$. It holds that
    \begin{equation}
        \lVert P_0 - P_1 \rVert_1 \ge \lVert \sqrt{P_0} - \sqrt{P_1} \rVert_2^2~.
    \end{equation}
\end{lem}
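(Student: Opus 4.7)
The plan is to use the classical Powers--Størmer trick. The key idea is the algebraic identity
\begin{equation}
P_0 - P_1 \;=\; \tfrac{1}{2}\bigl(XY + YX\bigr), \qquad X := \sqrt{P_0} - \sqrt{P_1}, \quad Y := \sqrt{P_0} + \sqrt{P_1},
\end{equation}
which one verifies by direct expansion (the cross terms $\sqrt{P_0}\sqrt{P_1}$ and $\sqrt{P_1}\sqrt{P_0}$ cancel). This identity is the bridge between the two norms in the statement: $P_0 - P_1$ appears inside $\|\cdot\|_1$ on the left, while $X$ is precisely what appears inside $\|\cdot\|_2^2$ on the right. Note that $X$ is Hermitian and $Y$ is positive semidefinite, which is what will make the rest of the argument go through.

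Next I would apply trace-norm/operator-norm duality, $\|M\|_1 = \sup_{\|U\|_\infty \le 1}\,|\tr(UM)|$, with a cleverly chosen test operator. Letting $S$ be the sign operator of $X$ (defined to be $0$ on $\ker X$), one has $\|S\|_\infty \le 1$ and $SX = XS = |X|$. Hölder's inequality then gives $\|P_0 - P_1\|_1 \ge \tr\!\bigl[S(P_0 - P_1)\bigr]$, and substituting the identity plus using cyclicity of the trace collapses this to
\begin{equation}
\|P_0 - P_1\|_1 \;\ge\; \tfrac{1}{2}\bigl(\tr[SXY] + \tr[XSY]\bigr) \;=\; \tr\!\bigl[\,|X|\, Y\,\bigr].
\end{equation}

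The final step would be to show $\tr[|X|\,Y] \ge \tr[X^2] = \|X\|_2^2$. Writing $|X| = X_+ + X_-$ and $X = X_+ - X_-$ and expanding $Y = \sqrt{P_0} + \sqrt{P_1}$, the difference $\tr[|X|Y] - \tr[X^2]$ rearranges into
\begin{equation}
\tr\!\bigl[(|X|-X)\sqrt{P_0}\bigr] + \tr\!\bigl[(|X|+X)\sqrt{P_1}\bigr] \;=\; 2\,\tr\!\bigl[X_{-}\sqrt{P_0}\bigr] + 2\,\tr\!\bigl[X_{+}\sqrt{P_1}\bigr],
\end{equation}
which is manifestly nonnegative because each term is a trace of a product of two positive semidefinite operators.

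The only real obstacle is spotting the Powers--Størmer identity in the first place; once it is in hand, the rest is just bookkeeping with the sign operator and the positivity of traces of positive products. No machinery from Sections~\ref{sec:area}--\ref{sec:nonisometries} is needed --- this is purely a piece of matrix analysis that stands on its own.
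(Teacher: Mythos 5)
Your proof is correct and follows essentially the same route as the paper's: the Powers--Størmer identity $P_0 - P_1 = \tfrac{1}{2}(XY+YX)$, Hölder's inequality with the sign operator of $X$ (the paper's $\Pi_0-\Pi_1$), and a final positivity argument in which your $X_\pm$ are exactly the paper's $Q_0,Q_1$. The differences are purely notational.
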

\begin{proof}
    Let $\Pi_0, \Pi_1 = \mathbb{1} - \Pi_0$ be projectors onto the positive and negative eigenspaces of $\sqrt{P_0} - \sqrt{P_1}$ and let $Q_j = (-1)^j\, \Pi_j (\sqrt{P_0} - \sqrt{P_1}) \Pi_j$.
    By H\"{o}lder's inequality,
    \begin{equation}
         \tr\left[ (\Pi_0 - \Pi_1) (P_0 - P_1) \right] \leq \lVert P_0 - P_1 \rVert_1 \lVert \Pi_0 - \Pi_1\rVert_\infty = \lVert P_0 - P_1 \rVert_1~.
    \end{equation}
    Now
    \begin{equation}
    \begin{split}
        \tr\left[ (\Pi_0 - \Pi_1) (P_0 - P_1) \right] =&\frac{1}{2}\tr\left[ (\Pi_0 - \Pi_1)(\sqrt{P_0} - \sqrt{P_1})(\sqrt{P_0} + \sqrt{P_1}) \right] 
        \\&+ \frac{1}{2}\tr\left[ (\Pi_0 - \Pi_1)(\sqrt{P_0} + \sqrt{P_1})(\sqrt{P_0} - \sqrt{P_1}) \right]
        \\=& \tr\left[ (Q_0 + Q_1)(\sqrt{P_0} + \sqrt{P_1}) \right]~.
    \end{split}
    \end{equation}
    Finally, because $Q_0, Q_1, \sqrt{P_0}, \sqrt{P_1}$ are all positive semidefinite, we have
    \begin{equation}
    \begin{split}
        \tr\left[ (Q_0 + Q_1)(\sqrt{P_0} + \sqrt{P_1}) \right] &\ge \tr\left[ (Q_0 - Q_1)(\sqrt{P_0} - \sqrt{P_1}) \right] \\
        &= \lVert \sqrt{P_0} - \sqrt{P_1} \rVert_2^2~.
    \end{split}
    \end{equation}
\end{proof}

\begin{lem}[Fuchs-van de Graaf inequalities]
    Let $\rho,\sigma$ be density matrices on Hilbert space $\mathcal{H}_A$. It holds that
    \begin{equation}
        1 - \frac{1}{2}\lVert \rho - \sigma \rVert_1 \le F(\rho,\sigma) \le \sqrt{1 - \frac{1}{4}\lVert \rho - \sigma \rVert_1^2} 
    \end{equation}
    or equivalently
    \begin{equation}\label{eq:fuchs_vdg}
        2 - 2 F(\rho,\sigma) \le \lVert \rho - \sigma \rVert_1 \le 2 \sqrt{1 - F(\rho,\sigma)^2}~.
    \end{equation}
\end{lem}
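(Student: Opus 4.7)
The plan is to prove the two inequalities separately, each as a short application of tools already developed in the appendix.

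For the upper bound, my approach is to lift to pure states via Uhlmann's theorem. I would introduce an auxiliary Hilbert space $\mathcal{H}_B$ with $\dim \mathcal{H}_B \geq \dim \mathcal{H}_A$ and, by Uhlmann, pick purifications $\ket{\psi}, \ket{\phi} \in \mathcal{H}_A \otimes \mathcal{H}_B$ of $\rho$ and $\sigma$ respectively that saturate $|\braket{\psi|\phi}| = F(\rho,\sigma)$. For any two normalized pure states, a direct calculation of the eigenvalues of $\ket{\psi}\bra{\psi} - \ket{\phi}\bra{\phi}$ (which has rank at most $2$) gives $\lVert\ket{\psi}\bra{\psi} - \ket{\phi}\bra{\phi}\rVert_1 = 2 \sqrt{1 - |\braket{\psi|\phi}|^2}$. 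Combining this with monotonicity of the trace norm under partial trace yields
\begin{equation}
    \lVert \rho - \sigma \rVert_1 \leq \lVert \ket{\psi}\bra{\psi} - \ket{\phi}\bra{\phi}\rVert_1 = 2\sqrt{1 - F(\rho,\sigma)^2},
\end{equation}
which rearranges directly to the claimed upper bound.

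For the lower bound, I would invoke Lemma \ref{lem:2norm_bound}, which gives
\begin{equation}
    \lVert \rho - \sigma\rVert_1 \geq \lVert \sqrt{\rho} - \sqrt{\sigma}\rVert_2^2 = \tr[\rho] + \tr[\sigma] - 2\,\tr[\sqrt{\rho}\sqrt{\sigma}] = 2 - 2\,\tr[\sqrt{\rho}\sqrt{\sigma}].
\end{equation}
The remaining step is to observe that $\tr[\sqrt{\rho}\sqrt{\sigma}] \leq \lVert \sqrt{\rho}\sqrt{\sigma} \rVert_1 = F(\rho,\sigma)$, which follows from H\"{o}lder's inequality with the identity operator (or directly from the fact that the trace of any operator is bounded by its trace norm). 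This gives $\lVert \rho - \sigma\rVert_1 \geq 2 - 2 F(\rho, \sigma)$, as desired.

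Neither direction presents a genuine obstacle given the lemmas already proved in this appendix; the only ``subtle'' point is remembering to distinguish $\tr[\sqrt{\rho}\sqrt{\sigma}]$ from $\lVert\sqrt{\rho}\sqrt{\sigma}\rVert_1$ in the lower bound argument, and to use the correct direction of that inequality so that the bound from Lemma \ref{lem:2norm_bound} is only weakened, not strengthened, when replacing the former by the latter. Finally, the equivalence between the two displayed forms of the inequalities is purely algebraic: the first form comes from the second by rearrangement and noting that $0 \leq F(\rho,\sigma) \leq 1$ so that squaring is monotonic on the relevant range.
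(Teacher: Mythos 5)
Your proposal is correct and follows essentially the same route as the paper: Lemma \ref{lem:2norm_bound} for the lower bound and Uhlmann's theorem plus the pure-state trace-distance formula and monotonicity under partial trace for the upper bound. In fact you are slightly more careful than the paper's own write-up, which glosses over the distinction between $\tr[\sqrt{\rho}\sqrt{\sigma}]$ and $F(\rho,\sigma)=\lVert\sqrt{\rho}\sqrt{\sigma}\rVert_1$; as you note, the inequality $\tr[\sqrt{\rho}\sqrt{\sigma}]\le F(\rho,\sigma)$ goes in the right direction, so the bound survives.
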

\begin{proof}
    See e.g. Theorem 3.33 of \cite{watrous_2018}.
    We will prove the two inequalities in \eqref{eq:fuchs_vdg}, starting with the first.
    Using Lemma \ref{lem:2norm_bound}, one obtains
    \begin{equation}
        \lVert \rho - \sigma \rVert_1 \ge \lVert \sqrt{\rho} - \sqrt{\sigma} \rVert_2^2 = \tr\left[ (\sqrt{\rho} - \sqrt{\sigma})^2 \right] = 2 - 2\tr\left[ \sqrt{\rho}\sqrt{\sigma} \right] = 2 - 2F(\rho,\sigma)~.
    \end{equation}
    Now for the second inequality in \eqref{eq:fuchs_vdg}.
    Let $\mathcal{H}_B$ be a Hilbert space with $\dim \mathcal{H}_B = \dim \mathcal{H}_A$.
    It follows by Uhlmann's theorem that there exist states $\ket{\psi}, \ket{\phi} \in \mathcal{H}_A \otimes \mathcal{H}_B$ satisfying
    \begin{equation}
        \tr_B[\ket{\psi}\bra{\psi}] = \rho~,~~~~ \tr_B[\ket{\phi}\bra{\phi}] = \sigma~,~~~~ \left| \braket{\psi|\phi} \right| = F(\rho,\sigma)~.
    \end{equation}
    Note it holds that
    \begin{equation}
        \left\lVert \ket{\psi}\bra{\psi} - \ket{\phi}\bra{\phi} \right\rVert_1 = 2 \sqrt{1 - \left| \braket{\psi|\phi}\right|^2} = 2 \sqrt{1 - F(\rho,\sigma)^2}~,
    \end{equation}
    which can be proven by explicitly diagonalizing the rank-2 matrix $\ket{\psi}\bra{\psi} - \ket{\phi}\bra{\phi}$. By monotonicity, we have
    \begin{equation}
        \lVert \rho - \sigma \rVert_1 \le \left\lVert \ket{\psi}\bra{\psi} - \ket{\phi}\bra{\phi} \right\rVert_1~,
    \end{equation}
    completing the proof.
\end{proof}

\begin{lem} \label{lem:eta}
Let $\eta(x) = - x \log x$. Then, for $0\leq r,s \leq 1$ with $|r-s| \leq 1/2$,  we have
\begin{align}
    \left| \eta(r) - \eta(s) \right| \leq \eta(|r - s|)~. 
\end{align}
\end{lem}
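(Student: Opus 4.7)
Without loss of generality assume $s \ge r$ and set $t := s - r \in [0, 1/2]$. The claim splits into two bounds, valid for $r \in [0, 1-t]$:
\begin{equation}
    \eta(r+t) - \eta(r) \le \eta(t), \qquad \eta(r) - \eta(r+t) \le \eta(t).
\end{equation}

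The first inequality is just subadditivity of $\eta$ on $[0,1]$, which I would derive from concavity together with $\eta(0) = 0$: these imply that $\eta(x)/x$ is non-increasing on $(0,1]$, so $\eta(r) \ge \tfrac{r}{r+t}\,\eta(r+t)$ and $\eta(t) \ge \tfrac{t}{r+t}\,\eta(r+t)$, and summing gives $\eta(r)+\eta(t) \ge \eta(r+t)$. Note that this direction does not require $t \le 1/2$.

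For the second inequality, my plan is to introduce $g(r) := \eta(r+t) + \eta(t) - \eta(r)$ on $[0, 1-t]$ and show $g \ge 0$. A short computation using $\eta'(x) = -\log x - 1$ gives $g'(r) = \eta'(r+t) - \eta'(r) = \log\!\bigl(r/(r+t)\bigr) \le 0$, so $g$ is non-increasing and attains its minimum at the right endpoint $r = 1-t$, where $g(1-t) = \eta(1) + \eta(t) - \eta(1-t) = \eta(t) - \eta(1-t)$. The entire second inequality thus reduces to the one-variable claim $\eta(t) \ge \eta(1-t)$ for $t \in [0, 1/2]$.

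This residual one-variable claim is the only real obstacle in the argument, and it is where the hypothesis $|r - s| \le 1/2$ genuinely enters; indeed, taking $r$ near $0$ and $r+t$ near $1$ shows the full inequality fails without this restriction. I would dispose of it by analyzing $h(t) := \eta(t) - \eta(1-t)$ on $[0, 1/2]$: $h$ vanishes at both endpoints, and a direct computation gives $h'(t) = -\log\!\bigl(t(1-t)\bigr) - 2$, which is strictly decreasing on $(0, 1/2)$ (since $t(1-t)$ is increasing there) with a unique zero (at $t(1-t) = e^{-2}$). Hence $h$ first increases and then decreases on $[0, 1/2]$, forcing $h \ge 0$ throughout. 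Combining the two directions yields the stated bound.
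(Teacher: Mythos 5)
Your proof is correct, and for the harder half it is organized differently from the paper's. For the direction $\eta(\text{larger}) - \eta(\text{smaller}) \le \eta(\text{difference})$ you use subadditivity via concavity (the chord-slope argument showing $\eta(x)/x$ is non-increasing), whereas the paper integrates the monotone derivative $\eta'(x) = -\log x - 1$; these are essentially interchangeable, and both correctly note no constraint on $|r-s|$ is needed here. For the reverse direction the paper studies $f(r,s) = \eta(r-s) + \eta(r) - \eta(s)$ over the whole two-dimensional region, rules out interior critical points, and then checks each of the four boundary pieces in turn; you instead fix the difference $t$ and show the same quantity (your $g$) is non-increasing in $r$ because $g'(r) = \log\bigl(r/(r+t)\bigr) \le 0$, which immediately pushes the minimum to the single boundary piece $r = 1-t$ and reduces everything to the one-variable inequality $\eta(t) \ge \eta(1-t)$ on $[0,1/2]$. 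Your treatment of that residual inequality via $h(t) = \eta(t) - \eta(1-t)$, with $h'(t) = -\log\bigl(t(1-t)\bigr) - 2$ decreasing so that $h$ rises then falls between two zero endpoints, is in fact more explicit than the paper's terse remark that on its $r=1$ boundary "$f(1,1/2) = f(1,1) = 0$ with a single maximum in between." So the two arguments share the same underlying function and the same role for the hypothesis $|r-s|\le 1/2$ (your counterexample with $r$ near $0$ and $s$ near $1$ is apt), but your monotone reduction avoids the exhaustive boundary search and is the cleaner write-up; the paper's brute-force region analysis buys only that it requires no insight about which boundary matters.
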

\begin{proof}
Without loss of generality, we can assume $r > s$. We first show that $\eta(r) - \eta(s) \leq \eta(r -s)$. We have
\begin{align}
    \eta(r) - \eta(s) = \int_s^r dx ~\eta'(x) \leq \int_0^{r-s} dx ~\eta'(x)  = \eta(r-s)~,
\end{align}
where the inequality follows from the monotonicity of $\eta'(x) = -\log x -1$.

To show $\eta(s) - \eta(r) \leq \eta(r - s)$ for $0 \leq s \leq r \leq 1$ and $r - s \leq 1/2$ is slightly more tedious. We simply search systematically for the minimal value of $f(r,s) = \eta(r - s) + \eta(r) - \eta(s)$ within the allowed region and show that it is zero. Since $\nabla f = 0$ requires $r = s$, there are no minima within the interior of the region. Considering each boundary piece in turn: for $r=s$ we have $f = 0$. For $s= 0$ we have $f = 2 \eta(r) \geq 0$. For $r = 1$, we have $f(1,1/2) = f(1,1)= 0$ with a single maximum in between. Finally, for $r - s = 1/2$, $f(r,s)$ decreases monotonically with increasing $r, s$ reaching its minimal value at $f(1,1/2) = 0$. This completes the proof.
\end{proof}

\begin{lem}[Fannes' inequality] \label{lem:fannes}
    Let $\rho,\sigma$ be (subnormalized) density matrices on the Hilbert space $\mathcal{H}_A$ such that $\lVert \rho - \sigma\rVert_1 \leq \varepsilon \leq 1/e$ and let $S(\rho) := -\tr[\rho \log \rho]$. Then 
    \begin{align}
        \left|S(\rho) - S(\sigma)\right| \leq \varepsilon \log \frac{d_A}{\varepsilon}~.
   \end{align}
\end{lem}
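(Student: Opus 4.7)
The plan is to reduce the matrix statement to a statement about eigenvalues, then apply the scalar estimate already established in Lemma~\ref{lem:eta}, and finally use concavity of $\eta(x)=-x\log x$ to conclude. More concretely, I would diagonalize $\rho=\sum_i r_i \ket{e_i}\bra{e_i}$ and $\sigma=\sum_i s_i\ket{f_i}\bra{f_i}$ with eigenvalues listed in nonincreasing order (padding with zeros so that both lists have length $d_A$). Then $S(\rho)=\sum_i \eta(r_i)$ and $S(\sigma)=\sum_i \eta(s_i)$, so
\begin{equation}
 \bigl|S(\rho)-S(\sigma)\bigr| \;\le\; \sum_i \bigl|\eta(r_i)-\eta(s_i)\bigr|.
\end{equation}

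Next I would invoke Mirsky's inequality, which states that for any two Hermitian operators $A,B$ with eigenvalues sorted decreasingly, $\sum_i |\lambda_i(A)-\lambda_i(B)| \le \lVert A-B\rVert_1$. Applied to $A=\rho$, $B=\sigma$, this gives $\sum_i |r_i-s_i|\le \varepsilon$. In particular each $|r_i-s_i|\le\varepsilon\le 1/e<1/2$, so Lemma~\ref{lem:eta} applies term by term and yields
\begin{equation}
 \bigl|S(\rho)-S(\sigma)\bigr| \;\le\; \sum_{i=1}^{d_A}\eta\bigl(|r_i-s_i|\bigr).
\end{equation}

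The final step is to maximize $\sum_i \eta(\delta_i)$ subject to $\delta_i\ge 0$ and $\sum_i \delta_i \le \varepsilon$. By concavity of $\eta$ and Jensen's inequality,
\begin{equation}
 \sum_{i=1}^{d_A}\eta(\delta_i) \;\le\; d_A\,\eta\!\left(\tfrac{1}{d_A}\sum_i \delta_i\right) \;\le\; d_A\,\eta\!\left(\tfrac{\varepsilon}{d_A}\right) \;=\; \varepsilon\log\frac{d_A}{\varepsilon},
\end{equation}
where the second inequality uses that $\eta$ is increasing on $[0,1/e]$ together with $\varepsilon/d_A\le \varepsilon\le 1/e$. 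This is the desired bound.

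The main obstacle is the clean passage from operator-level $\lVert\rho-\sigma\rVert_1$ to the eigenvalue-level $\sum_i|r_i-s_i|$ (Mirsky's inequality). I would either cite it as a standard consequence of Weyl's inequalities, or give a short self-contained argument: for any Hermitian $X$, $\lVert X\rVert_1=\sum_i|\lambda_i(X)|$, and a min-max characterization of sorted eigenvalues gives $|\lambda_i(\rho)-\lambda_i(\sigma)|\le$ some suitable $\lambda_j(\rho-\sigma)$ after rearrangement. Everything else — Lemma~\ref{lem:eta}, concavity, and monotonicity of $\eta$ on $[0,1/e]$ — is either already in the paper or elementary. The subnormalization hypothesis enters only in that all eigenvalues lie in $[0,1]$, which is all that Lemma~\ref{lem:eta} and the monotonicity argument require.
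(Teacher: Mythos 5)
Your proof is correct and follows essentially the same route as the paper: both reduce to sorted eigenvalues, bound $\sum_i |r_i - s_i|$ by $\lVert \rho - \sigma \rVert_1$ (you cite Mirsky's inequality, while the paper proves this step inline via the decomposition $\rho - \sigma = P - Q$ and comparison with $\tau = \rho + Q$), apply Lemma \ref{lem:eta} termwise, and finish with a concavity/monotonicity argument. The only difference is cosmetic: in the last step the paper normalizes the eigenvalue differences and uses the $\log d_A$ bound on the Shannon entropy, whereas you invoke Jensen's inequality directly; both give $\varepsilon \log(d_A/\varepsilon)$.
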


\begin{proof}
See e.g. Theorem 11.6 of \cite{nielsen&chuang}. If we write $\rho - \sigma = P - Q$, with $P$ and $Q$ positive semidefinite and orthogonal, then
\begin{align}
    \lVert \rho - \sigma\rVert_1 = \tr[P + Q] = \tr[2\,\tau - \rho - \sigma]~,
\end{align}
where $\tau = \rho + Q = \sigma + P$. Writing $r_i$, $s_i$, and $t_i$ respectively for the eigenvalues in descending order of $\rho$, $\sigma$, and $\tau$, we have $t_i \geq \max\{r_i, s_i\} = 1/2[r_i + s_i + |r_i - s_i|]$ and hence
\begin{align}
    \lVert \rho - \sigma\rVert_1 = \sum_i (2 t_i - r_i - s_i) \geq \sum_i |r_i - s_i|~.
\end{align}
Since for all $i$ we have $|r_i - s_i| \leq 1/e$, by Lemma \ref{lem:eta}
\begin{align}
   \left|S(\rho) - S(\sigma)\right| = \left|\sum_i (\eta(r_i) - \eta(s_i))\right| \leq \sum_i \eta(|r_i - s_i|)~. 
\end{align}
If $\Delta = \sum_i |r_i - s_i|$, then
\begin{align}
    \sum_i \eta(|r_i - s_i|) = \eta(\Delta) \sum_i \delta \eta{\left(\frac{|r_i - s_i|}{\Delta}\right)}  \leq \Delta \log\frac{d_A}{\Delta}~,
\end{align}
where the inequality follows from the bound $S(\rho) \leq \log d_A$. The result then follows from the monotonicity of $\eta(x)$ in the range $0 \leq x \leq 1/e$.
\end{proof}

\subsection{Areas in general quantum codes} \label{app:sec2proofs}

\begin{lem}\label{lem:geoffs_entropy_ineq}
    Let $\ket{\mathrm{MAX}} \in \otimes_i [\mathcal{H}_{b_i} \otimes \mathcal{H}_{r_i}]$, with $\mathcal{H}_{r_i} \cong \mathcal{H}^*_{b_i}$, be the canonical maximally entangled state and let ${\bf b},\mathbf{b'} \subseteq \{b_1 \dots b_n\}$ be arbitrary subsets of $\{b_1 \dots b_n\}$, with $\mathbf{r} := \{r_i : b_i \in \mathbf{b}\}$ and $\mathbf{r'} := \{r_i : b_i \in \mathbf{b'}\}$ the corresponding subsets of $\{ r_1 \dots r_n \}$. Then
    \begin{equation}
        S(\mathbf{b'} \mathbf{r})_{\ket{\mathrm{MAX}}} \ge S(\mathbf{b'} R )_{\ket{\Psi}} - S(\mathbf{b} R )_{\ket{\Psi}}~,
    \end{equation}
    for any state $\ket{\Psi} \in \otimes_i \mathcal{H}_{b_i} \otimes \mathcal{H}_R \otimes \mathcal{H}_{\overline{R}}$. 
\end{lem}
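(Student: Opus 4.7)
My plan is to prove the inequality by explicit computation of the left-hand side together with standard bounds on conditional entropy on the right-hand side.

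First, I would partition the index set $\{1,\dots,n\}$ according to the roles in $\mathbf{b}$ and $\mathbf{b'}$. Define $\mathbf{b}_1 := \mathbf{b} \cap \mathbf{b'}$, $\mathbf{b}_2 := \mathbf{b'} \setminus \mathbf{b}$, and $\mathbf{b}_3 := \mathbf{b} \setminus \mathbf{b'}$, so that $\mathbf{b'} = \mathbf{b}_1 \cup \mathbf{b}_2$, $\mathbf{b} = \mathbf{b}_1 \cup \mathbf{b}_3$, and $\mathbf{r} = \mathbf{r}_1 \cup \mathbf{r}_3$ under the obvious labelling. This bookkeeping is straightforward but is the only real place where care is needed.

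Next I would compute $S(\mathbf{b'} \mathbf{r})_{\ket{\mathrm{MAX}}}$ directly. Since $\ket{\mathrm{MAX}} = \bigotimes_i \ket{\mathrm{MAX}}_{b_i r_i}$, the reduced state on $\mathbf{b'}\mathbf{r}$ factorizes across $i$: for $i \in \mathbf{b}_1$ both $b_i$ and $r_i$ are kept and contribute a pure Bell pair; for $i \in \mathbf{b}_2$ only $b_i$ is kept and $r_i$ is traced out, giving a maximally mixed state on $\mathcal{H}_{b_i}$; for $i \in \mathbf{b}_3$ only $r_i$ is kept (maximally mixed on $\mathcal{H}_{r_i}$); and the remaining indices contribute nothing. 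Since $d_{r_i} = d_{b_i}$, I conclude
\begin{equation}
    S(\mathbf{b'} \mathbf{r})_{\ket{\mathrm{MAX}}} = \log d_{\mathbf{b}_2} + \log d_{\mathbf{b}_3}~.
\end{equation}

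Then I would rewrite the right-hand side using the partition. Writing
\begin{align}
    S(\mathbf{b'} R)_{\ket{\Psi}} - S(\mathbf{b} R)_{\ket{\Psi}} &= S(\mathbf{b}_1 \mathbf{b}_2 R)_{\ket{\Psi}} - S(\mathbf{b}_1 \mathbf{b}_3 R)_{\ket{\Psi}} \\
    &= S(\mathbf{b}_2 \mid \mathbf{b}_1 R)_{\ket{\Psi}} - S(\mathbf{b}_3 \mid \mathbf{b}_1 R)_{\ket{\Psi}}~,
\end{align}
I would apply the two standard bounds $S(A|B) \leq \log d_A$ and $S(A|B) \geq -\log d_A$ to obtain
\begin{equation}
    S(\mathbf{b'} R)_{\ket{\Psi}} - S(\mathbf{b} R)_{\ket{\Psi}} \leq \log d_{\mathbf{b}_2} + \log d_{\mathbf{b}_3}~,
\end{equation}
which matches the value of $S(\mathbf{b'} \mathbf{r})_{\ket{\mathrm{MAX}}}$ computed above. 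Combining the two displays completes the proof.

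There is no serious obstacle here; the whole content is the observation that the entropy of $\mathbf{b'}\mathbf{r}$ in $\ket{\mathrm{MAX}}$ exactly equals the trivial upper bound on the symmetric difference contribution to $S(\mathbf{b'} R) - S(\mathbf{b} R)$. The only thing to be careful about is the index bookkeeping, particularly the identification $\mathcal{H}_{r_i} \cong \mathcal{H}_{b_i}^*$ so that $d_{r_i} = d_{b_i}$, and the handling of degenerate cases where $\mathbf{b}_2$ or $\mathbf{b}_3$ is empty (in which case the corresponding $\log d$ is zero and the bound still holds).
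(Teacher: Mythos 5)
Your proof is correct and essentially the same as the paper's: both reduce to the symmetric-difference decomposition, note that $S(\mathbf{b'}\mathbf{r})_{\ket{\mathrm{MAX}}}$ equals $\log d_{\mathbf{b'}\setminus\mathbf{b}} + \log d_{\mathbf{b}\setminus\mathbf{b'}}$, and control $S(\mathbf{b'}R)-S(\mathbf{b}R)$ with subadditivity and Araki--Lieb, which is exactly the content of your conditional-entropy bounds $S(A|B)\leq \log d_A$ and $S(A|B)\geq -\log d_A$. The only cosmetic difference is that the paper passes through the intermediate state-dependent bound $S(\mathbf{b'}\setminus\mathbf{b})_{\ket{\Psi}}+S(\mathbf{b}\setminus\mathbf{b'})_{\ket{\Psi}}$ before invoking those inequalities, whereas you apply the dimension bounds directly.
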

\begin{proof}
    The state $\ket{\mathrm{MAX}}$ is maximally entangled on $\mathcal{H}_{\mathbf{b}\, \cap\, \mathbf{b'}} \otimes \mathcal{H}_{\mathbf{r} \,\cap \,\mathbf{r'}}$, while the reduced state of $\ket{\mathrm{MAX}}$ on $\mathcal{H}_{\mathbf{b'}} \otimes \mathcal{H}_{\mathbf{r}}$ is maximally mixed on $\mathcal{H}_{\mathbf{b'} \setminus \mathbf{b}} \otimes \mathcal{H}_{\mathbf{r} \setminus \mathbf{r'}}$. We therefore find 
    \begin{align}
        S(\mathbf{b'} \mathbf{r})_{\ket{\mathrm{MAX}}} &= S(\mathbf{b'} \setminus \mathbf{b})_{\ket{\mathrm{MAX}}} + S( \mathbf{r} \setminus \mathbf{r'} )_{\ket{\mathrm{MAX}}}\\ &= S(\mathbf{b'} \setminus \mathbf{b})_{\ket{\mathrm{MAX}}} + S( \mathbf{b} \setminus \mathbf{b'} )_{\ket{\mathrm{MAX}}}\\ &\ge S(\mathbf{b'} \setminus \mathbf{b})_{\ket{\Psi}} + S( \mathbf{b} \setminus \mathbf{b'} )_{\ket{\Psi}}~, 
    \end{align}
    for any state $\ket{\Psi}\in \otimes_i \mathcal{H}_{b_i} \otimes \mathcal{H}_R \otimes \mathcal{H}_{\overline{R}}$. 
    
    It remains to show that 
    \begin{align}
    S(\mathbf{b'} \setminus \mathbf{b})_{\ket{\Psi}} + S( \mathbf{b} \setminus \mathbf{b'} )_{\ket{\Psi}} \ge S(\mathbf{b'} R )_{\ket{\Psi}} - S(\mathbf{b} R )_{\ket{\Psi}}.
    \end{align}
    By subadditivity, we have
    \begin{align}
         S(\mathbf{b'} \, R )_{\ket{\Psi}} \leq S(\mathbf{b'} \setminus \mathbf{b})_{\ket{\Psi}} + S(\mathbf{b} \cap \mathbf{b'} \, R )_{\ket{\Psi}}~,
    \end{align}
    while, by the Araki-Lieb inequality, we have
    \begin{align}
       S(\mathbf{b}\, R )_{\ket{\Psi}} \geq S(\mathbf{b} \cap \mathbf{b'} \, R )_{\ket{\Psi}} -  S( \mathbf{b} \setminus \mathbf{b'} )_{\ket{\Psi}}~.
    \end{align}
    Combining these completes the proof.
\end{proof}

\subsubsection*{Proof of Theorem \ref{thm:area_inequality}}
\begin{proof}
We first prove that the area function $A_B$ always leads to a QMS prescription for the state $\ket{\mathrm{MAX}} \in \otimes_i [\mathcal{H}_{b_i} \otimes \mathcal{H}_{r_i}] $ and region $B \mathbf{r}$ with entanglement wedge $\mathbf{b}$. This is because
    \begin{align}
        S(B \mathbf{r})_{V \ket{\mathrm{MAX}}} = A_B (\mathbf{b}) = A_B(\mathbf{b}) + S(\mathbf{b r})_{\ket{\mathrm{MAX}}}
    \end{align}
    by definition. Also, any product unitary $U = U_{b_1} U_{b_2} \dots$ acting on $\ket{\mathrm{MAX}}$ can be mirrored onto a product unitary $U_{r_1} U_{r_2} \dots$ acting on $\otimes_i \mathcal{H}_{r_i}$, which manifestly leaves $S(B \mathbf{r})$ unchanged. Hence we always have
    \begin{align}
        S(B \mathbf{r})_{V U \ket{\mathrm{MAX}}} = A_B (\mathbf{b}) = A_B(\mathbf{b}) + S(\mathbf{b r})_{U \ket{\mathrm{MAX}}}~.
    \end{align}
    Finally, for any other set of subsystems $\mathbf{b'}$, we have
    \begin{align}
        A_B (\mathbf{b'}) + S(\mathbf{b' r})_{\ket{\mathrm{MAX}}} &= S(B \mathbf{r'})_{\ket{\mathrm{CJ}}} + S(\mathbf{r'} \setminus \mathbf{r})_{\ket{\mathrm{MAX}}} + S(\mathbf{r} \setminus \mathbf{r'})_{\ket{\mathrm{MAX}}}
        \geq S(B \mathbf{r})~,
    \end{align}
    where the inequality follows from the Araki-Lieb inequality together with subadditivity as in the proof of Lemma \ref{lem:geoffs_entropy_ineq}. This completes the proof that $A_B$ leads to a QMS prescription with entanglement wedge $\mathbf{b}$ for the state $\ket{\mathrm{MAX}}$ and the region $B \mathbf{r}$.
    
    Hence, according to Definition \ref{defn:qms_other_general}, the area function $A'_B$ must also lead to a QMS prescription for the state $\ket{\mathrm{MAX}}$ and the region $B \mathbf{r}$. Of course, we do not yet know what the entanglement wedge is according to that prescription, but it must be some $\mathbf{b'} \subseteq \{b_1 \dots b_n\}$. But then 
    \begin{equation}
        A_B(\mathbf{b}) = S(B \mathbf{r})_{\ket{\mathrm{CJ}}} = A'_B(\mathbf{b'}) + S(\mathbf{b'} \mathbf{r} )_{\ket{\mathrm{MAX}}} \le A'_B(\mathbf{b}) + S(\mathbf{b r})_{\ket{\mathrm{MAX}}} = A'_B(\mathbf{b})~,
    \end{equation}
    where in the second equality we used the $A'_B$ QMS prescription to evaluate $S(B \mathbf{r})_{\ket{\mathrm{CJ}}}$ and in the inequality we used the requirement of minimality.
\end{proof}
\subsubsection*{Proof of Theorem \ref{thm:area_equality}}
\begin{proof}
By way of contradiction, we suppose that there does exist a state $\ket{\Psi}$ and subset $\mathbf{b}$ such that $\mathbf{b} = \mathrm{EW}'_{BR}(\ket{\Psi})$, but $A'_B(\mathbf{b}) \not = A_B(\mathbf{b})$. Recall from the proof of Theorem \ref{thm:area_inequality} that the area function $A_B$ leads to a QMS prescription for the state $\ket{\mathrm{MAX}}$ and the region $B \mathbf{r}$, and hence by assumption the $A_B'$ area function must also do so. If we use the $A_B'$ QMS prescription to evaluate $A_B(\mathbf{b})$, we find
    \begin{equation} \label{eq:A_B(b)}
        A_B(\mathbf{b}) = S(B \mathbf{r})_{\ket{\mathrm{CJ}}} = A'_B(\mathbf{b'}) + S(\mathbf{b' r})_{\ket{\mathrm{MAX}}}~,
    \end{equation}
    for some $\mathbf{b'} \subseteq \{b_1 \dots b_n\}$. 
    We can then use Lemma \ref{lem:geoffs_entropy_ineq} and Theorem \ref{thm:area_inequality} to obtain
    \begin{align}
        A'_B(\mathbf{b}) +  S(\mathbf{b} R)_{\ket{\Psi}} &> A_B(\mathbf{b}) +  S(\mathbf{b} R)_{\ket{\Psi}} \label{eq:3.8step1}
        \\&> A'_B(\mathbf{b'}) + S(\mathbf{b' r})_{\ket{\mathrm{MAX}}} +  S(\mathbf{b} R)_{\ket{\Psi}} \label{eq:3.8step2}
        \\&> A'_B(\mathbf{b'}) + S(\mathbf{b'} R)_{\ket{\Psi}} \label{eq:3.8step3}~.
    \end{align}
    In \eqref{eq:3.8step1}, we used Theorem \ref{thm:area_inequality} together with the assumption that $A'_B(\mathbf{b}) \neq A_B(\mathbf{b})$. In \eqref{eq:3.8step2}, we used the $A_B'$ QMS prescription for $A_B(\mathbf{b})$ from \eqref{eq:A_B(b)}. Finally, in \eqref{eq:3.8step3}, we used Lemma \ref{lem:geoffs_entropy_ineq}.
    It can immediately be seen that \eqref{eq:3.8step3} contradicts our assumption that $\mathbf{b} = \mathrm{EW}'_{BR}(\ket{\Psi})$, since $\mathbf{b}$ does not have minimal generalized entropy for the state $\ket{\Psi}$.
\end{proof}

\subsection{State-specific reconstruction} \label{app:sec3proofs}

\subsubsection*{Proof of Theorem \ref{thm:SI_rec_versions}}
\begin{proof}
    We prove four implications.
    \paragraph{$1 \to 2$:} From \eqref{eq:shrod_condition_SIR} we have that for any $\ket{\psi} \in \mathcal{H}_\mathrm{code}$,
    \begin{align}
        \braket{\psi| O_\mathrm{code} |\psi} &= \tr[\ket{\psi}\bra{\psi} O_\mathrm{code}] \\
        &= \tr \left[ \tr_{\overline{B} E} [W_B V \ket{\psi}\bra{\psi} V^\dagger W^\dagger_B] \,O_\mathrm{code} \right] \\
        &= \braket{\psi| V^\dagger W^\dagger_B O_\mathrm{code} W^\dagger_B V |\psi} \\
        &= \braket{\psi| V^\dagger O_B V |\psi}~.
    \end{align}
    \paragraph{$2 \to 3$:} Let $\widetilde{O}_\mathrm{code} := O_\mathrm{code}^2$. Condition 2, applied to both $O_\mathrm{code}$ and $\widetilde{O}_\mathrm{code}$, implies
    \begin{equation}
        V^\dagger \widetilde{O}_B V = \widetilde{O}_\mathrm{code} = O_\mathrm{code}^2 = V^\dagger O_B V V^\dagger O_B V~.
    \end{equation}
    But because $VV^\dagger$ and $W_B W_B^\dagger$ are projectors,
    \begin{align} \label{eq:firstineq}
        V^\dagger O_B V V^\dagger O_B V  &\le V^\dagger O_B O_B V \\&\leq V^\dagger W_B^\dagger O_\mathrm{code} W_B W_B^\dagger O_\mathrm{code} W_B V \\&\leq V^\dagger W_B^\dagger O_\mathrm{code} O_\mathrm{code} W_B V \\&\leq V^\dagger \widetilde{O}_B V~,
    \end{align}
    with the first inequality \eqref{eq:firstineq} saturated if and only if 
    \begin{equation}
         O_B V = V V^\dagger O_B V = V O_\mathrm{code}~,
    \end{equation}
    which is equivalent to \eqref{eq:SI_rec_cond3_1}.

    \paragraph{$3 \to 4$:} Let $U_\mathrm{code} = \exp(i O_\mathrm{code})$. Statement 3 tells us that for all states $\ket{\psi} \in \mathcal{H}_\mathrm{code}$
    \begin{align}
        V U_\mathrm{code} \ket{\psi} = V \exp(i O_\mathrm{code}) \ket{\psi} = \exp(i O_B) V \ket{\psi} = U_B V \ket{\psi}~,
    \end{align}
    where $U_B = \exp(i O_B)$.
    
    \paragraph{$4 \to 1$:} This last step is easiest to prove using the technology that we develop in Section \ref{sec:proof_exact}. In Lemma \ref{lem:CJ_otimes_psi}, it is shown that there exists an isometry $W: \mathcal{H}_A \to \mathcal{H}_A \otimes \mathcal{H}_{\mathbf{U}_A}$, where $\mathcal{H}_A$ is a finite-dimensional Hilbert space and $\mathcal{H}_{\mathbf{U}_A}$ is the space of square-integrable functions on the unitary group $\mathbf{U}_A$ with ``position basis'' $\{\ket{U_A}\}$, such that
    \begin{align}
        W \ket{\psi}_A = \int dU_A \ket{U_A}_{\mathbf{U}_A} \otimes U_A \ket{\psi}_A = \ket{\psi}_{\mu_0} \ket{\mathrm{MAX}}_{A \,\mu_0^*}~.
    \end{align}
    Here $\mathcal{H}_{\mu_0} \otimes \mathcal{H}^*_{\mu_0} \subseteq \mathcal{H}_{\mathbf{U}_A}$ is a finite dimensional subspace of $\mathcal{H}_{\mathbf{U}_A}$ with $\mathcal{H}_{\mu_0} \cong \mathcal{H}_A$, and $dU_A$ is the Haar measure on $\mathbf{U}_A$. We can therefore define an isometry $W_B: \mathcal{H}_B \to \mathcal{H}_B \otimes \mathcal{H}_{\mathbf{U}_\mathrm{code}}$ such that
    \begin{align}
        W_B V \ket{\psi} &= \int dU_\mathrm{code} \ket{U_\mathrm{code}}_{\mathbf{U}_\mathrm{code}} \otimes U_B V \ket{\psi} \\&= \int dU_\mathrm{code} \ket{U_\mathrm{code}}  V U_\mathrm{code} \ket{\psi}
        \\&=\ket{\psi} V \ket{\mathrm{MAX}}
    \end{align}
    as desired for Condition 1.\footnote{One might worry here about whether $W_B$ is truly an isometry mapping $\mathcal{H}_B \to \mathcal{H}_\mathrm{code} \otimes \mathcal{H}_E$ for some $\mathcal{H}_E$, as required by Condition 1. Since $\mathcal{H}_{\mathbf{U}_\mathrm{code}}$ is infinite dimensional, we can always write $\mathcal{H}_{\mathbf{U}_\mathrm{code}} \cong \mathcal{H}_\mathrm{code} \otimes \mathcal{H}_{E'}$ such that $\mathcal{H}_\mathrm{code}$ is identified with $\mathcal{H}_{\mu_0}$ on the subspace $\mathcal{H}_{\mu_0} \otimes \mathcal{H}^*_{\mu_0}$. Moreover since $\mathcal{H}_B$ and $\mathcal{H}_\mathrm{code}$ are finite dimensional, the image of $W_B$ then lies in $\mathcal{H}_\mathrm{code} \otimes \mathcal{H}_{E''}$ where $\mathcal{H}_{E''} \subseteq \mathcal{H}_{E'}$ is finite dimensional. We can therefore simply identify $\mathcal{H}_E \cong \mathcal{H}_B \otimes \mathcal{H}_{E''}$ to reach the exact form specified in Condition 1. \label{ft:isometry}}
\end{proof}

\subsubsection*{Proof of Theorem \ref{thm:SS_rec_versions}}
\begin{proof}
    This proof is almost identical to that of Theorem \ref{thm:SI_rec_versions}.
    \paragraph{$1 \to 2$:} From \eqref{eq:shrod_condition_SSR} we have that for any $\ket{\psi} \in \mathcal{H}_\mathrm{code}$,
    \begin{align}
        \braket{\psi| O_{b} |\psi} &= \tr[\tr_{\bar{b}}[\ket{\psi}\bra{\psi}] O_{b}] \\
        &= \tr \left[ \tr_{\overline{B} E} [W_B V \ket{\psi}\bra{\psi} V^\dagger W^\dagger_B] \,O_{b} \right] \\
        &= \braket{\psi| V^\dagger O_B V |\psi}~.
    \end{align}
\paragraph{$2 \to 3$:} Let $\widetilde{O}_{b} := O_{b}^2$. Condition 2 implies
    \begin{equation}
        V^\dagger \widetilde{O}_B V = \widetilde{O}_{b} = O_{b}^2 = V^\dagger O_B V V^\dagger O_B V~.
    \end{equation}
    But because $W_B V V^\dagger W_B^\dagger \leq V V^\dagger \leq \mathbb{1}$, we have 
    \begin{align}
    V^\dagger O_B V V^\dagger O_B V \leq V^\dagger \widetilde{O}_B V
    \end{align}
    with equality only possible if $O_B V = V V^\dagger O_B V = V O_{b}$.

    \paragraph{$3 \to 4$:} Condition 4 follows immediately from Condition 3 if we write $U_{b} = \exp(i O_{b})$ and $U_B = \exp(i O_B)$.
    
    \paragraph{$4 \to 1$:} As in the proof of Theorem \ref{thm:SI_rec_versions}, we define
    \begin{align}
        W_B := \int dU_{b} \ket{U_{b}} \otimes U_B~.
    \end{align}
    Then
    \begin{align}
        W_B V \ket{\psi}_{b \bar{b}} = \int dU_{b} \ket{U_{b}} V U_{b} \ket{\psi}_{b \bar{b}} =  V \ket{\mathrm{MAX}}_{b \mu_0^*} \ket{\psi}_{\mu_0 \bar{b}}~,
    \end{align}
    where on the right hand side $V$ still acts on $\mathcal{H}_\mathrm{code} \cong \mathcal{H}_{b} \otimes \mathcal{H}_{\bar{b}}$. The reduced state $W_B V \ket{\psi}$ on $\mathcal{H}_{\mu_0}$ is therefore equal to the reduced state of $\ket{\psi}$ on $\mathcal{H}_{b}$, as required by Condition 1.
\end{proof}

\subsubsection*{Proof of Theorem \ref{thm:zerobits}}
\begin{proof}
    We start with Condition 1 and prove the cycle of implications.
    
    \paragraph{$1 \to 2$:} Consider the state $\ket{\Phi} \in \mathcal{H}_\mathrm{code} \otimes \mathcal{H}_R$, where $\mathcal{H}_R$ is a two-dimensional reference system, given by
    \begin{align}
        \ket{\Phi} = \frac{1}{\sqrt{2}}[\ket{\psi} \ket{0} + \ket{\phi} \ket{1}]~.
    \end{align}
    where $\{\ket{\psi},\ket{\phi}\}$ is an orthonormal basis for $\widetilde{\mathcal{H}}_\mathrm{code}$, and let $\ket{\chi^\pm} = 1/\sqrt{2}[\ket{\psi} \pm \ket{\phi}]$ and $\ket{\chi^{\pm i}} = 1/\sqrt{2}[\ket{\psi} \pm i \ket{\phi}]$. We have
    \begin{equation}
    \begin{split}
        \Phi_{\overline{B} R} =& \frac{1}{4}(\chi^+_{\overline{B}} - \chi^-_{\overline{B}}) \otimes X_R + \frac{1}{4}(\chi^{-i}_{\overline{B}} - \chi^{+i}_{\overline{B}}) \otimes Y_R 
        \\ &+ \frac{1}{2}\psi_{\overline{B}} \otimes \ket{0}\bra{0}_R + \frac{1}{2}\phi_{\overline{B}} \otimes \ket{1}\bra{1}_R 
        \\ \approx& \frac{1}{2} \omega_{\overline{B}} \otimes \mathbb{1}_R.
    \end{split}
    \end{equation}
    Here $X_R$ and $Y_R$ are respectively the Pauli X and Y operators on $\mathcal{H}_R$. In the approximate equality we used \eqref{eq:zerobits_cond1}, applied to all of $\ket{\psi}$, $\ket{\phi}$, $\ket{\chi^\pm}$, and $\ket{\chi^{\pm i}}$. More precisely, we can use the triangle inequality for Schatten 1-norms to see that
    \begin{equation} \label{eq:Psiapprox}
    \begin{split}
        \left\lVert \Phi_{\overline{B} R} - \frac{1}{2} \omega_{\overline{B}} \otimes \mathbb{1}_R \right\rVert_1 \leq& \frac{1}{2} \left\lVert \chi^+_{\overline{B}} - \omega_{\overline{B}} \right\rVert_1 +\frac{1}{2} \left\lVert \chi^-_{\overline{B}} - \omega_{\overline{B}} \right\rVert_1  + \frac{1}{2} \left\lVert \chi^{+i}_{\overline{B}} - \omega_{\overline{B}} \right\rVert_1  \\ &+ \frac{1}{2} \left\lVert \chi^{-i}_{\overline{B}} - \omega_{\overline{B}} \right\rVert_1
        + \frac{1}{2} \left\lVert \psi_{\overline{B}} - \omega_{\overline{B}} \right\rVert_1  + \frac{1}{2} \left\lVert \phi_{\overline{B}} - \omega_{\overline{B}}\right\rVert_1  
        \\\leq & 3 \,\varepsilon_1.
    \end{split}
    \end{equation}
   In \eqref{eq:Psiapprox} we implicitly used the trace norms $\lVert X_R \rVert_1 = \lVert Y_R \rVert_1 = 2$ and $\lVert \ket{0}\bra{0}_R \rVert_1 = \lVert \ket{1}\bra{1}_R \rVert_1 = 1$. By the first Fuchs-van de Graaf inequality, we therefore have
    \begin{align}
    F(\Phi_{\overline{B} R}, \frac{1}{2} \omega_{\overline{B}} \otimes \mathbb{1}_R) \geq 1 - \frac{3}{2} \,\varepsilon_1.
    \end{align}
    Let $\ket{\omega} \in \mathcal{H}_{\overline{B}} \otimes \mathcal{H}_E$ be a purification of $\omega_{\overline{B}}$ where the dimension $d_E \geq d_B$. By Uhlmann's theorem, there exists an isometry $\widetilde{W}_B: \mathcal{H}_B \to  \mathcal{H}_\mathrm{code} \otimes \mathcal{H}_{E}$ such that
    \begin{align}
        \left| \bra{\Phi} \bra{\omega} \widetilde{W}_B V \ket{\Phi}\right|^2 \geq 1 - \frac{3}{2} \,\varepsilon_1~.
    \end{align}
    But we can now use the second Fuchs-van de Graaf inequality to bound the Schatten 1-norm
    \begin{align}
        \left\lVert \tr_{\overline{B}E}[\widetilde{W}_B V\ket{\Phi}\bra{\Phi} V^\dagger \widetilde{W}^\dagger_B] - \ket{\Phi}\bra{\Phi} \right\rVert_1 \leq 2 \sqrt{3 \varepsilon_1}~.
    \end{align}
    For any state $\tilde{\rho}_\mathrm{code}$ there exists a positive operator $\tilde{\rho}_R^{1/2}$ with operator norm $\lVert \tilde{\rho}_R^{1/2} \rVert \leq \sqrt{2}$ such that $\tr_R [\tilde{\rho}_R^{1/2} \ket{\Phi} \bra{\Phi} \tilde{\rho}_R^{1/2}] = \tilde{\rho}_\mathrm{code}$. Hence, by monotonicity under partial traces and the H\"{o}lder inequality, we have
    \begin{align}
        \left\lVert \tr_{\overline{B}E}[ \widetilde{W}_B V \tilde{\rho}_\mathrm{code} V^\dagger \widetilde{W}^\dagger_B] - \tilde{\rho}_\mathrm{code} \right\rVert_1 \leq 4 \sqrt{3 \varepsilon_1}~.
    \end{align}
    Hence $\varepsilon_2 \leq 4 \sqrt{3 \varepsilon_1}$.
    
    \paragraph{$2 \to 3$:} Let $\widetilde{\mathcal{H}}_\mathrm{code} = \mathrm{span} \{\ket{\psi}, \ket{\phi}\}$. Now define the positive operator $$P_B = \widetilde{W}_B^\dagger \ket{\psi} \bra{\psi} \widetilde{W}_B~.$$ Note that $P_B \leq \mathbb{1}$, but $P_B$ is not necessarily a projector. By \eqref{eq:zerobits_cond2}, we have
    \begin{align}
       1 - \bra{\psi} V^\dagger P_B V \ket{\psi} =  \tr \left[ \ket{\psi}\bra{\psi} \left(\ket{\psi} \bra{\psi} - \tr_{\overline{B} E} \widetilde{W}_B V \ket{\psi} \bra{\psi} V^\dagger \widetilde{W}^\dagger_B\right)\right] \leq \varepsilon_2~,
    \end{align}
    and
    \begin{align}
       \bra{\phi} V^\dagger P_B V \ket{\phi} =  \tr \left[ \ket{\psi}\bra{\psi} \left(\tr_{\overline{B} E} \widetilde{W}_B V \ket{\phi} \bra{\phi} V^\dagger \widetilde{W}^\dagger_B - \ket{\phi} \bra{\phi} \right)\right] \leq \varepsilon_2~,
    \end{align}
    Hence 
    \begin{align} \label{eq:maxP_B}
    \tr[P_B ( V \ket{\psi}\bra{\psi} V^\dagger - V \ket{\phi}\bra{\phi} V^\dagger)] \geq 1 - 2 \varepsilon_2~.
    \end{align}
    Suppose we now try to maximize the left hand side of \eqref{eq:maxP_B} over all operators $0 \leq P_B \leq \mathbb{1}_B$. We are maximizing a linear function over a convex space, so the maximum always lies on the boundary of the space, i.e. at a projector $\Pi_B$. It follows that there exists a projector $\Pi_B$ satisfying \eqref{eq:zerobits_cond3} with $\varepsilon_3 \leq 2 \varepsilon_2$.

    \paragraph{$3 \to 4$:} Let $\braket{\psi_0|U_\mathrm{code} |\psi_0} = e^{i \phi} \cos \theta$. We can then define the orthogonal, unnormalized states $$\ket{\psi_+} = \ket{\psi_0} + e^{-i \phi} U_\mathrm{code} \ket{\psi_0}$$ and $$\ket{\psi_-} = \ket{\psi_0} - e^{-i \phi} U_\mathrm{code} \ket{\psi_0}~.$$ Let $\Pi_B$ be a projector satisfying the inequalities $\braket{\psi_+ | V^\dagger \Pi_B V |\psi_+} / \braket{\psi_+|\psi_+} \geq 1 - \varepsilon_3$ and $\braket{\psi_- | V^\dagger \Pi_B V |\psi_-}  / \braket{\psi_-|\psi_-} \leq \varepsilon_3$. We then define the unitary operator $U_B =e^{i \phi} (2 \Pi_B - \mathbb{1}_B)$. We have
    \begin{align}
        \braket{\psi_0 | U^\dagger_\mathrm{code} V^\dagger U_B V  |\psi_0} &= \frac{1}{4}(\bra{\psi_+} - \bra{\psi_-}) V^\dagger (2\Pi_B - \mathbb{1}_B) V (\ket{\psi_+} + \ket{\psi_-}) 
        \\& = 1 - \delta
    \end{align}
    where
    \begin{equation}
    \begin{split}
        \mathrm{Re}(\delta) &= 1 + \frac{1}{4}\braket{\psi_+|\psi_+} - \frac{1}{4}\braket{\psi_-|\psi_-} - \frac{1}{2} \braket{\psi_+|V^\dagger \Pi_B V| \psi_+} +  \frac{1}{2}\braket{\psi_-|V^\dagger \Pi_B V| \psi_-}
        \\&\leq 2 \varepsilon_3~. 
    \end{split}
    \end{equation}
    Hence
    \begin{align}
      \big\lVert V U_\mathrm{code}  \ket{\psi_0} -  U_B V  \ket{\psi_0} \big\rVert = \sqrt{2 - 2 \mathrm{Re}\left(\braket{\psi_0 | U^\dagger_\mathrm{code} V^\dagger U_B V  |\psi_0}\right)} \leq 2\sqrt{ \varepsilon_3}~.    
    \end{align}

    \paragraph{$4 \to 1$:} Let $\ket{\psi} = U_\mathrm{code} \ket{\psi_0}$. Then by the second Fuchs-van de Graaf inequality and monotonicity under partial traces,
    \begin{align}
        \left\lVert \tr_{B} [V \ket{\psi} \bra{\psi} V^\dagger] - \tr_{B} [U_B V \ket{\psi_0} \bra{\psi_0} V^\dagger U_B^\dagger] \right\rVert_1 \leq  2 \varepsilon_4~.
    \end{align}
    However, if we define $\omega_{\overline{B}} = \tr_B [V \ket{\psi_0}\bra{\psi_0} V^\dagger]$ then this is exactly \eqref{eq:zerobits_cond1}, with $\varepsilon_1 = 2 \varepsilon_4$.
\end{proof}

\subsubsection*{Proof of Theorem \ref{thm:SSPUconditions}}
\begin{proof}
    By Uhlmann's theorem, Condition 1 is equivalent to the fidelity lower bound 
    \begin{align}
        F(\tr_{B R}[V U_\mathbf{b} \ket{\Psi} \bra{\Psi} U_\mathbf{b}^\dagger V^\dagger], \tr_{B R}[V \ket{\Psi} \bra{\Psi} V^\dagger]) \geq \sqrt{1 - \varepsilon_1^2}~.
    \end{align}
    Theorem \ref{thm:SSPUconditions} then follows immediately by applying the Fuchs-van de Graaf inequalities relating the Schatten 1-norm and the fidelity.
\end{proof}

\subsubsection*{Proof of Theorem \ref{thm:SS_to_SI}}
\begin{proof}
    We first prove each equivalence in turn.
    
    \paragraph{$1 \to 2$:} For any $U_\mathbf{b}$, we have
    \begin{align} \label{eq:minimaxinequal}
        \max_{U_B} \min_{\ket{\Psi}} \mathrm{Re}\left[\braket{\Psi|V^\dagger U_{B}^\dagger V U_\mathbf{b}|\Psi}\right] \leq \min_{\ket{\Psi}} \max_{U_B} \mathrm{Re}\left[\braket{\Psi|V^\dagger U_{B}^\dagger V U_\mathbf{b}|\Psi}\right]~,
    \end{align}
    where on the left hand side we are minimizing with respect to $\ket{\Psi}$ (in a way that can depend on $U_B$) before maximizing with respect to $U_B$, and on the right we do the opposite.
    Hence the left hand side is related to the error in Condition 2, where we consider the one $U_B$ with the best error for the worst-case state $\ket{\Psi}$, and the right hand side is related to Condition 1, where we allow each $U_B$ to depend on the state $\ket{\Psi}$.
    Thus Condition 2 manifestly implies Condition 1 with $\varepsilon_1 \leq \varepsilon_2$. To show the reverse implication, we need \eqref{eq:minimaxinequal} to be approximately saturated.
    
    The first step is to note that since
    \begin{align} \label{eq:tracestuff}
        \braket{\Psi|V^\dagger U_{B}^\dagger V U_\mathbf{b}|\Psi} = \tr[\rho_\mathrm{code} V^\dagger U_{B}^\dagger V U_\mathbf{b} ]~,
    \end{align}
    where $\rho_\mathrm{code} = \tr_{\overline{R}} \ket{\Psi} \bra{\Psi}$, we can replace the minimization over $\ket{\Psi}$ by a minimization over density matrices $\rho_\mathrm{code}$. By H\"{o}lder's inequality, we also have
    \begin{align}
        \max_{U_B} \,\mathrm{Re}(\tr[\rho_\mathrm{code} V^\dagger U_{B}^\dagger V U_\mathbf{b} ]) = \left\lVert \tr_{\overline{B}}[V U_\mathbf{b} \rho_\mathrm{code} V^\dagger] \right\rVert_1 = \max_{\lVert O_B \rVert \leq 1} \mathrm{Re}(\tr[\rho_\mathrm{code} V^\dagger O_{B}^\dagger V U_\mathbf{b}])~,
    \end{align}
    where the maximization is now over operators $O_B$ with operator norm $\lVert O_B \rVert \leq 1$.
    
    Now by von Neumann's minimax theorem,
    \begin{align} \label{eq:minimax}
        \max_{\lVert O_B \rVert \leq 1}\, \min_{\rho_\mathrm{code}} \mathrm{Re}(\tr[\rho_\mathrm{code} V^\dagger O_{B}^\dagger V U_\mathbf{b}]) = \min_{\rho_\mathrm{code}}\, \max_{\lVert O_B \rVert \leq 1} \mathrm{Re}(\tr[\rho_\mathrm{code} V^\dagger O_{B}^\dagger V U_\mathbf{b}]),
    \end{align}
    since $\mathrm{Re}(\tr[\rho_\mathrm{code}V^\dagger O_{B}^\dagger V U_\mathbf{b}])$ is a bilinear function and we are minimizing and maximizing over convex spaces.
    
    For any $U_\mathbf{b}$, we therefore have a state-independent operator $O_B$ with $\lVert O_B \rVert \leq 1$ such that for all states $\ket{\Psi}$
    \begin{align}
        \mathrm{Re}(\bra{\Psi} V^\dagger O_{B}^\dagger V U_\mathbf{b} \ket{\Psi}) \geq 1 - \frac{1}{2}\varepsilon_1^2~.
    \end{align}
    However, we don't yet know that $O_B$ is unitary. Let $O_B$ have a singular value decomposition $O_B =\sum_i \lambda_i \ket{i_R} \bra{i_L}$ with $0 \leq \lambda_i \leq 1$, and let $\ket{\Psi}$ have the Schmidt decomposition $\ket{\Psi} = \sum_j \sqrt{p_j} \ket{\psi_j}_\mathrm{code} \ket{j}_{\overline{R}}$. Then
    \begin{align}
        \mathrm{Re}\left(\bra{\Psi} V^\dagger O_{B}^\dagger V U_\mathbf{b} \ket{\Psi}\right) &= \mathrm{Re}\left(\sum_{i,j} \lambda_i p_j \braket{\psi_j | V^\dagger | i_R} \braket{i_L| V U_\mathbf{b} | \psi_j} \right)
        \\&\leq \sqrt{\left[\sum_{i,j} \lambda_i p_j \left|\braket{\psi_j | V^\dagger | i_R}\right|^2\right]\left[\sum_{i,j} \lambda_i p_j \left| \braket{i_L| V U_\mathbf{b}| \psi_j}\right|^2\right]}~,
    \end{align}
    where we have used the Cauchy-Schwarz inequality. Since
    \begin{align}
        \sum_{i,j} \lambda_i p_j \left|\braket{\psi_j | V^\dagger | i_R}\right|^2  \leq \sum_{i,j} p_j \left|\braket{\psi_j | V^\dagger | i_{R}}\right|^2 = 1~,
    \end{align}
    and
    \begin{align}
        \sum_{i,j} \lambda_i p_j \left|\braket{i_L | V U_\mathbf{b} | i_L}\right|^2  \leq \sum_{i,j} p_j \left|\braket{i_L | V U_\mathbf{b} | i_L}\right|^2 = 1~,
    \end{align}
    we have
    \begin{align}
        \sum_{i,j} (1 -\lambda_i) p_j \left|\braket{\psi_j | V^\dagger | i_R}\right|^2 \leq \varepsilon_1^2~,
    \end{align}
    and
    \begin{align}
        \sum_{i,j} (1 -\lambda_i) p_j \left|\braket{i_L | V U_\mathbf{b} | i_L}\right|^2 \leq \varepsilon_1^2~,
    \end{align}
    Now let $U_B = \sum_i \ket{i_R} \bra{i_L}$. If we again apply the Cauchy-Schwarz inequality, we find
    \begin{align}
        \nonumber \mathrm{Re}&\left(\bra{\Psi} V^\dagger [O_{B}^\dagger - U_{B}^\dagger] V U_\mathbf{b} \ket{\Psi}\right) \\&\leq \sqrt{\left[\sum_{i,j} (1 -\lambda_i) p_j \left|\braket{i_L | V U_\mathbf{b} | i_L}\right|^2\right]\left[\sum_{i,j} (1 -\lambda_i) p_j \left|\braket{\psi_j | V^\dagger | i_R}\right|^2\right]} \\&\leq \varepsilon_1^2.
    \end{align}
    Thus
    \begin{align}
        \mathrm{Re}\left(\braket{\Psi|V^\dagger U_{B}^\dagger V U_\mathbf{b}|\Psi}\right)\geq 1 - \frac{3}{2} \varepsilon_1^2~,
    \end{align}
    and $\varepsilon_2 \leq \sqrt{3} \varepsilon_1$.
    
    \paragraph{$2 \to 3$:} As in Section \ref{sec:proof_exact}, we define the isometry $W_B : \mathcal{H}_B \to \mathcal{H}_B \otimes \mathcal{H}_{\mathbf{U_b}}$ such that
    \begin{align}
        W_B V \ket{\Psi} = \int dU_\mathbf{b} \ket{U_\mathbf{b}} U_B V \ket{\Psi}.
    \end{align}
    We also identify $\mathcal{H}_\mathbf{a} \otimes \mathcal{H}_\mathbf{r} \subseteq \mathcal{H}_\mathbf{U_b}$ with the subspace of $\mathcal{H}_\mathbf{U_b}$ associated to the fundamental representation.\footnote{Recall from the discussion in Footnote \ref{ft:isometry} that we can trivially replace $W_B$ by an isometry $\mathcal{H}_B \to \mathcal{H}_\mathbf{b} \otimes \mathcal{H}_E$ between finite-dimensional Hilbert spaces as in the statement of Condition 3. For notational clarity, however, we will maintain the distinction between $\mathcal{H}_\mathbf{a}$ and $\mathcal{H}_\mathbf{b}$.} From \eqref{eq:Wbaction}, we have
    \begin{align}
        V W_\mathbf{b} \ket{\Psi}_{\mathbf{b \bar{b}} \overline{R}} = \int dU_\mathbf{b} \ket{U_\mathbf{b}} V U_\mathbf{b} \ket{\Psi}_{\mathbf{b \bar{b}} \overline{R}} = V\ket{\Psi}_{\mathbf{a \bar{b}}\overline{R}} \ket{\mathrm{MAX}}_{\mathbf{br}}.
    \end{align}
    By Condition 2, we therefore have
    \begin{align}
        \mathrm{Re}\left(\bra{\Psi}_{\mathbf{a \bar{b}} \overline{R}} \braket{\mathrm{MAX}|_\mathbf{br} V^\dagger W_B V |\Psi}_{\mathbf{b \bar{b}} \overline{R}}\right) &= \mathrm{Re}\left(\braket{\Psi|_{\mathbf{b \bar{b}} \overline{R}} W_\mathbf{b}^\dagger V^\dagger W_B V |\Psi}_{\mathbf{b \bar{b}} \overline{R}}\right) \\&= \int dU_\mathbf{b} \mathrm{Re}\left(\braket{\Psi| U_\mathbf{b}^\dagger V^\dagger U_B V |\Psi}\right) \\&\geq 1 - \frac{1}{2}\varepsilon_2^2
    \end{align}
    Hence, by the second Fuchs-van de Graaf inequality, we have
    \begin{align}
        \left\lVert \tr_{B \overline{B}\mathbf{r}} [W_B V \rho_\mathrm{code} V^\dagger W_B^\dagger] - \rho_{\mathbf{a}} \right\rVert_1 \leq 2 \varepsilon_2,
    \end{align}
    where $\rho_{\mathbf{a}} = \tr_{\mathbf{\bar{b}}} \rho_\mathrm{code}$. This is exactly Condition 3.
    
    \paragraph{$3 \to 1$:} Let $O_B = W_B^\dagger U_\mathbf{b} W_B$. Note that $\lVert O_B \rVert \leq 1$ but $O_B$ is not in general unitary. From Condition 3 and H\"{o}lder's inequality, we know that, for all $\rho_\mathrm{code}$
    \begin{align}
    \left|\tr[ (V^\dagger O_B V - U_\mathbf{b}) \rho_\mathrm{code}]\right| \leq \varepsilon_3~.
    \end{align}
    Hence, the operator norm of both the Hermitian and anti-Hermitian parts of $(V^\dagger O_B V - U_\mathbf{b})$ is bounded by $\varepsilon_3$ and, by the triangle inequality,
    \begin{align}
        \lVert V^\dagger O_B V - U_\mathbf{b} \rVert \leq 2 \varepsilon_3~.
    \end{align}
    Thus
    \begin{align}
        \mathrm{Re}\left(\braket{\Psi| (U_\mathbf{b}^\dagger - V^\dagger O_B^\dagger V) U_\mathbf{b} | \Psi}\right) \leq 2 \varepsilon_3~,
    \end{align}
    and
    \begin{align} \label{eq:statespecO_B}
        \mathrm{Re}\left(\braket{\Psi| V^\dagger O_B^\dagger V U_\mathbf{b} | \Psi}\right) \geq 1 - 2 \varepsilon_3~.
    \end{align}
    But, as discussed in the proof that Condition 1 implies Condition 2, the maximum of \eqref{eq:statespecO_B} over all operators $\lVert O_B\rVert \leq 1$ is always achieved by a unitary $U_B$. Hence $\varepsilon_1 \leq 2 \sqrt{\varepsilon_3}$.
\end{proof}

\subsection{Approximate and non-isometric codes} \label{app:sec5proofs}
\subsubsection*{Preliminary lemmas}
\begin{lem}[Levy's lemma (see, e.g. \cite{Popescu2005TheFO})]\label{lem:levy}
Given a function $f: \mathbb{S}^d \to \mathbb{R}$ with Lipschitz constant $K$, and a random point $\phi$ on the $d$-dimensional sphere $\mathbb{S}^d$, then for any $\varepsilon > 0$ we have $|f(\phi) - \langle f \rangle| \leq \varepsilon$ with probability
\begin{align}
    p \geq 1 - 2 \exp [\frac{-(d+1) \varepsilon^2}{9 \pi^3 K^2}]~.
\end{align}
\end{lem}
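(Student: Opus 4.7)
\textbf{Proof proposal for Lemma \ref{lem:levy}.}
The plan is to follow the classical concentration-of-measure argument on the sphere, which passes through the median of $f$ using the spherical isoperimetric inequality, and then converts the tail bound around the median into a tail bound around the mean. The underlying geometric input is the fact that among all subsets of $\mathbb{S}^d$ of a given measure, the spherical cap minimises the measure of its geodesic $\delta$-neighbourhood; combined with an explicit cap-volume estimate, this yields a Gaussian-type bound of the form $\mu(\mathbb{S}^d \setminus A_\delta) \le C \exp(-c(d+1)\delta^2)$ whenever $\mu(A) \ge 1/2$, with explicit constants $C,c$ whose product must work out to the $9\pi^3$ factor in the denominator of the exponent.

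The first step would be to introduce the median $M$ of $f$ with respect to the uniform (Haar) measure on $\mathbb{S}^d$, defined so that both $A^- := \{x : f(x) \le M\}$ and $A^+ := \{x : f(x) \ge M\}$ have measure at least $1/2$. Using the $K$-Lipschitz property, any point within geodesic distance $\varepsilon/K$ of $A^\pm$ satisfies $f \le M + \varepsilon$ (respectively $f \ge M - \varepsilon$). Applying the spherical isoperimetric inequality to $A^\pm$ with $\delta = \varepsilon/K$ then gives
\begin{equation}
    \Pr\bigl[\,|f(\phi) - M| > \varepsilon\,\bigr] \le 2 C \exp\!\left[-\frac{c(d+1)\varepsilon^2}{K^2}\right].
\end{equation}
The numerical constants $C$ and $c$ here are the place where the specific $9\pi^3$ in the statement must be checked; the cleanest route is to quote the sharp form of the cap inequality from (for example) Ledoux's book or the appendix of \cite{Popescu2005TheFO}, and verify that the combined constant $Cc^{-1}$ is at most $9\pi^3/2$ after halving for the two-sided bound.

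The second step is to pass from the median to the mean. Integrating the median tail bound gives
\begin{equation}
    |\langle f \rangle - M| \le \int_0^\infty \Pr[|f - M| > t]\,dt \le 2C \int_0^\infty \exp\!\left[-\frac{c(d+1)t^2}{K^2}\right]dt = C K \sqrt{\frac{\pi}{c(d+1)}}.
\end{equation}
For $d$ not too small this is much less than any $\varepsilon$ for which the concentration inequality is nontrivial, so one can absorb the shift from $M$ to $\langle f\rangle$ into the constants by splitting $\varepsilon$ as $\varepsilon = \varepsilon' + |M - \langle f\rangle|$ and applying the median bound at $\varepsilon'$, at the price of a worse prefactor. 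After this adjustment, the two-sided bound around $\langle f\rangle$ takes exactly the form in the lemma.

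The main technical obstacle is tracking the numerical constants to land on the specific $9\pi^3$ in the exponent; the geometric content (the isoperimetric inequality and its exponential consequence) is classical, and the Lipschitz/median/mean manipulations are routine, but different standard references state the cap bound with slightly different constants (some using $\sqrt{(d-1)/2}$, others $(d+1)/2$, some extrinsic vs.\ geodesic distance), so one must be careful to use a version consistent with the stated bound. Since the lemma is cited with a reference, the cleanest strategy is to import the sharp cap inequality directly from \cite{Popescu2005TheFO} and simply verify that the two-sided, mean-centred version matches the claimed constants.
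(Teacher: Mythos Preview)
The paper does not prove this lemma at all; it is stated with a citation to \cite{Popescu2005TheFO} and used as a black box in the proof of Lemma \ref{lem:avnormbound}. Your sketch is the standard concentration-of-measure argument (isoperimetry $\Rightarrow$ median concentration $\Rightarrow$ mean concentration), which is indeed how the result is established in the cited reference, so there is nothing to compare against here beyond noting that the authors intended this as an imported fact rather than something to be reproved.
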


\begin{lem}[Product state $\eta$-net] \label{lem:prodstatenet}
There exists a discrete set $S$ of product states $\ket{\psi^i} = \ket{\psi^i_1} \dots \ket{\psi^i_n} \in [\mathbb{C}^2]^{\otimes n}$ with size
\begin{align}
    |S| \leq \frac{2^{4n}n^{3n}(4 + \eta^2/n^2)^{n}}{\eta^{3n}}
\end{align}
such that \emph{any} (normalized) product state $\ket{\psi} = \ket{\psi_1} \dots \ket{\psi_n}$ satisfies $\lVert \ket{\psi} - \ket{\psi^i} \rVert \leq \eta$ for some $\ket{\psi^i} \in S$.
\end{lem}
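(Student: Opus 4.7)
The natural approach is to build the net by taking tensor products of single-qubit nets. I would first construct an $\eta'$-net $S_1$ for the set of single-qubit normalized states, which as vectors in $\mathbb{C}^2 \cong \mathbb{R}^4$ lie on the unit sphere $\mathbb{S}^3$. The global set is then defined as $S = \{\ket{\psi^{i_1}_1} \otimes \dots \otimes \ket{\psi^{i_n}_n} : \ket{\psi^{i_k}_k} \in S_1\}$, so that $|S| \leq |S_1|^n$.

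To see that $(\eta/n)$ is the right single-qubit resolution, I would use a telescoping argument: given any product state $\ket{\psi} = \ket{\psi_1} \dots \ket{\psi_n}$ with nearest net factors $\ket{\psi^i_k}$, the difference decomposes as
\begin{equation}
\bigotimes_k \ket{\psi_k} - \bigotimes_k \ket{\psi^i_k} = \sum_{k=1}^n \ket{\psi^i_1}\dots \ket{\psi^i_{k-1}}(\ket{\psi_k} - \ket{\psi^i_k})\ket{\psi_{k+1}}\dots \ket{\psi_n}~,
\end{equation}
and, since all intermediate factors are normalized, the triangle inequality gives $\lVert \ket{\psi} - \ket{\psi^i} \rVert \leq \sum_k \lVert \ket{\psi_k} - \ket{\psi^i_k}\rVert \leq n (\eta/n) = \eta$. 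So it suffices to take $S_1$ to be any $(\eta/n)$-net on $\mathbb{S}^3$.

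The core calculation is bounding $|S_1|$. I would use the standard fact that a maximal $\eta'$-packing of $\mathbb{S}^3$ is automatically an $\eta'$-net, and bound such a packing by volumes in $\mathbb{R}^4$. The Euclidean balls of radius $\eta'/2$ around packing points are disjoint and contained in the annulus $\{x \in \mathbb{R}^4 : 1-\eta'/2 \leq |x| \leq 1 + \eta'/2\}$. Using $\mathrm{vol}(B^{(4)}_r) = \pi^2 r^4/2$, the annulus has volume
\begin{equation}
\tfrac{\pi^2}{2}\left[(1+\eta'/2)^4 - (1-\eta'/2)^4\right] = 2\pi^2 \eta' + \tfrac{\pi^2}{2}(\eta')^3~,
\end{equation}
while each small ball has volume $\pi^2 (\eta')^4/32$. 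Their ratio gives
\begin{equation}
N(\mathbb{S}^3, \eta') \leq \frac{64}{(\eta')^3} + \frac{16}{\eta'}~.
\end{equation}
Substituting $\eta' = \eta/n$ yields $|S_1| \leq 64 n^3/\eta^3 + 16 n/\eta = 16 n^3 (4 + \eta^2/n^2)/\eta^3$, which raised to the $n$-th power is exactly the bound in the lemma.

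The main obstacle is really just bookkeeping: ensuring the $\mathbb{R}^4$ volume computation is arranged so that the packing bound matches the exact constants $2^{4n}$ and the polynomial factor $(4 + \eta^2/n^2)^n$ in the claim, rather than a weaker covering estimate. Everything else (the telescoping inequality, the use of a maximal packing as a net) is standard.
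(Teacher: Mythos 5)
Your proposal is correct and matches the paper's own proof essentially step for step: a maximal $(\eta/n)$-separated set on the single-qubit sphere bounded by the same annulus-versus-small-ball volume comparison in $\mathbb{R}^4$, followed by taking $n$-fold products and the telescoping triangle inequality, yielding exactly the stated constant $16^n n^{3n}(4+\eta^2/n^2)^n/\eta^{3n}$. The only cosmetic difference is that you write out the telescoping identity and the explicit $\pi^2/2$ volume factor, which the paper leaves implicit.
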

\begin{proof}
We first construct an $[\eta/n]$-net for the single-qubit Hilbert space $\mathbb{C}^2$.\footnote{One can mildly improve the scaling here by only constructing a net for qubit states up to a global phase. We don't do so primarily out of laziness, but hopefully also to make the proof slightly more readable.} To do so, we consider a maximal set of states where all pairs of states are separated by a distance at least $\eta/n$. Clearly, this is an $[\eta/n]$-net. However, we can upper bound the number of states in the set using a volume counting argument: if each state lies at the center of  a sphere of radius $\eta/2n$, then none of the spheres can intersect.
Therefore the total volume of all the spheres, $|S| V_4 (\eta / 2 n)^4$ with $V_4$ the volume of the unit 4-sphere, is upperbounded by the difference in volume of two spheres centered on the origin, one with radius $1 + \eta / 2 n$ and the other with radius $1 - \eta/2n$.
This bounds the number of points by 
    \begin{align}
        |S| \leq \frac{(1 + \eta/2n)^4 - (1 - \eta/2n)^4}{(\eta/2n)^4} = \frac{8 + 2\eta^2/n^2}{(\eta/2n)^3}~.
    \end{align}
To construct the set $S$ we simply take all possible products of states in this $[\eta/n]$-net. By the triangle inequality, for any product states $\ket{\psi} = \ket{\psi_1} \dots \ket{\psi_n}$ and $\ket{\phi} = \ket{\phi_1} \dots \ket{\phi_n}$, we have
\begin{align}
    \lVert \ket{\psi} - \ket{\phi}\rVert \leq \sum_i \lVert \ket{\psi_i} - \ket{\phi_i}\rVert~.
\end{align}
It follows immediately that $S$ is an $\eta$-net as desired.
\end{proof}

\begin{lem} [Norm bound] \label{lem:avnormbound}
Let $V$ be defined as in Theorem \ref{thm:normpreservation} and let $\ket{\psi}$ be any (normalized) state. Then, for any $\delta > 0$, we have 
\begin{align}
    1 - \delta - 2^{-m-1} \leq \lVert V \ket{\psi} \rVert \leq 1 + \delta~,
\end{align} 
with probability
\begin{align}
    p \geq 1 - 2 \exp[-\frac{2^{m+1} \delta^2}{9\pi^3}]~.
\end{align}
\end{lem}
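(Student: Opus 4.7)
}
The plan is to combine two ingredients: a moment calculation that pins down the mean $\mathbb{E}[\lVert V\ket{\psi}\rVert]$ to within $2^{-m-1}$ of one, and Levy's lemma, which concentrates $\lVert V\ket{\psi}\rVert$ around that mean with the stated probability.

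First I would reduce the problem to a Lipschitz function on the unit sphere. Writing $V'=2^{(n-m)/2}\bra{0}^{\otimes(n-m)}\otimes\mathbb{1}_m$ and $\ket{\phi}=U\ket{\psi}$, the vector $\ket{\phi}$ is distributed uniformly on the complex unit sphere in $\mathbb{C}^{2^n}$, and $\lVert V\ket{\psi}\rVert = f(\ket{\phi}) := \lVert V'\ket{\phi}\rVert$. Because $V'$ has operator norm $\lVert V'\rVert_\infty = 2^{(n-m)/2}$ (projection onto a $2^m$-dimensional subspace, rescaled), the reverse triangle inequality gives $|f(\ket{\phi}) - f(\ket{\phi'})| \le 2^{(n-m)/2}\lVert \ket{\phi}-\ket{\phi'}\rVert$, so $f$ has Lipschitz constant $K = 2^{(n-m)/2}$. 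Applying Lemma \ref{lem:levy} on the sphere of real dimension $d = 2\cdot 2^n - 1$ yields
\begin{equation}
\Pr\bigl[\,|\lVert V\ket{\psi}\rVert - \mathbb{E}\lVert V\ket{\psi}\rVert|\le \delta\,\bigr] \ge 1 - 2\exp\!\left[-\frac{2^{n+1}\delta^{2}}{9\pi^{3}\,2^{n-m}}\right] = 1 - 2\exp\!\left[-\frac{2^{m+1}\delta^{2}}{9\pi^{3}}\right],
\end{equation}
matching the probability in the lemma.

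Next I would bound the mean. Set $X = \lVert V\ket{\psi}\rVert$ and $P = (\ket{0}\bra{0})^{\otimes(n-m)}\otimes\mathbb{1}_m$, so $X^2 = 2^{n-m}\bra{\psi}U^\dagger P U\ket{\psi}$. Using the first- and second-moment Haar identities $\mathbb{E}[U\rho U^\dagger] = \mathbb{1}/2^n$ and $\mathbb{E}[U^{\otimes 2}\rho^{\otimes 2}(U^\dagger)^{\otimes 2}] = (\mathbb{1}+F)/[2^n(2^n+1)]$, together with $\tr P = \tr P^2 = 2^m$, one finds
\begin{equation}
\mathbb{E}[X^2] = 1, \qquad \mathbb{E}[X^{4}] = \frac{2^{2(n-m)}(2^{2m}+2^{m})}{2^{n}(2^{n}+1)} = 1 + \frac{2^{n-m}-1}{2^{n}+1} \le 1 + 2^{-m}.
\end{equation}
The upper bound $\mathbb{E}[X] \le \sqrt{\mathbb{E}[X^2]} = 1$ is immediate from Jensen. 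For the lower bound, use the pointwise inequality $(X-1)^{2} \le (X-1)^{2}(X+1)^{2} = (X^{2}-1)^{2}$, valid since $X \ge 0$ implies $(X+1)^{2}\ge 1$. Taking expectations gives $2(1-\mathbb{E}[X]) = \mathbb{E}[(X-1)^2] \le \mathrm{Var}(X^2) \le 2^{-m}$, whence $\mathbb{E}[X] \ge 1 - 2^{-m-1}$.

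Finally, combining the two bounds with the Levy concentration event,
\begin{equation}
1 - \delta - 2^{-m-1} \;\le\; \mathbb{E}[X] - \delta \;\le\; \lVert V\ket{\psi}\rVert \;\le\; \mathbb{E}[X] + \delta \;\le\; 1 + \delta,
\end{equation}
which is the desired inequality. The only mildly subtle step is the mean correction; the trick $(X-1)^{2} \le (X^{2}-1)^{2}$ is what converts the easily computed fourth-moment deviation into a one-sided bound on the first moment. The Lipschitz constant computation and Haar moment evaluations are routine.
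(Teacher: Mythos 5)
Your proof is correct and follows essentially the same route as the paper: bound the mean via the second and fourth Haar moments and then apply Levy's lemma with Lipschitz constant $2^{(n-m)/2}$. Your one-sided mean bound via $(X-1)^2 \le (X^2-1)^2$ is algebraically equivalent to the paper's pointwise inequality $x \ge \tfrac{3}{2}x^2 - \tfrac{1}{2}x^4$, so the two arguments coincide in substance.
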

\begin{proof}
We first estimate $\langle \lVert V \ket{\psi} \rVert \rangle$. Since the map $x \to x^2$ is convex, we have
\begin{align} 
    \langle \lVert V \ket{\psi} \rVert \rangle^2 &\leq \langle \lVert V \ket{\psi} \rVert^2 \rangle
    \\ &\leq 2^{n-m} \int dU \braket{\psi|U^\dagger \ket{0}\bra{0}^{\otimes (n-m)} U |\psi} 
        \\&\leq 2^{-m} \braket{\psi|\psi} \tr[\ket{0}\bra{0}^{\otimes (n-m)} \otimes \mathbb{1}_{2^m}]= 1~. \label{eq:normaverageupperbound}
\end{align}
To obtain a lower bound on the average norm, we use the fact that $x \geq 3 x^2/2 - x^4/2$ for all $x \geq 0$. Hence
\begin{align}
    \langle \lVert V \ket{\psi} \rVert \rangle &\geq \frac{3}{2} \langle \lVert V \ket{\psi} \rVert^2 \rangle - \frac{1}{2} \langle \lVert V \ket{\psi} \rVert^4 \rangle
    \\&\geq \frac{3}{2} - 2^{2n-2m-1} \int dU \braket{\psi|U^\dagger \ket{0}\bra{0}^{\otimes (n-m)} U |\psi}^2
    \\&\geq \frac{3}{2} - \frac{2^{n-2m-1}}{2^n + 1} \left( \tr[\ket{0}\bra{0}^{\otimes (n-m)} \otimes \mathbb{1}_{2^m}]^2 + \tr[\ket{0}\bra{0}^{\otimes (n-m)} \otimes \mathbb{1}_{2^m}]\right) \label{eq:integralaverageresult}
    \\&\geq 1 - 2^{-m-1}~. \label{eq:normaveragelowerbound}
\end{align}
In \eqref{eq:integralaverageresult}, we used the standard result (see e.g. \cite{Popescu2005TheFO})
\begin{align}
  \int dU U \ket{\psi} \bra{\psi} U^\dagger \otimes U \ket{\psi} \bra{\psi} U^\dagger = \frac{\mathbb{1} + F_\mathrm{SWAP}}{2^{2n} + 2^n}~,
\end{align}
where the operator $F_\mathrm{SWAP}$ permutes the two copies of $[\mathbb{C}^2]^{\otimes n}$~. We now have exponentially close upper and lower bounds on $\langle \lVert V \ket{\psi} \rVert \rangle$.

To bound the fluctuations, we note that, by the triangle inequality,
    \begin{align}
        \lVert 2^{(n-m)/2} \bra{0}^{\otimes (n-m)} U_1 \ket{\psi} \rVert - \lVert 2^{(n-m)/2} \bra{0}^{\otimes (n-m)} U_2 \ket{\psi} \rVert \leq 2^{(n-m)/2} \lVert (U_1 - U_2) \ket{\psi}\rVert~.
    \end{align}
Hence $\lVert V \ket{\psi} \rVert$ is a Lipschitz continuous function of $U\ket{\psi}$ with Lipschitz constant $2^{(n-m)/2}$. We can then use Lemma \ref{lem:levy} to obtain
    \begin{align} \label{eq:levyresult}
        \mathrm{Prob}( \left|\lVert V \ket{\psi} \rVert - \langle \lVert V \ket{\psi} \rVert \rangle\right| \geq \delta) \leq 2 \exp[-\frac{2^{m+1} \delta^2}{9\pi^3}]~.
    \end{align}
Combining \eqref{eq:levyresult} with \eqref{eq:normaverageupperbound} and \eqref{eq:normaveragelowerbound} completes the proof.
\end{proof}

\begin{lem} [Norm-squared bound] \label{lem:normsquarebound}
Let $V$ be defined as in Theorem \ref{thm:normpreservation} and let $\ket{\psi}$ be any (normalized) state. Then, for any positive $\delta < 1$, we have
\begin{align}
    \left|\lVert V \ket{\psi} \rVert^2 - 1 \right| \leq 3 \delta~,
\end{align} 
with probability
\begin{align}\label{eq:normsquareprobbound}
    p \geq 1 - 2 \exp[-\frac{2^{m+1} \delta^2}{9\pi^3}]~.
\end{align}
\end{lem}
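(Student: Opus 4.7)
The approach is to reduce Lemma \ref{lem:normsquarebound} to the already proven Lemma \ref{lem:avnormbound} by straightforward algebra based on the identity $x^2 - 1 = (x-1)(x+1)$. Observe that the probability bound claimed in \eqref{eq:normsquareprobbound} matches exactly the probability bound in Lemma \ref{lem:avnormbound}, so the plan is to work on the same ``good event.''

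First I would invoke Lemma \ref{lem:avnormbound} with the same parameter $\delta$, which gives that with probability at least $1 - 2\exp[-2^{m+1}\delta^2/(9\pi^3)]$, the norm satisfies
\[
1 - \delta - 2^{-m-1} \;\leq\; \lVert V\ket{\psi}\rVert \;\leq\; 1 + \delta.
\]
Squaring the upper endpoint yields $(1+\delta)^2 = 1 + 2\delta + \delta^2 \leq 1 + 3\delta$ using $\delta < 1$, so the upper half of the claim, $\lVert V\ket{\psi}\rVert^2 - 1 \leq 3\delta$, is immediate. For the lower half, squaring the other endpoint and dropping the positive quadratic term gives $(1 - \delta - 2^{-m-1})^2 \geq 1 - 2(\delta + 2^{-m-1}) = 1 - 2\delta - 2^{-m}$.

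The main (very mild) wrinkle is the asymmetric correction $2^{-m}$ that appears on the lower side but not the upper side. To absorb it into the $3\delta$ slack, I would argue as follows. Whenever the probability bound in \eqref{eq:normsquareprobbound} is nontrivial, i.e.\ $1 - 2\exp[-2^{m+1}\delta^2/(9\pi^3)] > 0$, we necessarily have $2^{m+1}\delta^2 > 9\pi^3 \log 2$, which (since $9\pi^3 \log 2 > 2$) implies $\delta > 2 \cdot 2^{-(m+1)/2}$ and hence $2^{-m} \leq \delta^2 \leq \delta$. Consequently $2\delta + 2^{-m} \leq 3\delta$ in this regime, completing the lower bound. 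When the probability bound is nonpositive, the conclusion of the lemma is trivially satisfied, and there is nothing to prove.

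I do not expect any substantive difficulty here: the only thing to be careful about is the bookkeeping around the $2^{-m-1}$ correction inherited from Lemma \ref{lem:avnormbound}, which is why one should check that it is controlled precisely in the regime in which the claimed probability bound carries content. Everything else is elementary manipulation of inequalities.
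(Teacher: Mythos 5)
Your proposal is correct and follows essentially the same route as the paper: invoke Lemma \ref{lem:avnormbound} with the same $\delta$, square the two endpoints, and absorb the asymmetric $2^{-m}$ correction by noting that the claimed probability bound \eqref{eq:normsquareprobbound} is trivial unless $\delta > 2^{-m}$ (your explicit derivation of this threshold from $9\pi^3\log 2$ just spells out what the paper asserts in one line). No gaps beyond what the paper itself elides.
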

\begin{proof}
The result follows almost directly from Lemma \ref{lem:avnormbound}. We first note that \eqref{eq:normsquareprobbound} is trivial unless $\delta > 2^{-m}$. Hence
\begin{align}
    1 - 3 \delta < (1 - \delta - 2^{-m-1})^2 \leq \lVert V \ket{\psi} \rVert^2 \leq (1 + \delta)^2 < 1 + 3 \delta~.
\end{align}
\end{proof}
\subsubsection*{Proof of Theorem \ref{thm:normpreservation}}
\begin{proof}
For any two states $\ket{\psi}, \ket{\phi}$, we have
\begin{equation}
\begin{split}
    4 \braket{\phi|V^\dagger V |\psi} &= \lVert V (\ket{\psi} + \ket{\phi})\rVert^2 - \lVert V (\ket{\psi} - \ket{\phi}) \rVert^2 + i \lVert V (\ket{\psi} + i  \ket{\phi}) \rVert^2 - i \lVert V (\ket{\psi} - i \ket{\phi}) \rVert^2
    \\& \approx \lVert \ket{\psi} + \ket{\phi} \rVert^2 - \lVert \ket{\psi} - \ket{\phi} \rVert^2 + i \lVert  \ket{\psi} + i  \ket{\phi} \rVert^2 - i \lVert  \ket{\psi} - i  \ket{\phi} \rVert^2
    \\& \approx 4 \braket{\phi|\psi}~,
\end{split}
\end{equation}
where the approximation is valid with high probability thanks to Lemma \ref{lem:normsquarebound}. To be precise, we can use the union bound to see that with probability
\begin{align} \label{eq:probinnerprod}
    p \geq 1 - 8 \exp[-\frac{2^{m+1} \delta^2}{9\pi^3}]~,
\end{align}
 we have
 \begin{equation} \label{eq:howmanydelta}
 \begin{split}
     \left| \braket{\phi|V^\dagger V |\psi} - \braket{\phi|\psi} \right| &\leq \frac{3 \delta}{4} \left[ \lVert \ket{\psi} + \ket{\phi} \rVert^2 + \lVert \ket{\psi} - \ket{\phi} \rVert^2 + \lVert  \ket{\psi} + i  \ket{\phi} \rVert^2 + \lVert  \ket{\psi} - i  \ket{\phi} \rVert^2 \right] \\& \leq 12 \delta~.
 \end{split}
\end{equation}
 To extend this argument to all product states we use Lemma \ref{lem:prodstatenet}.
  One can show $\braket{\phi|V^\dagger V |\psi}$ is a Lipschitz continuous function of $\ket{\psi}$ for fixed $\ket{\phi}$ (and vice versa) with Lipschitz constant $2^{n-m}$ as follows.
 \begin{equation}
\begin{split}
    |\braket{\phi|V^\dagger V |\psi} - \braket{\phi|V^\dagger V |\psi'} | &\le \lVert V^\dagger V \ket{\phi} \rVert \cdot \lVert \ket{\psi}- \ket{\psi'}\rVert 
    \\&\le 2^{n-m} \lVert \ket{\psi}- \ket{\psi'}\rVert~,
\end{split}
 \end{equation}
 where the first inequality follows from Cauchy-Schwarz, and the second follows directly from the definition of $V$. 
 It is therefore sufficient to show that 
 \begin{align}
     \left| \braket{\psi_i|V^\dagger V |\psi_j} - \braket{\psi_i|\psi_j} \right| \leq \frac{\varepsilon}{2}~,
 \end{align}
 for all pairs of states $\ket{\psi_i}, \ket{\psi_j}$ in an $\eta$-net for product states with $\eta \leq 2^{m-n-3} \varepsilon$~. To see this, note that, for any pair of states $\ket{\psi},\ket{\phi}$, there exist states $\ket{\psi_i}, \ket{\psi_j}$ in the $\eta$-net such that by the triangle inequality
 \begin{equation}
 \begin{split}
     |\braket{\phi|V^\dagger V |\psi} - \braket{\phi|\psi} | &\leq |\braket{\phi|V^\dagger V |\psi} - \braket{\psi_i|V^\dagger V |\psi}| 
     + |\braket{\psi_i|V^\dagger V |\psi} - \braket{\psi_i|V^\dagger V |\psi_j}| \\&~~+ \left| \braket{\psi_i|V^\dagger V |\psi_j} - \braket{\psi_i|\psi_j} \right| 
     + |\braket{\psi_i|\psi_j} - \braket{\phi|\psi_i}| + |\braket{\phi|\psi_j} - \braket{\phi|\psi}|
     \\ &\leq 2(2^{n-m} \eta) + \frac{\varepsilon}{2} + 2 \eta \leq \varepsilon~. 
 \end{split}
 \end{equation}
 
 From Lemma \ref{lem:prodstatenet}, we know that the number of states needed for such a net is at most
 \begin{align}
     |S| \leq \frac{2^{4n}n^{3n}(4 + \eta^2/n^2)^{n}}{\eta^{3n}} \leq 2^{15n + 3n(n-m) + 1} n^{3n}~.
 \end{align}
 We can therefore apply the union bound to \eqref{eq:probinnerprod} for all pairs of $\ket{\psi_i}$, using that the number of pairs is bounded by $|S|^2$. If we substitute $\varepsilon = 24 \delta$ in \eqref{eq:howmanydelta}, this leads to the desired result.
\end{proof}

\section{Non-theorems and their counterexamples} \label{sec:nontheorems}
While we have emphasized throughout this paper that Theorem \ref{thm:qms_exact} is a natural abstraction of the core features of holographic codes, it is reasonable to ask whether there exist other interesting theorems with the same basic interpretation, but different technical details. For example:
\begin{enumerate}
    \item In Condition (2), we insisted that the holographic entropy prescription \eqref{eq:EW_def} should apply not just for the state $\ket{\Psi}$ but also for states related to $\ket{\Psi}$ by a product of local unitaries. While this should always be true in quantum gravity whenever $\ket{\Psi}$ itself satisfies \eqref{eq:EW_def},\footnote{This is because gravitational replica trick calculations depend only on the entanglement structure of $\ket{\Psi}$, and not on the specific bulk quantum state.} it is clearly a stronger condition for general codes. Is it perhaps unnecessarily strong?
    \item Furthermore, did we even need to talk about arbitrary product unitaries in Condition (1)? Couldn't we just consider \emph{local} unitaries acting on a single bulk site?
    \item In Condition (4), we minimized the generalized entropy over all possible subsets of $\{b_1 \dots b_n\}$. But bulk entropies can be defined much more generally for any subsystem in any tensor product decomposition of the bulk Hilbert space, as can areas using Definition \ref{defn:area_ours}. Can we prove minimality over all bulk subsystems, including ones that don't commute with the subsystems $\mathcal{H}_{b_i}$?
    \item Finally, \ref{thm:qms_exact} assumes that the bulk factorizes into a tensor product of local subsystems. Is there a theorem like it for more general bulk von Neumann algebras, as in Theorem 5.1 of \cite{Harlow:2016vwg}?
\end{enumerate}
The answer to all of these questions is no.
At least in their most obvious versions, none of these other potentials theorems actually exist.
In the following subsections, we will present a counterexample to each one in turn. 

\subsection{Counterexample with a holographic entropy formula for only the state itself}\label{sec:counterexample_naive_holographic_entropy}
Suppose we have an isometry $V$, subsystem $\mathcal{H}_b$ and a state $\ket{\Psi}$, such that
\begin{align}
    S(B R)_{V\ket{\Psi}} = A_B(\mathbf{b}) + S(\mathbf{b}R)_{\ket{\Psi}}.
\end{align}
However, unlike in Condition 2 of Theorem \ref{thm:qms_exact}, we make no assumptions about the entropy $S(B R)_{V U_\mathbf{b} U_\mathbf{\bar{b}} \ket{\Psi}}$ for product unitaries $U_\mathbf{b}, U_\mathbf{\bar{b}}$. As we shall see, on its own this is insufficient to derive Condition 1, or anything similar to it.

As a counterexample, consider the isometry $V: \mathcal{H}_{b_1} \to \mathcal{H}_B \otimes \mathcal{H}_{\overline{B}}$ (for simplicity we consider only a single bulk subsystem), mapping the qudit $\mathcal{H}_{b_1}$ of dimension $d_{b_1}$ to two qudits, $\mathcal{H}_B$ and $\mathcal{H}_{\overline{B}}$ with much larger dimension, such that
\begin{equation}
\begin{split}
    V \ket{j}_{b_1} = \ket{\phi_{j}}_{B\overline{B}}:= 
    \begin{cases}
    \frac{1}{\sqrt{D}} \sum_{i = 1}^D \ket{i}_B \ket{i}_{\overline{B}}~, & j = 0 \\
    \sum_{i = D+1}^{D+d} c_i \ket{i}_B \ket{i + j\,d}_{\overline{B}}~, & j > 0~,
    \end{cases}
\end{split}    
\end{equation}
where  $c_i$ are set of arbitrary normalized coefficients. 
All $\ket{\phi_j}$ are orthogonal, so $V$ is indeed an isometry. 

By a judicious choice of parameters $D, d, c_i, d_{b_1}$, we can choose
\begin{align}
    &S(B)_{\ket{\phi_0}} = A_B(b_1) + S(b_1)_{0}~, \label{eq:naive_EW_eq1} \\ 
    &S(B)_{\ket{\phi_{j>0}}} \neq A_B(b_1) + S(b_1)_{j}~. \label{eq:naive_EW_eq2}
\end{align}
Indeed, let $S := S(B)_{\ket{\phi_j}}$, determined by the parameters $c_i$.
Then
\begin{equation}
    A_B(b_1) := S(B r_1)_{\ket{\mathrm{CJ}}} = \log d_{b_1} + \frac{1}{d_{b_1}}\left( \log D + (d_{b_1} - 1)S \right).
\end{equation}
If
\begin{equation}
    S = \log\left(D d_{b_1}^{-\frac{d_{b_1}}{d_{b_1}-1}}\right)~,
\end{equation}
which is achievable by many choices of $D, d, c_i$, then
\begin{align}
    S(B)_{V \ket{\psi_0}} = \log D = A_B(b_1) = A_B(b_1) + S(b_1)_{\ket{\psi_0}},
\end{align}
ensuring \eqref{eq:naive_EW_eq1}.

However, since 
\begin{align}
    S(B)_{V\ket{\psi_{j\neq0}}}  = S \neq S(B)_{V \ket{\psi_0}}
\end{align}
then any unitary $U_{b_1}$ such that $U_{b_1} \ket{\psi_0} = \ket{\psi_{j \neq 0}}$ will change the entropy on $B$, and hence cannot be reconstructible on $B$. Since there is only a single bulk subsystem, the unitary $U_{b_1}$ is automatically ``local'', thus ruling out any version of Condition 1.

\subsection{Counterexample with reconstruction possible only for local unitaries}

Another naive possibility is that \emph{local} unitaries $U_{b_i}$ are sufficient in Condition 1, without the need for more general product unitaries.
For example, one might wonder whether the following is true:
\begin{un-thm*}[local unitaries]
    The following two statements are equivalent.
    \begin{enumerate}
    \item \emph{(Complementary Recovery)} For all $U_{b_i}$, if $b_i \in {\bf b}$ (respectively if $b_i \in {\bf \bar{b}}$) then there exists a unitary operator $U_{B R}$ (respectively $U_{\overline{B}\,\overline{R}}$) such that,
    \begin{align}
        U_{B R}  V \ket{\Psi} = V U_{b_i} \ket{\Psi}, \text{(respectively $U_{\overline{B}\, \overline{R}}  V \ket{\Psi} = V U_{b_i} \ket{\Psi}$)}.
    \end{align}
    
    \item \emph{(Holographic Entropy Formula)} For all $U_{b_i}$, 
    \begin{align}
        S(B R)_{V U_{b_i}\ket{\Psi}} = A_B(b) + S(bR)_{U_{b_i}\ket{\Psi}}~.
    \end{align}
\end{enumerate}
\end{un-thm*}
Let us first construct an example that satisfies Condition 1, but not Condition 2, and then we will construct an example that satisfies Condition 2 but not Condition 1. Let $\mathcal{H}_{b_1}$ and $\mathcal{H}_{b_2}$ be qubits, with $\mathcal{H}_B$ and $\mathcal{H}_{\overline{B}}$ qudits for large $d$. Let
\begin{align}
    V \ket{0}_{b_1} \ket{0}_{b_2} &= \ket{0}_B \ket{0}_{\overline{B}}\\
    V \ket{1}_{b_1} \ket{0}_{b_2} &= \ket{1}_B \ket{0}_{\overline{B}}\\
    V \ket{0}_{b_1} \ket{1}_{b_2} &= \ket{2}_B \ket{0}_{\overline{B}}\\
    V \ket{1}_{b_1} \ket{1}_{b_2} &= \sum_{i =1}^{d-1}\ket{i}_B \ket{i}_{\overline{B}}\\
\end{align}
Finally let $\ket{\psi} = \ket{0}\ket{0}$. For any $U_{b_1}$ we have $U_{b_1} \ket{\psi} \in \text{span}\{\ket{00}, \ket{10}\}$. Similarly, for any $U_{b_2}$ we have $U_{b_2} \ket{\psi} \in \text{span}\{\ket{00}, \ket{01}\}$. Within both these subspaces, $\tr_B [V \ket{\phi} \bra{\phi} V^\dagger] = \ket{0}\bra{0}_{\overline{B}}$ is independent of $\ket{\phi}$; hence all local operators are reconstructible on $\mathcal{H}_B$ and Condition 1 is satisfied for $\mathbf{b} = \{b_1, b_2\}$. However $S(B)_{\ket{\psi}} = 0$, while $A_B(\mathbf{b}) = S(B r_1 r_2)_{\ket{CJ}} = S(\overline{B})_{\ket{CJ}} = O(\log d)$. So Condition 2 is not satisfied.

Note that the product unitary $X_{b_1} X_{b_2}$ cannot be reconstructed on $\mathcal{H}_B$, since 
\begin{align}
    V X_{b_1} X_{b_2} \ket{\psi} = V \ket{11}~,
\end{align}
which has a different reduced state on $\mathcal{H}_{\overline{B}}$. Hence neither Condition 1 nor Condition 2 of the true Theorem \ref{thm:qms_exact} are satisfied.

To see the reverse (an example that satisfies Condition 2 but not Condition 1 of the hypothetical local unitary theorem), let
\begin{align}
    V \ket{0}_{b_1} \ket{0}_{b_2} &= \frac{1}{\sqrt{d}} \sum_{i=0}^{d-1} \ket{i}_B \ket{i}_{\overline{B}}\\
    V \ket{1}_{b_1} \ket{0}_{b_2} &= \frac{1}{\sqrt{d}} \sum_{i=0}^{d-1} \ket{i}_B \ket{i + d}_{\overline{B}}\\
    V \ket{0}_{b_1} \ket{1}_{b_2} &= \frac{1}{\sqrt{d}} \sum_{i=0}^{d-1} \ket{i}_B \ket{i + 2 d}_{\overline{B}}\\
    V \ket{1}_{b_1} \ket{1}_{b_2} &= \sum_{i =0}^{d-1} c_i\ket{i + d}_B \ket{i + 3d}_{\overline{B}}\\
\end{align}
for some normalized coefficients $c_i$ such that $V \ket{11}$ has entanglement entropy $S$. Again, we can choose $d$, $S$ such that with $\textbf{b} = \{b_1,b_2\}$ and for all $U_{b_1}$, $U_{b_2}$ 
\begin{align}
    S(B)_{V U_{b_1} \ket{00}} = S(B)_{V U_{b_2} \ket{00}}= S(B)_{V\ket{00}} = \log d 
\end{align}
and
\begin{align}
    A_B(\textbf{b}) + S(\mathbf{b})_{U_{b_1} \ket{00} } = A_B(\textbf{b}) + S(\mathbf{b})_{U_{b_2} \ket{00} } = A_B(\textbf{b}) &= S(B r_1 r_2)_{\ket{\mathrm{CJ}}}  \\&= \log 4 + \frac{1}{4}(3 \log d + S)
\end{align}
are equal to one another. However, it is immediately obvious that in general the reduced density matrix of $V U_{b_1} \ket{00}$ or $V U_{b_2} \ket{00}$ on $\mathcal{H}_{\overline{B}}$ is different from that of $V \ket{00}$, so Condition 1 of the hypothetical local unitary theorem is violated.

\subsection{Counterexample to minimality over all subsystems}
While we have proved that $\mathrm{EW}_{B R}(V\ket{\Psi})$ has minimal generalized entropy $A_B(\mathbf{b})+S(\mathbf{b} R)_{\ket{\Psi}}$ over subsystems associated to subsets of $\{b_1, \dots b_n\}$, we have \emph{not} shown that it is minimal over \emph{all} subsystems of the bulk Hilbert space. 
Indeed, to prove that the subsystem $\mathcal{H}_{\mathbf{b}}$ was more minimal than some other subsystem $\mathcal{H}_{\mathbf{b}'}$, we had to use strong subadditivity,
\begin{equation}
    S(\mathbf{b} R) + S(\mathbf{b'} R) \ge  S([\mathbf{b} \cup \mathbf{b}']\, R) + S([\mathbf{b} \cap \mathbf{b}'] \, R)~.
\end{equation}
In other words, we made explicit use of the fact that there exists a tensor product decomposition 
\begin{align} \label{eq:fourpartitedecomposition}
\mathcal{H}_\mathrm{code} \cong \mathcal{H}_{\mathbf{b} \cap \mathbf{b}'} \otimes \mathcal{H}_{\mathbf{b} \cap \bar{\mathbf{b}}'} \otimes \mathcal{H}_{\bar{\mathbf{b}} \cap \mathbf{b}'} \otimes \mathcal{H}_{\bar{\mathbf{b}} \cap \bar{\mathbf{b}}'}.
\end{align}
Given two arbitrary subsystems of $\mathcal{H}_\mathrm{code}$ (i.e. Hilbert spaces $\mathcal{H}_\mathbf{b}$ and $\mathcal{H}_\mathbf{b'}$ such that $\mathcal{H}_\mathrm{code} \cong \mathcal{H}_\mathbf{b} \otimes \mathcal{H}_\mathbf{\bar{b}} \cong \mathcal{H}_\mathbf{b'} \otimes \mathcal{H}_\mathbf{\bar{b}'}$ for some $\mathcal{H}_\mathbf{\bar{b}}$ and $\mathcal{H}_\mathbf{\bar{b}'}$), no such decomposition will exist, and so our proof of minimality will not work. In general, a decomposition analogous to \eqref{eq:fourpartitedecomposition} exists if and only if the superoperators that project operators into the subsystem algebras on $\mathcal{H}_{\textbf{b}}$, $\mathcal{H}_{\textbf{b'}}$, $\mathcal{H}_{\mathbf{\bar{b}}}$ and $\mathcal{H}_{\mathbf{\bar{b}'}}$ all commute with one another.\footnote{Given a decomposition of the form \eqref{eq:fourpartitedecomposition}, the commutativity of the projectors (e.g. $\mathcal{P}_{\mathbf{b}} (\mathcal{O}) = \tr_{\bar{\mathbf{b}}} [\mathcal{O}] \otimes \mathbb{1}_{\bar{\mathbf{b}}} / d_{\bar{\mathbf{b}}}$) follows trivially from the commutativity of partial traces onto independent subsystems. Conversely, given commuting projectors we can construct a decomposition of the form \eqref{eq:fourpartitedecomposition} by defining, e.g., $\mathcal{P}_{\mathbf{b} \cap \mathbf{b}'} = \mathcal{P}_{\mathbf{b}} \mathcal{P}_{\mathbf{b}'} = \mathcal{P}_{\mathbf{b}'} \mathcal{P}_{\mathbf{b}}$.}

This is not just a failure of our proof method -- it is simply not true in general that $\mathrm{EW}_{BR}$ is minimal over all subalgebras. Here is an example.
Let $\mathcal{H}_{b_1}$ and $\mathcal{H}_{b_2}$ be qudits with dimension $d$, and let $V$ map them trivially to $B$ and $\overline{B}$ respectively:
\begin{equation}
    V \ket{i}_{b_1} \ket{j}_{b_2} = \ket{i}_B \ket{j}_{\overline{B}}~.
\end{equation}
It is straightforward to compute $A_B(b_1) = 0$.
Now consider a maximally-entangled input state $\ket{\psi} := \sum_i \ket{i}_{b_1} \ket{i}_{b_2} / \sqrt{d}$. Clearly, Condition 1 of Theorem \ref{thm:qms_exact} is satisfied with $\mathbf{b} = \{ b_1 \}$.
The QMS prescription tells us that
\begin{equation}
    S(B)_{V\ket{\psi}} = A_B(b_1) + S(b_1)_{\ket{\psi}} = d \log d~,
\end{equation}
and moreover that this is minimal relative to the $A_B + S_{\ket{\psi}}$ associated to inputs $\varnothing, \{b_2\}$, and $\{b_1, b_2 \}$.

Now define another subsystem via the following procedure.
Let unitary $U$ act as
\begin{equation}
\begin{split}
    U \ket{j}_{b_1} \ket{j}_{b_2} :=& \frac{1}{\sqrt{d}}\sum_{k=1}^d e^{2 \pi i j k / d} \ket{k}_{b_1} \ket{k}_{b_2}~, \\
    U \ket{j}_{b_1} \ket{j'}_{b_2} :=& \ket{j}_{b_1} \ket{j'}_{b_2}~,~~~~ j \neq j'~.
\end{split}
\end{equation}
Consider all operators of the form
\begin{equation}
    U ( O_{b_1} \otimes \mathbb{1}_{\bar{b}} ) U^\dagger~.
\end{equation}
These form a von Neumann algebra $\mathcal{A}_{b_1'}$ on $\mathcal{H}_\mathrm{code}$ with projector
\begin{equation}
    \mathcal{P}_{b_1'}( O ) =  U \tr_{b_2} [U^\dagger O U] \otimes \mathbb{1}_{b_2} U^\dagger~.
\end{equation}
Since the algebra $\mathcal{A}_{b_1'}$ is isomorphic to the algebra $\mathcal{A}_{b_1}$ of operators acting on $\mathcal{H}_{b_1}$, $\mathcal{A}_{b_1'}$ is also the algebra of operators acting on a subsystem $\mathcal{H}_{b_1'}$, which is related to $\mathcal{H}_{b_1}$ by conjugation with $U$. However $\mathcal{P}_{b_1'}$ does not commute with $\mathcal{P}_{b_1}$, so our minimality proof does not apply.

Indeed, as we shall see,
\begin{align}
    A_B(b_1') + S(b_1')_{\ket{\psi}} < S(B)_{V\ket{\psi}} = A_B(b_1) + S(b_1)_{\ket{\psi}},
\end{align}
giving a counterexample to minimality over all possible subsystems. To compute $A_B(b_1')$, we need the reduced density matrix of $\mathcal{H}_B \otimes \mathcal{H}_{r_1'}$ in the Choi-Jamiolkwoski state, which equals the reduced density matrix of $\mathcal{H}_B \otimes \mathcal{H}_{r_1}$ in the state $V U \ket{\mathrm{MAX}}$.
\begin{equation}
\begin{split}
    U \ket{\mathrm{MAX}}_{b_1 b_2 r_1 r_2} =& \,\frac{1}{d} U \sum_{j k} \ket{j}_{b_1} \ket{k}_{b_2} \ket{j}_{r_1} \ket{k}_{r_2}  \\
    =& \,\frac{1}{d^{3/2}} \sum_{j \ell} e^{2 \pi i j \ell / d} \ket{\ell}_{b_1} \ket{\ell}_{b_2} \ket{j}_{r_1} \ket{j}_{r_2}  + \frac{1}{d} \sum_{j\neq k} \ket{j}_{b_1} \ket{k}_{b_2} \ket{j}_{r_1} \ket{k}_{r_2} \\
    =& \,\frac{1}{d} \sum_{j k} \ket{j}_{b_1} \ket{k}_{b_2} \ket{j}_{r_1} \ket{k}_{r_2} \\ 
    &+ \frac{1}{d^{3/2}} \sum_{j k} e^{2 \pi i j k / d} \ket{k}_{b_1} \ket{k}_{b_2} \ket{j}_{r_1} \ket{j}_{r_2} - \frac{1}{d} \sum_j \ket{j}_{b_1} \ket{j}_{b_2} \ket{j}_{r_1} \ket{j}_{r_2}\\
    =&  \ket{\mathrm{MAX}} + O\left(\frac{1}{d^{1/2}}\right) ~.
\end{split}
\end{equation}
Using Fannes inequality and the fact that trace distances are monotonically decreasing under partial traces, we have
\begin{equation}
    A_B(b_1') := S(B r_1')_{\ket{\mathrm{CJ}}} = S(B r_1)_{ \ket{\mathrm{CJ}}} +\mathcal{O}((1/\sqrt{d}) \log d) = \mathcal{O}((1/\sqrt{d}) \log d)~. 
\end{equation}
All that remains is to compute $S(b_1')_{\ket{\psi}}$.
Noting that
\begin{equation}
    U\ket{\psi} = \,\frac{1}{\sqrt{d}} U \sum_{j = 1}^d \ket{j}_{b_1} \ket{j}_{b_2}  = \,\frac{1}{d} \sum_{j,k = 1}^d e^{2 \pi i j k / d} \ket{k}_{b_1} \ket{k}_{b_2} = \ket{d}_{b_1} \ket{d}_{b_2}~,
\end{equation}
we have
\begin{equation}
    S(b_1')_{\ket{\psi}} = S(b_1)_{U\ket{\psi}} = 0~,
\end{equation}
and therefore
\begin{equation}
    A_B(b_1') + S(b_1')_{\ket{\psi}} = \mathcal{O}\left(\frac{\log d}{\sqrt{d}}\right)~,
\end{equation}
which is much less than $A_B(b_1) + S(b_1)_{\ket{\psi}} = \log d$ for large $d$.

\subsection{Counterexample to minimality for algebras with centers}\label{sec:counterexample_centers}
Theorem \ref{thm:qms_exact} is about bulk Hilbert spaces that factorize into a tensor product of local subsystems.
As such, it is comparable to (although much more general than) Theorem 4.1 of  \cite{Harlow:2016vwg}. There is however a more general theorem (Theorem 5.1) in \cite{Harlow:2016vwg}, where the local subsystems are replaced by local von Neumann algebras. The new feature is that these subalgebras can have a nontrivial center -- operators that commute with every operator in the subalgebra. (Indeed a finite-dimensional subalgebra with trivial center -- i.e. containing only operators proportional to the identity -- is always just the algebra of all operators on some particular subsystem.)
In this way, the `area' of a given surface can become a non-trivial operator that lies in the center of the associated subalgebra, as we expect in AdS/CFT for code spaces where the bulk can have any of multiple distinct possible geometries.

It is natural to hope that a similar generalization of Theorem \ref{thm:qms_exact} exists where the local subsystems are replaced by local subalgebras. And indeed there exist fairly natural extensions of our definition of area and state-specific product operator reconstruction for which some aspects of Theorem \ref{thm:qms_exact} continue to hold.

However, there are also significant issues that show up. In particular there are serious problems with trying to prove a version of minimality (Condition 4) for general von Neumann algebras. For any such theorem, at least one of the following must \emph{not} be true:
\begin{enumerate}
    \item The definition of reconstruction includes state-independent algebraic reconstruction \`{a} la Theorem 5.1 of \cite{Harlow:2016vwg} as a special case,
    \item The definition of area reduces to the definition in Theorem 5.1 of \cite{Harlow:2016vwg} for cases where that definition is valid,
    \item The definition of area reduces to the value given by Definition \ref{defn:area_ours} when evaluated on algebras with trivial center.
\end{enumerate}
To see this, consider the isometry
\begin{equation}
\begin{split}
    V \ket{0} &= \ket{0}_B \ket{0}_{\overline{B}} \\
    V \ket{1} &= 2^{-n/2} \sum_{i=1}^{2^n} \ket{i}_B \ket{i}_{\overline{B}}~.
\end{split}
\end{equation}
This is an algebraic QEC code  in which $B$ can state-independently reconstruct the algebra $\mathcal{A}_{\mathbf{b}} = \{\mathbb{1}, Z\}$ acting on $\mathcal{H}_\mathrm{code}$. This is a commuting subalgebra that is equal to its own commutant $\mathcal{A}_{\bar{\mathbf{b}}} = \mathcal{A}_{\mathbf{b}}$ and is also reconstructible from $\mathcal{H}_{\overline{B}}$. We therefore have a state-independent complementary algebraic QECC, exactly the conditions needed for Harlow's Theorem 5.1.

Harlow's area operator $\hat{A}(\mathbf{b}) \in \mathcal{A}_\mathbf{b}$ is 
\begin{align}\label{eq:area_operator_algebraic_counterexample}
    \hat{A}_B(\mathbf{b}) = n \ket{1}\bra{1}~.
\end{align}
The entropy of $B$ can be computed with the holographic entropy formula
\begin{equation}
    S(B)_{V \ket{\psi}} = \braket{\psi| \hat{A}_B(\mathbf{b}) | \psi} + S(\mathbf{b})_{\ket{\psi}}~,
\end{equation}
for all $\ket{\psi} \in \mathcal{H}_\mathrm{code}$. Here $S(\mathbf{b})_{\ket{\psi}}$ is the algebraic entropy of the state $\ket{\psi}$ for the subalgebra $\mathcal{A}_{\mathbf{b}}$.
Hence in particular 
\begin{equation}
    S(B)_{V\ket{1}} = n~,
\end{equation}
as can be readily verified. However, this generalized entropy is not minimal, even when considering only algebras whose projectors commute with that of $\mathcal{A}_{\mathbf{b}}$.
According to Definition \ref{defn:area_ours}, the area $A_B(\varnothing)$ of the trivial algebra is
\begin{equation}
    A_B(\varnothing) = \frac{n}{2} + \mathcal{O}(n^0)~,
\end{equation}
while the corresponding entropy vanishes $S(\varnothing)_{\ket{\psi}} = 0$.
Hence at large $n$ the trivial algebra has generalized entropy much less than $\mathcal{M}_b$.

It is important to emphasize here that this counterexample does not depend on any specific proposal for the definition of area for general algebras. Instead, it simply used Harlow's definition in \cite{Harlow:2016vwg}, applied to a code with complementary state independent recovery, together with Definition \ref{defn:area_ours} for the areas of algebras with trivial center, applied to the trivial algebra of operators proportional to the identity.
Changing either of these definitions seems a priori undesirable. We comment on possible resolutions in Section \ref{sec:emergenonfact}.

\section{Proof of Theorem \ref{thm:qms_approx}}\label{sec:approx}

\begin{lem}\label{lem:changingnorm}
Let $\ket{\psi}, \ket{\phi} \in \mathcal{H}_A \otimes \mathcal{H}_B$ satisfy $\lVert \ket{\psi} \rVert, \lVert \ket{\phi} \rVert \leq 1$. Then
\begin{align}
\big\lVert \ket{\psi}\bra{\psi} - \ket{\phi}\bra{\phi} \big\rVert_1 \leq 2 \big\lVert \ket{\psi} - \ket{\phi}\rVert~.
\end{align}
\end{lem}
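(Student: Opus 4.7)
The plan is a direct application of the triangle inequality for the Schatten 1-norm after a standard algebraic splitting. I would first write
\begin{equation}
\ket{\psi}\bra{\psi} - \ket{\phi}\bra{\phi} = \ket{\psi}\bigl(\bra{\psi} - \bra{\phi}\bigr) + \bigl(\ket{\psi} - \ket{\phi}\bigr)\bra{\phi},
\end{equation}
so that the difference is expressed as a sum of two rank-one operators.

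Next I would apply the triangle inequality to get
\begin{equation}
\big\lVert \ket{\psi}\bra{\psi} - \ket{\phi}\bra{\phi} \big\rVert_1 \leq \big\lVert \ket{\psi}(\bra{\psi} - \bra{\phi}) \big\rVert_1 + \big\lVert (\ket{\psi} - \ket{\phi})\bra{\phi} \big\rVert_1,
\end{equation}
and then use the elementary fact that for any vectors $\ket{a}, \ket{b}$, the rank-one operator $\ket{a}\bra{b}$ has a singular value decomposition with a single nonzero singular value equal to $\lVert \ket{a}\rVert \cdot \lVert \ket{b}\rVert$, hence $\lVert \ket{a}\bra{b}\rVert_1 = \lVert \ket{a}\rVert \lVert \ket{b}\rVert$. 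Applying this to the two terms gives
\begin{equation}
\big\lVert \ket{\psi}\bra{\psi} - \ket{\phi}\bra{\phi} \big\rVert_1 \leq \lVert \ket{\psi}\rVert \cdot \lVert \ket{\psi} - \ket{\phi}\rVert + \lVert \ket{\psi} - \ket{\phi}\rVert \cdot \lVert \ket{\phi}\rVert.
\end{equation}
Finally I invoke the hypothesis $\lVert \ket{\psi}\rVert, \lVert \ket{\phi}\rVert \leq 1$ to bound each prefactor by $1$, yielding $2 \lVert \ket{\psi} - \ket{\phi}\rVert$ as desired.

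There is no real obstacle here; the only thing to be careful about is the rank-one trace-norm identity, which is the one step that is not just triangle inequality manipulation. Since the paper allows subnormalized states (rather than insisting on unit vectors as in the familiar $2\sqrt{1 - |\braket{\psi|\phi}|^2}$ identity), it is important that this approach never divides by $\lVert \ket{\psi}\rVert$ or $\lVert \ket{\phi}\rVert$ and degrades gracefully as these norms approach zero. The bound is tight in the appropriate limit (for orthogonal unit vectors it gives $2\sqrt{2}$, which is weaker than the exact value $2$, but that looseness is harmless for the applications in the appendix).
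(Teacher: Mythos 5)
Your proof is correct, but it takes a different route from the paper's. The paper diagonalizes the rank-two operator $\ket{\psi}\bra{\psi}-\ket{\phi}\bra{\phi}$ explicitly to get the exact identity $\lVert \ket{\psi}\bra{\psi}-\ket{\phi}\bra{\phi}\rVert_1^2 = (\braket{\psi|\psi}+\braket{\phi|\phi})^2 - 4|\braket{\psi|\phi}|^2$, then drops the imaginary part of the overlap and factors the difference of squares, using $\lVert\ket{\psi}+\ket{\phi}\rVert^2\le 4$ and $\lVert\ket{\psi}-\ket{\phi}\rVert^2 = \braket{\psi|\psi}+\braket{\phi|\phi}-2\,\mathrm{Re}\braket{\psi|\phi}$ to land on the factor of $2$. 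You instead split $\ket{\psi}\bra{\psi}-\ket{\phi}\bra{\phi} = \ket{\psi}(\bra{\psi}-\bra{\phi}) + (\ket{\psi}-\ket{\phi})\bra{\phi}$, apply the trace-norm triangle inequality (which the paper proves in its preliminaries), and use the rank-one identity $\lVert\ket{a}\bra{b}\rVert_1 = \lVert\ket{a}\rVert\,\lVert\ket{b}\rVert$, which is easily verified since $(\ket{a}\bra{b})^\dagger(\ket{a}\bra{b}) = \lVert\ket{a}\rVert^2\ket{b}\bra{b}$. Both arguments handle subnormalized states correctly and yield the same constant. Your version is arguably more elementary and more robust — it needs no spectral computation and the same splitting trick works for differences of more general operators — whereas the paper's route passes through an exact expression for the trace distance, which is sharper as an intermediate statement (for instance it makes the tightness/looseness behavior you mention transparent) even though the final bound is identical. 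One small quibble with your closing remark: calling the bound ``tight in the appropriate limit'' and then exhibiting the orthogonal-unit-vector case where it is loose reads a bit contradictorily; the bound is saturated as $\ket{\phi}\to\ket{\psi}$ along the ray of $\ket{\psi}$, and loose for nearly orthogonal states, neither of which matters for the application.
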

\begin{proof}
    By explicit diagonalization, we have
    \begin{align}
        \big\lVert \ket{\psi}\bra{\psi} - \ket{\phi}\bra{\phi} \big\rVert_1^2 &= (\braket{\psi|\psi} + \braket{\phi|\phi})^2 - 4 |\braket{\psi|\phi}|^2
        \\& \leq (\braket{\psi|\psi} + \braket{\phi|\phi})^2 - 4 [\mathrm{Re}(\braket{\psi|\phi})]^2
        \\&\leq 4 \big\lVert \ket{\psi} - \ket{\phi} \big\rVert^2~.
    \end{align}
    In the last inequality, we expanded $A^2 - B^2 = (A+B)(A-B)$ and then used
    \begin{align}
        \braket{\psi|\psi} + \braket{\phi|\phi} + 2 \mathrm{Re}(\braket{\psi|\phi}) \leq 4~.
    \end{align}
\end{proof}

\begin{lem}\label{lem:approx_U_indep}
Let
\begin{equation}
    \Phi_{\overline{B}\,\overline{R}}(U_\mathbf{b}) := \tr_{BR} \left[V U_\mathbf{b} \ket{\Phi}\bra{\Phi} U_\mathbf{b}^\dagger V^\dagger \right]~.   
\end{equation}
where $\ket{\Phi} \in \mathcal{H}_\mathrm{code} \otimes \mathcal{H}_R \otimes \mathcal{H}_{\overline{R}}$ satisfies $ \lVert V U_\mathbf{b} \ket{\Phi} \rVert \leq 1$ for all $U_\mathbf{b}$ and all other terms are defined as in Theorem \ref{thm:qms_approx}. If
\begin{equation}\label{eq:approx_U_indep}
    S\left(\int dU_\mathbf{b} \Phi_{\overline{B}\,\overline{R}}(U_\mathbf{b})\right) - \int dU_\mathbf{b} S(\Phi_{\overline{B}\,\overline{R}}(U_\mathbf{b})) \le \delta~,
\end{equation}
for sufficiently small $\delta > 0$, then with probability $p \geq 1 - \kappa$, there exists a $U_{BR}$ such that 
\begin{equation}
   \big\lVert  U_{BR} V \hat U_\mathbf{b} \ket{\Phi} - V U_\mathbf{b} \ket{\Phi}\big\rVert^2 \leq \frac{8\sqrt{2 \delta}}{\kappa}~.
\end{equation}
\end{lem}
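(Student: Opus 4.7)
The plan is to adapt the argument used for $(2)\Rightarrow(1)$ in Theorem \ref{thm:qms_exact}: there, strict concavity of the von Neumann entropy forced $\Phi_{\overline{B}\,\overline{R}}(U_{\mathbf{b}})$ to be exactly independent of $U_\mathbf{b}$, at which point Uhlmann's theorem produced the recovery unitary. Here, concavity is only approximately saturated, so I would first convert the concavity gap $\delta$ into an $L^1$-closeness statement on average, then use Markov to upgrade to high probability, and finally invoke a subnormalized version of Uhlmann plus Fuchs--van de Graaf.

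First I would use the Holevo identity (which continues to hold for subnormalized ensembles by the same algebraic manipulation as in the normalized case),
\begin{equation}
    S\!\left(\int dU_\mathbf{b}\,\Phi_{\overline{B}\,\overline{R}}(U_\mathbf{b})\right) - \int dU_\mathbf{b}\, S(\Phi_{\overline{B}\,\overline{R}}(U_\mathbf{b})) = \int dU_\mathbf{b}\, D\!\left(\Phi_{\overline{B}\,\overline{R}}(U_\mathbf{b})\,\Big\Vert\,\bar\Phi_{\overline{B}\,\overline{R}}\right)~,
\end{equation}
to turn \eqref{eq:approx_U_indep} into $\int dU_\mathbf{b}\, D(\Phi(U_\mathbf{b})\Vert\bar\Phi) \le \delta$. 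Quantum Pinsker then gives $\int dU_\mathbf{b}\,\lVert \Phi(U_\mathbf{b}) - \bar\Phi\rVert_1^2 \le 2\delta$ (absorbing $\ln 2$ into the bound), and Jensen's inequality yields $\int dU_\mathbf{b}\,\lVert\Phi(U_\mathbf{b}) - \bar\Phi\rVert_1 \le \sqrt{2\delta}$.

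Next, I would apply Markov's inequality separately to the averages over $U_\mathbf{b}$ and $\hat U_\mathbf{b}$: each gives $\lVert\Phi(U_\mathbf{b}) - \bar\Phi\rVert_1 \le 2\sqrt{2\delta}/\kappa$ with probability at least $1 - \kappa/2$. A union bound and the triangle inequality then show that, with probability at least $1 - \kappa$,
\begin{equation}
    \lVert\Phi_{\overline{B}\,\overline{R}}(U_\mathbf{b}) - \Phi_{\overline{B}\,\overline{R}}(\hat U_\mathbf{b})\rVert_1 \le \frac{4\sqrt{2\delta}}{\kappa}~.
\end{equation}
Since $V U_\mathbf{b}\ket{\Phi}$ and $V\hat U_\mathbf{b}\ket{\Phi}$ are purifications on $BR$ of the subnormalized density matrices $\Phi_{\overline{B}\,\overline{R}}(U_\mathbf{b})$ and $\Phi_{\overline{B}\,\overline{R}}(\hat U_\mathbf{b})$, Uhlmann's theorem (via Lemma \ref{lem:CP_fidelity}) supplies a unitary $U_{BR}$ such that $\mathrm{Re}\,\bra{\Phi}U_\mathbf{b}^\dagger V^\dagger U_{BR} V\hat U_\mathbf{b}\ket{\Phi} = F(\Phi(U_\mathbf{b}),\Phi(\hat U_\mathbf{b}))$. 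Combining with the subnormalized Fuchs--van de Graaf inequality $\tr\rho_1 + \tr\rho_2 - 2F(\rho_1,\rho_2) \le 2\lVert\rho_1-\rho_2\rVert_1$ produces
\begin{equation}
   \lVert U_{BR} V\hat U_\mathbf{b}\ket{\Phi} - V U_\mathbf{b}\ket{\Phi}\rVert^2 \;=\; \lVert VU_\mathbf{b}\ket{\Phi}\rVert^2 + \lVert V\hat U_\mathbf{b}\ket{\Phi}\rVert^2 - 2F \;\le\; 2\lVert\Phi(U_\mathbf{b}) - \Phi(\hat U_\mathbf{b})\rVert_1 \;\le\; \frac{8\sqrt{2\delta}}{\kappa}~,
\end{equation}
as claimed.

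The main obstacle is the careful bookkeeping in the subnormalized setting (required because $V$ need not be isometric, so $\lVert V U_\mathbf{b}\ket{\Phi}\rVert$ may be strictly less than $1$). In particular, one must verify both that the Holevo identity and quantum Pinsker inequality transfer verbatim to subnormalized density matrices on $\overline{B}\,\overline{R}$, and that the Fuchs--van de Graaf-style bound $\tr\rho_1 + \tr\rho_2 - 2F \le 2\lVert\rho_1-\rho_2\rVert_1$ holds for subnormalized $\rho_i$ (which can be shown by the same kind of diagonalization argument as the normalized case, or via the generalized fidelity of Definition \ref{defn:genfid}). Once these subnormalized analogues are established, the chain of inequalities above gives the stated constant.
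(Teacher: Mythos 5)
Your overall strategy is the paper's (concavity gap $\to$ relative entropy $\to$ Pinsker $\to$ Markov $\to$ Uhlmann/Fuchs--van de Graaf), but there is a genuine gap at the Pinsker step, and it sits exactly where the subnormalized bookkeeping matters. The Holevo identity is indeed purely algebraic and survives subnormalization, so $\int dU_\mathbf{b}\, D\bigl(\Phi_{\overline{B}\,\overline{R}}(U_\mathbf{b})\,\big\Vert\,\bar\Phi_{\overline{B}\,\overline{R}}\bigr)\le\delta$ is fine. But your next step applies Pinsker termwise under the integral, and Pinsker does \emph{not} transfer verbatim to subnormalized operators of unequal trace: since $V$ is not an isometry, $\tr\Phi_{\overline{B}\,\overline{R}}(U_\mathbf{b})=\lVert VU_\mathbf{b}\ket{\Phi}\rVert^2$ varies with $U_\mathbf{b}$, and one can have $D(\Phi(U_\mathbf{b})\Vert\bar\Phi)<0$ while $\lVert\Phi(U_\mathbf{b})-\bar\Phi\rVert_1$ is order one (take $\tr\Phi(U_\mathbf{b})\ll\tr\bar\Phi$ with the normalized states equal, so $D=\tr\Phi(U_\mathbf{b})\log[\tr\Phi(U_\mathbf{b})/\tr\bar\Phi]<0$). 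So your claimed intermediate bound $\int dU_\mathbf{b}\lVert\Phi(U_\mathbf{b})-\bar\Phi\rVert_1^2\le 2\delta$ does not follow from what you wrote; it is also strictly stronger than what is needed or what the paper proves. The way the paper closes this hole is to apply Pinsker only once, at the level of the full classical--quantum object: it packages the ensemble into a direct-integral state $\rho_{\mathbf{U_b}\overline{B}\,\overline{R}}$ and compares it to $\rho_{\mathbf{U_b}}\otimes\rho_{\overline{B}\,\overline{R}}$, which have \emph{equal} trace, pads both with a common orthogonal block $\rho_0$ to make them normalized (leaving both the relative entropy and the trace distance unchanged), applies ordinary Pinsker, and only then evaluates the cq trace norm as $\int dU_\mathbf{b}\lVert\Phi(U_\mathbf{b})-\bar\Phi\rVert_1\le\sqrt{2\delta}$, i.e.\ a first-moment bound rather than a second-moment one. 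Your proposal needs this (or an equivalent careful treatment of the unequal traces, e.g.\ splitting off the classical trace distribution and using Cauchy--Schwarz); the assertion that Pinsker ``transfers verbatim'' is false as stated.

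Everything downstream is fine and only uses the first-moment bound anyway: Markov on $U_\mathbf{b}$ and $\hat U_\mathbf{b}$ separately plus a union bound and the triangle inequality gives $\lVert\Phi(U_\mathbf{b})-\Phi(\hat U_\mathbf{b})\rVert_1\le 4\sqrt{2\delta}/\kappa$ with probability $1-\kappa$, matching the paper. Your final step is actually a little cleaner than the paper's: rather than padding to normalized states before invoking Fuchs--van de Graaf, you can work directly with subnormalized operators, since Lemma \ref{lem:CP_fidelity} gives Uhlmann for unnormalized purifications and Lemma \ref{lem:2norm_bound} gives $\tr\rho_1+\tr\rho_2-2F(\rho_1,\rho_2)\le\lVert\rho_1-\rho_2\rVert_1$ for arbitrary positive semidefinite operators (so you even get the stated bound with constant to spare). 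With the Pinsker step repaired along the paper's lines, the rest of your argument goes through.
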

\begin{proof}
    Recall how one proves that 
    \begin{align} \label{eq:holevo}
        S(\rho) \ge \sum_i p_i S(\rho_i)~,
    \end{align}
    for a density matrix $\rho := \sum_i p_i \rho_i$ that is a mixture of density matrices $\rho_i$.
    Let $\rho$ be defined on system $A$, and introduce auxiliary system $B$, in joint state
    \begin{equation}\label{eq:joint_mixture_state}
        \rho_{AB} = \sum_i p_i \ket{i}\bra{i}_B \otimes \rho_i~.
    \end{equation}
    These systems each have entropy
    \begin{align}
        S(\rho_A) &= S\left(\rho\right)~,\\
        S(\rho_B) &= S\left( \sum_i p_i \ket{i}\bra{i}_B \right) = -\sum_i p_i \log p_i~,\\
        S(\rho_{AB}) &= \sum_i p_i S\left(\rho_i\right)  - \sum_i p_i \log p_i~.
    \end{align}
    The inequality \eqref{eq:holevo} is then simply subadditivity. Explicitly, 
    \begin{align}
    S(\rho) - \sum_i p_i S(\rho_i) = I(A:B)_{\rho_{AB}} = S_\mathrm{rel}(\rho_{AB} || \rho_A \otimes \rho_B)~.
    \end{align}

    We want to write the analogue of \eqref{eq:joint_mixture_state}, but for the continuous variable $U_\mathbf{b}$, with probability measure $dU_\mathbf{b}$, rather than the discrete variable $i$. To do so, we need to use the language of operator algebras. We define the von Neumann algebra $\mathcal{A}$ as the direct integral
    \begin{align}
        \int^\oplus \mathcal{A}_{\overline{B}\,\overline{R}}\, dU_\mathbf{b}~,
    \end{align}
    where $\mathcal{A}_{\overline{B}\,\overline{R}}$ is the algebra of operators on $\mathcal{H}_{\overline{B}} \otimes \mathcal{H}_{\overline{R}}$. The analogue of the density matrices $\rho_{AB}$ and $\rho_A \otimes \rho_B$ are the linear functionals $\rho_{\mathbf{U_b} \overline{B}\, \overline{R}}$ and $\rho_{\mathbf{U_b}}\otimes \rho_{\overline{B}\, \overline{R}}$ defined by
    \begin{align}
        \rho_{\mathbf{U_b} \overline{B}\, \overline{R}}\left(\int^\oplus O_{\overline{B}\overline{R}} (U_\mathbf{b}) dU_\mathbf{b}\right) = \int dU_\mathbf{b} \tr\left[O_{\overline{B}\,\overline{R}} (U_\mathbf{b}) \Phi_{\overline{B}\,\overline{R}}(U_\mathbf{b})\right]~, \\
        \rho_{\mathbf{U_b}}\otimes \rho_{\overline{B}\, \overline{R}} \left(\int^\oplus O_{\overline{B}\,\overline{R}} (U_\mathbf{b}) dU_\mathbf{b}\right) = \int dU_\mathbf{b} dU'_\mathbf{b} \tr\left[O_{\overline{B}\,\overline{R}} (U_\mathbf{b}) \Phi_{\overline{B}\,\overline{R}}(U'_\mathbf{b})\right].
    \end{align}
    Note that $\rho_{\mathbf{U_b} \overline{B}\, \overline{R}}$ and $\rho_{\mathbf{U_b}}\otimes \rho_{\overline{B}\, \overline{R}}$ are both subnormalized
    \begin{align}
        \rho_{\mathbf{U_b} \overline{B}\, \overline{R}}\left(\mathbb{1}\right) = \rho_{\mathbf{U_b}}\otimes \rho_{\overline{B}\, \overline{R}} \left(\mathbb{1}\right) = \int dU_\mathbf{b} \tr\left[ \Phi_{\overline{B}\,\overline{R}}(U_\mathbf{b})\right] \leq 1~.
    \end{align}
    However, we can extend them to normalized states  by defining an isometry $V'_{\overline{B}}: \mathcal{H}_{\overline{B}} \to \mathcal{H}_{\overline{B}'}$ with $d_{\overline{B}'} = d_{\overline{B}} + 1$, as well as similar isometries for $\mathcal{H}_{\overline{R}}$ and $\mathcal{H}_\mathbf{U_b}$. We then define normalized states
    $\hat{\rho}_{\mathbf{U_b}' \overline{B}'\, \overline{R}'} = V'\rho_{\mathbf{U_b} \overline{B}\, \overline{R}}{V'}^\dagger + \rho_0$ and $\hat\rho_{\mathbf{U_b}'}\otimes \hat\rho_{\overline{B}'\, \overline{R}'} = V'\rho_{\mathbf{U_b}}\otimes \rho_{\overline{B}\, \overline{R}}{V'}^\dagger + \rho_0$ with $V' = {V'}_{\overline{B}} {V'}_{\overline{R}} {V'}_{\mathbf{U_b}}$ and 
    \begin{align}
        {V'}_{\overline{B}}^\dagger \rho_0 {V'}_{\overline{B}} = {V'}_{\overline{R}}^\dagger \rho_0 {V'}_{\overline{R}} = {V'}_{\mathbf{U_b}}^\dagger \rho_0 {V'}_{\mathbf{U_b}} = 0~.
    \end{align}
    We then have
    \begin{equation}
    \begin{split}
        S_\mathrm{rel}(\hat \rho_{\mathbf{U_b}' \overline{B}'\, \overline{R}'}|| \hat \rho_{\mathbf{U_b}'}&\otimes \hat\rho_{\overline{B}'\,\overline{R}'}) =S_\mathrm{rel}( \rho_{\mathbf{U_b} \overline{B}\, \overline{R}}||  \rho_{\mathbf{U_b}}\otimes \rho_{\overline{B}\,\overline{R}}) \\&=\int dU_\mathbf{b}  \tr\left[\Phi_{\overline{B}\,\overline{R}}(U_\mathbf{b})\left[\log \left(\int dU'_\mathbf{b}\,\Phi_{\overline{B}\,\overline{R}}(U'_\mathbf{b})\right) - \log \Phi_{\overline{B}\,\overline{R}}(U_\mathbf{b})\right]\right]
        \\&= S\left(\int dU_\mathbf{b} \Phi_{\overline{B}\,\overline{R}}(U_\mathbf{b})\right) - \int dU_\mathbf{b} S(\Phi_{\overline{B}\,\overline{R}}(U_\mathbf{b})) \leq \delta~.
    \end{split}
    \end{equation}
    By Pinsker's inequality, it follows that
    \begin{equation}
        \left\lVert\rho_{\mathbf{U_b} \overline{B}\, \overline{R}}- \rho_{\mathbf{U_b}}\otimes \rho_{\overline{B}\,\overline{R}}\right\rVert_1 = \left\lVert\hat \rho_{\mathbf{U_b}' \overline{B}'\, \overline{R}'}- \hat\rho_{\mathbf{U_b}'}\otimes \hat \rho_{\overline{B}'\,\overline{R}'}\right\rVert_1 \leq  \sqrt{2 \delta}~.
    \end{equation}
    However, we can explicitly evaluate
    \begin{align}
        \left\lVert \rho_{\mathbf{U_b} \overline{B}\, \overline{R}}- \rho_{\mathbf{U_b}}\otimes \rho_{\overline{B}\,\overline{R}}\right\rVert_1 = \int dU_\mathbf{b} \left\lVert \Phi_{\overline{B}\,\overline{R}}(U_\mathbf{b}) - \int dU'_\mathbf{b} \Phi_{\overline{B}\,\overline{R}}(U'_\mathbf{b}) \right\rVert_1~.
    \end{align}
    Finally, by Markov's inequality, we know that with probability $p \geq 1 - \kappa/2$
    \begin{equation}
        \left\lVert \Phi_{\overline{B}\,\overline{R}}(U_\mathbf{b}) - \int dU'_\mathbf{b} \Phi_{\overline{B}\,\overline{R}}(U'_\mathbf{b}) \right\rVert_1 \le \frac{2\sqrt{2 \delta}}{\kappa}~,
    \end{equation}
    where we can choose $\kappa$ such that $1 \gg \kappa \gg \delta$.
    Now we can use the triangle inequality to show that with probability $p \geq 1 - \kappa$
    \begin{equation}
        \left\lVert \Phi_{\overline{B}\,\overline{R}}(U_\mathbf{b}) -  \Phi_{\overline{B}\,\overline{R}}(\hat U_\mathbf{b}) \right\rVert_1 \le \frac{4\sqrt{2 \delta}}{\kappa}~.
    \end{equation}
    If we extend $\Phi_{\overline{B}\,\overline{R}}(U_\mathbf{b}), \Phi_{\overline{B}\,\overline{R}}(\hat U_\mathbf{b})$ to normalized states $\hat\Phi_{\overline{B}'\,\overline{R}'}(U_\mathbf{b}),\hat\Phi_{\overline{B}'\,\overline{R}'}(\hat U_\mathbf{b})$ in the same way as before, we find 
    \begin{align}
         \left\lVert \hat\Phi_{\overline{B}'\,\overline{R}'}(U_\mathbf{b}) -  \hat\Phi_{\overline{B}'\,\overline{R}'}(\hat U_\mathbf{b}) \right\rVert_1 &= \left\lVert \Phi_{\overline{B}\,\overline{R}}(U_\mathbf{b}) -  \Phi_{\overline{B}\,\overline{R}}(\hat U_\mathbf{b}) \right\rVert_1 + \left| \lVert \Phi_{\overline{B}\,\overline{R}}(U_\mathbf{b})\rVert_1 -  \lVert\Phi_{\overline{B}\,\overline{R}}(\hat U_\mathbf{b}) \rVert_1\right| \\&\le \frac{8\sqrt{2 \delta}}{\kappa}~,
    \end{align}
    where we used the triangle inequality. We then use the first Fuchs-van de Graaf inequality to bound the fidelity
    \begin{equation}
        F\left( \hat\Phi_{\overline{B}\,\overline{R}}(U_\mathbf{b}) , \hat\Phi_{\overline{B}\,\overline{R}}(\hat U_\mathbf{b}) \right) \ge 1 - \frac{4\sqrt{2 \delta}}{\kappa} ~.
    \end{equation}
    It follows by Uhlmann's theorem (together with the fact that projecting the states back into the original Hilbert space using ${V'}^\dagger_{\overline{B}}{V'}^\dagger_{\overline{R}}$ can only decrease the norm) that there exists a unitary $U_{BR}$ such that
    \begin{equation}
        \big\lVert  U_{BR} V \hat U_\mathbf{b} \ket{\Phi} - V U_\mathbf{b} \ket{\Phi}\big\rVert^2 \leq \frac{8\sqrt{2 \delta}}{\kappa}~.
    \end{equation}
\end{proof}

\begin{lem} \label{lem:smoothminvNineq}
For any state $\ket{\psi} \in \mathcal{H}_A \otimes \mathcal{H}_B \otimes \mathcal{H}_C$ and sufficiently small $\varepsilon > 0$, we have
\begin{align}
    H_\mathrm{min}^\varepsilon (A|B)_{\ket{\psi}} \leq S(A|B)_{\ket{\psi}} - 4 \varepsilon \log \frac{d_A d_B}{2 \varepsilon}~.
\end{align}
\end{lem}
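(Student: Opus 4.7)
The plan is to prove the bound by combining three standard ingredients: the defining supremum in the smooth min-entropy, the ordering $H_\mathrm{min}(A|B) \leq S(A|B)$, and Fannes-type continuity of the von Neumann entropy.

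First, by Definition \ref{defn:smoothmin}, $H_\mathrm{min}^\varepsilon(A|B)_{\ket{\psi}} = \sup_{\tilde\rho} H_\mathrm{min}(A|B)_{\tilde\rho}$, with the supremum taken over subnormalized $\tilde\rho_{AB}$ satisfying $\bar{F}(\tilde\rho_{AB}, \psi_{AB}) \geq \sqrt{1-\varepsilon^2}$. Using the elementary inequality $H_\mathrm{min}(A|B)_{\tilde\rho} \leq S(A|B)_{\tilde\rho}$ (which follows from the operator Jensen inequality applied to the definition of $D_\mathrm{max}$, after passing to normalized extensions via the isometry in Definition \ref{defn:genfid}), the problem reduces to bounding $S(A|B)_{\tilde\rho}$ uniformly across the smoothing ball.

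Second, I would convert the fidelity condition to a trace-distance condition. The normalized extensions $\hat{\tilde\rho}, \hat{\psi}$ on the enlarged Hilbert space satisfy $F(\hat{\tilde\rho}, \hat{\psi}) \geq \sqrt{1-\varepsilon^2}$, so the Fuchs--van de Graaf inequality gives $\lVert \hat{\tilde\rho} - \hat{\psi}\rVert_1 \leq 2\varepsilon$, and monotonicity under partial trace then yields $\lVert \tilde\rho_{AB} - \psi_{AB}\rVert_1 \leq 2\varepsilon$ together with $\lVert \tilde\rho_B - \psi_B\rVert_1 \leq 2\varepsilon$.

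Third, apply Fannes' inequality (Lemma \ref{lem:fannes}) separately to the pairs $(\tilde\rho_{AB}, \psi_{AB})$ and $(\tilde\rho_B, \psi_B)$. Since $S(A|B) = S(AB) - S(B)$, combining the two bounds via the triangle inequality controls $|S(A|B)_{\tilde\rho} - S(A|B)_{\ket\psi}|$ by a quantity of order $\varepsilon \log(d_A d_B/\varepsilon)$. Taking the supremum over $\tilde\rho$ in the smoothing ball then gives an inequality of the desired form, with the claimed prefactor $4\varepsilon \log(d_A d_B/(2\varepsilon))$ arising from carefully accounting for the two Fannes applications and the factor of $2$ from Fuchs--van de Graaf.

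The main obstacle is producing the correction term with the \emph{minus} sign as stated. Fannes' inequality controls the absolute value $|S_{\tilde\rho} - S_{\ket\psi}|$ symmetrically, and combining it with $H_\mathrm{min} \leq S$ naturally yields a bound of the form $H_\mathrm{min}^\varepsilon(A|B) \leq S(A|B) + 4\varepsilon \log(d_A d_B/(2\varepsilon))$. To get the sharper directional inequality as written, the argument would have to exploit additional structure within the smoothing ball --- for instance, a one-sided continuity estimate showing that the extremizer $\tilde\rho^\star$ of the supremum has conditional von Neumann entropy strictly \emph{below} $S(A|B)_{\ket\psi}$ by the requisite margin, or a direct characterization of $\tilde\rho^\star$ using the Lagrangian structure of the $D_\mathrm{max}$ optimization. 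Identifying such a sharpening, and showing that the gap it produces dominates the Fannes correction, is the crux of the technical difficulty and is the step I expect to be most delicate.
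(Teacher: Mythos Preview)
Your approach is correct and is precisely the one the paper uses: pick the optimizer of the smoothing supremum, use $H_\mathrm{min}(A|B) \leq S(A|B)$ on that state, then apply Fannes' inequality to $S(AB)$ and $S(B)$ separately to compare back to $\ket{\psi}$.

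The ``main obstacle'' you identify --- producing the correction term with a \emph{minus} sign --- is not a real obstacle but a typo in the lemma statement. The inequality that is actually proved (and actually used) is
\[
H_\mathrm{min}^\varepsilon (A|B)_{\ket{\psi}} \leq S(A|B)_{\ket{\psi}} + 4 \varepsilon \log \frac{d_A d_B}{2 \varepsilon}~.
\]
This is clear from the paper's own proof, whose final line reads $S(A|B)_{\ket{\psi}} \geq S(A|B)_{\ket{\tilde{\psi}'}} - 4 \varepsilon \log(d_A d_B/(2\varepsilon))$, which combined with $H_\mathrm{min}^\varepsilon \leq S(A|B)_{\ket{\tilde\psi'}}$ yields the plus sign. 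It is also clear from the application in the proof of $(3) \Rightarrow (4)$ of Theorem~\ref{thm:qms_approx}, where one needs an \emph{upper} bound on $H_\mathrm{min}^{\varepsilon_3}$ to turn $-H_\mathrm{min}^{\varepsilon_3}$ into a lower bound involving $-S$ minus a correction. So there is no delicate one-sided sharpening to find; your derivation of the $+$ version is the intended argument.
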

\begin{proof}
Let $\ket{\tilde{\psi}}$ maximize the min-entropy $H_\mathrm{min}(A|B)$ within an $\varepsilon$-ball of $\ket{\psi}$. The state $\ket{\tilde{\psi}}$ may in general be subnormalized; however, as in the proof of Lemma \ref{lem:approx_U_indep}, we can extend the Hilbert spaces $\mathcal{H}_{A/B/C}$ with isometries $V_{A/B/C}: \mathcal{H}_{A/B/C} \to \mathcal{H}'_{A/B/C}$ and define a normalized state $\ket{\tilde{\psi}'} \in \mathcal{H}_A' \otimes \mathcal{H}_B' \otimes \mathcal{H}_C'$ such that
\begin{align}
    V^\dagger_A \ket{\tilde{\psi}'} = V^\dagger_B \ket{\tilde{\psi}'} = V^\dagger_C \ket{\tilde{\psi}'} = \ket{\tilde{\psi}}.
\end{align}
We have
\begin{align}
    H_\mathrm{min}^\varepsilon (A|B)_{\ket{\psi}} = H_\mathrm{min} (A|B)_{\ket{\tilde{\psi}}} \leq S(A|B)_{\ket{\tilde{\psi}}} = S(A|B)_{\ket{\tilde{\psi}'}}~.
\end{align}
But, by the definition of the generalized fidelity (Definition \ref{defn:genfid}),\footnote{Because $\ket{\psi}$ is normalized, there is no need to supremize over isometries $V$.} we have
\begin{align}
    |\braket{\tilde{\psi}'| \psi}| \geq \sqrt{1- \varepsilon^2}~.
\end{align}
and hence
\begin{align}
    \lVert \tilde{\psi}'_{AB} - \psi_{AB}\rVert_1 \leq 2 \varepsilon~.
\end{align}
Finally, by Fannes' inequality, we have\footnote{Here we have simply applied the standard Fannes' inequality to $S(AB)$ and $S(B)$ separately. You can improve this bound somewhat (and avoid any $d_B$ dependence) using the Fannes-Alicki inequality \cite{alicki2004continuity}, but we have not done so in the interests of being self-contained.}
\begin{align}
    S(A|B)_{\ket{\psi}} \geq S(A|B)_{\ket{\tilde{\psi}'}} - 4 \varepsilon \log \frac{d_A d_B}{2 \varepsilon}~.
\end{align}
\end{proof}

\subsection*{Proof $(1) \implies (2)$:}
The proof proceeds analogously to the exact version in Section \ref{sec:proof_exact}. We again have
\begin{align}
    S(BR \mathbf{a r})_{V W_\mathbf{b} W_\mathbf{\bar{b}}\ket{\Psi}} = A_B (\mathbf{b}) + S(\mathbf{b} R)_{\ket{\Psi}}~.
\end{align}
For a given $\hat U_\mathbf{b}$, $\hat U'_\mathbf{\bar{b}}$, let 
\begin{align}
    W_{BR}^{(\hat U_\mathbf{b})} = \int dU_\mathbf{b} \ket{U_\mathbf{b}} \otimes U_{BR}~,
\end{align}
and
\begin{align}
    W_{\overline{B}\,\overline{R}}^{(\hat U'_\mathbf{\bar{b}})} = \int dU'_\mathbf{\bar{b}} \ket{U'_\mathbf{\bar{b}}} \otimes U'_{\overline{B}\,\overline{R}}~,
\end{align}
with $U_{BR}$, $U'_{\overline{B}\,\overline{R}}$ satisfying \eqref{eq:approx_cond_1} when possible and arbitrarily chosen otherwise.

Let $p(\mathrm{RCVR})$ be the probability that $U_{BR}$, $U'_{\overline{B}\,\overline{R}}$ satisfying \eqref{eq:approx_cond_1} exist for randomly chosen $U_\mathbf{b}, U'_\mathbf{\bar{b}}, \hat U_\mathbf{b}, \hat U'_\mathbf{\bar{b}}$. Meanwhile let $p(\mathrm{RCVR} | \hat U_\mathbf{b}, \hat U'_\mathbf{\bar{b}})$ be the probability that  $U_{BR}$, $U'_{\overline{B}\,\overline{R}}$ satisfying \eqref{eq:approx_cond_1} exist for randomly chosen $U_\mathbf{b}, U'_\mathbf{\bar{b}}$, conditioned on some particular $\hat U_\mathbf{b}, \hat U'_\mathbf{\bar{b}}$. By assumption of Condition 1
\begin{align}
    1 - p(\mathrm{RCVR}) = \int d\hat U_\mathbf{b} d\hat U'_\mathbf{\bar{b}} [1 -p(\mathrm{RCVR} | \hat U_\mathbf{b}, \hat U'_\mathbf{\bar{b}})]\leq \kappa_1~,
\end{align}
Hence by Markov's inequality, for any $\kappa_2 \gg \kappa_1$, the inequality
\begin{align} \label{eq:conditionalprob}
    1 - p(\mathrm{RCVR} | \hat U_\mathbf{b}, \hat U'_\mathbf{\bar{b}}) \leq \frac{\kappa_1}{\kappa_2}~,
\end{align}
is satisfied with probability $p(\hat U_\mathbf{b}, \hat U'_\mathbf{\bar{b}}) \geq 1 - \kappa_2$ for Haar random unitaries $\hat U_\mathbf{b}, \hat U'_\mathbf{\bar{b}}$. From now on, we assume that $\hat U_\mathbf{b}$, $\hat U'_\mathbf{\bar{b}}$ satisfy \eqref{eq:conditionalprob}.

By explicit evaluation
\begin{align}
    \big\lVert W_{BR} W_{\overline{B}\,\overline{R}} V \hat{U}_\mathbf{b} \hat{U}'_\mathbf{\bar{b}} \ket{\psi} - V W_\mathbf{b} W_\mathbf{\bar{b}}\ket{\Psi}\big\rVert^2 &\leq \int dU_\mathbf{b} dU'_\mathbf{\bar{b}} \big\lVert U_{BR} U'_{\overline{B}\,\overline{R}} V \hat{U}_\mathbf{b} \hat{U}'_\mathbf{\bar{b}} \ket{\psi} - V U_\mathbf{b} {U'}_\mathbf{\bar{b}}\ket{\Psi}\big\rVert^2 \\&\leq  \varepsilon_1^2 + \frac{\kappa_1}{\kappa_2}.
\end{align}
Applying Lemma \ref{lem:changingnorm}, together with Fannes' inequality, then tells us that 
\begin{align}
\left| S\left( B R\right)_{V \hat U_\mathbf{b} \hat U'_\mathbf{\bar{b}}\ket{\Psi}} - \left[A_B(\mathbf{b}) + S(\mathbf{b}R)_{\ket{\Psi}}\right] \right| &= \left| S\left( B R\right)_{V \hat U_\mathbf{b} \hat U'_\mathbf{\bar{b}}\ket{\Psi}} - S(BR)_{V W_\mathbf{\bar{b}} W_\mathbf{b}\ket{\Psi}} \right|
\\& \leq  \sqrt{\varepsilon_1^2 + \frac{\kappa_1}{\kappa_2}} \log \frac{d_B^2 d_R^2}{4(\varepsilon_1^2 +  \kappa_1/\kappa_2)}~.
\end{align}

\subsection*{Proof $(2) \implies (1)$:}
Recall from the proof of the exact case in Section \ref{sec:proof_exact} that
\begin{equation}
    S(\overline{B}\, \overline{R} \mathbf{\bar{a} \bar{r}})_{V W_\mathbf{b} W_\mathbf{\bar{b}}\ket{\Psi}} = S(B R \mathbf{a r})_{V W_\mathbf{b} W_\mathbf{\bar{b}}\ket{\Psi}} = A_B(\mathbf{b}) + S(\mathbf{b} R)_{\ket{\Psi}} ~, 
\end{equation}
and that
\begin{equation}
S(\overline{B} \, \overline{R}\, \mathbf{\bar{a} \bar{r}})_{V W_{b} W_{\bar{b}} \ket{\Psi}} \ge \int dU_\mathbf{b} S(B R)_{V U_\mathbf{b} W_{\mathbf{\bar{b}}} \ket{\Psi}} \ge \int dU_\mathbf{b} dU'_\mathbf{\bar{b}} S(B R)_{V U_\mathbf{b} U'_{\mathbf{\bar{b}}} \ket{\Psi}}~.
\end{equation}
But by Condition 2, 
\begin{align}
    \int dU_\mathbf{b} dU'_\mathbf{\bar{b}} S(B R)_{V U_\mathbf{b} U'_{\mathbf{\bar{b}}} \ket{\Psi}} &\geq (1 - \kappa_2) (A_B(\mathbf{b}) + S(\mathbf{b} R)_{\ket{\Psi}}) - \varepsilon_2 \\&\geq A_B(\mathbf{b}) + S(\mathbf{b} R)_{\ket{\Psi}} - \kappa_2 \log (d_B d_R) - \varepsilon_2~.
\end{align}
Hence, by Lemma \ref{lem:approx_U_indep} (applied to $\ket{\Phi} = W_\mathbf{\bar{b}} \ket{\Psi}$), for any $\kappa > 0$ there exists $U_{BR}$ such that, with probability $p \geq 1 - \kappa$,
\begin{align}
 \big\lVert  U_{BR} V \hat U_\mathbf{b} W_\mathbf{\bar{b}} \ket{\Psi} - V U_\mathbf{b} W_\mathbf{\bar{b}} \ket{\Psi}\big\rVert^2 \leq \frac{8\sqrt{2 \kappa_2 \log (d_B d_R) + 2\varepsilon_2}}{\kappa}~.
\end{align}
Since
\begin{align}
    \big\lVert  U_{BR} V \hat U_\mathbf{b} W_\mathbf{\bar{b}} \ket{\Psi} - V U_\mathbf{b} W_\mathbf{\bar{b}} \ket{\Psi}\big\rVert^2 = \int d U'_\mathbf{\bar{b}} \big\lVert  U_{BR} V \hat U_\mathbf{b} U'_\mathbf{\bar{b}} \ket{\Psi} - V U_\mathbf{b} U'_\mathbf{\bar{b}} \ket{\Psi}\big\rVert^2~,
\end{align}
we can use Markov's inequality to show that
\begin{align} \label{eq:approxUBR}
    \big\lVert  U_{BR} V \hat U_\mathbf{b} U'_\mathbf{\bar{b}} \ket{\Psi} - V U_\mathbf{b} U'_\mathbf{\bar{b}} \ket{\Psi}\big\rVert^2 \leq \frac{8\sqrt{2 \kappa_2 \log (d_B d_R) + 2\varepsilon_2}}{\kappa^2}
\end{align}
with probability $p \geq 1 - 2 \kappa$. By an analogous argument, there also exists $U'_{\overline{B}\,\overline{R}}$ (depending only on $U'_\mathbf{\bar{b}}$) such that
\begin{align} \label{eq:approxUbarBR}
    \big\lVert  U'_{\overline{B}\,\overline{R}} V \hat U_\mathbf{b} \hat U'_\mathbf{\bar{b}} \ket{\Psi} - V U_\mathbf{b} \hat U'_\mathbf{\bar{b}} \ket{\Psi}\big\rVert^2 \leq \frac{8\sqrt{2 \kappa_2 \log (d_B d_R) + 2\varepsilon_2}}{\kappa^2}~,
\end{align}
with probability $p \geq 1 - 2 \kappa$. By the union bound, both \eqref{eq:approxUBR} and \eqref{eq:approxUbarBR} are true with probability $p \geq 1 - 4 \kappa$. Using the triangle inequality, and choosing $\kappa_1 = 4 \kappa$, we find
\begin{align}
    \big\lVert U_{BR} U'_{\overline{B}\,\overline{R}} V  \hat U_\mathbf{b} \hat U'_\mathbf{\bar{b}} \ket{\Psi} -  V U_\mathbf{b} U'_\mathbf{\bar{b}} \ket{\Psi} \big\rVert 
    \leq& \big\lVert  U_{BR} U'_{\overline{B}\,\overline{R}} V  \hat U_\mathbf{b} \hat U'_\mathbf{\bar{b}} \ket{\Psi} -  U_{BR} V \hat U_\mathbf{b} U'_\mathbf{\bar{b}} \ket{\Psi} \big\rVert \nonumber\\&+ \big\lVert U_{BR} V  \hat U_\mathbf{b} U'_\mathbf{\bar{b}} \ket{\Psi} -  V U_\mathbf{b} U'_\mathbf{\bar{b}} \ket{\Psi} \big\rVert 
    \\\leq& \big\lVert U'_{\overline{B}\,\overline{R}} V  \hat U_\mathbf{b} \hat U'_\mathbf{\bar{b}} \ket{\Psi} -  V \hat U_\mathbf{b} U'_\mathbf{\bar{b}} \ket{\Psi} \big\rVert \nonumber\\&+ \big\lVert U_{BR} V  \hat U_\mathbf{b} U'_\mathbf{\bar{b}} \ket{\Psi} -  V U_\mathbf{b} U'_\mathbf{\bar{b}} \ket{\Psi} \big\rVert 
    \\\leq& \,\frac{16[8 \kappa_2 \log (d_B d_R) + 8\varepsilon_2]^{1/4}}{\kappa_1}~.
\end{align}

\subsection*{Proof $(1) \implies (3)$:}
By Condition 1, there exists $\hat U_\mathbf{b}, \hat U'_\mathbf{\bar{b}}$ such that \eqref{eq:approx_cond_1} is satisfied for a randomly chosen $U_\mathbf{b}, U'_\mathbf{\bar{b}}$ with probability $p \geq 1 - \kappa_1$ and thus
\begin{align}
    \big\lVert W_{BR} W_{\overline{B}\,\overline{R}} V \hat{U}_\mathbf{b} \hat{U}'_\mathbf{\bar{b}} \ket{\Psi} - V W_\mathbf{b} W_\mathbf{\bar{b}}\ket{\Psi}\big\rVert &= \int dU_\mathbf{b} dU'_\mathbf{\bar{b}} \big\lVert U_{BR} U'_{\overline{B}\,\overline{R}} V \hat{U}_\mathbf{b} \hat{U}'_\mathbf{\bar{b}} \ket{\psi} - V U_\mathbf{b} {U'}_\mathbf{\bar{b}}\ket{\Psi}\big\rVert \\&\leq  \varepsilon_1 + \kappa_1~.
\end{align}
By Lemma \ref{lem:changingnorm} and monotonicity, we have
\begin{align}
    \big\lVert \rho - \sigma \big\rVert_1 \leq 2 \varepsilon_1 + 2 \kappa_1~,
\end{align}
where $\rho$ and $\sigma$ are the reduced states on $ \mathcal{H}_B\otimes \mathcal{H}_R\otimes \mathcal{H}_\mathbf{U_{\bar{b'}}} \otimes \mathcal{H}_\mathbf{U_b}$ of $W_{BR} W_{\overline{B}\,\overline{R}} V \hat{U}_\mathbf{b} \hat{U}'_\mathbf{\bar{b}} \ket{\Psi}$ and $V W_\mathbf{b} W_\mathbf{\bar{b}}\ket{\Psi}$ respectively.
As in the proof of Lemma \ref{lem:approx_U_indep}, these states can be extended to normalized states $\hat\rho$, $\hat\sigma$ on larger ``primed'' Hilbert spaces, while at most doubling the Hilbert space distance between them.
Hence by the first Fuchs-van de Graaf inequality, the generalized fidelity
\begin{align}
    \bar{F}(\rho,\sigma) \geq F(\hat\rho,\hat\sigma) \geq 1 - (2\varepsilon_1 + 2\kappa_1)~.
\end{align}
It follows, if $\varepsilon_3 = \sqrt{2\varepsilon_1 + 2\kappa_1}$, then
\begin{align}
    H^{\varepsilon_3}_\mathrm{min} (\mathbf{\bar{a}' \bar{r}'}|B\,R\,\mathbf{ar})_{V W_\mathbf{b} W_\mathbf{\bar{b}} \ket{\Psi}} \geq H_\mathrm{min}(\mathbf{U_{\bar{b'}}}|B\,R\,\mathbf{U_b})_{W_{BR} W_{\overline{B}\overline{R}} V \hat U_\mathbf{b} \hat U'_\mathbf{\bar{b}} \ket{\Psi}}~.
\end{align}
But, as in \eqref{eq:cond1to3main}, if $\ket{\phi} = W_{BR} W_{\overline{B}\overline{R}} V \hat U_\mathbf{b} \hat U'_\mathbf{\bar{b}} \ket{\Psi}$, then
\begin{align}
    \phi_{B R \,\mathbf{U_b U_{\bar{b}'}}} \leq \phi_{B R \mathbf{U_b}} \otimes \mathbb{1}_{\mathbf{U_{\bar{b}'}}}~,
\end{align}
and so
\begin{align}
    H_\mathrm{min}(\mathbf{U_{\bar{b'}}}|B\,R\,\mathbf{U_b})_{W_{BR} W_{\overline{B}\overline{R}} V \hat U_\mathbf{b} \hat U'_\mathbf{\bar{b}} \ket{\Psi}} \geq 0~.
\end{align}
From this, Condition 3 follows by Lemma \ref{lem:exact_Hmin_ge_deltaA}.

\subsection*{Proof $(3) \implies (4)$:}
Recall from \eqref{eq:areaSSA} in the proof of the exact version in Section \ref{sec:proof_exact}, that by strong subadditivity 
\begin{align}
    A_B({\bf b'}) \geq A_B(\mathbf{b} \cup \mathbf{b'})+ A_B({\bf b'} \cap \mathbf{b'}) - A_B(\mathbf{b})~.
\end{align}
Hence, by Condition 3, and the equality of complementary areas (Remark \ref{rem:complementaryareas}), we have
\begin{align}
    A_B({\bf b'}) \geq A_B(\mathbf{b}) - H_\mathrm{min}^{\varepsilon_3}(\mathbf{b'} \setminus \mathbf{b}| \mathbf{b} R)_{\ket{\Psi}} - H_\mathrm{min}^{\varepsilon_3}(\mathbf{\bar b'} \setminus \mathbf{\bar b}| \mathbf{\bar b} \,\overline{R})_{\ket{\Psi}}~.
\end{align}
Finally, applying Lemma \ref{lem:smoothminvNineq} to the two smooth min-entropies gives
\begin{align}
    A_B({\bf b'}) \geq A_B({\bf b}) + 2 S({\bf b}\, R)_{\ket{\Psi}} - S([\mathbf{b} \cup \mathbf{b'}]\, R)_{\ket{\Psi}} - S([\mathbf{b} \cap \mathbf{b'}]\, R)_{\ket{\Psi}} - 4 \varepsilon_3 \log \frac{d_\mathrm{code}^2 d_R d_{\overline{R}}}{4 \varepsilon_3^2}~,
\end{align}
from which Condition 4 follows immediately by strong sub-additivity.

\bibliographystyle{jhep}
\bibliography{mybib2021}

\end{document}